\definecolor{ourblue}{RGB}{0,84,159}
\definecolor{ourdarkblue}{RGB}{0,84,159}
\definecolor{ourlightblue}{RGB}{142,186,229}
\definecolor{ourlightestblue}{RGB}{232,241,250}
\definecolor{ouryellow}{RGB}{246,168,0}
\definecolor{ourred}{RGB}{161,16,53} % Technically Baurdeaux, but hey
\definecolor{ourgrey}{RGB}{207,209,210}
\Crefname{rem}{Remark}{Remarks}
\crefname{rem}{remark}{remarks}
\newenvironment{myhighlight}{%
    %\medskip%
    \begin{center}%
        \begin{minipage}{0.85\textwidth}%
        }{%
        \end{minipage}%
    \end{center}%
    %\medskip%
}
\newcommand{\inductionCase}[1]{\subsubsection*{The case #1}} % used in proofs by structural induction
\newcommand{\gray}[1]{\textcolor{black!60}{#1}}
\newcommand{\toolfont}[1]{\textnormal{\textsc{#1}}}
\newcommand{\toolcaesar}{\toolfont{Caesar}\xspace}
\newcommand{\toolboogie}{\toolfont{Boogie}\xspace}
\newcommand{\toolzt}{\toolfont{Z3}\xspace}
\newcommand{\xqed}[1]{%
    \leavevmode\unskip\penalty9999 \hbox{}\nobreak\hfill
    \quad\hbox{#1}}
\newcommand{\qedDef}{\xqed{$\triangle$}}
\newcommand{\reals}{\mathbb{R}}
\newcommand{\extendedReals}{{\overline{\reals}}}
\newcommand{\exReals}{\extendedReals}
\newcommand{\realConst}{c}
\newcommand{\nonNegReals}{\reals_{\geq 0}}
\newcommand{\nnReals}{\nonNegReals}
\newcommand{\exNonNegReals}{\exReals_{\geq 0}}
\newcommand{\ennReals}{\exNonNegReals}
\newcommand{\monus}{\mathbin{\dot-}}
\newcommand{\nats}{\mathbb{N}}
\newcommand{\rats}{\mathbb{Q}}
\newcommand{\nonNegRats}{{\rats_{\geq 0}}}
\newcommand{\nnRats}{{\nonNegRats}}
\newcommand{\ratConst}{q}
\newcommand{\bools}{\mathbb{B}}
\newcommand{\boolConstTrue}{\mathsf{true}}
\newcommand{\boolConstFalse}{\mathsf{false}}
\newcommand{\PSPACE}{\mathsf{PSPACE}}
\newcommand{\QFNRA}{\mathsf{QF\_NRA}}
\newcommand{\coRE}{\mathsf{coRE}}
\newcommand{\RE}{\mathsf{RE}}
\newcommand{\intervalLeft}{a} % left boundary point of generic interval
\newcommand{\intervalLeftb}{c}
\newcommand{\ivalL}{\intervalLeft}
\newcommand{\ivalLb}{\intervalLeftb}
\newcommand{\intervalRight}{b} % left boundary point of generic interval
\newcommand{\intervalRightb}{d}
\newcommand{\ivalR}{\intervalRight}
\newcommand{\ivalRb}{\intervalRightb}
\newcommand{\closedInterval}[2]{{[#1,#2]}} % [TW] in retrospect I think this should not be a macro because it is clearer to simply type [a,b] and the probability that we ever want to replace the boundary brackets is very low. So I will not define macros for (a,b), (a,b], etc.
\newcommand{\clIval}{\closedInterval}
\newcommand{\clIvalGen}{\closedInterval{\intervalLeft}{\intervalRight}}
\newcommand{\unitInterval}{\closedInterval{0}{1}}
\newcommand{\uIval}{{\unitInterval}}
\newcommand{\iv}[1]{{\left[#1\right]}} % iverson brackets of Boolean expression #1
\newcommand{\lam}[2]{\lambda #1 . #2} % lambda expression
\newcommand{\poleq}{\preceq}
\newcommand{\poleqb}{\preceqq}
\newcommand{\pogeq}{\succeq}
\newcommand{\partialOrderDomain}{L}
\newcommand{\partialOrderDomainb}{M}
\newcommand{\poDom}{\partialOrderDomain}
\newcommand{\poDomb}{\partialOrderDomainb}
\newcommand{\partialOrder}[2]{(#1,\,#2)}
\newcommand{\po}[2]{\partialOrder{#1}{#2}}
\newcommand{\partialOrderGen}{(\partialOrderDomain,\,\poleq)}
\newcommand{\partialOrderGenb}{(\partialOrderDomainb,\,\poleqb)}
\newcommand{\poGen}{\partialOrderGen}
\newcommand{\poGenb}{\partialOrderGenb}
\newcommand{\partialOrderElement}{a}
\newcommand{\partialOrderElementb}{b}
\newcommand{\poElem}{\partialOrderElement}
\newcommand{\poElemb}{\partialOrderElementb}
\newcommand{\partialOrderSubset}{A}
\newcommand{\poSubset}{\partialOrderSubset}
\newcommand{\poBot}{\bot}
\newcommand{\poTop}{\top}
\newcommand{\sigmaAlgebra}{\mathcal{F}}
\newcommand{\sAlg}{\sigmaAlgebra}
\newcommand{\proseSigmaAlgebra}{$\sigma$-algebra\xspace} % the word sigma algebra when it appears in prose
\newcommand{\proseSigmaAlgebras}{$\sigma$-algebras\xspace} % plural
\newcommand{\measureSpaceUniverse}{\Omega}
\newcommand{\mUniv}{\measureSpaceUniverse}
\newcommand{\measurableSet}{A}
\newcommand{\measurableSetb}{B}
\newcommand{\countableCollectionOfMeasurableSets}{\mathcal{A}}
\newcommand{\sigmaAlgebraGenerators}{\mathcal{C}}
\newcommand{\sigmaAlgebraGeneratedBy}[1]{\sigma(#1)}
\newcommand{\borelSets}[1]{{\mathcal{B}(#1)}}
\newcommand{\lebesgueSets}[1]{{\overline{\mathcal{B}}(#1)}}
\newcommand{\simpleFun}{s}
\newcommand{\measure}{\boldsymbol\mu}
\newcommand{\lebesgueMeasure}{\boldsymbol\lambda}
\newcommand{\lebmes}{\lebesgueMeasure}
\newcommand{\slfp}[1]{\mathsf{lfp}\,{#1}}
\newcommand{\exlfp}[3]{\iv{\neg #3} \cdot #2 + \iv{#3} \cdot \wpTrans{#1}} % explictly write the fixed point operator for program (#1), expectation (#2) and guard (#3)
\newcommand{\lfp}{\operatorname{lfp}}
\newcommand{\gfp}{\operatorname{gfp}}
\newcommand{\lfpIn}[2]{\lfp#1.\,#2}
\newcommand{\gfpIn}[2]{\gfp#1.\,#2}
\newcommand{\fpVar}{Y}
\newcommand{\realDomain}{D}
\newcommand{\fun}{f}
\newcommand{\funb}{g}
\newcommand{\boundedFunBound}{B}
\newcommand{\partition}{P}
\newcommand{\partitionb}{Q}
\newcommand{\partitions}[1]{\mathfrak{P}#1} % set of partitions of interval #1
\newcommand{\partitionSize}{N}
\newcommand{\partitionSizeb}{M}
\newcommand{\partitionNorm}[1]{\|#1\|} % fineness of partition #1
\newcommand{\partitionRefines}{\preceq}
\newcommand{\upperSum}[2]{{U_{#1,#2}}} % upper Riemann sum of function #1 w.r.t. partition #2
\newcommand{\lowerSum}[2]{{L_{#1,#2}}} % upper Riemann sum of function #1 w.r.t. partition #2
\newcommand{\upperInt}[2]{\overline{\int_{#1}^{#2}}} % upper Riemann integral over interval [#1,#2]
\newcommand{\lowerInt}[2]{\underline{\int_{#1}^{#2}}} % lower Riemann-Darboux integral over interval [#1,#2]
\newcommand{\upperIntGen}{\upperInt{\ivalL}{\ivalR}}
\newcommand{\lowerIntGen}{\lowerInt{\ivalL}{\ivalR}}
\newcommand{\prog}{C} % program
\newcommand{\progBody}{B} % loop body 
\newcommand{\pWhile}{\mathsf{pWhile}} % set of pWhile programs
\newcommand{\heyVL}{\mathsf{HeyVL}} % set of pWhile programs
\newcommand{\pWhileWith}[2]{\pWhile(#1,#2)} % set of pWhile programs with expressions #1 and guards #2
\newcommand{\pVars}{V} % set of program variables
\newcommand{\pVar}{\mathtt{x}} % program variable
\newcommand{\pVarb}{\mathtt{y}}
\newcommand{\pVarc}{\mathtt{z}}
\newcommand{\pVarm}{\mathtt{M}}
\newcommand{\pVari}{\mathtt{i}}
\newcommand{\pVarj}{\mathtt{j}}
\newcommand{\pVaraa}{\mathtt{a}}
\newcommand{\pVarbb}{\mathtt{b}}
\newcommand{\pVarcount}{\mathtt{count}}
\newcommand{\pVart}{\mathtt{t}}
\newcommand{\pVarh}{\mathtt{h}}
\newcommand{\pVarTmp}{\mathtt{tmp}} % tmp program variable
\newcommand{\pState}{\sigma} % program state
\newcommand{\pStates}{\nnReals^\pVars} % set of program states
\newcommand{\pSt}{\pState}
\renewcommand{\st}{\pState} % \st defined by soul package
\newcommand{\states}{\pStates}
\newcommand{\pStUpdate}[3]{#1[#2 \mapsto #3]}
\newcommand{\arithmeticExpression}{E}
\newcommand{\aExp}{\arithmeticExpression}
\newcommand{\aExps}{\mathcal{E}} % set of arithmetic expressions
\newcommand{\guard}{\varphi} % boolean expression in program (i.e., while guard, if guard, observe condition)
\newcommand{\guards}{\mathcal{G}} % set of boolean expressions
\newcommand{\prob}{p} % probability in PCHOICE
\newcommand{\unfold}[2]{#1^{#2}} % #2-fold unfolding of program #1
\newcommand{\depth}{\ell} % unfolding depth
\newcommand{\SKIP}{\mathtt{skip}}
\newcommand{\DIVERGE}{\mathtt{diverge}}
\newcommand{\INVARIANT}{\mathtt{@invariant}}
\newcommand{\ASSIGN}[2]{#1\coloneq#2}
\newcommand{\NONDETASSIGN}[3]{#1\coloneq [ #2, #3 ]}
\newcommand{\UNIF}{\mathtt{unif}} % uniform distribution over [0,1]
\newcommand{\DISCRETEUNIF}{\mathtt{discrete\_unif}} % discrete uniform distribution 
\newcommand{\UNIFOVER}[2]{\UNIF\closedInterval{#1}{#2}} % uniform distribution over [#1,#2]
\newcommand{\RNDASSIGN}[2]{#1:\approx#2} % sample from distribution #2 and assign result to program variable #1
\newcommand{\UNIFASSIGN}[1]{\RNDASSIGN{#1}{\UNIF_{\uIval}}} % sample from unif[0,1] and
\newcommand{\DISCRETEUNIFASSIGN}[2]{\RNDASSIGN{#1}{\DISCRETEUNIF \left( #2 \right) }} % sample from unif[0,1] and
\newcommand{\KWOBSERVE}{\mathtt{observe}} 
\newcommand{\OBSERVE}[1]{\KWOBSERVE(#1)}
\newcommand{\SEMICOLON}{;}
\newcommand{\SEQ}[2]{#1\,\SEMICOLON\,#2}
\newcommand{\KWIF}{\mathtt{if}} % keyword "if"
\newcommand{\KWELSE}{\mathtt{else}}
\newcommand{\ITE}[3]{\KWIF(#1)~\{#2\}~\KWELSE~\{#3\}}
\newcommand{\PCHOICE}[3]{\{#1\}~[#2]~\{#3\}}
\newcommand{\KWWHILE}{\mathtt{while}}
\newcommand{\WHILE}[2]{\KWWHILE~(#1)~\{#2\}}
\newcommand{\WHILENOBODY}[1]{\KWWHILE~(#1)~\{}
\newcommand{\INVARIANTANNOTATE}[1]{\INVARIANT \left( #1 \right)}
\newcommand{\pNEQ}{\neq}
\newcommand{\pAND}{\land}
\newcommand{\transSymb}{\mathcal{T}}
\newcommand{\wpSymb}{\mathsf{wp}}
\newcommand{\vcSymb}{\mathsf{vc}}
\newcommand{\wlpSymb}{\mathsf{wlp}}
\newcommand{\cwpSymb}{\mathsf{cwp}}
\newcommand{\lwpSymb}[1]{\underline{\mathsf{w}}\mathsf{p}^{#1}} % lower wp^N symbol
\newcommand{\uwpSymb}[1]{\overline{\mathsf{wp}}^{#1}} % upper wp^N symbol
\newcommand{\lwlpSymb}[1]{\underline{\mathsf{wl}}\mathsf{p}^{#1}} % lower wlp^N symbol
\newcommand{\uwlpSymb}[1]{\overline{\mathsf{w}}\mathsf{l}\overline{\mathsf{p}}^{#1}} % upper wlp^N symbol
\newcommand{\lcwpSymb}[1]{\underline{\mathsf{cw}}\mathsf{p}^N} % lower cwp^N symbol
\newcommand{\ucwpSymb}[1]{\overline{\mathsf{cwp}}^N} % upper cwp^N symbol
\newcommand{\wpwlpSymb}{\mathsf{w}(\mathsf{l})\mathsf{p}}
\newcommand{\someTrans}[1]{\transSymb\llbracket#1\rrbracket}
\newcommand{\wpTrans}[1]{\wpSymb\llbracket#1\rrbracket} % wp[[#1]]
\newcommand{\wpwlpTrans}[1]{\wpwlpSymb\llbracket#1\rrbracket}
\newcommand{\vcTrans}[1]{\vcSymb\llbracket#1\rrbracket} % wp[[#1]]
\newcommand{\wlpTrans}[1]{\wlpSymb\llbracket#1\rrbracket}
\newcommand{\cwpTrans}[1]{\cwpSymb\llbracket#1\rrbracket}
\newcommand{\lwpTrans}[2]{\lwpSymb{#1}\llbracket#2\rrbracket}
\newcommand{\uwpTrans}[2]{\uwpSymb{#1}\llbracket#2\rrbracket}
\newcommand{\lwlpTrans}[2]{\lwlpSymb{#1}\llbracket#2\rrbracket}
\newcommand{\uwlpTrans}[2]{\uwlpSymb{#1}\llbracket#2\rrbracket}
\newcommand{\somewp}[2]{\someTrans{#1}\left(#2\right)} 
\renewcommand{\wp}[2]{\wpTrans{#1}\left(#2\right)} % wp[[#1]](#2)
\newcommand{\vc}[2]{\vcTrans{#1}\left(#2\right)} 
\newcommand{\wlp}[2]{\wlpTrans{#1}\left(#2\right)}
\newcommand{\cwp}[2]{\cwpTrans{#1}\left(#2\right)}
\newcommand{\lwp}[3]{\lwpTrans{#1}{#2}\left(#3\right)} % lwp^#1[[#2]](#3)
\newcommand{\uwp}[3]{\uwpTrans{#1}{#2}\left(#3\right)}
\newcommand{\lwlp}[3]{\lwlpTrans{#1}{#2}\left(#3\right)} % lwlp^#1[[#2]](#3)
\newcommand{\uwlp}[3]{\uwlpTrans{#1}{#2}\left(#3\right)}
\newcommand{\charfun}[3]{\tensor*[^{#1}_{#2}]{\Phi}{_{{#3}}}}
\newcommand{\charfuntrans}[2]{\charfun{\transSymb}{#1}{#2}}
\newcommand{\charfunwp}[2]{\charfun{\wpSymb}{#1}{#2}}
\newcommand{\charfunwlp}[2]{\charfun{\wlpSymb}{#1}{#2}}
\newcommand{\charfunuwp}[3]{\charfun{\uwpSymb{#1}}{#2}{#3}}
\newcommand{\charfunlwp}[3]{\charfun{\lwpSymb{#1}}{#2}{#3}}
\newcommand{\charfunlwlp}[3]{\charfun{\lwlpSymb{#1}}{#2}{#3}}
\newcommand{\N}{N} % the ^N from wp^N
\newcommand{\Nb}{M} % alternative to \N
\newcommand{\ex}{f} % expectation
\newcommand{\exb}{g}
\newcommand{\exI}{I}
\newcommand{\exJ}{J}
\newcommand{\eleq}{\sqsubseteq} % order between expectations
\newcommand{\egeq}{\sqsupseteq} % order between expectations
\newcommand{\expectations}{\mathbb{E}}
\newcommand{\exps}{\expectations}
\newcommand{\boundedExpectations}{{\expectations^{\leq 1}}}
\newcommand{\bexps}{\boundedExpectations}
\newcommand{\measurableExpectations}{{\expectations_{meas}}}
\newcommand{\expsmeas}{\measurableExpectations}
\newcommand{\boundedMeasurableExpectations}{{\expectations^{\leq 1}_{meas}}}
\newcommand{\bexpsmeas}{\boundedMeasurableExpectations}
\newcommand{\measurableLocallyBoundedExpectations}{{\expectations_{meas}^{\lesssim}}}
\newcommand{\expsmeaslb}{\measurableLocallyBoundedExpectations}
\newcommand{\compactSet}{K}
\newcommand{\expsClass}{\mathcal{F}} % subset of expectations
\newcommand{\exSubs}[3]{#1[#2 / #3]} % substitution of variable #2 in expectation #1 by term #3
\newcommand{\exSubsGen}{\exSubs{\ex}{\pVar}{\aExp}}
\newcommand{\synEx}{\mathsf{f}}
\newcommand{\synExb}{\mathsf{g}}
\newcommand{\synExc}{\mathsf{h}}
\newcommand{\synExI}{\mathsf{I}}
\newcommand{\synExJ}{\mathsf{J}}
\newcommand{\synExps}{\mathsf{Expr}}
\newcommand{\synSubs}[3]{#1[#2/#3]}
\newcommand{\supQuantifierSymbol}{\reflectbox{\textnormal{\textsf{\fontfamily{phv}\selectfont S}}}\hspace{.2ex}}
\newcommand{\infQuantifierSymbol}{\raisebox{.6\depth}{\rotatebox{-30}{\textnormal{\textsf{\fontfamily{phv}\selectfont \reflectbox{J}}}}\hspace{-.1ex}}}
\newcommand{\quantitativeQuantifierSymbol}{\reflectbox{\textnormal{\textsf{\fontfamily{phv}\selectfont Q}}}\hspace{.2ex}}
\newcommand{\synSupBd}[3]{{\supQuantifierSymbol\!_\clIval{#2}{#3}\,#1\colon}}
\newcommand{\synInfBd}[3]{{\infQuantifierSymbol\!_\clIval{#2}{#3} \,#1\colon}}
\newcommand{\synVarQuantBd}[4]{{#4\!_\clIval{#2}{#3} \,#1\colon}}
\newcommand{\limplies}{{}\rightarrow{}}
\newcommand{\liff}{{}\leftrightarrow{}}
\newcommand{\sem}[1]{\llbracket#1\rrbracket}
\newcommand{\free}[1]{{\mathsf{free}(#1)}}
\newcommand{\varsIn}[1]{{\mathsf{free}(#1)}}
\newcommand{\logicVariables}{\mathit{LV}}
\newcommand{\lvars}{\logicVariables}
\newcommand{\logicVariablesSubset}{{\logicVariables'}}
\newcommand{\lvarsSubset}{\logicVariablesSubset}
\newcommand{\logicVariable}{x}
\newcommand{\lvar}{\logicVariable}
\newcommand{\logicVariableb}{y}
\newcommand{\lvarb}{\logicVariableb}
\newcommand{\logicVariablec}{z}
\newcommand{\lvarc}{\logicVariablec}
\newcommand{\logicVariabled}{w}
\newcommand{\lvard}{\logicVariabled}
\newcommand{\syntacticTerm}{\mathsf{t}}
\newcommand{\syntacticTermb}{\mathsf{u}}
\newcommand{\syntacticTermc}{\mathsf{v}}
\newcommand{\synTerm}{\syntacticTerm}
\newcommand{\synTermb}{\syntacticTermb}
\newcommand{\synTermc}{\syntacticTermc}
\newcommand{\syntacticTerms}{\mathsf{Terms}}
\newcommand{\synTerms}{\syntacticTerms}
\newcommand{\synMonus}{\mathbin{\dot{-}}} % monus symbol used in syntactic expressions
\newcommand{\syntacticGuard}{\upvarphi}
\newcommand{\synGuard}{\syntacticGuard}
\newcommand{\syntacticGuards}{\mathsf{Guards}}
\newcommand{\synGuards}{\syntacticGuards}
\newcommand{\firstOrderFormula}{\uppsi}
\newcommand{\foForm}{\firstOrderFormula}
\newcommand{\firstOrderFormulab}{\uptheta}
\newcommand{\foFormb}{\firstOrderFormulab}
\newcommand{\firstOrderFormulas}{\mathsf{FO}}
\newcommand{\foForms}{\firstOrderFormulas}
\newcommand{\firstOrderExistsQuantifier}{\exists}
\newcommand{\foExists}{\firstOrderExistsQuantifier}
\newcommand{\firstOrderForallQuantifier}{\forall}
\newcommand{\foForall}{\firstOrderForallQuantifier}
\newcommand{\foQuantifierSymbol}{\reflectbox{$\mathsf{Q}$}\hspace{.2ex}}
\newcommand{\resVal}{r}
\newcommand{\semiAlgebraicSet}{A}
\newcommand{\saSet}{\semiAlgebraicSet}
\newcommand{\expsClassExp}{\exps_{\synExps}}
\newcommand{\representableExpectations}{\exps_{\synExps}}
\newcommand{\repExps}{\representableExpectations}
\newcommand{\boundedRepresentableExpectations}{\exps^{\leq 1}_{\synExps}}
\newcommand{\brepExps}{\boundedRepresentableExpectations}
\newcommand{\syntacticExpectations}{\synExps}
\newcommand{\bsyntacticExpectations}{\syntacticExpectations^{\leq 1}}
\newcommand{\eps}{\varepsilon}
\newcommand{\zerofun}{0}
\newcommand{\onefun}{1}
\newcommand{\grammarSymb}{\Coloneqq}
\newcommand{\eeq}{~{}={}~}
\newcommand{\lleq}{~{}\leq{}~}
\newcommand{\ggeq}{~{}\geq{}~}
\newcommand{\llt}{~{}<{}~}
\newcommand{\eeleq}{~{}\eleq{}~}
\newcommand{\mminus}{~{}-{}~}
\newcommand{\pplus}{~{}+{}~}
\newcommand{\mmid}{~{}\mid{}~}
\newcommand{\qand}{\quad\text{and}\quad}
\newcommand{\qqand}{\qquad\text{and}\qquad}
\newcommand{\colprog}{\textbf{Program}}
\newcommand{\coln}{$\mathbf{\N}$}
\newcommand{\colt}{\textbf{Time}}
\newcommand{\colformsize}{\textbf{|AST|}}
\newcommand{\colpost}{\textbf{Post}}
\newcommand{\colbound}{\textbf{Bound}}
\newcommand{\progdiverging}{\texttt{diverging}}
    \renewcommand\footnotetextcopyrightpermission[1]{} % removes footnote with conference information in first column
\theoremstyle{remark}
\newtheorem{rem}{Remark}
\begin{document}

% [TW] copied titled and author info from publishing system, but reintroduced \subtitle manually
%\title{Foundations for Deductive Verification of Continuous Probabilistic Programs: From Lebesgue to Riemann and Back}
\title{Foundations for Deductive Verification of Continuous Probabilistic Programs}
\subtitle{From Lebesgue to Riemann and Back}

\author{Kevin Batz}
\orcid{0000-0001-8705-2564}
\affiliation{%
    \institution{RWTH Aachen University}
    \city{Aachen}
    \country{Germany}
}
\affiliation{%
    \institution{University College London}
    \city{London}
    \country{United Kingdom}
}
\email{kevin.batz@cs.rwth-aachen.de}

\author{Joost-Pieter Katoen}
\orcid{0000-0002-6143-1926}
\affiliation{%
    \institution{RWTH Aachen University}
    \city{Aachen}
    \country{Germany}
}
\email{katoen@cs.rwth-aachen.de}

\author{Francesca Randone}
\orcid{0009-0002-3489-9600}
\affiliation{%
    \institution{University of Trieste}
    \city{Trieste}
    \country{Italy}
}
\email{francesca.randone@units.it}

\author{Tobias Winkler}
\orcid{0000-0003-1084-6408}
\affiliation{%
    \institution{RWTH Aachen University}
    \city{Aachen}
    \country{Germany}
}
\email{tobias.winkler@cs.rwth-aachen.de}

\iftoggle{arxiv}{\titlenote{This document is the full version of a paper with the same title published at OOPSLA 2025.}}{}

%%
%% By default, the full list of authors will be used in the page
%% headers. Often, this list is too long, and will overlap
%% other information printed in the page headers. This command allows
%% the author to define a more concise list
%% of authors' names for this purpose.
% [TW] "Use the complete list of authors as long as it fits on one line in the running heads "
%\renewcommand{\shortauthors}{K. Batz, J.-P. Katoen, F. Randone, and T. Winkler}

%%
%% The abstract is a short summary of the work to be presented in the
%% article.
\begin{abstract}
    We lay out novel foundations for the computer-aided verification of guaranteed bounds on expected outcomes of imperative probabilistic programs featuring (i) general \emph{loops}, (ii) \emph{continuous} distributions, and (iii) \emph{conditioning}.
To handle loops we rely on user-provided quantitative \emph{invariants}, as is well established.
However, in the realm of continuous distributions, \emph{invariant verification} becomes extremely challenging due to the presence of \emph{integrals} in expectation-based program semantics.
Our key idea is to soundly \emph{under-} or \emph{over-approximate} these integrals via \emph{Riemann sums}.
We show that this approach enables the SMT-based invariant verification for programs with a fairly general control flow structure.
On the theoretical side, we prove \emph{convergence} of our Riemann approximations, and establish $\coRE$-completeness of the central verification problems.
On the practical side, we show that our approach enables to use existing automated verifiers targeting \emph{discrete} probabilistic programs for the verification of programs involving \emph{continuous sampling}. Towards this end,
we implement our approach in the recent quantitative verification infrastructure \toolcaesar by encoding Riemann sums in its intermediate verification language.
We present several promising case studies.

\end{abstract}

%%
%% The code below is generated by the tool at http://dl.acm.org/ccs.cfm.
%% Please copy and paste the code instead of the example below.
%%
\begin{CCSXML}
    <ccs2012>
    <concept>
    <concept_id>10003752.10003753.10003757</concept_id>
    <concept_desc>Theory of computation~Probabilistic computation</concept_desc>
    <concept_significance>500</concept_significance>
    </concept>
    <concept>
    <concept_id>10003752.10003790.10002990</concept_id>
    <concept_desc>Theory of computation~Logic and verification</concept_desc>
    <concept_significance>500</concept_significance>
    </concept>
    <concept>
    <concept_id>10003752.10010124.10010138.10010139</concept_id>
    <concept_desc>Theory of computation~Invariants</concept_desc>
    <concept_significance>500</concept_significance>
    </concept>
    <concept>
    <concept_id>10003752.10010124.10010138.10010141</concept_id>
    <concept_desc>Theory of computation~Pre- and post-conditions</concept_desc>
    <concept_significance>500</concept_significance>
    </concept>
    <concept>
    <concept_id>10003752.10010124.10010138.10010142</concept_id>
    <concept_desc>Theory of computation~Program verification</concept_desc>
    <concept_significance>500</concept_significance>
    </concept>
    <concept>
    <concept_id>10002950.10003648.10003662</concept_id>
    <concept_desc>Mathematics of computing~Probabilistic inference problems</concept_desc>
    <concept_significance>300</concept_significance>
    </concept>
    </ccs2012>
\end{CCSXML}

\ccsdesc[500]{Theory of computation~Probabilistic computation}
\ccsdesc[500]{Theory of computation~Logic and verification}
\ccsdesc[500]{Theory of computation~Invariants}
\ccsdesc[500]{Theory of computation~Pre- and post-conditions}
\ccsdesc[500]{Theory of computation~Program verification}
\ccsdesc[300]{Mathematics of computing~Probabilistic inference problems}
%%
%% Keywords. The author(s) should pick words that accurately describe
%% the work being presented. Separate the keywords with commas.
\keywords{probabilistic programs, deductive program verification, continuous distributions, quantitative loop invariants, weakest preexpectations, approximate integration, SMT solving}
%% A "teaser" image appears between the author and affiliation
%% information and the body of the document, and typically spans the
%% page.

%\received{20 February 2007}
%\received[revised]{12 March 2009}
%\received[accepted]{5 June 2009}

%%
%% This command processes the author and affiliation and title
%% information and builds the first part of the formatted document.
\maketitle

%% [TW] TOC for overview while writing
%\tableofcontents

%% [TW] Our sections
%\input{notes_relwork.tex}
\section{Introduction}
\label{sec:intro}

\newcommand*{\diffeo}{% 
    \mathrel{\vcenter{\offinterlineskip
            \hbox{$\sim$}\vskip-.35ex\hbox{$\sim$}\vskip-.35ex\hbox{$\sim$}}}}
        
\newcommand{\progPiInner}{\textcolor{ourdarkblue}{\prog_{inner}}}
\newcommand{\progPi}{\prog}

Probabilistic programs (PP) are usual programs with the additional ability to
\begin{enumerate}
    \item\label{ability:branch} \emph{branch} their flow of control based on the outcomes of coin flips,
    \item\label{ability:sample} \emph{sample} numbers from predefined continuous and/or discrete probability distributions, and
    \item\label{ability:condition} \emph{condition} their current state on observations using dedicated $\KWOBSERVE$-instructions.
\end{enumerate}
Programs with one or more of these (partly overlapping) abilities have been extensively studied in different contexts and throughout various communities:
\emph{Randomized algorithms} typically rely on ability \eqref{ability:branch} to speed up computations on average~\cite{DBLP:journals/dam/Karp91} or for breaking symmetries in distributed computing~\cite{DBLP:journals/iandc/ItaiR90}.
\emph{Sampling} and \emph{conditioning} (abilities \eqref{ability:sample} and \eqref{ability:condition}) are extensively employed in connection with Bayes' rule in machine learning, AI, and cognitive science to statistically infer information from observed data~\cite{dippl,DBLP:conf/icse/GordonHNR14,DBLP:journals/corr/abs-1809-10756}.
\emph{Continuous distributions} are pivotal for these applications.

Many applications require \emph{expressive} PP with the usual constructs from traditional programs --- including unbounded $\KWWHILE$-loops.
The latter pose a major challenge:
For instance, reasoning about probabilistic termination is provably harder than reasoning about classical termination~\cite{DBLP:conf/lics/KaminskiK17}.

\paragraph{Guaranteed Bounds on Expected Outcomes via Weakest Pre-Expectations}

In recent years, there has been considerable interest in analyzing expressive PP in a mathematically rigorous manner --- with \emph{hard} rather than statistical guarantees --- see e.g., \cite{DBLP:conf/cav/ChakarovS13,DBLP:conf/cav/GehrMV16,DBLP:journals/pacmpl/AgrawalC018,DBLP:journals/pacmpl/MoosbruggerSBK22,DBLP:conf/pldi/GehrSV20} as representative examples.
Such stringent analyses are motivated by the fact that more conventional statistical techniques can be far off even for simple programs~\cite{DBLP:conf/pldi/BeutnerOZ22}, and may be inappropriate for safety-critical applications.

The \emph{weakest pre-expectation ($\wpSymb$) calculus} is a prominent means to specify and establish a broad spectrum of PP properties in a rigorous manner~\cite{DBLP:conf/stoc/Kozen83,DBLP:series/mcs/McIverM05,DBLP:phd/dnb/Kaminski19}.
Generalizing Dijkstra's classical weakest pre-\emph{conditions}, the central objects are so-called \emph{expectations} $\ex$ --- random variables over a program's state space mapping program states to numbers --- which take over the role of predicates from classical program verification.
More precisely, given a probabilistic program $\prog$, an expectation $\ex$, and an initial program state $\pState$, we have
\[
	\wp{\prog}{\ex}(\pState)
	\quad=\quad 
	\substack{\text{\normalsize\emph{expected value} of $\ex$ w.r.t.\ the distribution of \emph{final} states} \\ \text{\normalsize reached after executing $\prog$ on \emph{initial} state $\pState$} \\
	\text{\normalsize and aborting all executions violating an observation}~.}
\]
Hence, $\wp{\prog}{\ex}$ is a map from \emph{initial} program states to \emph{expected final} values of $\ex$.
The weakest \emph{liberal} pre-expectation $\wlp{\prog}{\ex}$ adds to the above quantity the probability of $\prog$'s divergence\footnote{Provided $\ex$ is upper-bounded by $1$.
See \Cref{sec:programs:wp} for details.}.
Both $\wpSymb$ and $\wlpSymb$ can be defined by induction on the structure of $\prog$.
Notice that neither of these calculi yield \emph{conditional} expected values as one might expect in the presence of conditioning.
One can, however, define \emph{conditional weakest pre-expectations}~\cite{DBLP:journals/toplas/OlmedoGJKKM18,DBLP:conf/aaai/NoriHRS14} as\footnote{Notice that this quantity is undefined if $\wlp{\prog}{1}(\pState)=0$ as is natural when conditioning on a probability-$0$-event.} 
\[
	\cwp{\prog}{\ex}(\pState)
	\eeq
	\frac{\wp{\prog}{\ex}(\pState)}
	{\wlp{\prog}{1}(\pState)}
	\eeq
	\frac{\text{\normalsize{expected value} of $\ex$ \ldots}}
	{\text{probability of not violating an observation}}~,
\]
which indeed yields the sought-after conditional expected values.
We refer to quantities such as $\wp{\prog}{\ex}(\pState)$, $\wlp{\prog}{\ex}(\pState)$, and $\cwp{\prog}{\ex}(\pState)$ as \emph{(conditional) expected outcomes} of probabilistic programs.
Reasoning about expected outcomes of loops is typically tackled by means of \emph{quantitative loop invariants}, which are, naturally, often hard to find.
However, in the presence of continuous sampling, even \emph{verifying} a \emph{given} candidate loop invariant poses severe challenges as reasoning about the required expected values involves possibly complex \emph{integrals}.
The aim of this paper is to lay the foundations for automated techniques tackling these challenges.

\paragraph{Problem Statement}

We study \emph{imperative} PP with all three abilities \eqref{ability:branch}, \eqref{ability:sample}, and \eqref{ability:condition}.
Our goal is to
\begin{myhighlight}
    \emph{semi-automatically verify bounds on (conditional) expected outcomes of programs featuring \underline{continuous uni}f\underline{orm sam}p\underline{lin}g, unbounded \underline{$\KWWHILE$-loo}p\underline{s} with user-provided q\underline{uantitative invariants}, and \underline{conditionin}g.}    
\end{myhighlight}
\enquote{Semi-automatic} means that we focus on verification of \emph{user-provided} quantitative loop invariants in the sense of~\cite{DBLP:series/mcs/McIverM05}.
Unlike fully automatic approaches we do not \emph{synthesize} such invariants automatically, which is a promising direction for future work but outside the scope of this paper.

\paragraph{Approach}

Our key idea is simple, yet powerful:
\begin{myhighlight}
    \emph{We replace the \underline{inte}g\underline{rals} occurring in the definition of the programs' exact weakest pre-expectations by simpler \underline{sound a}pp\underline{roximations}, namely lower and upper \underline{Riemann sums}}.
\end{myhighlight}
This results in a family of approximate \emph{lower} and \emph{upper Riemann $\wpSymb$ transformers}, denoted by $\lwpSymb{\N}$ and $\uwpSymb{\N}$, where $\N$ is the number of sub-intervals used to discretize the domain of integration. 
Crucially, these Riemann expectation transformers ultimately give rise to automated SMT-based techniques for verifying quantitative loop invariants.
Let us now consider an introductory example.

\paragraph{Illustrative Example: A Monte Carlo Approximator.}

The program shown in \Cref{fig:intro} draws $\pVarm$ $(x,y)$-samples uniformly at random from the unit square.
Each time a sample lands in the quarter unit circle, the variable $\pVarcount$ is incremented.
The program hence approximates the transcendental number $\tfrac \pi 4 \approx 0.785$ (the area of the quarter unit circle) in a Monte Carlo manner.

We will first focus on the \textcolor{ourdarkblue}{blue} fragment $\progPiInner$ of the program from \Cref{fig:intro}.
$\progPiInner$ contains two uniform continuous sampling instructions.
Suppose we aim to compute the expected value of $\pVarcount$ after executing $\progPiInner$.
%Using $\wpSymb$, this requires computing $\wp{\prog}{\pVarcount}$ for an arbitrary initial state $\pSt$.
Applying the rules for determining $\wpSymb$'s~\cite{DBLP:conf/setss/SzymczakK19}, we obtain the following double Lebesgue integral, where $\iv{\ldots}$ denotes the indicator function of the enclosed predicate:
\begin{align*}
    \wp{\progPiInner}{\pVarcount}
    \eeq
    \underbrace{\pVarcount + \int_0^1 \int_0^1 \iv{\pVar^2 + \pVarb^2 \leq 1} \,d\lebmes(x), d\lebmes(y)
    \eeq
    \pVarcount + \frac{\pi}{4}}_{\text{expected final value of $\pVarcount$ in terms of its initial value}}
    ~.
    \tag{$\dagger$}\label{eq:doubleIntegral}
\end{align*}
This reflects the fact that, in \emph{expectation}, $\progPiInner$ increments $\pVarcount$ by $\tfrac \pi 4$.

Symbolic integration-based tools such as PSI~\cite{DBLP:conf/cav/GehrMV16} can solve the above integral directly.
However, we observe two significant issues:
(i) there are  integrals that cannot be evaluated symbolically in closed form,
and (ii) even if all guards and assignments in the program are polynomial and distributions are restricted to uniform distributions in $[0,1]$, the resulting integrals may evaluate to transcendental numbers, making automation notoriously difficult.

As outlined, instead of solving integrals exactly, our strategy is to soundly \emph{under-} or \emph{over-approximate} them by lower or upper Riemann sums.
In our example, the resulting upper sum
\begin{align*} 
    \frac{1}{\N^2} \sum_{i=0}^{N-1} \sum_{j=0}^{N-1} ~\sup_{\xi \in [\frac{i}{\N}, \frac{i+1}{\N}]} ~ \sup_{\zeta \in [\frac{j}{\N}, \frac{j+1}{\N}]} ~ \iv{\xi^2 + \zeta^2 \le 1} ~,
\end{align*}
where $\N \geq 1$, is a guaranteed upper bound on the double integral in \eqref{eq:doubleIntegral}.
For instance, with $\N = 16$, we can verify that the upper sum is at most $0.85$, and hence that $\wp{\progPiInner}{\pVarcount}(\pState) \leq \pState(\pVarcount) + 0.85$ for all initial states $\pState$.
Crucially, to prove upper bounds on upper Riemann sums, we do \emph{not} have to evaluate any suprema explicitly because for all $A \subseteq \reals$ and $b \in \reals$, 
\[
    \sup A \leq b \qquad\text{iff}\qquad \text{$\forall a \in A$}\colon  a \leq b~.
\]
In practice, we can thus drop the $\sup$'s in upper Riemann sums (and, dually, the $\inf$'s in lower sums) by introducing $\foForall$-quantifiers, which is highly beneficial in the context of SMT-based automation.

We stress that even though we discretize the integrals' domains uniformly, our approach is \emph{not} the same as replacing the continuous $\UNIF_{\uIval}$ distributions by discrete uniform distributions of $N$ point-masses.
Indeed, the latter would neither yield under- nor over-approximations in general, which is, however, essential for invariant verification. 

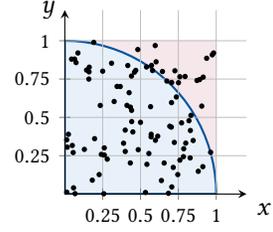
\begin{figure}[t]
    \begin{minipage}{0.5\textwidth}
        \centering
        \begin{align*}
            & \ASSIGN{\pVari}{1} \,\SEMICOLON\, \ASSIGN{\pVarcount}{0} \,\SEMICOLON\, \\
            &\WHILENOBODY{\pVari \leq \pVarm} \\
            &\qquad \textcolor{ourdarkblue}{\SEQ{\UNIFASSIGN{\pVar}}{\UNIFASSIGN{\pVarb}}\,\SEMICOLON} \\
            &\qquad \textcolor{ourdarkblue}{\ITE{\pVar^2 + \pVarb^2 \leq 1}{\ASSIGN{\pVarcount}{\pVarcount + 1}}{\SKIP}\,\SEMICOLON} \\
            &\qquad \ASSIGN{\pVari}{\pVari + 1} \quad \}
        \end{align*}
    \end{minipage}
    \hspace{5mm}
    \begin{minipage}{0.3\textwidth}
        \centering
        \begin{tikzpicture}[scale=1.0]
    \begin{axis}[
        axis lines=middle,
        xtick={0,.25,.50,.75,1},
        ytick={0,.25,.50,.75,1},
        tick label style={font=\footnotesize},
        xlabel={$x$},
        ylabel={$y$},
        xlabel style={at={(axis description cs:1.1,0)}, anchor=north},
        ylabel style={at={(axis description cs:0,1.0)}, anchor=east},
        xmin=0, xmax=1.2,
        ymin=0, ymax=1.2,
        width=4cm, % Set the width of the plot
        height=4cm, % Set the height of the plot
        axis equal,
        domain=0:1,
        scale=1.0,
        grid=both, % This adds the grid lines
        major grid style={line width=.2pt,draw=gray!50},
        minor grid style={line width=.1pt,draw=gray!20},
        axis on top
        ]
        \addplot[draw=ourred!10,fill=ourred!10] coordinates {(0,0) (0,1) (1,1) (1,0) (0,0)};
        \addplot[thick, draw=ourblue, fill=ourlightblue!20, domain=0:90] ({cos(x)},{sin(x)}) -- (axis cs:0,0) -- cycle;
    \end{axis}
    \foreach \i in {1,...,100} {
        \pgfmathsetmacro{\x}{(rand+1)}
        \pgfmathsetmacro{\y}{(rand+1)}
        \fill (\x, \y) circle (1pt);
    }
\end{tikzpicture}
    \end{minipage}    
    \caption{
        Left: Monte-Carlo approximator $\progPi$ for $\pi$ using ability \eqref{ability:sample}.
        Right: 100 random samples.
    }
    \label{fig:intro}
\end{figure}

\paragraph{Loops and Invariants}

Now, consider the entire program $\progPi$ in \Cref{fig:intro}.
We have $\wp{\progPi}{\pVarcount} = \tfrac \pi 4 \cdot \pVarm$ because, intuitively, the percentage of samples landing in the quarter circle equals its area.
Our techniques enable to soundly bound $\wp{\progPi}{\pVarcount}$ in a semi-automated manner \emph{for arbitrary initial values of~$\pVarm$}.
This is done, as is standard in deductive verification, using quantitative loop invariants, which we detail in \Cref{sec:invariants_and_unrolling}.
Suffices to say here that, given a suitable invariant $\exI$, our techniques enable to verify its validity automatically.
This in turn allows us to conclude that:
\[
	\forall\,\text{initial states $\pState$}\colon \quad 
	\wp{\progPi}{\pVarcount}(\pState) \lleq 0.85 \cdot \pState(\pVarm)~.
\]
See \Cref{ex:invariant_monte_carlo} (\cpageref{ex:invariant_monte_carlo}) and \Cref{ex:caesar_loop} (\cpageref{ex:caesar_loop}) for details.

\paragraph{Contributions}

In summary, this paper introduces the novel \emph{lower} and \emph{upper Riemann weakest pre-expectation transformers} and demonstrates their applicability to the semi-automated verification of PP with \emph{continuous uniform sampling} and general \emph{$\KWWHILE$-loops}.
Put more concretely:
\begin{description}
    \item[Verification of Invariants] 
    We show that externally provided loop invariant candidates can be verified using the Riemann $\wpSymb$ transformers in a way that is suitable for SMT-based automation.
    Such invariants, once verified, imply (one-sided) bounds on $\wpSymb$, $\wlpSymb$, and $\cwpSymb$.
    \item[Convergence and Complexity]
    On the more theoretical side, our approximate Riemann $\wpSymb$'s are shown to \emph{converge} to the exact $\wpSymb$'s under mild assumptions as the discretization becomes finer.
    As a consequence, we obtain $\coRE$-completeness\footnote{This means that the refutation of such bounds is semi-decidable.} results for the problems of verifying upper bounds on $\wpSymb$ and lower bounds on $\wlpSymb$ for fairly general loopy programs.
    \item[Implementation and Case Studies]
    We incorporate our method in the PP verification infrastructure \toolcaesar~\cite{DBLP:journals/pacmpl/SchroerBKKM23} by encoding Riemann sums in its \emph{intermediate verification language}.
    This enables transferring principles from classical SMT-based program verification \cite{DBLP:conf/vmcai/0001SS16}, such as custom first-order theories for reasoning about exponentials, to the verification of PPs involving continuous sampling. 
    We provide various case studies and an empirical evaluation.
\end{description}

\paragraph{Paper Structure}

\Cref{sec:birdseye} provides a bird's eye view on the distinctive features and technicalities of our method.
\Cref{sec:prelims} introduces basic fixed point theory and \Cref{sec:programs} summarizes the existing definitions of $\wpSymb$, $\wlpSymb$, and $\cwpSymb$.
In \Cref{sec:approxwp}, we introduce our Riemann $\wpSymb$ transformers and prove them sound.
\Cref{sec:invariants_and_unrolling} is devoted to loop invariants.
The convergence results are presented in \Cref{sec:convergence}.
Towards automation, \Cref{sec:syntax} introduces a concrete, effective syntax for expectations.
\Cref{sec:case_studies} details the implementation in \toolcaesar and presents case studies.
Additional related work is surveyed in \Cref{sec:relwork}.
We conclude in \Cref{sec:conclusion}.
\iftoggle{arxiv}{Omitted proofs can be found in \Cref{app:proofs}.}{An extended version of this article containing the ommitted proofs and additional material is available~\cite{arxiv}.}

\section{A Bird's Eye View}
\label{sec:birdseye}

In this section, we explain what makes our method unique by providing a compressed overview of its characteristic features.
We also discuss its limitations and point out a subtle technical challenge.

\subsection{Highlights and Comparison to Other Approaches}

Since we are not the first to address the problem of proving bounds on quantities expressible as $\wpwlpSymb$'s, we now highlight five distinctive features of our approach --- their combination is unique in the literature.
A more in-depth review of related work is deferred to \Cref{sec:relwork}.

Of the following five items, the first two are specific to the way we handle integrals via Riemann sums, whereas the latter three tend to apply to $\wpSymb$-based approaches in general.

\paragraph{An Alternative to Moment-Based Analyses}

A major thread of existing work, e.g.,~\cite{DBLP:conf/sas/ChakarovS14,DBLP:journals/pacmpl/AgrawalC018,DBLP:journals/pacmpl/MoosbruggerSBK22}, circumvents explicit integration altogether.
They achieve this by focusing on classes of programs and properties where, simply speaking, the analysis can soundly replace the distributions in the program by their \emph{mean}~\cite[Appendix A, p.~31]{DBLP:journals/corr/abs-1709-04037} or some higher moment~\cite{DBLP:journals/pacmpl/MoosbruggerSBK22}.
For example, $\wp{\UNIFASSIGN{\pVar}}{\pVar \pVarb}$ is equal to\footnote{In general, replacing $\UNIFASSIGN{\pVar}$ by $\ASSIGN{\pVar}{\tfrac 1 2}$ is sound whenever the post-expectation is linear in $\pVar$. While our approach is compatible with such optimizations, we shall not discuss them any further.} $\wp{\ASSIGN{\pVar}{\tfrac 1 2}}{\pVar \pVarb} = \tfrac 1 2 y$, where $\tfrac 1 2$ is the mean of $\UNIFOVER{0}{1}$.
However, such mean-based analyses do not always suffice.
Indeed, reconsidering the example from \Cref{sec:intro}, the inequality $\wp{\progPiInner}{\pVarcount} \eleq \pVarcount + 0.85$, proved correct with our approach, becomes false if we substitute $\UNIFASSIGN{\pVar}$ and $\UNIFASSIGN{\pVarb}$ in $\progPiInner$ by $\ASSIGN{\pVar}{\tfrac 1 2}$ and $\ASSIGN{\pVarb}{\tfrac 1 2}$, respectively.
In fact, the so-obtained program $\progPiInner'$ satisfies $\wp{\progPiInner'}{\pVarcount} = \pVarcount + 1$ as the point $(\tfrac 1 2, \tfrac 1 2)$ \emph{is certainly inside} the quarter unit circle.

\paragraph{General Conditional Branching}

We make relatively mild assumptions regarding the shape of $\KWIF$- and $\KWWHILE$-guards in the programs --- in practice, general Boolean combinations of the polynomial (in)equalities over all program variables are supported, as in \Cref{fig:intro}.
This is  different from, e.g.,~\cite{DBLP:journals/pacmpl/MoosbruggerSBK22} which restricts to finitely-valued variables in guards or~\cite{DBLP:conf/sas/ChakarovS14,DBLP:conf/popl/ChatterjeeFNH16} which restrict to linear guards.
We achieve this high level of generality thanks to the simplicity of the Riemann approximation which treats integration over the resulting indicator functions in a uniform manner.

\paragraph{Moving Backwards: Verification and Parametric Models}

We conduct a \emph{backward analysis}: 
Given a random variable $\ex$ over final program states (called \emph{post-expectation} in this paper), we approximate the (conditional) weakest pre-expectation $\wp{\prog}{\ex}$, which is \emph{a function of the initial state}.
This is dual to works like~\cite{DBLP:conf/cav/GehrMV16,DBLP:conf/pldi/BeutnerOZ22,wang2024static} that conduct a \emph{forward analysis} to solve a classic Bayesian inference problem, namely to compute a program's posterior density --- \emph{a function of the final state} --- for a given prior distribution.
Let us contrast these two paradigms in greater detail.

Backward analysis, as done in this paper, allows verifying program properties that hold for \emph{all initial states}.
For instance, our framework supports questions such as, \enquote{\emph{Is the posterior mean of $x$ at most twice the initial value of $y$?}}, expressed as $\cwpSymb[\prog](x) \le 2y$.
The backward approach is thus well-suited for tasks such as \emph{verification} --- ensuring a program behaves as intended \emph{for all inputs} --- and \emph{parametric analysis} --- examining how expected behavior changes when program parameters are altered (parameters can be modeled as uninitialized program variables).

Forward analyses, on the other hand, are typically motivated by probabilistic programming languages (PPLs) explicitly designed for automating Bayesian inference, see e.g., \cite{DBLP:journals/corr/abs-1809-10756,dippl,DBLP:journals/jmlr/BinghamCJOPKSSH19,carpenter2017stan}. 
We remark that there are some fundamental discrepancies between these PPLs and the one studied in this paper.
For example, the former usually offer extensive support for general continuous distributions, but do not always support general loops.
Moreover, Bayesian inference problems often require \emph{soft conditioning}, which our approach only encodes as syntactic sugar (see Remark 2).
Finally, some applications of Bayesian inference involve loading and processing datasets with thousands of observations.
In principle, such data could be hardcoded into our programs, but the result would likely be too large for current SMT-based analysis techniques (as indicated by our experiments in \Cref{sec:empirical_eval}).

In summary, our backward approach is particularly suited for proving properties of complex stochastic processes described as probabilistic programs, especially those involving unbounded loops
 However, it is less effective for data-intensive statistical analysis.

\paragraph{A Zoo of Quantities}

The majority of existing methods for (semi-)automated exact analysis of PP with loops and continuous distributions focus on one of these tasks:
(i) bound the posterior distribution~\cite{DBLP:conf/cav/GehrMV16,DBLP:conf/pldi/BeutnerOZ22,wang2024static},
(ii) prove various flavors of (non-)termination, e.g.,~\cite{DBLP:conf/cav/ChakarovS13,DBLP:conf/cav/ChatterjeeFG16,DBLP:journals/pacmpl/AgrawalC018}, and
(iii) bound assertion violation probabilities~\cite{DBLP:conf/cav/ChakarovS13,DBLP:conf/pldi/WangS0CG21}.
An exception is~\cite{DBLP:conf/pldi/Wang0GCQS19} which tackles \emph{cost analysis}.
In contrast, in the $\wpSymb$-based framework, we can express and reason about a remarkably large variety of quantities, including the following (assume that $\prog$ does not contain conditioning for simplicity):
\begin{itemize}
    \item Termination probabilities: $\wp{\prog}{1}$
    \item The probability of terminating in a predicate\footnote{Recall that $\iv{\guard}$ is the indicator function of the predicate $\guard$.} $\guard$: $\wp{\prog}{\iv{\guard}}$
    \item The expected value of variable $\pVar$ on termination: $\wp{\prog}{\pVar}$
    \item Higher moments of $\pVar$ after termination: $\wp{\prog}{\pVar^k}$ for $k \geq 1$
    \item Expected distance between variables $\pVar$ and $\pVarb$ after termination: $\wp{\prog}{|\pVar-\pVarb|}$
\end{itemize}
\iftoggle{arxiv}{%
Note that this includes reasoning about \emph{unbounded} random variables.
For example, a program variable $\pVar$ may become arbitrarily large during program execution.
}{}

\begin{example}
    Let us illustrate the above quantities.
    Consider the following program $\prog$:
    \begin{align*}
        \ITE{\pVar = 1}{ \PCHOICE{\ASSIGN{\pVarb}{0}}{\nicefrac{1}{2}}{\ASSIGN{\pVarb}{2}}}{\PCHOICE{\ASSIGN{\pVarb}{0}}{\nicefrac{4}{5}}{\ASSIGN{\pVarb}{3}}}
    \end{align*}
    \begin{itemize}
    	\item $\wp{\prog}{1}=1$, i.e., $\prog$ terminates with probability $1$ for each initial state.
    	\item $\wp{\prog}{\iv{\pVarb = 0}} \eeq \nicefrac{1}{2} \cdot \iv{\pVar = 1} + \nicefrac{4}{5} \cdot \iv{\pVar \neq 1}$, i.e., if initially $\pVar=1$, then the probability to terminate in $\pVarb = 0$ is $\nicefrac{1}{2}$. For all other initial states, this probability is $\nicefrac{4}{5}$.
    	\item  $\wp{C}{\pVarb} = 1\cdot \iv{\pVar = 1} + \nicefrac{3}{5} \cdot \iv{\pVar \neq 1}$, i.e., the expected final value of $\pVarb$ is either $1$ or $\nicefrac{3}{5}$, depending on whether $\pVar = 1$ holds initially or not.
    	\item $ \wp{C}{\pVarb^2}  = 2 \cdot \iv{\pVar = 1} + \nicefrac{9}{5} \cdot \iv{\pVar \neq 1}$, i.e., the second moment of $\pVarb$ on termination is either $2$ or $\nicefrac{9}{5}$, depending on whether $\pVar = 1$ holds initially or not.
    	\item $\wp{\prog}{| \pVar - \pVarb |} = \iv{\pVar = 1} \cdot (\nicefrac{1}{2} \cdot |\pVar| + \nicefrac{1}{2} \cdot | \pVar - 2| ) + \iv{\pVar \neq 1} \cdot (\nicefrac{4}{5} \cdot |\pVar| + \nicefrac{1}{5} \cdot | \pVar - 3|)$, which is the expected difference between the final values of $\pVar$ and $\pVarb$ in terms of their initial values. 
    \end{itemize}
    The above weakest pre-expectations can be determined by applying the rules in \Cref{tab:original} (\cpageref{tab:original}).
\end{example}

\paragraph{Conditioning, Everywhere}

Following~\cite{DBLP:journals/toplas/OlmedoGJKKM18}, we allow conditioning in the form of (hard) $\KWOBSERVE$-statements at arbitrary places in the program --- even inside loops.
All of the above quantities can be generalized sensibly to the conditional setting.
We can go further:
For instance, the probability to never violate an $\KWOBSERVE$, even if the program does not terminate, is given by $\wlp{\prog}{1}$.

\subsection{Limitations and Assumptions}

As mentioned, to analyze loops, we assume \emph{user-provided invariants}.
Moreover, we restrict to native support for continuous \emph{uniform} distributions and discrete coin flips.
Further, while we support two-sided bounds for loop-free programs, the kind of invariants we study in this paper can only prove \emph{upper} bounds on $\wpSymb$ and $\cwpSymb$, and \emph{lower} bounds on $\wlpSymb$.
We conjecture that our approach extends to invariant-based methods for the other directions~\cite{DBLP:journals/pacmpl/HarkKGK20} as well, but leave the details for future work.
To keep the presentation simple, we only discuss invariant verification for \emph{non-nested loops}.
The case of general loop structures, including sequential and/or nested, can be dealt with as in~\cite{DBLP:journals/pacmpl/BatzBKW24}.
We assume non-negative real-valued program variables and do not support data structures.
Soft-conditioning is only supported indirectly.
See \Cref{rem:generalDist,rem:nonNeg,rem:softConditioning} for more details.

\subsection{From Lebesgue Integrals to Riemann Sums and Back: A Technical Challenge}
\label{sec:challenges}

\newcommand{\progdirichlet}{D}

The Lebesgue integral is \emph{the} standard integral in probability theory.
There is good reason for this:
Lebesgue integrals can be defined for a broader class of functions than, say, Riemann integrals, and have favorable mathematical properties like Monotone Convergence Theorems.

Therefore, to define the semantics of PPs, most works (e.g., \cite{DBLP:conf/setss/SzymczakK19,dahlqvist2020semantics}) have resorted to Lebesgue integrals, and we do not question that this is the way to go.
To illustrate the usefulness of the Lebesgue integral as opposed to the Riemann integral\footnote{The Riemann integral is the limit of both the lower and upper Riemann sums as $\N \to \infty$, provided the two limits coincide.}, consider the following program $\progdirichlet$: 
\begin{align*}
    & \SEQ{\ASSIGN{\pVar}{0}}{\ASSIGN{\pVarb}{1}} \, \SEMICOLON \\
	& \WHILE{\pVarb\cdot\pVarc \pNEQ \pVar} 
	{\ASSIGN{\pVarb}{\pVarb+1} \,\SEMICOLON\, \ASSIGN{\pVar}{0} \,\SEMICOLON \,
	 \WHILE{\pVarb \cdot\pVarc \pNEQ \pVar \pAND \pVar < \pVarb}{\ASSIGN{\pVar}{\pVar+1}} \, \} }
\end{align*}
This program \enquote{searches} for non-negative integers $\pVar$ and $\pVarb$ such that $\pVarc = \tfrac \pVar \pVarb$ and stops once it finds such $\pVar$ and $\pVarb$.
It thus terminates if and only if $\pVarc$ is a \emph{rational number} in $\uIval$ initially (recall that we allow actual \emph{real}-valued variables).
In symbols, we have $\wp{\progdirichlet}{1} = \iv{\pVarc \in \rats \land 0 \leq \pVarc \leq 1}$ --- this is the well-known \emph{Dirichlet function} restricted to the unit interval.
Now, consider the program $\SEQ{\UNIFASSIGN{\pVarc}}{\progdirichlet}$.
Its termination probability is zero, the Lebesgue integral of the Dirichlet function over the interval $\uIval$.
Intuitively, this is because almost all real numbers are irrational, i.e., the rational numbers have Lebesgue measure zero.
However, the Dirichlet function is \emph{not Riemann-integrable} --- it is \enquote{too discontinuous}.
As a consequence, one cannot easily define $\wpSymb$ --- neither in general nor in this specific example --- relying on Riemann integrals.

In light of the above example, the following question arises naturally:
\begin{myhighlight}
    \emph{How sensible is an analysis of loops based on Riemann sums given the fact that $\wpSymb$'s of loops are \underline{not Riemann-inte}g\underline{rable} in general, not even for programs using only polynomial arithmetic and constant post-expectations?}
\end{myhighlight}
This question has at least two dimensions.

(1) Regarding \emph{soundness}, we prove that our Riemann $\wpSymb$'s are \emph{always} sound under- or over-approximations, i.e., the following inequalities hold for all $\N \geq 1$ in a very general setting:
\[  
\forall~\text{initial states $\st$}\colon\quad
    \lwp{\N}{\prog}{\ex}(\st)
    \lleq
    \wp{\prog}{\ex}(\st)
    \lleq
    \uwp{\N}{\prog}{\ex}(\st)        
\]
and similarly for $\wlpSymb$.
This works because the \emph{lower Riemann integral} --- the limit of the lower Riemann sums as the discretization becomes fines --- is a lower bound on the Lebesgue integral, and similarly for the upper Riemann integral.
This is always true, even if the lower and upper integral are different, i.e., if the function at hand is not Riemann-integrable. 

(2) Will the Riemann approximation always converge (in some sense) to the exact $\wpSymb$ as the discretization becomes finer?
We prove that, under mild assumptions such as polynomial arithmetic in the program, the answer is \emph{yes} for loop-free programs.
As one of our theoretical main results for loopy programs, we show that (\Cref{thm:pointwiseConv})
\[
    \sup_{n \geq 1} \lwp{n}{\unfold{\prog}{n}}{\ex}
    \eeq
    \wp{\prog}{\ex}
    \qquad
    \text{but}
    \qquad
    \inf_{n \geq 1} \uwp{n}{\unfold{\prog}{n}}{\ex}
    ~\overset{\text{in general}}{\neq}~
    \wp{\prog}{\ex}
    ~,
    \tag{$\ddagger$}\label{eq:wow}
\]
where $\unfold{\prog}{n}$ arises from $\prog$ by unfolding all loops up to depth $n$ by taking a simultaneous limit of the unfolding depth and the fineness of the lower Riemann sum approximation (the $n$ in $\lwpSymb{n}$).
Via equation \eqref{eq:wow} we \emph{recover} the exact $\wpSymb$ --- defined in terms of \emph{Lebesgue} integrals --- as a limit of our approximate $\wpSymb$'s based on lower Riemann sums for a general class of probabilistic loops.

\section{Preliminaries}
\label{sec:prelims}

In this section, we set up our notations and treat the fixed point-theoretic foundations we rely on.

\subsection{General Notation}

$\nats$ is the set of non-negative integers.
The set of non-negative \emph{extended reals} is $\ennReals = \nnReals \cup \{\infty\}$.
We adopt the following standard conventions:
For all $x \in \ennReals$, we let $x \leq \infty$ and $x + \infty = \infty + x = \infty$.
Moreover, we define $\infty \cdot 0 = 0 \cdot \infty = 0$ and $x \cdot \infty = \infty \cdot x = \infty$ for all $x > 0$.
Given $\ivalL, \ivalR \in \reals$, we denote by $\clIvalGen$ the real closed interval with endpoints $\ivalL$ and $\ivalR$.
The set of truth values is $\bools = \{\boolConstFalse, \boolConstTrue\}$.
Given a predicate $\guard \colon A \to \bools$ over a set $A$, we define the \emph{Iverson bracket}
\begin{align*}
    \iv{\guard} \colon A \to \{0,1\},~ a \mapsto
    \begin{cases}
        0 & \text{ if } \guard(a) = \boolConstFalse~, \\
        1 & \text{ if } \guard(a) = \boolConstTrue~. 
    \end{cases}
\end{align*}
For example, assuming that it is understood from context that $x$ is a real number, then $\iv{x \in \rats}$ denotes the \emph{Dirichlet function} that sends every $x \in \reals$ to $1$ if $x$ is rational, and to $0$ otherwise.

Lambda notation $\lam{x}{\aExp}$, where $\aExp$ is some mathematical expression with free variable $x$, is used to introduce unnamed functions whose domain and codomain will be clear from the context.

\subsection{Fixed Point Theory}
\label{sec:prelims:fixpoint}

We introduce the foundations from fixed point theory \cite[Section 5.4]{DBLP:books/daglib/0070910} required to sensibly reason about the semantics of loops and to obtain suitable notions of quantitative loop invariants.

Let $\poGen$ and $\poGenb$ be partial orders.
A function $\fun \colon \poDom \to \poDomb$ is called \emph{monotonic}, if for all $\poElem \poleq \poElemb$, we have $\fun(\poElem) \poleqb \fun(\poElemb)$.
An \emph{$\omega$-chain} in $\poGen$ is a monotonic function $\poElem \colon \nats \to \poDom$ (where $\nats$ is ordered by the usual $\leq$ relation), i.e., an $\omega$-chain is a non-decreasing sequence of elements $\poElem(0) \poleq \poElem(1) \poleq \ldots$ from $\poDom$.
The partial order $\poGen$ is an \emph{$\omega$-complete partial order} ($\omega$-cpo, for short), if all $\omega$-chains $\poElem$ in $\poDom$ have a \emph{supremum} (least upper bound) $\sup_{i \in \nats} \poElem(i)$ in $\poGen$. 
An $\omega$-cpo $\poGen$ with an element $\poBot \in \poDom$ satisfying $\poBot \poleq \poElem$ for all $\poElem$ is called \emph{$\omega$-cpo with bottom}.
A function $\fun \colon \poDom \to \poDomb$ between $\omega$-cpos $\poGen$ and $\poGenb$ is called \emph{$\omega$-continuous}, if for all $\omega$-chains $\poElem$ in $\poDom$ it holds that $\sup_{i \in \nats} \fun(\poElem(i)) =  \fun(\sup_{i \in \nats} \poElem(i))$.
Let $\fun \colon \poDom \to \poDom$ be a function where $\poDom$ is an arbitrary set.
An element $\poElem \in \poDom$ satisfying $\fun(\poElem) = \poElem$ is called a \emph{fixed point} of $\fun$.
The following is often attributed to Kleene:
\begin{theorem}[Kleene's Fixed Point Theorem~\textnormal{\cite[Theorem~5.11]{DBLP:books/daglib/0070910}}]
    \label{thm:kleene}
    Let $\poGen$ be an $\omega$-cpo with bottom and let $\fun \colon \poDom \to \poDom$ be $\omega$-continuous.
    Then $\fun$ has a least fixed point $\lfp\fun \in \poDom$ and $\lfp\fun = \sup_{i \in \nats} f^i(\poBot)$ where $\fun^i(\poBot)$ denotes the $i$-fold application of $\fun$ to $\poBot$.
\end{theorem}

A partial order $\poGen$ is said to be \emph{$\omega$-\underline{co}complete} ($\omega$-cocpo) if all \emph{$\omega$-\underline{co}chains}, i.e., non-\emph{increasing} sequences $\poElem(0) \pogeq \poElem(1) \pogeq \ldots$ in $\poDom$ have an infimum $\inf_{i \in \nats}\poElem(i) \in \poDom$.
An $\omega$-cocpo with a greatest element $\poTop$ is called $\omega$-cocpo \emph{with top}.
Note that $\poGen$ is an $\omega$-cocpo (with top) iff the reversed partial order $\po{\poDom}{\pogeq}$ is an $\omega$-cpo (with bottom).
Similarly, a function $\fun \colon \poDom \to \poDomb$ between $\omega$-cocpos $\poGen$ and $\poGenb$ is called \emph{$\omega$-\underline{co}continuous}, if $\inf_{i \in \nats} \fun(\poElem(i)) =  \fun(\inf_{i \in \nats} \poElem(i))$ for all non-increasing $\poElem$.
Note that for an $\omega$-cocpo $\po{\poDom}{\poleq}$ with top, \Cref{thm:kleene} reads as follows:
If $\fun \colon \poDom \to \poDom$ is $\omega$-\underline{co}continuous, then $\fun$ has a \emph{greatest} fixed point $\gfp\fun = \inf_{i \in \nats} \fun^i(\top)$.

A partial order is called \emph{$\omega$-bicomplete} ($\omega$-bicpo), if it is both an $\omega$-cpo and $\omega$-cocpo.
Similarly, a function $\fun$ between $\omega$-bicpos is called \emph{$\omega$-bicontinuous}, if it is both 
$\omega$-continuous and $\omega$-\underline{co}continuous.

A partial order $\poGen$ is a \emph{complete lattice} if for all $\poSubset \subseteq \poDom$ there exists $\sup \poSubset \in \poDom$ and $\inf \poSubset \in \poDom$.
Note that every complete lattice is an $\omega$-bicpo with bottom $\poBot = \sup \emptyset$ and top $\poTop = \inf \emptyset$.

\begin{theorem}[Knaster-Tarski Theorem~\textnormal{\cite[Theorems 5.15 and 5.16]{DBLP:books/daglib/0070910}}]
    \label{thm:knasterTarski}
    Let $\fun \colon \poDom \to \poDom$ be a monotonic function on the complete lattice $\poGen$.
    Then $\fun$ has a \emph{least} and \emph{greatest fixed point} $\lfp\fun \in \poDom$ and $\gfp\fun \in \poDom$.
    Moreover, for all $\poElem \in \poDom$ it holds that
    \begin{align*}
        \fun(\poElem) \poleq \poElem
        \quad\text{implies}\quad
        \lfp\fun \poleq \poElem
        \quad\qqand\quad
        \poElem \poleq \fun(\poElem)
        \quad\text{implies}\quad
        \poElem \poleq \gfp\fun
        ~.
    \end{align*}
\end{theorem}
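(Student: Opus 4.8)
The plan is to follow the classical Knaster--Tarski argument, which uses only the monotonicity of $\fun$ together with the fact that every subset of $\poDom$ has an infimum and a supremum. First I would consider the set of \emph{pre-fixed points}
\[
    \poSubset \eeq \{\poElem \in \poDom \mid \fun(\poElem) \poleq \poElem\}~,
\]
which is nonempty because $\fun(\poTop) \poleq \poTop$ holds trivially, and then define $\poElem^* \eeq \inf \poSubset$; this infimum exists since $\poGen$ is a complete lattice. The first key step is to show that $\poElem^*$ is a fixed point. For $\fun(\poElem^*) \poleq \poElem^*$: given any $\poElemb \in \poSubset$ we have $\poElem^* \poleq \poElemb$, so monotonicity gives $\fun(\poElem^*) \poleq \fun(\poElemb) \poleq \poElemb$; thus $\fun(\poElem^*)$ is a lower bound of $\poSubset$ and hence $\fun(\poElem^*) \poleq \inf \poSubset = \poElem^*$. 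For the reverse inequality: applying $\fun$ to $\fun(\poElem^*) \poleq \poElem^*$ and using monotonicity yields $\fun(\fun(\poElem^*)) \poleq \fun(\poElem^*)$, i.e.\ $\fun(\poElem^*) \in \poSubset$, so $\poElem^* = \inf \poSubset \poleq \fun(\poElem^*)$. Antisymmetry then gives $\fun(\poElem^*) = \poElem^*$.

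Next I would verify minimality and the induction principle for $\lfp$. Any fixed point $\poElem$ satisfies $\fun(\poElem) = \poElem \poleq \poElem$, hence $\poElem \in \poSubset$ and therefore $\poElem^* \poleq \poElem$; combined with the previous step this identifies $\poElem^*$ as the least fixed point $\lfp \fun$. The claimed implication then falls out directly: if $\fun(\poElem) \poleq \poElem$, then $\poElem \in \poSubset$ by definition, so $\lfp\fun = \inf\poSubset \poleq \poElem$.

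Finally, the statements about $\gfp\fun$ I would obtain by the dual argument, observing that a monotonic function on $\poGen$ is still monotonic on the reversed order $\po{\poDom}{\pogeq}$, which is again a complete lattice. Concretely, one takes the set of \emph{post-fixed points} $\poSubset' \eeq \{\poElem \in \poDom \mid \poElem \poleq \fun(\poElem)\}$ and sets $\gfp\fun \eeq \sup \poSubset'$; the symmetric reasoning shows this is the greatest fixed point and that $\poElem \poleq \fun(\poElem)$ implies $\poElem \poleq \gfp\fun$. I do not expect a genuine obstacle here; the only points requiring care are keeping the direction of the inequalities straight when deriving $\fun(\poElem^*) \in \poSubset$ from $\fun(\poElem^*) \poleq \poElem^*$, and noting that it is the full complete-lattice hypothesis --- not merely an $\omega$-cpo --- that makes $\inf\poSubset$ available for the arbitrary (possibly uncountable) set $\poSubset$.
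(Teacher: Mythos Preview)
Your argument is the standard Knaster--Tarski proof and is correct. Note, however, that the paper does not actually prove this theorem: it is stated as \Cref{thm:knasterTarski} with a citation to an external reference and is used as a black box throughout. So there is no ``paper's own proof'' to compare against; your write-up simply supplies the classical proof that the cited reference would contain.
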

The above implications are often referred to as \emph{Park (co)induction}~\cite{park1969fixpoint}.

\section{Weakest Pre-Expectations for Probabilistic Programs}
\label{sec:programs}

In this section we define programming language $\pWhile$ and its weakest pre-expectation semantics based on \emph{Lebesgue} integrals as defined in~\cite{DBLP:conf/setss/SzymczakK19}.

\subsection{Program Syntax}
\label{sec:programs:syntax}

For the rest of the paper we fix a finite%
\footnote{Once $\pVars$ is fixed we can only write programs with at most $|\pVars|$ distinct variables. However, as we never make any assumptions about the size of $\pVars$, our theory applies to programs with arbitrarily many variables.
Previous work~\cite{DBLP:conf/setss/SzymczakK19} has considered an infinite $\pVars$, but this requires defining a measure space on the infinite-dimensional $\nonNegReals^\pVars$, which is somewhat more involved.
}
set $\pVars = \{\pVar,\pVarb,\ldots\}$ of program variables.
A \emph{(program) state} is a variable valuation $\pSt \in \pStates$, where $\pStates$ is a shorthand for the set of functions $\pVars \to \nnReals$.
Notice that our program variables range over the \emph{non-negative} reals.
The restriction to non-negative variables is for technical convenience and not essential, see \Cref{rem:nonNeg}.

To obtain a well-defined weakest pre-expectation semantics of programs involving continuous sampling, we need to have some measure-theoretic fundamentals in mind, provided in\iftoggle{arxiv}{ \Cref{sec:prelims:measure}}{~\cite{arxiv}}.
Suffice it to say here that we consider the standard Borel \proseSigmaAlgebra and Lebesgue measure $\lebmes$ on $\pStates$. Lebesgue integrals of a measurable function $\fun \colon \reals \to \ennReals$ over a measurable set $\measurableSet \subseteq \reals$ are denoted by $\int_\measurableSet \fun \, d\lebmes$, or $\int_\measurableSet \fun(x) \, d\lebmes(x)$.
We explicitly allow Lebesgue integrals to evaluate to $\infty$.

Now let $\aExps$ be a set of measurable functions of type $\pStates \to \nnReals$ %called \emph{arithmetic expressions}, 
and let $\guards$ be a set of measurable functions of type $\states \to \bools$. %called \emph{guards}.
Elements of $\aExps$ and $\guards$ are called \emph{arithmetic expressions} and \emph{guards}, resp.

\begin{definition}[Probabilistic Programs]
	\label{def:pwhile}
    \newcommand{\syntaxDescr}[1]{\text{\textcolor{gray}{(#1)}}}
    Programs $\prog$ in the set $\pWhileWith{\aExps}{\guards}$ of programs with arithmetic expressions from $\aExps$ and guards from $\guards$ adhere to the following grammar:
	\begin{align*}
		\prog \quad\grammarSymb\quad &\SKIP & &\syntaxDescr{effectless program} \\[-1pt]
		\mmid &\DIVERGE & &\syntaxDescr{nonterminating program} \\[-1pt]
		\mmid &\ASSIGN{\pVar}{\aExp} & &\syntaxDescr{assignment; $\pVar \in \pVars, \aExp \in \aExps$} \\[-1pt]
		\mmid &\OBSERVE{\guard} & &\syntaxDescr{conditioning; $\guard \in \guards $} \\[-1pt]
		\mmid &\UNIFASSIGN{\pVar} & &\syntaxDescr{sample from real interval $\uIval$; $\pVar \in \pVars$} \\[-1pt]
		\mmid &\ITE{\guard}{\prog}{\prog} & &\syntaxDescr{conditional choice; $\guard \in \guards$} \\[-1pt]
		\mmid &\PCHOICE{\prog}{\prob}{\prog} & &\syntaxDescr{probabilistic choice; $\prob \in \uIval \cap \rats$} \\[-1pt]
		\mmid &\SEQ{\prog}{\prog} & &\syntaxDescr{sequential composition} \\[-1pt]
		\mmid &\WHILE{\guard}{\prog} & &\syntaxDescr{while loop; $\guard \in \guards$}
	\end{align*}
	If $\aExps$ and $\guards$ are the sets of \emph{all} measurable functions of the corresponding type, we write $\pWhile$ instead of $\pWhileWith{\aExps}{\guards}$.
	A program not containing while-loops is called \emph{loop-free}.\qedDef
\end{definition}

Let us briefly go over each construct, all of which are standard.
$\SKIP$ does nothing.
$\DIVERGE$ is a non-terminating program, i.e., behaves like $\WHILE{\boolConstTrue}{\SKIP}$.
$\ASSIGN{\pVar}{\aExp}$ assigns the value of the arithmetic expression $\aExp$ evaluated in the current state to variable $\pVar$.
$\UNIFASSIGN{\pVar}$ assigns to $\pVar$ a value drawn from the continuous uniform $\uIval$-distribution.
$\OBSERVE{\guard}$  \emph{conditions} the program execution on the guard $\guard$ being true.
$\PCHOICE{\prog_1}{\prob}{\prog_2}$ executes $\prog_1$ with probability $\prob$, otherwise $\prog_2$.
The assumption $\prob \in \uIval \cap \rats$ is to avoid defining a dedicated syntax for probabilities later on in \Cref{sec:syntax}.
$\ITE{\guard}{\prog_1}{\prog_2}$, $\SEQ{\prog_1}{\prog_2}$, and $\WHILE{\guard}{\prog}$ are standard conditional choices, sequential compositions, and while-loops, respectively.
See\iftoggle{arxiv}{ \Cref{app:redudanciesSyntax}}{~\cite{arxiv}} for additional remarks.

\begin{rem}[More General Distributions]
    \label{rem:generalDist}
    In theory, the restriction to uniform $\unitInterval$-distributions does not limit expressiveness:
    Arbitrary distributions can be simulated by sampling from $\unitInterval$ and applying the inverse
	\emph{cumulative distribution function} (CDF) of the target distribution~\cite{DBLP:conf/setss/SzymczakK19}.
    However, to obtain decidability results, we further restrict the syntax of arithmetic expressions and guards to a class of functions expressible in first-order (FO) real arithmetic, see \Cref{sec:syntax}.
    Distributions whose inverse CDF belongs to this class include triangular, trapezoidal, U-quadratic, and Kumaraswamy distributions.
    On the other hand, distributions with transcendental inverse CDF (Gaussian, Laplace, etc.) do not reside in this class.
    We mention two future directions to address this issue:
    (i) leverage heuristics implemented in modern SMT solvers to discharge the generated verification conditions, even if they do not belong to a decidable theory,
    (ii) soundly over/under-approximate the transcendental inverse CDF by FO-expressible algebraic functions.
\end{rem}

\begin{rem}[Soft Conditioning]
	\label{rem:softConditioning}
    Our syntax has native support for \emph{hard conditioning} ($\KWOBSERVE$). Bounded \emph{soft conditioning} (scoring) --- multiplying the current execution with a weight in $\uIval$ --- can be simulated using $\UNIFASSIGN{\pVar}$ and $\KWOBSERVE$, see~\cite[Lemma 5]{DBLP:conf/setss/SzymczakK19} for details.
\end{rem}

\subsection{Weakest Pre-Expectation Semantics}
\label{sec:programs:wp}

\begin{table}[t]
    \caption{
    	Inductive definition of weakest (liberal) pre-expectations for post-expectation $\ex$~\cite{DBLP:conf/setss/SzymczakK19}.
    }
    \label{tab:original}
    \begin{adjustbox}{max width=\textwidth}
        \def\arraystretch{1.2} % increase space between rows
        \begin{tabular}{l l l} 
            \toprule
            $\prog$ & $\wp{\prog}{\ex}\quad$\textcolor{gray}{where $\ex\in\expsmeas$} & $\wlp{\prog}{\ex}\quad$ ~~\textcolor{gray}{where $\ex\in\bexpsmeas$}  \\
            \midrule
            $\SKIP$ & $\ex$ & $\ex$ \\
            $\DIVERGE$ & $0$ & $1$ \\
            $\ASSIGN{\pVar}{\aExp}$ & $\exSubsGen$ & $\exSubsGen$ \\
            $\UNIFASSIGN{\pVar}$ & $\lam{\st}{\int_\uIval \ex(\pStUpdate{\st}{\pVar}{\xi}) \,d\lebmes(\xi)}$ & $\lam{\st}{\int_\uIval \ex(\pStUpdate{\st}{\pVar}{\xi}) \,d\lebmes(\xi)}$ \\
            $\OBSERVE{\guard}$     &$\iv{\guard} \cdot \ex$ &$\iv{\guard} \cdot \ex$ \\
            $\ITE{\guard}{\prog_1}{\prog_2}$ & $\iv{\guard} \cdot \wp{\prog_1}{\ex}+\iv{\neg\guard} \cdot \wp{\prog_2}{\ex}$ & $\iv{\guard} \cdot \wlp{\prog_1}{\ex} + \iv{\neg\guard} \cdot \wlp{\prog_2}{\ex}$ \\
            $\PCHOICE{\prog_1}{\prob}{\prog_2}$ & $\prob \cdot \wp{\prog_1}{\ex} + (1{-}\prob) \cdot \wp{\prog_2}{\ex}$ & $\prob \cdot \wlp{\prog_1}{\ex} + (1{-}\prob) \cdot \wlp{\prog_2}{\ex}$ \\
            $\SEQ{\prog_1}{\prog_2}$ & $\wp{\prog_1}{\wp{\prog_2}{\ex}}$ & $\wlp{\prog_1}{\wlp{\prog_2}{\ex}}$ \\
            $\WHILE{\guard}{\progBody}$ & $\lfpIn{\lambda\fpVar}{\iv{\neg\guard} \cdot \ex + \iv{\guard} \cdot \wp{\progBody}{\fpVar}}$ & $\gfpIn{\lambda\fpVar}{\iv{\neg\guard} \cdot \ex + \iv{\guard} \cdot \wlp{\progBody}{\fpVar}}$ \\
            \bottomrule
        \end{tabular}
    \end{adjustbox}
\end{table}

We now unify the weakest pre-expectation calculi for continuous probabilistic programs from \cite{DBLP:conf/setss/SzymczakK19} with the calculi proposed in \cite{DBLP:journals/toplas/OlmedoGJKKM18}. The latter calculi take \emph{renormalization} --- for conditional expected outcomes --- into account.
The central objects these calculi operate on are \emph{expectations}:
\begin{definition}[Expectations]
    We distinguish between the following sets of functions:
    \begin{enumerate}
        \item The set of \emph{expectations} is $\exps = \{\ex \mid \ex\colon \pStates \to \exNonNegReals \}$.
        \item The set of \emph{1-bounded expectations} is $\bexps = \{\ex \mid \ex \colon \pStates \to \uIval \}$.
        \item The set of \emph{measurable (1-bounded) expectations} $\expsmeas$ ($\bexpsmeas$) is the subset of $\exps$ ($\bexps$) containing exactly the Borel-measurable functions.
    \end{enumerate}
    We equip all of these sets with the partial order $\eleq$ defined as $\ex \eleq \exb$ iff $\forall \st \in \states \colon \ex(\st) \leq \exb(\st)$.
    \qedDef
\end{definition}
Crucially, we have (see \cite[Lemma 2]{DBLP:conf/setss/SzymczakK19} and\iftoggle{arxiv}{ \Cref{proof:measexpsbicpo}}{~\cite{arxiv}}):
\begin{restatable}{lemma}{measexpsbicpo}
    \label{thm:measexpsbicpo}
    $\po{\exps}{\eleq}$ and $\po{\bexps}{\eleq}$ are complete lattices.
    $\po{\expsmeas}{\eleq}$ and $\po{\bexpsmeas}{\eleq}$ are $\omega$-bicpos with bottom ($0 \gray{{}= \lam{\pSt}{0}}$) and top ($\infty \gray{{}= \lam{\pSt}{\infty}}$ and $1 \gray{{}= \lam{\pSt}{1}}$, respectively).
\end{restatable}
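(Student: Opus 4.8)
The plan is to reduce everything to pointwise reasoning over $\ennReals$ and $\uIval$. First I would observe that $\ennReals$ is a complete lattice under its usual order: every subset $S \subseteq \ennReals$ has a least upper bound in $\ennReals$ (equal to $\infty$ if $S$ is unbounded or contains $\infty$, and to the ordinary supremum of reals otherwise) and, dually, a greatest lower bound; likewise $\uIval$ is a complete lattice. Since $\eleq$ on $\exps$ is by definition the pointwise order on the function space $\ennReals^{\pStates}$, and a product of complete lattices ordered coordinatewise is again a complete lattice with suprema and infima computed coordinatewise, $\po{\exps}{\eleq}$ is a complete lattice; explicitly, for $A \subseteq \exps$ one has $(\sup A)(\st) = \sup\,\{\ex(\st) \mid \ex \in A\}$ and dually for $\inf A$. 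The very same argument with $\uIval$ in place of $\ennReals$ shows that $\po{\bexps}{\eleq}$ is a complete lattice.

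For the measurable variants I would argue that $\expsmeas$ is a sub-poset of the complete lattice $\exps$ that is closed under the relevant bounded limits. Concretely: (i) the constant functions $\lam{\st}{0}$, $\lam{\st}{\infty}$ and $\lam{\st}{1}$ are Borel measurable, so $\expsmeas$ has bottom $0$ and top $\infty$ and $\bexpsmeas$ has bottom $0$ and top $1$; (ii) a countable pointwise supremum of Borel-measurable functions into $\ennReals$ is again Borel measurable --- the standard fact that $\{\sup_i f_i > \realConst\} = \bigcup_i \{f_i > \realConst\}$ is measurable for every $\realConst$ --- and dually for countable pointwise infima. Hence, given an $\omega$-chain $f_0 \eleq f_1 \eleq \cdots$ in $\expsmeas$, the pointwise supremum $g = \lam{\st}{\sup_{i} f_i(\st)}$ lies in $\expsmeas$; since $g$ is already the least upper bound of $\{f_i\}_{i}$ in $\exps$, it is in particular the least upper bound within the sub-poset $\expsmeas$, so $\expsmeas$ is an $\omega$-cpo, and together with the bottom element from (i) an $\omega$-cpo with bottom. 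The dual argument using $\omega$-cochains and pointwise infima shows $\expsmeas$ is an $\omega$-cocpo with top, hence an $\omega$-bicpo with bottom and top; the argument for $\bexpsmeas$ is identical (note that countable pointwise limits preserve the bound $\ex \eleq 1$).

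I do not anticipate a genuine obstacle: the statement is essentially domain-theoretic bookkeeping, and the only measure-theoretic input (closure of the measurable functions under countable pointwise sups and infs) is textbook material that could simply be cited. The two points worth stating carefully are (a) that $\expsmeas$ is in general \emph{not} a complete lattice, since an uncountable pointwise supremum of measurable functions need not be measurable --- this is exactly why we claim only the $\omega$-bicpo property here rather than lattice-completeness --- and (b) that suprema and infima of $\omega$-(co)chains computed in $\expsmeas$ agree with those computed pointwise in $\exps$, which is immediate from the closure property in (ii) as noted above.
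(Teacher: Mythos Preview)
Your proposal is correct and matches the paper's proof essentially verbatim: pointwise suprema and infima for the complete-lattice claims, and closure of Borel-measurable functions under countable pointwise suprema and infima (the paper cites Rudin, Theorem~11.17, where you sketch the $\{\sup_i f_i > c\} = \bigcup_i \{f_i > c\}$ argument) for the $\omega$-bicpo claims. Your additional remarks on why $\expsmeas$ is not a full complete lattice and on sups in the sub-poset agreeing with the ambient ones are accurate and go slightly beyond what the paper writes.
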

The arithmetic operations $+$ (addition) and $\cdot$ (multiplication) on $\exps$ are defined {pointwise}, i.e., $\forall \st \in \states \colon (\ex + \exb)(\st) = \ex(\st) + \exb(\st)$, and analogously for multiplication.
These operations preserve measurability~\cite[Theorem 11.18]{rudin1953principles}, i.e., $\expsmeas$ is closed under $+$ and $\cdot$.
Moreover, addition (for both arguments) and multiplication by constants%
\footnote{Formally, for every $\exb \in \exps$ these are the functions $\lam{\ex}{\ex + \exb}$ and $\lam{\ex}{\ex \cdot \exb}$.}
are $\omega$-bicontinuous functions.

For state $\pSt \in \pStates$,  program variable $\pVar \in \pVars$, $\xi \in \nnReals$, we define the \emph{updated state as}
\[
    \pStUpdate{\pSt}{\pVar}{\xi}
    \eeq
    \lam{\pVarb}{
        \begin{cases}
        \xi & \text{if $\pVarb = \pVar$} \\
        \pSt(\pVarb) &\text{otherwise}
        \end{cases}
    }
    ~.
\]
Further, given an expectation $\ex \in \exps$, a variable $\pVar \in \pVars$, and an arithmetic expression $\aExp \colon 
\pStates \to \nnReals$ we define the \emph{substitution of $\pVar$ by $\aExp$ in $\ex$} as the expectation $\exSubsGen = \lam{\st}{\ex(\pStUpdate{\st}{\pVar}{\aExp(\st)})}$.
If we have syntactic expressions for $\ex$ and $\aExp$, then we can obtain a syntactic representation of $\exSubs{f}{\pVar}{\aExp}$ by substituting all free occurrences of $\pVar$ in $\ex$ by $\aExp$ in a capture-avoiding manner, see \Cref{sec:syntax}.

\begin{definition}[Weakest (Liberal) Pre-expectation Transformers~\textnormal{\cite{DBLP:conf/setss/SzymczakK19}}]
    \label{def:wpwlp}
    For all $\prog \in \pWhile$, the \emph{weakest (liberal) pre-expectation transformers} $\wpTrans{\prog} \colon \expsmeas \to \expsmeas$ and $\wlpTrans{\prog}\colon \bexpsmeas \to \bexpsmeas$ are defined inductively on the structure of $\prog$ according to the rules in \Cref{tab:original}.
    \qedDef
\end{definition}
Let us briefly explain the inductive definition in \Cref{tab:original}.
The effectless program $\SKIP$ leaves the post-expectation unchanged.
An assignment $\ASSIGN{\pVar}{\aExp}$ substitutes the expression $\aExp$ for the variable $\pVar$ in the post-expectation $\ex$, while uniform assignment $\UNIFASSIGN{\pVar}$ integrates the expectation over all possible values of $\pVar$ in the interval $\uIval$, capturing the averaging effect of drawing $\pVar$ uniformly.
$\OBSERVE{\guard}$ scales the expectation by the Iverson bracket of the guard $\guard$.
Proper renormalization is performed in a second step, see \Cref{sec:wp:cwp}.
For $\ITE{\guard}{\prog_1}{\prog_2}$ and the probabilistic choice $\PCHOICE{\prog_1}{\prob}{\prog_2}$, the weakest (liberal) pre-expectation is a weighted sum of the $\wpSymb$'s from both branches, either weighted by $\iv{\guard}$ and $\iv{\neg\guard}$ in the former case, or by the probabilities $\prob$ and $1-\prob$ in the latter.
The $\wpwlpSymb$ of a sequential composition $\SEQ{\prog_1}{\prog_2}$ is function composition $\wpwlpTrans{\prog_1} \circ \wpwlpTrans{\prog_2}$.
Notably, $\prog_2$ is evaluated \emph{before} $\prog_1$ --- $\wpwlpSymb$'s are thus computed in a backward manner.
Note that $\wpSymb$ and $\wlpSymb$ only differ in the handling of divergence and loops.
The $\DIVERGE$ command, representing non-termination, sets the $\wpSymb$ to $0$ and the $\wlpSymb$ to $1$.
The $\WHILE{\guard}{\prog}$ loop involves computing the \emph{least} fixed point ($\lfp$) for $\wpSymb$ and the \emph{greatest} fixed point ($\gfp$) for $\wlpSymb$.
The difference between $\wpSymb$ and $\wlpSymb$ is further explained in \Cref{sec:wp:wpVsWlp}.

\begin{rem}[On Non-Negativity]
    \label{rem:nonNeg}
    We follow the classic line of work on weakest pre-expectations~\cite{DBLP:conf/stoc/Kozen83,DBLP:series/mcs/McIverM05} and restrict attention to \emph{non-negative expectations} (see~\cite{DBLP:conf/lics/KaminskiK17} for a discussion of the mixed-sign case).
    This means that we can only reason about expected values of non-negative random variables measured in a program's final state.
    As a consequence, in order to reason about the expected final value of a program variable $\pVar$ on termination, we have to assume that $\pVar$ is unsigned.
    We have opted to ensure this by simply requiring \emph{all} variables to be unsigned real numbers.
\end{rem}

Given a loop $\prog = \WHILE{\guard}{\progBody}$ and $\ex \in \expsmeas$, we denote by 
\[
	\charfunwp{\prog}{\ex} \colon \expsmeas \to \expsmeas \,, 
	\qquad 
	\charfunwp{\prog}{\ex}(\exb) \eeq \iv{\guard} \cdot \wp{\progBody}{\exb} + \iv{\neg\guard}\cdot \ex
\]
the \emph{$\wpSymb$-characteristic function of $\prog$ w.r.t.\ $\ex$} (analogously for $\ex' \in \bexpsmeas$ and $\charfunwlp{\prog}{\ex'}$).
Hence, weakest pre-expectations of loops can be denoted more concisely as 
\begin{align*}
	 \wp{\prog}{\ex} \eeq \lfp \charfunwp{\prog}{\ex}
	 \qquad\text{and}\qquad
	  \wlp{\prog}{\ex'} \eeq \gfp \charfunwlp{\prog}{\ex'}~.
\end{align*}
Due to the fixed points in \Cref{tab:original} it is not immediately obvious that $\wpTrans{\prog}$ and $\wlpTrans{\prog}$ are well-defined.
This will we be ensured by Kleene's \Cref{thm:kleene} as shown in the next lemma (parts of which have been proved in~\cite{DBLP:conf/setss/SzymczakK19}).
\begin{restatable}[{Well-definedness of $\wpSymb$ and $\wlpSymb$}]{theorem}{wpWlpWellDefinedAndContinuous}
    \label{thm:wpWlpWellDefinedAndContinuous}
    For all programs $\prog \in \pWhile$, $\wpTrans{\prog}$ and $\wlpTrans{\prog}$ are well-defined.
    In particular, $\wpTrans{\prog}$ is $\omega$-continuous and $\wlpTrans{\prog}$ is $\omega$-\underline{co}continuous.
\end{restatable}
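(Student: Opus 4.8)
The plan is to proceed by induction on the structure of $\prog$, following the inductive definition in \Cref{tab:original}. For each syntactic construct I must show two things: (a) that the purported transformer $\wpTrans{\prog}$ indeed maps $\expsmeas$ into $\expsmeas$ (well-definedness, including measurability preservation) and is $\omega$-continuous, and dually (b) that $\wlpTrans{\prog}$ maps $\bexpsmeas$ into $\bexpsmeas$ and is $\omega$-\underline{co}continuous. The base cases $\SKIP$, $\DIVERGE$, $\ASSIGN{\pVar}{\aExp}$, $\OBSERVE{\guard}$, and $\UNIFASSIGN{\pVar}$ are handled directly: for $\SKIP$ the transformer is the identity, which is trivially $\omega$-bicontinuous; $\DIVERGE$ yields a constant ($0$ resp.\ $1$), hence continuous; $\ASSIGN{\pVar}{\aExp}$ is substitution $\lam{\ex}{\exSubsGen}$, which preserves measurability since $\aExp$ is measurable and composition of measurable maps is measurable, and substitution commutes with pointwise suprema/infima, so it is $\omega$-bicontinuous; $\OBSERVE{\guard}$ is multiplication by the measurable indicator $\iv{\guard}$, which preserves measurability (product of measurable functions) and distributes over suprema/infima of chains (multiplication by a fixed nonnegative function is $\omega$-bicontinuous, as noted in the text). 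The uniform-assignment case is the only analytically nontrivial base case: $\wpTrans{\UNIFASSIGN{\pVar}}$ integrates $\ex(\pStUpdate{\st}{\pVar}{\xi})$ over $\uIval$. Measurability of the result as a function of $\st$ follows from Fubini/Tonelli (parametrized integrals of jointly measurable nonnegative functions are measurable); $\omega$-continuity is exactly the Monotone Convergence Theorem applied to an $\omega$-chain $\ex_0 \eleq \ex_1 \eleq \ldots$, and $\omega$-\underline{co}continuity for $\wlpSymb$ (on $1$-bounded expectations) follows from the Dominated Convergence Theorem, the constant $1$ being an integrable dominating function since $\lebmes(\uIval) = 1$.

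For the compound constructs I use the algebraic closure and continuity facts already assembled in the paper. For $\ITE{\guard}{\prog_1}{\prog_2}$ and $\PCHOICE{\prog_1}{\prob}{\prog_2}$, by the induction hypothesis each branch transformer preserves measurability and is $\omega$-bicontinuous (or co/continuous in the $\wlpSymb$ case); the combined transformer is a sum of two such maps each composed with multiplication by a fixed nonnegative constant or measurable indicator, and since $\expsmeas$ is closed under $+$ and $\cdot$ and addition and multiplication-by-constants are $\omega$-bicontinuous, the composite inherits these properties. Note that in the $\PCHOICE$ case I need $\prob, 1-\prob \in \uIval$ so that $\wlpSymb$ of a $1$-bounded expectation remains $1$-bounded — this is immediate from $\prob \in \uIval \cap \rats$ and convexity. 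For $\SEQ{\prog_1}{\prog_2}$ the transformer is the function composition $\wpTrans{\prog_1} \circ \wpTrans{\prog_2}$; compositions of measurability-preserving maps preserve measurability, and compositions of $\omega$-continuous maps are $\omega$-continuous (likewise $\omega$-\underline{co}continuous), so this case is purely formal.

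The loop case $\prog = \WHILE{\guard}{\progBody}$ is where the real content lies, and it is the step I expect to be the main obstacle. Here $\wpTrans{\prog}{\ex} = \lfp \charfunwp{\prog}{\ex}$ and I must first argue this least fixed point \emph{exists} in $\expsmeas$ and then that the resulting map $\lam{\ex}{\lfp \charfunwp{\prog}{\ex}}$ is itself $\omega$-continuous. For existence: by \Cref{thm:measexpsbicpo}, $\po{\expsmeas}{\eleq}$ is an $\omega$-cpo with bottom; the characteristic function $\charfunwp{\prog}{\ex}(\exb) = \iv{\neg\guard}\cdot\ex + \iv{\guard}\cdot\wpTrans{\progBody}{\exb}$ is $\omega$-continuous in $\exb$ because $\wpTrans{\progBody}$ is (by the induction hypothesis) and multiplication by the fixed indicator $\iv{\guard}$ and addition of the fixed expectation $\iv{\neg\guard}\cdot\ex$ are $\omega$-continuous; hence \Cref{thm:kleene} (Kleene) gives the least fixed point as $\sup_{i} (\charfunwp{\prog}{\ex})^i(0)$, and each iterate lies in $\expsmeas$ by closure, with the supremum of an $\omega$-chain again in $\expsmeas$ since it is an $\omega$-cpo — so $\wpTrans{\prog}$ is well-defined and measurability is preserved. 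Dually, for $\wlpSymb$ I use that $\po{\bexpsmeas}{\eleq}$ is an $\omega$-\underline{co}cpo with top $1$ and that $\charfunwlp{\prog}{\ex'}$ is $\omega$-\underline{co}continuous, so the greatest fixed point is $\inf_i (\charfunwlp{\prog}{\ex'})^i(1)$. For $\omega$-continuity of $\lam{\ex}{\wpTrans{\prog}{\ex}}$ itself: given an $\omega$-chain $(\ex_k)_k$ in $\expsmeas$, I want $\sup_k \lfp \charfunwp{\prog}{\ex_k} = \lfp \charfunwp{\prog}{\sup_k \ex_k}$. The standard argument is a Kleene-iteration double-supremum exchange: write each $\lfp$ as $\sup_i \Phi_{\ex_k}^i(0)$, observe $\Phi_\ex^i(0)$ is jointly $\omega$-continuous/monotone in both $i$ and $\ex$ (provable by an inner induction on $i$ using $\omega$-continuity of $\wpTrans{\progBody}$ and of $+,\cdot$), and then exchange the two suprema $\sup_k \sup_i = \sup_i \sup_k$ over the directed index set; the inner equality $\sup_k \Phi_{\ex_k}^i(0) = \Phi_{\sup_k \ex_k}^i(0)$ is the inductive claim. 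The dual for $\wlpSymb$ is entirely analogous with infima of co-chains and the iterates $\Psi_{\ex'}^i(1)$, using $\omega$-\underline{co}continuity throughout. The subtlety to watch — and the reason this is the hardest step — is that the fixed-point iterates for $\wpSymb$ are not $1$-bounded (they can reach $\infty$), so I must stay within $\expsmeas$ rather than $\bexpsmeas$, and the exchange-of-limits lemma has to be stated and proved over $\ennReals$-valued chains where $\infty$ is a genuine value; none of this is deep, but it requires care to keep the cpo structures straight and to verify joint continuity of the iterate maps rather than assuming it.
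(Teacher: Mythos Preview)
Your proposal is correct and follows essentially the same route as the paper: structural induction on $\prog$, with Fubini for measurability and the Monotone Convergence Theorem for $\omega$-continuity in the $\UNIFASSIGN{\pVar}$ case, and a Kleene-iterate double-supremum exchange (with an inner induction on the iterate index) for the loop case. The only notable variation is that for $\omega$-\underline{co}continuity of $\wlpTrans{\UNIFASSIGN{\pVar}}$ you invoke the Dominated Convergence Theorem with dominating function $1$, whereas the paper instead writes $\inf_i \exb_i = 1 - \sup_i(1-\exb_i)$ and reapplies MCT; both are valid and equally direct.
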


\subsection{Probabilistic Termination: $\wpSymb$ vs.\ $\wlpSymb$}
\label{sec:wp:wpVsWlp}

In general, for all $\prog \in \pWhile$ and $\ex\in\bexpsmeas$ we have $\wp{\prog}{\ex} \eleq \wlp{\prog}{\ex}$ since the former relies on a least and the latter on a greatest fixed point.
More specifically, we have~\cite[Section 5]{DBLP:conf/setss/SzymczakK19}
\begin{align}
    \wp{\prog}{\ex} + \wlp{\prog}{0} \eeq \wlp{\prog}{\ex} \label{eq:wpwlp}
    ~.
\end{align}
\iftoggle{arxiv}{%
Let us make the meaning of $\wpSymb$ and $\wlpSymb$ w.r.t.\ the constant post-expectations $0$ and $1$ explicit.
Assuming $\prog$ is started with initial state $\st$ we have the following:
\footnote{These statements can be formalized by means of an operational semantics for $\pWhile$, see~\cite{DBLP:conf/setss/SzymczakK19}.}%
\begin{itemize}
    \item $\wp{\prog}{0}(\st)$ is always 0.
    \item $\wp{\prog}{1}(\st)$ is the probability of \emph{termination} without violating any $\KWOBSERVE$.
    \item $\wlp{\prog}{0}(\st)$ is the probability of \emph{non-termination} without violating any $\KWOBSERVE$.
    \item $\wlp{\prog}{1}(\st)$ is the probability to never violate any $\KWOBSERVE$.
\end{itemize}
}{}
We call $\prog$ \emph{almost-surely terminating} (AST) if $\wlp{\prog}{0} = 0$, i.e., if the program does not admit any initial state for which the program's infinite runs that do not violate any $\KWOBSERVE$ have positive probability mass.
It follows from equation \eqref{eq:wpwlp} that $\prog$ is AST iff $\wp{\prog}{\ex} = \wlp{\prog}{\ex}$.

\subsection{Conditional Weakest Pre-Expectations}
\label{sec:wp:cwp}

For a program $\prog \in \pWhile$, an expectation $\ex \in \expsmeas$ and a state $\pSt$, following \cite{DBLP:journals/toplas/OlmedoGJKKM18}, we define:
\[
    \cwp{\prog}{\ex}(\pSt)
    \eeq
    \begin{cases}
        \frac{\wp{\prog}{\ex}(\pSt)}{\wlp{\prog}{\onefun}(\pSt)} &\text{ if } \wlp{\prog}{1}(\pSt) \neq 0 \\[0.5em]
        \text{undefined} & \text{ else.}
    \end{cases}
\]
The above definition factors out the probability mass of runs violating an $\KWOBSERVE$, i.e., divides by $\wlp{\prog}{\onefun}(\pSt)$, see \Cref{sec:wp:wpVsWlp}.
$\cwp{\prog}{\ex}(\st)$ is thus the expected value of $\fun$ after termination of $\prog$ started with initial state $\st$, \emph{conditioned} on all $\KWOBSERVE$ statements in $\prog$ being successful.

\section{Approximate Riemann Weakest Pre-Expectations}
\label{sec:approxwp}

We start by recalling lower and upper Riemann sums and integrals.%
\footnote{
    The definition of an integral in terms of these sums is in fact commonly attributed to Darboux, not to Riemann.
    However, Darboux's integral is equivalent to Riemann's which, rather than considering lower and upper sums, relies on evaluating $\fun$ at sample points within the intervals of a partition of the integration domain.
    We refer to~\cite[Chapter 3, Theorems 3.3.1 and 3.3.2]{burk2007garden} for an in-depth comparison.
    In this paper, we consistently use Darboux's definitions, but refer to them nonetheless as Riemann sums and integrals, since the latter terminology is more widespread.
}
Let $\clIvalGen \subseteq \reals$ and $\partitionSize \geq 1$.
A \emph{partition} of $\clIvalGen$ is a tuple of at least $\partitionSize+1$ real numbers $\partition = (x_0,x_1,\ldots,x_{\partitionSize})$ such that
\[
    \ivalL
    =
    x_0
    \llt
    x_1
    \llt
    \ldots
    \llt
    x_{\partitionSize}
    =
    \ivalR
    ~.
\]
The set of all partitions of the interval $\clIvalGen$ is denoted $\partitions{\clIvalGen}$.
For a bounded $\fun \colon \clIvalGen \to \reals$ and a partition $\partition = (x_0,\ldots,x_{\partitionSize}) \in \partitions{\clIvalGen}$, we define the \emph{lower} and \emph{upper sums} of $\fun$ w.r.t.\ $\partition$ as
\[
    \lowerSum{\fun}{\partition}
    \eeq
    \sum_{i=1}^{\partitionSize} (x_{i} - x_{i-1}) \inf_{\xi \in \clIval{x_{i-1}}{x_i}} \fun(\xi)
    \qqand
    \upperSum{\fun}{\partition}
    \eeq
    \sum_{i=1}^{\partitionSize} (x_{i} - x_{i-1}) \sup_{\xi \in \clIval{x_{i-1}}{x_i}} \fun(\xi)
    ~.
\]
Note that $\upperSum{\fun}{\partition}$ and $\lowerSum{\fun}{\partition}$ are well-defined real numbers because $\fun$ is bounded.

\begin{definition}[Riemann integral]
    \label{def:riemannIntegral}
    Let $\fun \colon \clIvalGen \to \reals$ be bounded.
    The \emph{lower-} and \emph{upper Riemann integrals} of $\fun$ are defined as follows:
    \[
        \lowerIntGen \fun(x) \,dx
        \eeq
        \sup \left\{ \lowerSum{\fun}{\partition} \mid \partition \in \partitions{\clIvalGen} \right\}
        \qqand
        \upperIntGen \fun(x) \,dx
        \eeq
        \inf \left\{ \upperSum{\fun}{\partition} \mid \partition \in \partitions{\clIvalGen} \right\}
        ~.
    \]
    If the upper integral equals the lower integral, then the common value is written $\int_{\ivalL}^{\ivalR} \fun(x) \,dx$ and called \emph{the} Riemann integral of $\fun$.
    In this case, $\fun$ is called \emph{Riemann-integrable}.
    \qedDef
\end{definition}

Finally, we  introduce the following terminology:
(i) For $\funb \colon \realDomain \to \reals$, $\realDomain \subseteq \reals$, we say that $\funb$ is Riemann-integrable \emph{on an interval} $\clIval{\ivalLb}{\ivalRb} \subset \realDomain$ if the restriction $\funb \colon \clIval{\ivalLb}{\ivalRb} \to \reals$ is Riemann-integrable.
(ii) In this paper we often consider functions of the form $\funb \colon \realDomain^\pVars \to \reals$ where $\pVars$ is a finite set, e.g.\ the set of program variables.
We then say that $\funb$ is Riemann-integrable on $\clIval{\ivalLb}{\ivalRb} \subseteq \realDomain$ w.r.t.\ some $\pVar \in \pVars$ if \emph{for all} $\pSt \in \realDomain^\pVars$ the function $\lam{\xi}{\funb(\pStUpdate{\pSt}{\pVar}{\xi})}$ of type $\clIval{\ivalLb}{\ivalRb} \to \reals$ is Riemann-integrable.

\iftoggle{arxiv}{
\begin{example}
    To illustrate the concepts defined above consider the following:
    \begin{itemize}
        \item A constant function $\fun \colon \clIvalGen \to \reals, x \mapsto \realConst$ for $\realConst \in \reals$ is Riemann-integrable because for \emph{all} partitions $\partition$ we have $\lowerSum{\fun}{\partition} = \upperSum{\fun}{\partition} = \realConst (\ivalL-\ivalR)$.
        \item $\fun(x,y) = \iv{x^2 + y^2 \leq 1}$ is Riemann-integrable on $\unitInterval$ w.r.t.\ $x$.
        Indeed, for all $y \in \reals$, we have $\int_{0}^{1} \fun(x,y) \, dx = \iv{-1 \leq y \leq 1} \sqrt{1-y^2}$.
        \item The Dirichlet function $\fun \colon \clIvalGen \to \reals, x \mapsto \iv{x \in \rats}$ is not Riemann-integrable because for all partitions $\partition$ we have $\lowerSum{\fun}{\partition} = 0$ and $\upperSum{\fun}{\partition} = 1$.
        Note that $\fun$ is nowhere continuous.
    \end{itemize}
\end{example}
}{}

\subsection{{$\lwpSymb{\N}$ and $\uwpSymb{\N}$: Lower and Upper Riemann Weakest Pre-Expectations}}
\label{sec:approxwp:def}

We are now ready to define our approximate expectation transformers.
They arise from the standard transformers defined in \Cref{tab:original} by replacing the \emph{Lebesgue} integrals in the $\UNIF_{\uIval}$ case by a lower or upper Riemann sum. 
Formally:

\begin{definition}[Lower and Upper Riemann $\wpwlpSymb$-Transformers]
	\label{def:riemannwpwlp}
	For all $\prog \in \pWhile$ and integers $\N \geq 1$, the expectation transformers $\lwpTrans{\N}{\prog} \colon \exps \to \exps$ and $\uwpTrans{\N}{\prog} \colon \exps \to \exps$ are defined by induction over the structure of $\prog$ as in \Cref{tab:original}, with the only exception that if $\prog$ is $\UNIFASSIGN{\pVar}$, then for all $\ex \in \exps$ we set%
	\footnote{Notice that $\inf_{\xi \in \clIvalGen}\exSubs{\ex}{\pVar}{\xi}$ is the same as $\lam{\st}{\inf_{\xi \in \clIvalGen} \ex(\pStUpdate{\st}{\pVar}{\xi})}$.}
	\begin{align*}
		&\lwp{\N}{\UNIFASSIGN{\pVar}}{\ex}
		\eeq
		\frac{1}{\N} \sum_{i=0}^{\N-1} \inf_{\xi \in [\frac{i}{\N}, \frac{i+1}{\N}]} \exSubs{\ex}{\pVar}{\xi} \qquad\text{and, similarly,} \\
		&\uwp{\N}{\UNIFASSIGN{\pVar}}{\ex}
		\eeq
		\frac{1}{\N} \sum_{i=0}^{\N-1} \sup_{\xi \in [\frac{i}{\N}, \frac{i+1}{\N}]} \exSubs{\ex}{\pVar}{\xi}
		~.
	\end{align*}
	The lower and upper Riemann weakest \emph{liberal} pre-expectation transformers $\lwlpTrans{\N}{\prog} \colon \bexps \to \bexps$ and $\uwlpTrans{\N}{\prog} \colon \bexps \to \bexps$ are defined analogously.
	\qedDef
\end{definition}

Our Riemann transformers approximate the Lebesgue integral in the definition of $\wpSymb$ (and $\wlpSymb$) by a lower or upper Riemann sum.
We work with the partitions $0 < \tfrac{1}{\N} < \tfrac{2}{\N} < \ldots < 1$ of the unit interval for the sake of concreteness.
Note that these partitions are \emph{not} successive refinements of each other (see\iftoggle{arxiv}{ \Cref{thm:partitionRefine}}{~\cite{arxiv}} for definitions).
Thus, increasing $\N$ by $1$ does \emph{not necessarily} yield \enquote{better} approximations, i.e., \emph{we might have} $\lwp{\N}{\prog}{\ex} \not\eleq \lwp{\N+1}{\prog}{\ex}$%
\iftoggle{arxiv}{%
, e.g:
\[
    \lwp{2}{\UNIFASSIGN{\pVar}}{\iv{x \geq \tfrac 1 2}} ~\not\eleq~ \lwp{3}{\UNIFASSIGN{\pVar}}{\iv{x \geq \tfrac 1 2}}
    ~.
\]
}{.}

Given a loop $\prog=\WHILE{\guard}{\progBody}$, $\N\geq 1$, $\ex\in\exps$, and $\transSymb\in\{\uwpSymb{\N},\lwpSymb{\N}\}$, we define the \emph{$\transSymb$-characteristic function of $\prog$ w.r.t.\ $\ex$} as 
\[
\charfuntrans{\prog}{\ex} \colon \exps \to \exps, 
\qquad 
\charfuntrans{\prog}{\ex}(\exb) \eeq \iv{\guard} \cdot \somewp{\progBody}{\exb} + \iv{\neg\guard}\cdot \ex~.
\]
For $\exb\in\bexps$ and $\transSymb\in\{\uwlpSymb{\N},\lwlpSymb{\N}\}$, $\charfuntrans{\prog}{\exb} \colon \bexps \to \bexps$ 
\iftoggle{arxiv}{%
we analogously define
\[
	\charfuntrans{\prog}{\exb} \colon \bexps \to \bexps, 
	\qquad 
	\charfuntrans{\prog}{\exb}(\exb) \eeq \iv{\guard} \cdot \somewp{\progBody}{\exb} + \iv{\neg\guard}\cdot \exb~.
\]
}{%
is defined analogously.
}
Note that unlike $\wpSymb$ and $\wlpSymb$, the approximate transformers are defined on the full sets of expectations $\exps$ and $\bexps$, respectively, not just on the measurable ones.
All transformers from \Cref{def:riemannwpwlp} are well-defined: this follows from the Knaster-Tarski \Cref{thm:knasterTarski} using that $\exps$ and $\bexps$ are complete lattices and monotonicity of the transformers, see \Cref{thm:monriemann} below.

\subsection{Healthiness Properties of $\lwpSymb{\N}$ and $\uwpSymb{\N}$}
\label{sec:approxwp:basic}

\begin{restatable}[Monotonicity of the Riemann $\wpwlpSymb$-Transformers]{lemma}{monriemann}
	\label{thm:monriemann}
	For all $\prog \in \pWhile$ and integers $\N \geq 1$, the functions $\lwpTrans{\N}{\prog}$, $\uwpTrans{\N}{\prog}$, $\lwlpTrans{\N}{\prog}$, and $\uwlpTrans{\N}{\prog}$ are monotonic w.r.t.\ $\eleq$.
\end{restatable}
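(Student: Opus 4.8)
The plan is to prove all four statements by a single structural induction on $\prog \in \pWhile$. Since the argument is uniform across the transformers, I would write $\transSymb$ for any one of $\lwpSymb{\N}$, $\uwpSymb{\N}$, $\lwlpSymb{\N}$, $\uwlpSymb{\N}$ (acting on $\exps$ for the first two and on $\bexps$ for the latter two) and only distinguish subcases at the loop, where $\lwpSymb{\N}$/$\uwpSymb{\N}$ use a least fixed point and $\lwlpSymb{\N}$/$\uwlpSymb{\N}$ a greatest one. The induction hypothesis is that $\somewp{\prog'}{\ex} \eleq \somewp{\prog'}{\exb}$ whenever $\ex \eleq \exb$, for every strict subprogram $\prog'$ of $\prog$.

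All cases except the loop are routine. For $\SKIP$ the transformer is the identity and for $\DIVERGE$ it is a constant map, both monotone; for $\ASSIGN{\pVar}{\aExp}$ monotonicity follows by unfolding $\exSubs{\ex}{\pVar}{\aExp} = \lam{\st}{\ex(\pStUpdate{\st}{\pVar}{\aExp(\st)})}$ pointwise; and for $\OBSERVE{\guard}$ one uses $\iv{\guard} \ge 0$, so pointwise multiplication by $\iv{\guard}$ preserves $\le$. The only genuinely new case is $\UNIFASSIGN{\pVar}$: if $\ex \eleq \exb$ then $\exSubs{\ex}{\pVar}{\xi} \eleq \exSubs{\exb}{\pVar}{\xi}$ for every $\xi$, hence $\inf_{\xi \in [\frac{i}{\N}, \frac{i+1}{\N}]} \exSubs{\ex}{\pVar}{\xi} \le \inf_{\xi \in [\frac{i}{\N}, \frac{i+1}{\N}]} \exSubs{\exb}{\pVar}{\xi}$ since taking an infimum over a fixed set is monotone, and dually for $\sup$; summing over $i$ and multiplying by $\tfrac{1}{\N} \ge 0$ gives monotonicity of $\lwpTrans{\N}{\UNIFASSIGN{\pVar}}$ and $\uwpTrans{\N}{\UNIFASSIGN{\pVar}}$, and the liberal variants are identical. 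For the compositional cases, $\someTrans{\ITE{\guard}{\prog_1}{\prog_2}}$ and $\someTrans{\PCHOICE{\prog_1}{\prob}{\prog_2}}$ are nonnegatively weighted sums of the monotone maps $\someTrans{\prog_1}$ and $\someTrans{\prog_2}$, hence monotone, and $\someTrans{\SEQ{\prog_1}{\prog_2}} = \someTrans{\prog_1} \circ \someTrans{\prog_2}$ is a composition of monotone maps.

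The main obstacle --- the only case requiring fixed-point theory --- is $\prog = \WHILE{\guard}{\progBody}$. I would first note that for each fixed $\ex$ the characteristic function $\exc \mapsto \charfuntrans{\prog}{\ex}(\exc) = \iv{\guard} \cdot \somewp{\progBody}{\exc} + \iv{\neg\guard} \cdot \ex$ is monotone in $\exc$ (induction hypothesis on $\progBody$, plus nonnegative scaling and addition); since $\po{\exps}{\eleq}$ and $\po{\bexps}{\eleq}$ are complete lattices by \Cref{thm:measexpsbicpo}, the Knaster--Tarski \Cref{thm:knasterTarski} yields both its least and greatest fixed points, which simultaneously settles well-definedness of $\somewp{\prog}{\ex}$ for every $\ex$. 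For monotonicity in $\ex$, take $\ex \eleq \exb$; then $\charfuntrans{\prog}{\ex}(\exc) \eleq \charfuntrans{\prog}{\exb}(\exc)$ for every $\exc$, i.e.\ $\charfuntrans{\prog}{\ex} \eleq \charfuntrans{\prog}{\exb}$ as functions. In the least-fixed-point case, set $\exc^\ast = \lfp \charfuntrans{\prog}{\exb}$; then $\charfuntrans{\prog}{\ex}(\exc^\ast) \eleq \charfuntrans{\prog}{\exb}(\exc^\ast) = \exc^\ast$, so $\exc^\ast$ is a pre-fixed point of $\charfuntrans{\prog}{\ex}$, and Park induction (the left implication of \Cref{thm:knasterTarski}) gives $\somewp{\prog}{\ex} = \lfp \charfuntrans{\prog}{\ex} \eleq \exc^\ast = \somewp{\prog}{\exb}$. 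In the greatest-fixed-point case the dual argument applies: with $\exc_\ast = \gfp \charfuntrans{\prog}{\ex}$ one has $\exc_\ast = \charfuntrans{\prog}{\ex}(\exc_\ast) \eleq \charfuntrans{\prog}{\exb}(\exc_\ast)$, so $\exc_\ast$ is a post-fixed point of $\charfuntrans{\prog}{\exb}$, and Park coinduction gives $\somewp{\prog}{\ex} = \gfp \charfuntrans{\prog}{\ex} \eleq \gfp \charfuntrans{\prog}{\exb} = \somewp{\prog}{\exb}$. This closes the induction. I expect none of the routine estimates to cause trouble; the only points needing care are keeping the fixed-point bookkeeping consistent across the $\wpSymb$/$\wlpSymb$ split in the loop case, and remembering that here it is the complete-lattice structure of $\exps$ and $\bexps$ (not merely the $\omega$-bicpo structure of their measurable subsets) that is being used.
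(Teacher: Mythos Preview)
Your proposal is correct and follows essentially the same approach as the paper: structural induction on $\prog$, with the atomic and compositional cases routine and the $\KWWHILE$ case handled via Park (co)induction from Knaster--Tarski on the complete lattices $\exps$ and $\bexps$. Your treatment is in fact slightly more explicit than the paper's, which proves only the $\lwpSymb{\N}$ case and declares the others analogous, whereas you spell out the dual $\gfp$ argument for the liberal transformers.
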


Notably, the Riemann $\wpwlpSymb$-transformers do not possess the same continuity properties as their Lebesgue counterparts from \Cref{sec:programs:wp}.
For instance, due to the presence of infima in the lower Riemann sum, $\lwpTrans{\N}{\prog}$ is \emph{not} $\omega$-continuous in general (see\iftoggle{arxiv}{ \Cref{proof:lwpNotOmegaCont}}{~\cite{arxiv}} for a counter-example).

\begin{restatable}[Soundness of the Riemann $\wpwlpSymb$-Transformers]{lemma}{soundness}
	\label{thm:soundness}
	For all programs $\prog \in \pWhile$, post-expectations $\ex \in \expsmeas$, $\exb \in \bexpsmeas$, and integers $\N \geq 1$,
	\begin{align*}
		\lwp{\N}{\prog}{\ex}
		&\eeleq
		\wp{\prog}{\ex}
		\eeleq
		\uwp{\N}{\prog}{\ex} \qand 
		\\
		\lwlp{\N}{\prog}{\exb}
		&\eeleq
		\wlp{\prog}{\exb}
		\eeleq
		\uwlp{\N}{\prog}{\exb}
		~.
	\end{align*}
\end{restatable}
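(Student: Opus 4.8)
The plan is to prove the inequalities by induction on the structure of the program $\prog$, treating $\wpSymb$ and $\wlpSymb$ simultaneously (and their lower/upper Riemann counterparts in parallel). For the non-probabilistic, non-loop constructs ($\SKIP$, $\DIVERGE$, $\ASSIGN{\pVar}{\aExp}$, $\OBSERVE{\guard}$, $\ITE{\guard}{\prog_1}{\prog_2}$, $\PCHOICE{\prog_1}{\prob}{\prog_2}$), the Riemann transformers are defined by literally the same clauses as the exact transformers in \Cref{tab:original}, so the claimed inequalities reduce to the induction hypotheses together with monotonicity of $+$ and of multiplication by the fixed non-negative factors $\iv{\guard}$, $\iv{\neg\guard}$, $\prob$, $1-\prob$. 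For sequential composition $\SEQ{\prog_1}{\prog_2}$, I would chain: by the IH applied to $\prog_2$ we have $\lwp{\N}{\prog_2}{\ex} \eleq \wp{\prog_2}{\ex} \eleq \uwp{\N}{\prog_2}{\ex}$; applying $\wpTrans{\prog_1}$ (which is monotonic by \Cref{thm:wpWlpWellDefinedAndContinuous}) and then the IH for $\prog_1$, we get $\lwp{\N}{\prog_1}{\lwp{\N}{\prog_2}{\ex}} \eleq \wp{\prog_1}{\lwp{\N}{\prog_2}{\ex}} \eleq \wp{\prog_1}{\wp{\prog_2}{\ex}}$, and dually on the upper side; here one must note that the IH for $\prog_1$ needs to be invoked at the argument $\lwp{\N}{\prog_2}{\ex}$, which lies in $\exps$ but need not be measurable — this is harmless because the Riemann transformers are defined on all of $\exps$, but to compare with $\wpTrans{\prog_1}$ one should first pass through a measurable over- or under-approximation, or simply observe that $\lwp{\N}{\prog_2}{\ex} \eleq \wp{\prog_2}{\ex}$ with the right-hand side measurable and use monotonicity of $\wpTrans{\prog_1}$.

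The crucial base case is $\UNIFASSIGN{\pVar}$. Here I would invoke the standard fact about Darboux sums: for the uniform partition $0 < \tfrac1\N < \cdots < 1$, the lower sum $\tfrac1\N\sum_{i=0}^{\N-1}\inf_{\xi\in[\frac i\N,\frac{i+1}\N]}\ex(\pStUpdate{\st}{\pVar}{\xi})$ is $\leq$ the lower Riemann integral $\lowerInt{0}{1}\ex(\pStUpdate{\st}{\pVar}{\xi})\,d\xi$, which in turn is $\leq$ the Lebesgue integral $\int_\uIval \ex(\pStUpdate{\st}{\pVar}{\xi})\,d\lebmes(\xi)$, and dually the upper sum dominates the upper Riemann integral which dominates the Lebesgue integral. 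The middle inequality (lower Riemann integral $\leq$ Lebesgue integral $\leq$ upper Riemann integral) holds for every bounded measurable function on $\uIval$ regardless of Riemann-integrability; for the unbounded case one truncates $\ex$ at height $n$, applies the bounded statement, and lets $n\to\infty$ using the Monotone Convergence Theorem on the Lebesgue side and monotonicity of sups/infs on the Riemann side. The first and last inequalities are just the definitions of lower/upper Riemann integral as a sup/inf over partitions, specialized to the uniform partition.

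For the loop $\prog = \WHILE{\guard}{\progBody}$, recall $\wp{\prog}{\ex} = \lfp\charfunwp{\prog}{\ex}$ and $\lwp{\N}{\prog}{\ex} = \lfp\charfuntrans[\lwpSymb{\N}]{\prog}{\ex}$, both least fixed points on the complete lattice $\po{\exps}{\eleq}$ (the exact one living in the sublattice $\expsmeas$). By the IH, $\lwp{\N}{\progBody}{\exb} \eleq \wp{\progBody}{\exb}$ for every measurable $\exb$, hence pointwise $\charfuntrans[\lwpSymb{\N}]{\prog}{\ex}(\exb) \eleq \charfunwp{\prog}{\ex}(\exb)$ for measurable $\exb$; in particular, taking $\exb = \wp{\prog}{\ex}$ (which is measurable and a fixed point of $\charfunwp{\prog}{\ex}$), we get $\charfuntrans[\lwpSymb{\N}]{\prog}{\ex}(\wp{\prog}{\ex}) \eleq \charfunwp{\prog}{\ex}(\wp{\prog}{\ex}) = \wp{\prog}{\ex}$, so $\wp{\prog}{\ex}$ is a pre-fixed point of $\charfuntrans[\lwpSymb{\N}]{\prog}{\ex}$, and Park induction (\Cref{thm:knasterTarski}) gives $\lwp{\N}{\prog}{\ex} = \lfp\charfuntrans[\lwpSymb{\N}]{\prog}{\ex} \eleq \wp{\prog}{\ex}$. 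Symmetrically, $\wp{\prog}{\ex} \eleq \uwp{\N}{\prog}{\ex}$ follows from $\wp{\prog}{\ex} \eleq \charfuntrans[\uwpSymb{\N}]{\prog}{\ex}(\wp{\prog}{\ex})$ (i.e.\ $\wp{\prog}{\ex}$ is a post-fixed point) together with the least-fixed-point characterization $\wp{\prog}{\ex} = \lfp\charfunwp{\prog}{\ex} \eleq \lfp\charfuntrans[\uwpSymb{\N}]{\prog}{\ex}$ obtained from Park induction applied in the other monotone direction, or more directly from $\lfp$ being monotone in the underlying function. For $\wlpSymb$, the argument is dual: $\wlp{\prog}{\exb} = \gfp\charfunwlp{\prog}{\exb}$, and one uses co-Park induction — $\wlp{\prog}{\exb}$ is a post-fixed point of $\charfuntrans[\lwlpSymb{\N}]{\prog}{\exb}$ (since $\lwlpSymb{\N}$ under-approximates $\wlpSymb$ on the loop body) hence $\eleq \gfp\charfuntrans[\lwlpSymb{\N}]{\prog}{\exb}$... wait, that direction needs care and I would instead argue via: $\charfuntrans[\lwlpSymb{\N}]{\prog}{\exb}(\wlp{\prog}{\exb}) \eleq \wlp{\prog}{\exb}$ together with monotonicity of $\gfp$ in the function, giving $\gfp\charfuntrans[\lwlpSymb{\N}]{\prog}{\exb} \eleq \gfp\charfunwlp{\prog}{\exb} = \wlp{\prog}{\exb}$, and dually on the upper side.

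The main obstacle I anticipate is the loop case: unlike the exact transformers, the Riemann transformers are \emph{not} $\omega$-(co)continuous (as the paper already flags), so one cannot compare the Kleene iterates termwise and pass to the limit. The fix is to avoid iteration entirely and argue purely order-theoretically via Park (co)induction, exploiting that $\wp{\prog}{\ex}$ and $\wlp{\prog}{\exb}$ are themselves fixed points of the exact characteristic functions and that the Riemann characteristic functions are pointwise below/above the exact ones on the relevant (measurable) arguments. A secondary technical wrinkle is the measurability mismatch — the Riemann transformers produce possibly non-measurable expectations, so the induction hypothesis must be stated carefully as "$\lwpSymb{\N}$-image $\eleq$ $\wpSymb$-image $\eleq$ $\uwpSymb{\N}$-image for measurable post-expectations", and in the sequential and loop cases one always feeds the \emph{measurable} intermediate value (the exact $\wpSymb$ of the inner computation) into the next application rather than the non-measurable Riemann value, using monotonicity to bridge.
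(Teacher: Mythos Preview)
Your proposal is correct and follows essentially the paper's approach: structural induction, with Park induction for loops. Two points where the paper's execution is cleaner are worth noting.

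For $\UNIFASSIGN{\pVar}$, the paper avoids Riemann integrals entirely and argues directly: each summand $\tfrac1\N\inf_{\xi\in[\frac i\N,\frac{i+1}\N]}\exSubs{\ex}{\pVar}{\xi}$ equals the Lebesgue integral over $[\tfrac i\N,\tfrac{i+1}\N]$ of the constant $\inf_{\xi}\exSubs{\ex}{\pVar}{\xi}$, which is $\leq$ the Lebesgue integral of $\exSubs{\ex}{\pVar}{\xi}$ itself over that subinterval by monotonicity; summing over $i$ gives the result. This handles the unbounded case without any truncation or MCT.

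For $\SEQ{\prog_1}{\prog_2}$, your chain $\lwp{\N}{\prog_1}{\lwp{\N}{\prog_2}{\ex}} \eleq \wp{\prog_1}{\lwp{\N}{\prog_2}{\ex}} \eleq \wp{\prog_1}{\wp{\prog_2}{\ex}}$ is problematic because the middle term applies $\wpTrans{\prog_1}$ to a possibly non-measurable argument, so it is not even defined. You flag this, but the fix you describe is not quite right. The paper reverses the order: first use monotonicity of $\lwpTrans{\N}{\prog_1}$ (which is defined on all of $\exps$, \Cref{thm:monriemann}) together with the IH on $\prog_2$ to get $\lwp{\N}{\prog_1}{\lwp{\N}{\prog_2}{\ex}} \eleq \lwp{\N}{\prog_1}{\wp{\prog_2}{\ex}}$, and only then apply the IH on $\prog_1$ at the \emph{measurable} argument $\wp{\prog_2}{\ex}$. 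Your closing remark about always feeding the measurable intermediate value and bridging with monotonicity is exactly this idea; just make sure the monotonicity you invoke is that of the Riemann transformer, not of $\wpTrans{\prog_1}$.
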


\noindent For \emph{conditional} weakest pre-expectations (\Cref{sec:wp:cwp}), we immediately get the following:
\begin{corollary}
    \label{thm:cwpSound}
	For all programs $\prog \in \pWhile$, post-expectations $\ex \in \expsmeas$, and integers $\N \geq 1$
	\begin{align*}
		\frac{\lwp{\N}{\prog}{\ex}(\pSt)}{\uwlp{\N}{\prog}{\onefun}(\pSt)}
		\lleq
		\cwp{\prog}{\ex}(\pSt)
		\lleq
		\frac{\uwp{\N}{\prog}{\ex}(\pSt)}{\lwlp{\N}{\prog}{\onefun}(\pSt)}
	\end{align*}
	for all $\pState \in \pStates$ such that none of $\lwlp{\N}{\prog}{1}(\pSt)$, $\wlp{\prog}{1}(\pSt)$, and $\uwlp{\N}{\prog}{1}(\pSt)$ is zero.
\end{corollary}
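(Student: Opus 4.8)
The plan is to obtain the corollary directly from the soundness result \Cref{thm:soundness} together with the definition of $\cwpSymb$ in \Cref{sec:wp:cwp}; no fresh reasoning about programs is needed. Fix $\prog \in \pWhile$, $\ex \in \expsmeas$, $\N \geq 1$, and a state $\pSt$ for which none of $\lwlp{\N}{\prog}{1}(\pSt)$, $\wlp{\prog}{1}(\pSt)$, $\uwlp{\N}{\prog}{1}(\pSt)$ is zero. Since $\wlp{\prog}{1}(\pSt) \neq 0$, the quantity $\cwp{\prog}{\ex}(\pSt)$ is defined and, by definition, equals the quotient $\wp{\prog}{\ex}(\pSt)/\wlp{\prog}{1}(\pSt)$.

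For the right-hand (upper) inequality I would argue as follows. \Cref{thm:soundness} gives $\wp{\prog}{\ex}(\pSt) \leq \uwp{\N}{\prog}{\ex}(\pSt)$ on the numerators and $\lwlp{\N}{\prog}{1}(\pSt) \leq \wlp{\prog}{1}(\pSt)$ on the denominators. As $\wlp{\prog}{1}(\pSt) > 0$, enlarging the numerator yields $\wp{\prog}{\ex}(\pSt)/\wlp{\prog}{1}(\pSt) \leq \uwp{\N}{\prog}{\ex}(\pSt)/\wlp{\prog}{1}(\pSt)$; and as the numerator $\uwp{\N}{\prog}{\ex}(\pSt)$ is non-negative and $0 < \lwlp{\N}{\prog}{1}(\pSt) \leq \wlp{\prog}{1}(\pSt)$, shrinking the denominator yields $\uwp{\N}{\prog}{\ex}(\pSt)/\wlp{\prog}{1}(\pSt) \leq \uwp{\N}{\prog}{\ex}(\pSt)/\lwlp{\N}{\prog}{1}(\pSt)$. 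Chaining these with the identity from the first paragraph proves the upper bound. The left-hand (lower) inequality is entirely symmetric: apply $\lwp{\N}{\prog}{\ex}(\pSt) \leq \wp{\prog}{\ex}(\pSt)$ and $\wlp{\prog}{1}(\pSt) \leq \uwlp{\N}{\prog}{1}(\pSt)$, this time shrinking the numerator and enlarging the strictly positive denominator.

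The one point that needs a little care --- essentially the only obstacle in this otherwise routine argument --- is the bookkeeping with the extended-real conventions from \Cref{sec:prelims}. The values $\wp{\prog}{\ex}(\pSt)$ and $\uwp{\N}{\prog}{\ex}(\pSt)$ may equal $\infty$, whereas the three $\wlpSymb$-quantities lie in $\uIval$, hence are finite and, by the hypothesis on $\pSt$, strictly positive; thus every quotient occurring above has a positive finite denominator and is an unambiguous element of $\ennReals$. One should check that under these conventions the two elementary monotonicity facts actually used hold on $\ennReals$: for $a, a' \in \ennReals$ and positive reals $c \leq c'$ with $a \leq a'$ one has $a/c \leq a'/c$ and $a'/c' \leq a'/c$. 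This is immediate. No measurability questions arise here, as they are already subsumed by \Cref{thm:soundness}.
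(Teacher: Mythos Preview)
Your proof is correct and matches the paper's approach: the paper presents \Cref{thm:cwpSound} as an immediate consequence of \Cref{thm:soundness} and the definition of $\cwpSymb$, without giving further details. Your write-up simply spells out the elementary quotient-monotonicity argument and the extended-real bookkeeping that the paper leaves implicit.
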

\noindent We stress that \Cref{thm:soundness} and \Cref{thm:cwpSound} hold even for non-Riemann-integrable post-expectations.

\section{Invariant-Based Reasoning for Loops}
\label{sec:invariants_and_unrolling}
Deductive probabilistic program verification techniques typically bound expected outcomes of loops by means of \emph{quantitative loop invariants}. Intuitively, a quantitative loop invariant $\exI$ is an expectation whose pre-expectation w.r.t.\ \emph{one loop iteration} does not increase (or decrease, depending on whether one wants to establish upper or lower bones, see below).
While quantitative loop invariant-based reasoning can simplify the verification of loops significantly, 
the continuous setting poses challenges: Computing the pre-expectation of $\exI$ w.r.t.\ a loop's body requires reasoning about possibly complex Lebesgue integrals --- involving, e.g., indicator functions of predicates --- arising from the continuous sampling instructions.
We now develop invariant-based proof rules for our \emph{Riemann pre-expectations} (\Cref{def:riemannwpwlp}), which yield sound bounds on the \emph{Lebesgue pre-expectations} (\Cref{def:wpwlp}).
We will see in \Cref{sec:syntax,sec:case_studies} that these proof rules give rise to SMT-based techniques for verifying bounds on Lebesgue pre-expectations of loops in a semi-automated fashion.
Our first insight is that --- due to $(\exps, \, \eleq)$ and $(\bexps, \, \eleq)$ being complete lattices and the Riemann expectation transformers being monotonic by \Cref{thm:monriemann} --- bounds on \emph{Riemann} pre-expectations of loops can be established via Park induction (\Cref{thm:knasterTarski}):
\begin{lemma}
	\label{thm:invariant_approx}
	Let $\prog = \WHILE{\guard}{\progBody} \in\pWhile$ and $\N \geq 1$. Then:
    \begin{enumerate}
		\item For all $\ex,\exI \in \exps$, we have
		$
			\underbrace{\charfunuwp{\N}{\prog}{\ex}(\exI) \eleq \exI}_{\text{$\exI$ is $\uwpSymb{\N}$-superinvariant of $\prog$ w.r.t.\ $\ex$}}
			\qquad\text{implies}\qquad
			\uwp{\N}{\prog}{\ex} \eleq \exI~.
		$
		\item For all $\ex,\exI \in \bexps$, we have
		$
            \underbrace{\exI \eleq \charfunlwlp{\N}{\prog}{\ex}(\exI)}_{\text{$\exI$ is $\lwlpSymb{\N}$-subinvariant of $\prog$ w.r.t.\ $\ex$}}
            \qquad\text{implies}\qquad
            \exI \eleq \lwlp{\N}{\prog}{\ex}~.
		$
	\end{enumerate}
\end{lemma}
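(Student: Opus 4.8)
The statement is a pair of one-line consequences of the Knaster--Tarski theorem (\Cref{thm:knasterTarski}), so the plan is to instantiate Park (co)induction appropriately for each direction. First I would recall that $(\exps,\,\eleq)$ and $(\bexps,\,\eleq)$ are complete lattices (\Cref{thm:measexpsbicpo}), and that for the loop $\prog = \WHILE{\guard}{\progBody}$ the characteristic functions $\charfunuwp{\N}{\prog}{\ex}$ and $\charfunlwlp{\N}{\prog}{\ex}$ are monotonic w.r.t.\ $\eleq$ --- this is exactly \Cref{thm:monriemann}, noting that $\charfuntrans{\prog}{\ex}(\exb) = \iv{\guard}\cdot\somewp{\progBody}{\exb} + \iv{\neg\guard}\cdot\ex$ is built from the monotonic transformer $\somewp{\progBody}{\cdot}$ post-composed with multiplication by the fixed factor $\iv{\guard}$ and addition of the fixed expectation $\iv{\neg\guard}\cdot\ex$, all of which preserve $\eleq$. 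Hence the Knaster--Tarski theorem applies: both characteristic functions have least and greatest fixed points, and by \Cref{def:riemannwpwlp} these fixed points are precisely $\uwp{\N}{\prog}{\ex} = \lfp\,\charfunuwp{\N}{\prog}{\ex}$ and $\lwlp{\N}{\prog}{\ex} = \gfp\,\charfunlwlp{\N}{\prog}{\ex}$ (since, as in \Cref{tab:original}, the approximate transformers take the \emph{least} fixed point for the $\wpSymb$-flavour and the \emph{greatest} fixed point for the $\wlpSymb$-flavour).

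For part (1), I would apply the first implication in \Cref{thm:knasterTarski} to $\fun = \charfunuwp{\N}{\prog}{\ex}$ on the complete lattice $(\exps,\,\eleq)$: the hypothesis $\charfunuwp{\N}{\prog}{\ex}(\exI) \eleq \exI$ says $\exI$ is a prefixed point, so Park induction yields $\lfp\,\charfunuwp{\N}{\prog}{\ex} \eleq \exI$, which is $\uwp{\N}{\prog}{\ex}\eleq\exI$. For part (2), I would dually apply the second implication in \Cref{thm:knasterTarski} to $\fun = \charfunlwlp{\N}{\prog}{\ex}$ on the complete lattice $(\bexps,\,\eleq)$: the hypothesis $\exI \eleq \charfunlwlp{\N}{\prog}{\ex}(\exI)$ says $\exI$ is a postfixed point, so Park coinduction yields $\exI \eleq \gfp\,\charfunlwlp{\N}{\prog}{\ex} = \lwlp{\N}{\prog}{\ex}$. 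One small point worth spelling out is that we really need the \emph{complete lattice} structure here (not merely the $\omega$-(co)cpo structure), because the Riemann transformers are not $\omega$-continuous in general --- as the paper notes just before \Cref{thm:soundness} --- so Kleene's theorem does not apply and we must invoke Knaster--Tarski instead; fortunately $\exps$ and $\bexps$ (as opposed to $\expsmeas,\bexpsmeas$) are genuinely complete lattices by \Cref{thm:measexpsbicpo}, and \Cref{def:riemannwpwlp} deliberately defines the Riemann transformers on these full sets.

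There is essentially no obstacle: the only thing to check carefully is the bookkeeping that $\uwp{\N}{\prog}{\ex}$ and $\lwlp{\N}{\prog}{\ex}$ are indeed the stated fixed points of the stated characteristic functions (an immediate unfolding of \Cref{def:riemannwpwlp}) and that monotonicity of $\charfuntrans{\prog}{\ex}$ follows from \Cref{thm:monriemann}. If I wanted to be fully self-contained I might include the two-line monotonicity argument for $\charfuntrans{\prog}{\ex}$ explicitly, but since \Cref{thm:monriemann} already asserts monotonicity of $\uwpTrans{\N}{\progBody}$ and $\lwlpTrans{\N}{\progBody}$ and the characteristic function differs only by the monotone operations of multiplying by $\iv{\guard}$ and adding $\iv{\neg\guard}\cdot\ex$, this is routine. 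The whole proof is therefore a direct corollary of \Cref{thm:knasterTarski} together with \Cref{thm:measexpsbicpo} and \Cref{thm:monriemann}.
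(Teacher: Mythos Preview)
Your proposal is correct and matches the paper's own argument essentially verbatim: the paper presents this lemma as an immediate consequence of Park (co)induction (\Cref{thm:knasterTarski}) using that $(\exps,\eleq)$ and $(\bexps,\eleq)$ are complete lattices (\Cref{thm:measexpsbicpo}) and that the Riemann transformers are monotonic (\Cref{thm:monriemann}). Your additional remark that the complete-lattice structure is genuinely needed here (since Kleene does not apply to the non-$\omega$-continuous Riemann transformers) is a nice clarification that the paper leaves implicit.
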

More colloquially stated, \emph{super}invariants yield \emph{upper} bounds on \emph{upper} Riemann pre-expectations of loops and, dually, \emph{sub}invariants yield \emph{lower} bounds on \emph{lower liberal} Riemann pre-expectations. Notice that establishing the premise of the above proof rules only requires reasoning about the loop's {body} and avoids explicitly computing Lebesgue integrals.

It thus follows from the soundness of our Riemann expectation transformers (\Cref{thm:soundness}) that the above proof rules yield sound bounds on \emph{Lebesgue} pre-expectations.
\begin{theorem}
	\label{thm:invariant_cont}
	Let $\prog = \WHILE{\guard}{\progBody} \in\pWhile$ and $\N,\N' \geq 1$. We have:
	\begin{enumerate}
		\item If $\exI \in \exps$ is a $\uwpSymb{\N}$-superinvariant of $\prog$ w.r.t.\ $\ex\in\expsmeas$, then
		$\wp{\prog}{\ex} \eleq \exI~$.
		\item If $\exI \in \bexps$ is a $\lwlpSymb{\N}$-subinvariant of $\prog$ w.r.t.\ $\ex\in\bexpsmeas$, then
		$\exI \eleq \wlp{\prog}{\ex}$.
		\item If $\exI \in \exps$ is a $\uwpSymb{\N}$-superinvariant of $\prog$ w.r.t.\ $\ex\in\expsmeas$ and $\exJ \in \bexps$ is a $\lwlpSymb{\N'}$-subinvariant of $\prog$ w.r.t.\ $1$, then, for all $\pState\in\pStates$, $\exJ(\pState) >0$ implies that $\cwp{\prog}{\ex}(\pState)$ is defined and
		\[
			\cwp{\prog}{\ex}(\pState) \lleq \frac{\exI(\pState)}{\exJ(\pState)}~.
		\]
	\end{enumerate}
\end{theorem}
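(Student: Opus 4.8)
The plan is to obtain all three parts by simply chaining two facts that are already in place: the Park-induction consequences for the Riemann transformers (\Cref{thm:invariant_approx}) and the soundness of the Riemann transformers as one-sided approximations of the Lebesgue transformers (\Cref{thm:soundness}). No new fixed-point argument is needed; the work has been done in those two lemmas.

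For part (1), I would argue as follows. Since $\exI$ is a $\uwpSymb{\N}$-superinvariant of $\prog$ w.r.t.\ $\ex$, i.e.\ $\charfunuwp{\N}{\prog}{\ex}(\exI) \eleq \exI$, \Cref{thm:invariant_approx}(1) yields $\uwp{\N}{\prog}{\ex} \eleq \exI$. Since $\ex \in \expsmeas$, soundness (\Cref{thm:soundness}) gives $\wp{\prog}{\ex} \eleq \uwp{\N}{\prog}{\ex}$, and transitivity of $\eleq$ closes the case. Part (2) is fully dual: from $\exI \eleq \charfunlwlp{\N}{\prog}{\ex}(\exI)$, \Cref{thm:invariant_approx}(2) yields $\exI \eleq \lwlp{\N}{\prog}{\ex}$, and \Cref{thm:soundness} (applicable since $\ex \in \bexpsmeas$) gives $\lwlp{\N}{\prog}{\ex} \eleq \wlp{\prog}{\ex}$, so $\exI \eleq \wlp{\prog}{\ex}$.

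For part (3), I would instantiate part (1) with $\exI$ and $\ex$ to get $\wp{\prog}{\ex} \eleq \exI$, and instantiate part (2) with $\exJ$ and the measurable $1$-bounded constant post-expectation $1$ to get $\exJ \eleq \wlp{\prog}{1}$. Now fix $\pState$ with $\exJ(\pState) > 0$. Then $\wlp{\prog}{1}(\pState) \geq \exJ(\pState) > 0$, so by definition $\cwp{\prog}{\ex}(\pState)$ is defined and equal to $\wp{\prog}{\ex}(\pState) / \wlp{\prog}{1}(\pState)$. Using $0 \leq \wp{\prog}{\ex}(\pState) \leq \exI(\pState)$ and $0 < \exJ(\pState) \leq \wlp{\prog}{1}(\pState)$, monotonicity of division in $\ennReals$ gives
\[
    \cwp{\prog}{\ex}(\pState)
    \eeq
    \frac{\wp{\prog}{\ex}(\pState)}{\wlp{\prog}{1}(\pState)}
    \lleq
    \frac{\exI(\pState)}{\wlp{\prog}{1}(\pState)}
    \lleq
    \frac{\exI(\pState)}{\exJ(\pState)}
    ~.
\]

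The argument is short, and I do not anticipate a genuine obstacle. The only points requiring care are bookkeeping ones: verifying that the measurability/boundedness side conditions of \Cref{thm:soundness} are met (this is exactly why the hypotheses require $\ex \in \expsmeas$ in (1), $\ex \in \bexpsmeas$ in (2), and why in (3) it is legitimate to apply (2) with the constant $1 \in \bexpsmeas$), and handling the extended-reals edge cases in part (3) --- e.g.\ $\exI(\pState) = \infty$ or $\wp{\prog}{\ex}(\pState) = \infty$ --- where one checks directly, using the conventions on $\infty$ from \Cref{sec:prelims}, that the relevant quotients are $\infty$ and the final inequality degenerates to $\infty \leq \infty$.
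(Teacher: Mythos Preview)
Your proposal is correct and follows exactly the approach the paper indicates: the theorem is stated as a direct consequence of combining \Cref{thm:invariant_approx} with the soundness lemma (\Cref{thm:soundness}), and your derivation of part~(3) from parts~(1) and~(2) via the definition of $\cwpSymb$ is the natural one.
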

It is important to note that in \Cref{thm:invariant_approx} we admit \emph{arbitrary} ($1$-bounded) post-expectations whereas in \Cref{thm:invariant_cont} we have to restrict to \emph{measurable} ($1$-bounded) post-expectations in order for the Lebesgue pre-expectations $\wp{\prog}{\ex}$ and $\wlp{\prog}{\ex}$ to be well-defined.
The quantitative loop invariant $\exI$, on the other hand, must \emph{not} necessarily be measurable since there is no need to plug $\exI$ into a Lebesgue expectation transformer.
\begin{example}
	\label{ex:invariant_monte_carlo} 
	Consider the Monte Carlo $\pi$-approximator $\prog$ from \Cref{fig:intro}.
	The expectation
	\[
	   \exI \eeq \pVarcount + \iv{\pVari \leq \pVarm} \cdot \big(0.85\cdot ((\pVarm \monus \pVari) + 1)\big)
	\]
	is a $\uwpSymb{16}$-superinvariant of $\prog$ w.r.t.\ $\pVarcount$ (here $\monus$ denote the \emph{monus} operator defined as $\synTerm_1 \synMonus \synTerm_2 = \max(\synTerm_1 - \synTerm_2, 0)$, detailed in \Cref{sec:syntax:defs}).
    Hence, we get by \Cref{thm:invariant_cont},
	$\wp{\prog}{\pVarcount} \eleq \exI$,
	i.e., if initially $\pVarcount = 0, \pVari = 1$, then $0.85\cdot \pVarm $ upper-bounds the expected final value of $\pVarcount$. 
\end{example}

\section{Convergence of The Riemann Weakest Pre-Expectations}
\label{sec:convergence}

\newcommand{\discontinuities}{D}

We now address the question if and under which conditions $\lwp{\N}{\prog}{\ex}$ and $\uwp{\N}{\prog}{\ex}$ converge to $\wp{\prog}{\ex}$ as $\N$ goes to $\infty$.
We focus on \emph{pointwise} convergence.
It is easy to see that in general, pointwise convergence does not hold: consider for example the simple program $\prog = \UNIFASSIGN{\pVar}$ and the expectation $\ex = \iv{\pVar \in \rats}$, for which we have $\wp{\prog}{\ex} = \zerofun$.
However, for every $\N \geq 1$ it holds that $\uwp{\N}{\prog}{\ex} = \onefun$.
Our goal is thus to identify a \emph{Riemann-suitable} subset of our framework, by mildly restricting the class of allowed expectations.

We first recall two fundamental properties of the Riemann integral.
The first is a well-known characterization of Riemann integrability, the second reveals that Lebesgue integrals conservatively generalize Riemann integrals.
In the following we will use $\lebmes$ to denote the Lebesgue measure.
For the formal definition and properties of $\lebmes$ we refer the reader to\iftoggle{arxiv}{ \Cref{app:background}}{~\cite{arxiv}}.

\begin{theorem}[Riemann-Lebesgue Theorem\textnormal{~\cite[Thm.~11.33]{rudin1953principles}}]
    \label{thm:riemannIntegrableIffAECont}
    Let $\fun \colon \clIvalGen \to \reals$ be bounded.
    Consider the set $\discontinuities = \{x \in \clIvalGen \mid \fun \text{ is discontinuous in } x\}$.
    Then $\fun$ is Riemann-integrable if and only if $\lebmes(\discontinuities) = 0$, i.e., $\fun$ is \emph{continuous almost everywhere}.
\end{theorem}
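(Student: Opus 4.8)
The plan is to prove both directions through the \emph{oscillation} of $\fun$. For $x \in \clIvalGen$ and $\delta > 0$ put $\omega(x,\delta) = \sup\{\,|\fun(s)-\fun(t)| : s,t \in \clIvalGen,\ |s-x|<\delta,\ |t-x|<\delta\,\}$, which is finite and non-increasing as $\delta \downarrow 0$ because $\fun$ is bounded, so the \emph{oscillation} $\omega(x) = \inf_{\delta>0}\omega(x,\delta)$ is well defined. Two routine observations do the structural work: (i) $\fun$ is continuous at $x$ exactly when $\omega(x)=0$, hence the discontinuity set decomposes as $\discontinuities = \bigcup_{n\geq 1}\discontinuities_n$ with $\discontinuities_n = \{x \in \clIvalGen : \omega(x) \geq \tfrac1n\}$; and (ii) each $\discontinuities_n$ is \emph{closed} in $\clIvalGen$, hence compact, since $\omega(x) < \tfrac1n$ implies $\omega(x,\delta)<\tfrac1n$ for some $\delta$, and then $\omega(y)<\tfrac1n$ for every $y$ within $\delta/2$ of $x$. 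I would also first record the standard Riemann criterion, immediate from \Cref{def:riemannIntegral}: a bounded $\fun$ is Riemann-integrable iff for every $\eps>0$ there is a partition $\partition$ with $\upperSum{\fun}{\partition} - \lowerSum{\fun}{\partition} < \eps$.

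For the implication \enquote{$\lebmes(\discontinuities)=0 \implies \fun$ Riemann-integrable}: fix a target $\eta>0$ and let $M = \sup_{\clIvalGen}|\fun|$; choose $n$ with $(\intervalRight-\intervalLeft)/n < \eta/2$ and then $\eps>0$ with $2M\eps < \eta/2$. Since $\discontinuities_n \subseteq \discontinuities$ has measure zero, cover it by open intervals of total length $<\eps$; by compactness keep finitely many of them, with union $U$ still of total length $<\eps$. Then $\clIvalGen\setminus U$ is compact and every point $x$ in it has $\omega(x)<\tfrac1n$, hence a neighbourhood on which $\fun$ varies by less than $\tfrac1n$; keep a finite subcover of these. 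Take $\partition$ to be the partition of $\clIvalGen$ whose points are $\intervalLeft$, $\intervalRight$, and all endpoints of the intervals in both finite covers (clipped to $\clIvalGen$). Each subinterval of $\partition$ lies either inside $U$ or inside one of the oscillation-small intervals; the former contribute at most $2M\,\eps$ to $\upperSum{\fun}{\partition} - \lowerSum{\fun}{\partition}$ (crude bound $2M$ times total length $<\eps$), the latter at most $(\intervalRight-\intervalLeft)/n$, so the difference is $<\eta$. By the Riemann criterion $\fun$ is Riemann-integrable.

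For the converse \enquote{$\fun$ Riemann-integrable $\implies \lebmes(\discontinuities)=0$}: it suffices to prove $\lebmes(\discontinuities_n)=0$ for every $n$, since $\discontinuities$ is then a countable union of null sets. Fix $n$ and $\eps>0$; by the Riemann criterion choose $\partition = (x_0,\dots,x_{\partitionSize})$ with $\upperSum{\fun}{\partition} - \lowerSum{\fun}{\partition} < \eps/n$. Whenever the open subinterval $(x_{i-1},x_i)$ meets $\discontinuities_n$, one has $\sup_{[x_{i-1},x_i]}\fun - \inf_{[x_{i-1},x_i]}\fun \geq \tfrac1n$, so the combined length of those subintervals is at most $n\cdot(\upperSum{\fun}{\partition}-\lowerSum{\fun}{\partition}) < \eps$; the remaining points of $\discontinuities_n$ lie among the finitely many $x_i$, a null set. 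Thus $\discontinuities_n$ is covered by sets of total length $<\eps$ for every $\eps$, giving $\lebmes(\discontinuities_n)=0$.

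I expect the main obstacle to be the bookkeeping in the first direction: assembling the two finite covers (one of $\discontinuities_n$, one of $\clIvalGen\setminus U$) into an honest partition of $\clIvalGen$ and then cleanly splitting $\upperSum{\fun}{\partition} - \lowerSum{\fun}{\partition}$ into a \enquote{bad} part controlled by small total length and the trivial bound $2M$, and a \enquote{good} part controlled by the oscillation estimate $<\tfrac1n$. The conceptual crux enabling all of this is observation (ii), the closedness of $\discontinuities_n$, which turns an a priori uncountable obstruction set into finitely many intervals via compactness.
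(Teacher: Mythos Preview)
Your proof is correct and follows the standard oscillation-based argument. Note, however, that the paper does not actually prove this theorem: it is stated with a citation to Rudin~\cite[Thm.~11.33]{rudin1953principles} and used as a black box. Your write-up is essentially the proof one finds in that reference, so there is nothing to compare.
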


\begin{theorem}[\textnormal{~\cite[Theorem 1.7.1]{ash2000probability}}]
    \label{thm:riemannEqualsLebesgue}
    Let $\fun \colon \clIvalGen \to \reals$ be bounded and Riemann-integrable.
    Then
    \iftoggle{arxiv}{%
    \[
        \int_{\ivalL}^{\ivalR} \fun(x) \,dx \eeq \int_{\clIvalGen} \fun(x) \,d\lebmes(x)~,
    \]
    }{%
    $\int_{\ivalL}^{\ivalR} \fun(x) \,dx = \int_{\clIvalGen} \fun(x) \,d\lebmes(x)$,
    }%
    i.e., the Riemann integral coincides with the Lebesgue integral.
\end{theorem}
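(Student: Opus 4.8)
The plan is to realize $\fun$ as a pointwise limit of step functions built from a refining sequence of partitions and then push the limit through the Lebesgue integral via dominated convergence. First I would fix nested partitions $\partition_1 \partitionRefines \partition_2 \partitionRefines \ldots$ of $\clIvalGen$ with $\partitionNorm{\partition_n} \to 0$, chosen so that each $\partition_n$ refines a partition $\partitionb_n$ witnessing $\upperSum{\fun}{\partitionb_n} - \lowerSum{\fun}{\partitionb_n} < \frac{1}{n}$; such $\partitionb_n$ exist because $\fun$ is Riemann-integrable, so the lower and upper integrals in \Cref{def:riemannIntegral} coincide and can each be approximated arbitrarily well by Darboux sums. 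For each $n$, let $\ell_n$ and $u_n$ be the step functions on $\clIvalGen$ that take the value $\inf \fun$, resp.\ $\sup \fun$, over each subinterval of $\partition_n$. Each $\ell_n, u_n$ is Borel-measurable, and since the Lebesgue integral of a step function is the obvious finite sum over subintervals, $\int_{\clIvalGen} \ell_n \, d\lebmes = \lowerSum{\fun}{\partition_n}$ and $\int_{\clIvalGen} u_n \, d\lebmes = \upperSum{\fun}{\partition_n}$.

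Next I would invoke the Darboux refinement inequalities: passing to a finer partition never decreases the lower sum nor increases the upper sum, so pointwise $\ell_1 \le \ell_2 \le \ldots \le \fun \le \ldots \le u_2 \le u_1$, and every function in sight is bounded in modulus by $\boundedFunBound = \sup_{x \in \clIvalGen} |\fun(x)| < \infty$. Hence $\ell_n$ increases pointwise to some Borel $\ell$ and $u_n$ decreases pointwise to some Borel $u$ with $\ell \le \fun \le u$. The constant $\boundedFunBound$ is $\lebmes$-integrable over the bounded interval $\clIvalGen$, so Lebesgue's dominated convergence theorem yields
\[
    \int_{\clIvalGen} \ell \, d\lebmes \eeq \lim_{n\to\infty} \lowerSum{\fun}{\partition_n}
    \qquad\text{and}\qquad
    \int_{\clIvalGen} u \, d\lebmes \eeq \lim_{n\to\infty} \upperSum{\fun}{\partition_n}~.
\]
Because $\lowerSum{\fun}{\partition_n} \le \lowerIntGen \fun(x)\,dx = \int_{\ivalL}^{\ivalR}\fun(x)\,dx = \upperIntGen \fun(x)\,dx \le \upperSum{\fun}{\partition_n}$ and $\upperSum{\fun}{\partition_n} - \lowerSum{\fun}{\partition_n} \to 0$ by the choice of $\partition_n$, both limits on the right equal the Riemann integral $\int_{\ivalL}^{\ivalR} \fun(x)\,dx$.

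To conclude, from $\int_{\clIvalGen}(u - \ell)\,d\lebmes = 0$ and $u - \ell \ge 0$ I would deduce $u = \ell$ $\lebmes$-almost everywhere, and since $\ell \le \fun \le u$ this forces $\fun = \ell$ $\lebmes$-almost everywhere as well; thus $\fun$ agrees a.e.\ with the Borel function $\ell$, hence is Lebesgue-measurable with $\int_{\clIvalGen}\fun\,d\lebmes = \int_{\clIvalGen}\ell\,d\lebmes = \int_{\ivalL}^{\ivalR}\fun(x)\,dx$, which is the claim. I expect the two mildly delicate points to be (i) checking that this particular nested sequence of partitions actually pins the Darboux sums down to the Riemann integral --- this is exactly where Riemann-integrability (through the $\varepsilon$-criterion) and the refinement monotonicity of Darboux sums enter --- and (ii) the measurability bookkeeping, since a Riemann-integrable function need not be Borel: one must either work in the Lebesgue-complete $\sigma$-algebra or, as above, route the argument through the a.e.-equal Borel function $\ell$. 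Everything else is routine.
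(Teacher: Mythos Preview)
The paper does not prove this theorem; it is quoted with a citation to \cite[Theorem~1.7.1]{ash2000probability} and used as a black box. Your argument is the standard textbook proof (essentially the one in Ash, or in Rudin~11.33), and it is correct; in particular, your point~(ii) about routing measurability through the Borel function $\ell$ is exactly the right way to handle the fact that a Riemann-integrable $\fun$ need not itself be Borel, which matters here since the paper explicitly works with $\lebmes$ on $\borelSets{\reals}$ rather than its completion.
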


\subsection{Convergence of Approximation for Loop-free Programs}
\label{sec:approxwp:convloopfree}

By \Cref{thm:riemannIntegrableIffAECont,thm:riemannEqualsLebesgue}, the Riemann integral coincides with the Lebesgue integral for \enquote{sufficiently continuous} functions.
There is thus hope to establish convergence if we restrict our setting to almost everywhere continuous expectations.
However, there is an additional complication: A function needs to be \emph{bounded} on the integration domain in order to be Riemann-integrable -- otherwise the upper sum could be \enquote{stuck} at $\infty$.
We address this issue using the concept of \emph{local boundedness}:
A function $\fun \colon \pStates \to \reals \cup \{ \pm \infty \}$ is called \emph{locally bounded} if
\[
    \text{for all compact } \compactSet \subseteq \pStates \colon \quad \sup_{\pSt \in \compactSet} |\fun(\pSt)| ~<~ \infty
    ~.
\]
We recall that a subset $\compactSet$ of $\reals^\pVars$ is compact iff it is closed and bounded.
Many common functions such as polynomials are locally bounded.
An example of a function that is \emph{not} locally bounded is $\iv{x > 0} \cdot \tfrac 1 x$.
Based on these considerations, we define the following:

\begin{definition}[Riemann-suitable]
	\label{def:riemannSuitable}
	A combination $(\aExps, \guards, \expsClass)$ of a class $\aExps$ of locally bounded arithmetic expressions, a class $\guards$ of guards, and a class $\expsClass \subseteq \exps$ of locally bounded expectations if
	for all $\ex \in \expsClass$ and $\pVar \in \pVars$, $\ex$ is Riemann-integrable on $\uIval$ w.r.t.\ $\pVar$.
	Moreover, $\expsClass$ contains the constant expectations $0, 1$ and satisfies the following closure properties:
	\begin{enumerate}[(i)]
		\item $\expsClass$ is closed under taking infima and suprema over closed subintervals of $\uIval$, i.e., for all $\ex \in \expsClass$, $\pVar \in \pVars$, and $\clIvalGen \subseteq \uIval$ we have
		$\inf_{\xi \in \clIvalGen}\exSubs{\ex}{\pVar}{\xi} \in \expsClass$ and $\sup_{\xi \in \clIvalGen}\exSubs{\ex}{\pVar}{\xi} \in \expsClass$.
		\item
        For all $\ex \in \expsClass$, $\aExp \in \aExps$ and $\pVar \in \pVars$, we have $\exSubs{\ex}{\pVar}{\aExp} \in \expsClass$.
		\item
        For all $\ex, \exb \in \expsClass$ and $p \in \uIval \cap \rats$, we have $p \cdot \ex + (1-p) \cdot \exb \in \expsClass$.
		\item
		For all $\ex, \exb \in \expsClass$ and $\guard \in \guards$, we have $\iv{\guard} \cdot \ex + \iv{\neg\guard} \cdot \exb \in \expsClass$.
        \qedDef
	\end{enumerate}
\end{definition}
The conditions (i) - (iv) in the above definition ensure that $\expsClass$ is closed under $\lwpTrans{\N}{\prog}$ and $\uwpTrans{\N}{\prog}$ for \emph{loop-free} $\prog$ (see \eqref{eq:approxInClass} in \Cref{thm:convLoopFree} below).
We remark that it is not immediately clear if any \enquote{interesting} Riemann-suitable instantiations exist.
Fortunately, the answer is \emph{yes}, as we show in \Cref{sec:syntax}.
In the Riemann-suitable setting, we can show the following convergence result:

\begin{restatable}[Convergence -- Loop-free Case]{lemma}{convLoopFree}
	\label{thm:convLoopFree}
	Let $(\aExps, \guards, \expsClass)$ be Riemann-suitable.
	Then for all loop-free $\prog \in \pWhileWith{\aExps}{\guards}$ and post-expectations $\ex \in \expsClass$ the following holds:
	\begin{align}
        \forall \N \geq 1 \colon\quad \lwp{\N}{\prog}{\ex} \in \expsClass \qand \uwp{\N}{\prog}{\ex} \in \expsClass
		\label{eq:approxInClass} ~.
    \end{align}
    \begin{align}
        \text{If }\ex \in \expsmeas \colon
        \quad 
        \wp{\prog}{\ex}
		\eeq
		\sup_{\N \geq 1} \lwp{\N}{\prog}{\ex}
		\eeq
		\inf_{\N \geq 1} \uwp{\N}{\prog}{\ex}
		~.
		\label{eq:convLoopFree}
	\end{align}
\end{restatable}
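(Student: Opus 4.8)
The plan is to proceed by induction on the structure of the loop-free program $\prog$. For the closure claim \eqref{eq:approxInClass} the base cases ($\SKIP$, $\DIVERGE$, $\ASSIGN{\pVar}{\aExp}$, $\OBSERVE{\guard}$, $\UNIFASSIGN{\pVar}$) follow essentially by inspection: $\SKIP$ returns $\ex$ itself, $\DIVERGE$ returns the constant $0$, the assignment case uses closure property~(ii), $\OBSERVE{\guard}$ uses property~(iv) with $\exb = 0$, and the $\UNIFASSIGN{\pVar}$ case uses property~(i) (for the $\inf$/$\sup$ over each subinterval $[\tfrac i\N,\tfrac{i+1}\N]$) together with property~(iii) applied repeatedly to form the convex combination $\tfrac 1\N\sum_i(\cdot)$. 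The inductive cases ($\ITE{\guard}{\prog_1}{\prog_2}$, $\PCHOICE{\prog_1}{\prob}{\prog_2}$, $\SEQ{\prog_1}{\prog_2}$) follow from properties~(iv), (iii), and function composition, respectively. Since $\prog$ is loop-free there is no fixed-point case to worry about, so \eqref{eq:approxInClass} is a routine structural induction once the closure properties are lined up with the rules in \Cref{tab:original}.

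For the convergence claim \eqref{eq:convLoopFree}, I would again induct on the structure of $\prog$, using \Cref{thm:soundness} to know that $\sup_\N \lwpSymb{\N}$ and $\inf_\N \uwpSymb{\N}$ always pinch $\wpSymb$ from below and above, so it suffices to prove the reverse inequalities (or, equivalently, that the gap $\uwp{\N}{\prog}{\ex} - \lwp{\N}{\prog}{\ex}$ tends pointwise to $0$). The only case where something genuinely happens is $\UNIFASSIGN{\pVar}$: here $\lwp{\N}{\UNIFASSIGN{\pVar}}{\ex}(\st)$ and $\uwp{\N}{\UNIFASSIGN{\pVar}}{\ex}(\st)$ are exactly the lower and upper Riemann sums $\lowerSum{\func}{\partition_\N}$ and $\upperSum{\func}{\partition_\N}$ of the one-variable function $\func = \lam{\xi}{\ex(\pStUpdate{\st}{\pVar}{\xi})}$ with respect to the uniform partition $\partition_\N$ of $\uIval$. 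By Riemann-suitability, $\func$ is bounded and Riemann-integrable on $\uIval$, so its lower and upper Riemann integrals coincide; and a standard fact about Riemann integrability (which I would either cite or prove in two lines) is that for Riemann-integrable $\func$ the lower and upper sums \emph{along any sequence of partitions whose mesh tends to $0$} converge to the common integral --- the uniform partitions have mesh $\tfrac1\N\to 0$, so $\sup_\N\lowerSum{\func}{\partition_\N} = \inf_\N\upperSum{\func}{\partition_\N} = \int_0^1\func(\xi)\,d\xi$. Finally \Cref{thm:riemannEqualsLebesgue} identifies this Riemann integral with the Lebesgue integral $\int_\uIval \ex(\pStUpdate{\st}{\pVar}{\xi})\,d\lebmes(\xi) = \wp{\UNIFASSIGN{\pVar}}{\ex}(\st)$, closing the base case.

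For the inductive cases of \eqref{eq:convLoopFree}, $\SKIP$, $\DIVERGE$, $\ASSIGN{\pVar}{\aExp}$, $\OBSERVE{\guard}$ are immediate since the approximate and exact transformers agree identically (no integral is introduced). For $\ITE{\guard}{\prog_1}{\prog_2}$ and $\PCHOICE{\prog_1}{\prob}{\prog_2}$, one pushes the $\sup_\N$ / $\inf_\N$ through the (pointwise, nonnegative) linear combination: at a fixed state $\st$ only one branch of the $\KWIF$ is active, and for the probabilistic choice the coefficients $\prob, 1-\prob$ are nonnegative constants, so $\sup_\N$ distributes over the finite sum and the induction hypothesis applies to each summand; the $\inf_\N$ direction is symmetric. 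The sequential composition $\SEQ{\prog_1}{\prog_2}$ is the delicate inductive step: we need $\sup_\N \lwp{\N}{\prog_1}{\lwp{\N}{\prog_2}{\ex}} = \wp{\prog_1}{\wp{\prog_2}{\ex}}$, where the \emph{same} index $\N$ appears in both transformers. I would handle this via a monotonicity-and-sandwich argument: by \eqref{eq:approxInClass} the intermediate expectation $\lwp{\N}{\prog_2}{\ex}$ lies in $\expsClass$ and, being a lower bound, is $\eleq \wp{\prog_2}{\ex}$; using monotonicity of the Riemann transformers (\Cref{thm:monriemann}) and of the Lebesgue ones, together with the induction hypotheses on $\prog_1$ and $\prog_2$, one shows the double supremum is squeezed between two quantities both converging to $\wp{\prog_1}{\wp{\prog_2}{\ex}}$; a $\mathopen|\cdots\mathclose|$-style uniform-gap estimate (gap of $\lwpSymb{\N}$ vs.\ $\uwpSymb{\N}$ composing subadditively) also works and may be cleaner.

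\textbf{Main obstacle.} I expect the sequential-composition case --- reconciling the single shared discretization index $\N$ across the composition of two transformers --- to be the crux; it is the one place where naive ``pass to the limit in each factor'' fails and one must instead argue monotonically that the lower sums increase to (and the upper sums decrease to) the exact value, exploiting that $\expsClass$ is preserved and that all transformers involved are monotone. Everything else is a bookkeeping exercise matching \Cref{def:riemannSuitable}'s closure clauses against the transformer rules, plus the elementary real-analysis fact about Riemann sums over mesh-vanishing partitions.
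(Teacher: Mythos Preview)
Your outline is broadly right in structure, and you correctly single out sequential composition as the crux.  There is, however, a real gap that bites already in the $\PCHOICE{\prog_1}{\prob}{\prog_2}$ case and, more severely, in $\SEQ{\prog_1}{\prog_2}$: the sequence $\N\mapsto\lwp{\N}{\prog}{\ex}$ is \emph{not monotone} in $\N$ (the paper notes explicitly that $\lwp{\N}{\prog}{\ex}\not\eleq\lwp{\N+1}{\prog}{\ex}$ in general, since the uniform partitions are not successive refinements).  Consequently your claim that ``$\sup_\N$ distributes over the finite sum'' is false: for non-monotone $a_\N,b_\N\ge 0$ one has only $\sup_\N(\prob\,a_\N+(1{-}\prob)\,b_\N)\le \prob\sup_\N a_\N+(1{-}\prob)\sup_\N b_\N$, and the reverse inequality---which is what you need---can fail.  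The same issue blocks both the ``sandwich'' and the ``subadditive-gap'' sketches for $\SEQ{\prog_1}{\prog_2}$: from $\sup_\N \lwp{\N}{\prog_2}{\ex}=\wp{\prog_2}{\ex}$ you cannot conclude that $\lwp{\N}{\prog_2}{\ex}$ \emph{converges} to $\wp{\prog_2}{\ex}$, so there is no limit to feed into a continuity argument for $\prog_1$.

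The paper's fix is to prove the alternative statement with $\N$ replaced by $2^\N$.  Because $0<\tfrac{1}{2^{\N+1}}<\tfrac{2}{2^{\N+1}}<\cdots$ refines $0<\tfrac{1}{2^{\N}}<\cdots$, \Cref{thm:partitionRefine} gives that $\lwp{2^\N}{\prog}{\ex}$ is non-decreasing and $\uwp{2^\N}{\prog}{\ex}$ non-increasing in $\N$ (this is \Cref{thm:powerTwoMonotonic}).  Now $\sup_\N$ \emph{does} commute with the convex combinations, and for $\SEQ{\prog_1}{\prog_2}$ one applies the diagonal lemma (\Cref{thm:oneVsTwoSups}) to decouple the two copies of $\N$, then $\omega$-continuity of $\wpTrans{\prog_1}$ to pull the inner supremum inside.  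The original claim over all $\N$ is recovered at the end since $\sup_\N\lwp{\N}{\prog}{\ex}$ is squeezed between $\sup_\N\lwp{2^\N}{\prog}{\ex}$ and $\wp{\prog}{\ex}$ by soundness.  A second subtlety you do not mention: the $\inf_\N\uwpSymb{\N}$ direction for $\SEQ{\prog_1}{\prog_2}$ needs $\wpTrans{\prog_1}$ to preserve infima of \emph{decreasing} sequences, but $\wpSymb$ is not $\omega$-cocontinuous on all of $\expsmeas$.  Here the paper uses that the class $\expsClass$ consists of \emph{locally bounded} expectations (part of Riemann-suitability) and proves a dedicated cocontinuity lemma (\Cref{thm:wpLoopFreeCocontLocBounded}) restricted to such expectations; the key step is a dominated/monotone convergence argument that needs the first term of the sequence to have finite integral on $\uIval$.
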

\noindent An analogous $\wlpSymb$-version of \Cref{thm:convLoopFree} is stated in\iftoggle{arxiv}{ \Cref{app:wlpVersion}}{~\cite{arxiv}}.

\subsection{Convergence of Approximation for Programs with Loops}
\label{sec:approxwp:convloops}

Recall the \emph{Dirichlet program} $D$ from \Cref{sec:birdseye}.
For all $\N$ we have $\uwp{\N}{D}{\onefun} = \onefun$, showing that \Cref{thm:convLoopFree} does not hold for loopy programs in general. 
The added difficulty stems from the fact that the semantics of loops is itself a limit.
Specifically, we can imagine to \enquote{unroll} every loop in a program $\prog$ up to a certain depth $\depth \in \nats$, denoted $\unfold{\prog}{\depth}$ (see\iftoggle{arxiv}{ \Cref{app:unrolling}}{~\cite{arxiv}} for the formal definition), and view the semantics $\prog$ as the limit of the semantics of $\unfold{\prog}{\depth}$ when $\depth$ tends to infinity.
Notably, $\wpSymb$ and $\wlpSymb$ behave differently with respect to this limit: when increasing $\depth$ the (non-liberal) pre-expectation of $\ex$ increases, while the liberal pre-expectation decreases.
This is formalized in the next lemma:

\begin{restatable}{lemma}{convUnfolding}
	\label{thm:convUnfolding}
	For all $\prog \in \pWhile$ and post-expectations $\ex \in \expsmeas$ and $\exb \in \bexpsmeas$, $\wp{\unfold{\prog}{\depth}}{\ex}$ and $\wlp{\unfold{\prog}{\depth}}{\exb}$ are non-decreasing (non-increasing, respectively) sequences in $\depth \in \nats$ and it holds that
	\[
	\sup_{\depth \in \nats} \wp{\unfold{\prog}{\depth}}{\ex}
	\eeq
	\wp{\prog}{\ex}
	\qqand
	\inf_{\depth \in \nats} \wlp{\unfold{\prog}{\depth}}{\ex}
	\eeq
	\wlp{\prog}{\ex}
	~.
	\]
\end{restatable}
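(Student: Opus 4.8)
The plan is to prove the lemma by \emph{structural induction} on $\prog$, establishing the $\wpSymb$- and $\wlpSymb$-statements simultaneously (the $\wlpSymb$-case being fully dual: replace $\lfp$/$\sup$/bottom $0$ by $\gfp$/$\inf$/top $1$, and use $\omega$-\underline{co}continuity of $\wlpTrans{-}$, Park coinduction, and that $\po{\bexpsmeas}{\eleq}$ is an $\omega$-bicpo with top; see \Cref{thm:measexpsbicpo}). Throughout I rely on three facts: (a) $\wpTrans{\prog'}$ is $\omega$-continuous for every $\prog'\in\pWhile$ (\Cref{thm:wpWlpWellDefinedAndContinuous}), in particular for every finite unrolling $\unfold{\prog'}{\depth}$; (b) for fixed $\guard,\ex$ the map $\lam{h}{\iv{\guard}\cdot h+\iv{\neg\guard}\cdot\ex}$ is $\omega$-bicontinuous and monotone (as noted after \Cref{thm:measexpsbicpo}), and function composition preserves these properties; (c) Kleene's \Cref{thm:kleene} and Park (co)induction (\Cref{thm:knasterTarski}) apply on $\po{\expsmeas}{\eleq}$.

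For the atomic loop-free constructs $\SKIP$, $\DIVERGE$, $\ASSIGN{\pVar}{\aExp}$, $\OBSERVE{\guard}$, $\UNIFASSIGN{\pVar}$ we have $\unfold{\prog}{\depth}=\prog$, so the sequence is constant and there is nothing to prove. For $\ITE{\guard}{\prog_1}{\prog_2}$ and $\PCHOICE{\prog_1}{\prob}{\prog_2}$, unrolling commutes with the construct, so $\wp{\unfold{\prog}{\depth}}{\ex}$ is a fixed affine combination of $\wp{\unfold{\prog_1}{\depth}}{\ex}$ and $\wp{\unfold{\prog_2}{\depth}}{\ex}$; by the induction hypothesis (IH) these are non-decreasing with suprema $\wp{\prog_i}{\ex}$, and the claim follows since the affine combination is monotone and $\omega$-continuous in each argument. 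For $\SEQ{\prog_1}{\prog_2}$ we face a genuine double limit: with $g_\depth:=\wp{\unfold{\prog_2}{\depth}}{\ex}$ (non-decreasing, $\sup_\depth g_\depth=\wp{\prog_2}{\ex}$, all $g_\depth\in\expsmeas$ by the IH for $\prog_2$) we must show $\sup_\depth\wp{\unfold{\prog_1}{\depth}}{g_\depth}=\wp{\prog_1}{\wp{\prog_2}{\ex}}$. For ``$\eleq$'' use monotonicity of $\wpTrans{\unfold{\prog_1}{\depth}}$ in its argument together with $\wp{\unfold{\prog_1}{\depth}}{h}\eleq\wp{\prog_1}{h}$ (from the IH for $\prog_1$). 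For ``$\egeq$'', observe that the diagonal $\depth\mapsto\wp{\unfold{\prog_1}{\depth}}{g_\depth}$ dominates $\wp{\unfold{\prog_1}{k}}{g_m}$ for all $k,m$ (using monotonicity both in the unrolling index, by the IH, and in the argument); take $\sup_m$ inside via $\omega$-continuity of $\wpTrans{\unfold{\prog_1}{k}}$, and then $\sup_k$ via the IH for $\prog_1$ applied to the measurable post-expectation $\wp{\prog_2}{\ex}$.

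The crux is the loop case $\prog=\WHILE{\guard}{\progBody}$. Abbreviate $F:=\charfunwp{\prog}{\ex}$, so $\wp{\prog}{\ex}=\lfp F=\sup_\depth F^\depth(0)$ by Kleene, and $F$ is $\omega$-continuous by (a),(b). Writing out the definition of $\unfold{-}{-}$ (residual loops replaced by $\DIVERGE$, hence $\wpSymb=0$), one gets $u_\depth:=\wp{\unfold{\prog}{\depth}}{\ex}\in\expsmeas$ with $u_0=0$ and $u_{\depth+1}=G_{\depth+1}(u_\depth)$, i.e.\ $u_\depth=(G_\depth\circ\cdots\circ G_1)(0)$, where $G_m(h):=\iv{\guard}\cdot\wp{\unfold{\progBody}{m}}{h}+\iv{\neg\guard}\cdot\ex$ (the precise indexing of $G_m$ coming from the unrolling convention is immaterial below). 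The IH for $\progBody$ says $m\mapsto\wp{\unfold{\progBody}{m}}{h}$ is non-decreasing with supremum $\wp{\progBody}{h}$ for each $h\in\expsmeas$; hence, by $\omega$-bicontinuity of the affine combination, $G_m\eleq G_{m+1}\eleq F$ pointwise on $\expsmeas$ and $\sup_m G_m(h)=F(h)$ for all such $h$. Two easy inductions then give $u_\depth\eleq u_{\depth+1}$ and $u_\depth\eleq F^\depth(0)$, so $u^*:=\sup_\depth u_\depth\eleq\lfp F$. For the reverse inequality it suffices to show $u^*$ is a pre-fixed point of $F$: since $F$ is $\omega$-continuous and $(u_\depth)$ is non-decreasing, $F(u^*)=\sup_\depth F(u_\depth)=\sup_\depth\sup_m G_m(u_\depth)$, and for every $\depth,m$, putting $\depth':=\max(m,\depth+1)$, we have $G_m(u_\depth)\eleq G_{\depth'}(u_\depth)\eleq G_{\depth'}(u_{\depth'-1})=u_{\depth'}\eleq u^*$ (using $G_m\eleq G_{\depth'}$ pointwise, monotonicity of $G_{\depth'}$, and $u_\depth\eleq u_{\depth'-1}$). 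Thus $F(u^*)\eleq u^*$, and Park induction (\Cref{thm:knasterTarski}) yields $\lfp F\eleq u^*$; combined with monotonicity of $(u_\depth)$ this proves $\sup_\depth\wp{\unfold{\prog}{\depth}}{\ex}=\lfp F=\wp{\prog}{\ex}$.

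The main obstacle is precisely this loop case: the semantics of $\prog$ is already a Kleene limit, while $\unfold{\prog}{\depth}$ truncates \emph{both} the outer loop and (recursively) every nested loop in $\progBody$ to the same depth. Disentangling the two approximations — which the pointwise identity $\sup_m G_m=F$ together with the ``diagonal'' estimate $G_m(u_\depth)\eleq u_{\max(m,\depth+1)}$ accomplish — is the delicate point; a milder, but analogous, double limit already shows up in the sequential-composition case. The remaining cases, and the $\wlpSymb$-variant obtained by dualization, are routine.
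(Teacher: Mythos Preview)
Your proof is correct and follows essentially the same approach as the paper: structural induction with the crux in the $\WHILE$ case, handled by splitting the double limit (inner body unrolling vs.\ outer loop iteration) via a diagonal argument --- the paper just packages this as a separate lemma $\sup_i\sup_j a(i,j)=\sup_i a(i,i)$ for monotone $a$, and shows $u^*$ is an exact fixed point of $F$ rather than a pre-fixed point followed by Park induction. One minor remark: you cite \Cref{thm:knasterTarski} (stated for complete lattices) for Park induction on $\po{\expsmeas}{\eleq}$, which is only an $\omega$-bicpo; the principle still holds there because $F$ is $\omega$-continuous (so $\lfp F=\sup_n F^n(0)$ and an easy induction from $F(u^*)\eleq u^*$ gives $F^n(0)\eleq u^*$), but strictly speaking this relies on Kleene rather than Knaster--Tarski.
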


It should now be clear that, when considering the convergence of $\uwp{\N}{\prog}{\ex}$ for a program $\prog$ with loops, we are implicitly considering two limits, the one over $\N$ and the one over $\depth$. Unfortunately, when the two limits have different monotonic behaviors (i.e., for $\uwpTrans{\N}{\prog}$, where the limit in $\depth$ is the supremum but the limit in $\N$ is the infimum, and symmetrically for $\lwlpTrans{\N}{\prog}$) convergence cannot be guaranteed. Intuitively, this explains why the next result only holds for $\lwpSymb{\N}$ and $\uwlpSymb{\N}$. 

\begin{restatable}[Convergence -- General Case]{theorem}{pointwiseConv}
	\label{thm:pointwiseConv}
	Let $(\aExps, \guards, \expsClass)$ be Riemann-suitable.
	Then for all $\prog \in \pWhileWith{\aExps}{\guards}$ and post-expectations $\ex \in \expsClass \cap \expsmeas$ and $\exb \in \expsClass \cap \bexpsmeas$ it holds that:
	\begin{align*}
		\sup_{n \geq 1} \lwp{n}{\unfold{\prog}{n}}{\ex}
		\eeq
		\wp{\prog}{\ex}
		\qand
		\wlp{\prog}{\exb}
		\eeq
		\inf_{n \geq 1} \uwlp{n}{\unfold{\prog}{n}}{\exb}
		~.
	\end{align*}
\end{restatable}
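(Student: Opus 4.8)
The plan is to factor the weakest pre-expectation of $\prog$ through the loop-free case and then to collapse the resulting double supremum onto its diagonal. Fix a Riemann-suitable $(\aExps,\guards,\expsClass)$, a program $\prog \in \pWhileWith{\aExps}{\guards}$, and $\ex \in \expsClass \cap \expsmeas$ (the $\wlpSymb$-half is handled dually, see the last paragraph). Every unrolling $\unfold{\prog}{\depth}$ is loop-free and still lies in $\pWhileWith{\aExps}{\guards}$, since unrolling merely replaces each loop by finitely many nested conditionals bottoming out in $\DIVERGE$, reusing the original guards and arithmetic expressions; hence \Cref{thm:convLoopFree} applies to $\unfold{\prog}{\depth}$ with post-expectation $\ex$ and yields $\wp{\unfold{\prog}{\depth}}{\ex} = \sup_{\N \geq 1} \lwp{\N}{\unfold{\prog}{\depth}}{\ex}$ for every $\depth$. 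Combining this with \Cref{thm:convUnfolding} gives
\[
  \wp{\prog}{\ex}
  \eeq \sup_{\depth \in \nats} \wp{\unfold{\prog}{\depth}}{\ex}
  \eeq \sup_{\depth \in \nats}\ \sup_{\N \geq 1} \lwp{\N}{\unfold{\prog}{\depth}}{\ex}
  \eeq \sup_{\depth \in \nats,\ \N \geq 1} \lwp{\N}{\unfold{\prog}{\depth}}{\ex}~,
\]
using that $\exps$ is a complete lattice (\Cref{thm:measexpsbicpo}) so that all these suprema exist. The inequality ``$\eleq$'' of the theorem is then immediate: by \Cref{thm:soundness} and \Cref{thm:convUnfolding}, $\lwp{n}{\unfold{\prog}{n}}{\ex} \eleq \wp{\unfold{\prog}{n}}{\ex} \eleq \wp{\prog}{\ex}$ for every $n$. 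For the reverse inequality it suffices, by the display, to dominate each ``grid term'' $\lwp{\N}{\unfold{\prog}{\depth}}{\ex}$ by some ``diagonal term'' $\lwp{n}{\unfold{\prog}{n}}{\ex}$. The obstacle is that $\lwpSymb{\N}$ is \emph{not} $\eleq$-monotone in $\N$, so one cannot naively take $n = \max(\N,\depth)$.

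To repair this I would establish two monotonicity facts, each by a routine structural induction. (a) \textbf{Monotonicity in the unrolling depth:} $\lwp{\N}{\unfold{\prog}{\depth}}{\ex} \eleq \lwp{\N}{\unfold{\prog}{\depth+1}}{\ex}$ for all $\depth,\N$. The proof of the corresponding monotonicity claim for $\wpSymb$ inside \Cref{thm:convUnfolding} uses only monotonicity of the transformer and the fact that it maps $\DIVERGE$ to the bottom expectation $0$; since $\lwpSymb{\N}$ is monotone (\Cref{thm:monriemann}) and $\lwp{\N}{\DIVERGE}{\exb} = 0$ as well, that argument transfers verbatim. (b) \textbf{Monotonicity along multiples of the discretization:} $\lwp{\N}{\prog'}{\exb} \eleq \lwp{k\N}{\prog'}{\exb}$ for every loop-free $\prog' \in \pWhile$, every $\exb \in \exps$, and every $k \geq 1$. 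The base case is $\prog' = \UNIFASSIGN{\pVar}$: for every state $\pSt$, the value $\lwp{\N}{\UNIFASSIGN{\pVar}}{\exb}(\pSt)$ is precisely the lower (Darboux--)Riemann sum of $\lam{\xi}{\exb(\pStUpdate{\pSt}{\pVar}{\xi})}$ over the equispaced partition $(0,\tfrac1\N,\tfrac2\N,\ldots,1)$ of $\uIval$, and $(0,\tfrac1{k\N},\ldots,1)$ is a refinement of it (since $i/\N = ki/(k\N)$); refining a partition never decreases the lower sum, which gives the claim. The inductive steps --- sequential composition, conditional and probabilistic choice, assignment, $\KWOBSERVE$, $\SKIP$, $\DIVERGE$ --- use only the induction hypothesis (applied, in the case $\SEQ{\prog_1}{\prog_2}$, also to the post-expectation $\lwp{\N}{\prog_2}{\exb}$ produced by $\prog_2$), monotonicity of $\lwpSymb{\N}$, and monotonicity of $+$ and $\cdot$. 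Neither (a) nor (b) needs any boundedness hypothesis, because expectations take values in $\exNonNegReals$, so all infima and sums are well-defined in $[0,\infty]$.

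With (a) and (b) in hand the diagonalization is immediate. Given a grid term $\lwp{\N}{\unfold{\prog}{\depth}}{\ex}$, set $m = \max(\N,\depth)$ and $n = m!$. Since $\N \leq m$ divides $n = m!$, fact (b) (with $\prog' = \unfold{\prog}{\depth}$, $\exb = \ex$, and $k = n/\N$) gives $\lwp{\N}{\unfold{\prog}{\depth}}{\ex} \eleq \lwp{n}{\unfold{\prog}{\depth}}{\ex}$; since $\depth \leq m \leq m! = n$, iterating fact (a) gives $\lwp{n}{\unfold{\prog}{\depth}}{\ex} \eleq \lwp{n}{\unfold{\prog}{n}}{\ex}$. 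Hence every grid term is $\eleq$ a diagonal term, so $\sup_{\depth,\N} \lwp{\N}{\unfold{\prog}{\depth}}{\ex} \eleq \sup_{n \geq 1} \lwp{n}{\unfold{\prog}{n}}{\ex}$, and together with the first paragraph this proves $\wp{\prog}{\ex} = \sup_{n\geq 1}\lwp{n}{\unfold{\prog}{n}}{\ex}$.

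The $\wlpSymb$-equation $\wlp{\prog}{\exb} = \inf_{n\geq 1}\uwlp{n}{\unfold{\prog}{n}}{\exb}$ follows by the order-dual of the entire argument: one uses the $\wlpSymb$-analogue of \Cref{thm:convLoopFree}, the fact from \Cref{thm:convUnfolding} that $\wlp{\unfold{\prog}{\depth}}{\exb}$ is non-\emph{increasing} in $\depth$, that refining a partition never \emph{increases} the upper sum, and that $\uwlpSymb{\N}$ is monotone (\Cref{thm:monriemann}) and non-increasing in the unrolling depth --- whence $\uwlp{n!}{\unfold{\prog}{n!}}{\exb}$ is non-increasing in $n$, and suprema are replaced by infima throughout. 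I expect the only genuinely laborious part to be carrying out the structural inductions in (a) and (b) against the precise recursive definition of $\unfold{\cdot}{\depth}$, in particular for nested loops; the conceptual core is the single observation that, once the discretization parameter is restricted to factorials, the diagonal becomes cofinal in the two-dimensional grid, which is exactly what compensates for the missing monotonicity of $\lwpSymb{\N}$ in $\N$.
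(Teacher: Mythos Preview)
Your proof is correct and follows essentially the same strategy as the paper: restrict the discretization parameter to a subsequence along which the lower Riemann transformer is monotone (exploiting that refining a partition never decreases a lower sum), then collapse the resulting monotone double supremum onto its diagonal. The paper uses powers of two (its \Cref{thm:powerTwoMonotonic}) together with the abstract diagonalization \Cref{thm:oneVsTwoSups}, whereas you use divisibility and factorials with an explicit grid-to-diagonal domination; both are instances of the same refinement idea, and the paper's choice is marginally cleaner only because the subsequence $2^n$ grows slower than $n!$ and the required lemmas are already on the shelf.
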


Note that \Cref{thm:pointwiseConv} ensures that when considering a Riemann suitable class, the L.H.S.\ of the inequality in \Cref{thm:cwpSound} converges to the exact conditional weakest pre-expectation $\cwpSymb$.

\section{Effective Verification of $\pWhile$-Programs}
\label{sec:syntax}

In the previous sections, we have assumed that guards and arithmetical expressions in programs as well as the resulting (pre/post)-expectations are purely mathematical objects.
To enable \emph{effective}, i.e., automated, verification we now define, inspired by \cite{DBLP:journals/pacmpl/BatzKKM21,DBLP:journals/pacmpl/SchroerBKKM23}, concrete syntaxes for these objects.

\subsection{A Formal Language of Expressions}
\label{sec:syntax:defs}

Let $\lvars$ be a countably infinite set of logical variables ranged over by $\lvar, \lvarb, \ldots$, etc.
The sets $\syntacticTerms$ and $\syntacticGuards$ of (syntactic) \emph{terms} and \emph{guards} are defined as follows:
\begin{align*}
    \synTerm
    ~\grammarSymb~
    \ratConst \in \nonNegRats
    \mmid 
    \lvar \in \lvars
    \mmid
    \synTerm + \synTerm
    \mmid
    \synTerm \synMonus \synTerm
    \mmid
    \synTerm \cdot \synTerm
    \qquad
    \qquad
    \synGuard
    ~\grammarSymb~
    \synTerm < \synTerm
    \mmid
    \neg \synGuard
    \mmid
    \synGuard \land \synGuard
\end{align*}
Note that the other standard comparison relations $\leq,\neq, =, >, \geq$ and Boolean connectives $\lor, \limplies, \liff$ can be expressed in terms of $<, \neg, \land$.
We allow further syntactic sugar such as $\synTerm^2$ for $\synTerm \cdot \synTerm$ and $1 \leq \lvar \leq 2$  for $1 \leq \lvar \land \lvar \leq 2$, etc.
Additional parentheses are admitted to clarify the order of precedence; to minimize the use of parentheses we assume that $\cdot$ takes precedence over $+$ and $\synMonus$, and $\neg$ binds stronger than the binary Boolean connectives, as is standard.
We let $\varsIn{\synTerm}$ and $\varsIn{\synGuard}$ be the variables occurring in term $\synTerm$ and guard $\synGuard$.

For every $\lvarsSubset \supseteq \free{\synTerm}$, the semantics $\sem{\synTerm} \colon \nnReals^\lvarsSubset \to \nnReals$ of $\synTerm \in \synTerms$ is standard except that $\synMonus$ is interpreted as \enquote{monus}, i.e., $\sem{\synTerm_1 \synMonus \synTerm_2} = \max(\synTerm_1 - \synTerm_2, 0)$.
The semantics of $\synGuard \in \synGuards$ can be viewed similarly as a function $\sem{\synGuard} \colon \nnReals^\lvarsSubset \to \bools$ for every $\lvarsSubset \supseteq \free{\synGuard}$.
For $\st \in \nnReals^\lvarsSubset$, we write $\st \models \synGuard$ and $\st \not\models \synGuard$ to indicate that $\sem{\synGuard}(\st) = \boolConstTrue$ and $\sem{\synGuard}(\st) = \boolConstFalse$, respectively.

Note that every $\synTerm \in \synTerms$ without monus is a non-negative polynomial in the --- finitely many --- variables $\free{\synTerm}$ with rational coefficients, possibly written in (partially) factorized form, whereas terms with monus can be seen as a piecewise defined non-negative polynomial.
Similarly, every $\synGuard \in \synGuards$ is a Boolean combination of (in)equations between such (piecewise) polynomials.

We are now ready to define our expression language similar to~\cite{DBLP:journals/pacmpl/BatzKKM21}.
The restriction to infima and suprema over \emph{compact} intervals $\clIvalGen$ resembles the definition of lower and upper Riemann sums.
\begin{definition}[The Expression Language $\synExps$]
    The set of $\synExps$ of (syntactic) \emph{expressions} is defined according to the following grammar:
    \begin{align*}
        \synEx
        &\quad\grammarSymb\quad
        \synTerm
        \mmid
        \iv{\synGuard} \cdot \synEx
        \mmid
        \ratConst \cdot \synEx
        \mmid
        \synEx + \synEx
        \mmid
        \synSupBd{\lvar}{\ivalL}{\ivalR} \,\synEx
        \mmid
        \synInfBd{\lvar}{\ivalL}{\ivalR} \,\synEx
    \end{align*}
    Here, $\synTerm \in \syntacticTerms$, $\synGuard \in \synGuards$, $\ratConst \in \nnRats$, $\ivalL, \ivalR \in \nnRats$, $\ivalL \leq \ivalR$, and $\lvar \in \lvars$.
    \qedDef
\end{definition}
We allow $\iv{\synGuard}$ as syntactic sugar for $\iv{\synGuard} \cdot 1$ and adopt the usual rules regarding parentheses and orders of precedence.
For $\synEx \in \synExps$ we let $\free{\synEx}$ be the variables in $\synEx$ that are not bound by a $\supQuantifierSymbol$ (supremum) or $\infQuantifierSymbol$ (infimum) \enquote{quantifier}.
As in standard first-order logic, it is possible for a variable to have both a free and a non-free occurrence in $\synEx$.
We may write $\synEx(\lvar_1,\ldots,\lvar_n)$ to indicate that $\synEx$ contains at most the pairwise distinct free variables $\lvar_1,\ldots,\lvar_n \in \lvars$.

\begin{definition}[Semantics of Expressions]
    \label{def:exprSem}
    We define the semantics of expressions $\sem{\synEx} \colon \nnReals^\lvarsSubset \to \nonNegReals$ for every finite $\lvarsSubset \subseteq \lvars$ with $\free{\synEx} \subseteq \lvarsSubset$ inductively as follows.
    For all $\st \in \nnReals^\lvarsSubset$:
    \begin{itemize}
        \item $\sem{\synTerm}(\st)$ is defined in the standard manner, see above.
        \item $\sem{\iv{\synGuard} \cdot \synEx}(\st) = \sem{\synEx}(\st)$ if $\st \models \synGuard$; $\sem{\iv{\synGuard} \cdot \synEx}(\st) = 0$ if $\st \not\models \synGuard$.
        \item $\sem{\ratConst \cdot \synEx}(\st) = \ratConst \cdot \sem{\synEx}(\st)$.
        \item $\sem{\synEx_1 + \synEx_2}(\st) = \sem{\synEx_1}(\st) + \sem{\synEx_2}(\st)$
        \item $\sem{\synSupBd{\lvar}{\ivalL}{\ivalR} \,\synEx}(\st) = \sup_{\xi \in \clIvalGen} \sem{\synEx}(\pStUpdate{\st}{\lvar}{\xi})$.
        \item $\sem{\synInfBd{\lvar}{\ivalL}{\ivalR} \,\synEx}(\st) = \inf_{\xi \in \clIvalGen} \sem{\synEx}(\pStUpdate{\st}{\lvar}{\xi})$.
    \end{itemize}
    In the last two cases, if $\lvar$ is not already in the domain of $\st$, we tacitly assume that the operation $\pStUpdate{\st}{\lvar}{\xi}$ extends the domain of $\st$ by $\lvar$.
    \qedDef
\end{definition}

It is not immediately clear that $\sem{\synEx}$ is well-defined due to the suprema in \Cref{def:exprSem} --- we have to ensure that those exist in $\nonNegReals$.
We show this now.

\begin{restatable}{lemma}{expWellLocallyBounded}
    \label{thm:expWellLocallyBounded}
    For every $\synEx \in \synExps$ and every finite $\lvarsSubset \subseteq \lvars$ with $\free{\synEx} \subseteq \lvarsSubset$ the semantics $\sem{\synEx} \colon \nnReals^\lvarsSubset \to \nonNegReals$ is a well-defined locally bounded function (cf.~\Cref{sec:approxwp:convloopfree}).
\end{restatable}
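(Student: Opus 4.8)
I would prove this by structural induction on $\synEx \in \synExps$, establishing the statement uniformly in $\lvarsSubset$, i.e.\ simultaneously for all finite $\lvarsSubset \subseteq \lvars$ with $\free{\synEx} \subseteq \lvarsSubset$. The invariant to propagate along the induction is the conjunction of two properties: (a) $\sem{\synEx}$ is a well-defined function into $\nonNegReals$ (in particular, every supremum and infimum occurring in \Cref{def:exprSem} exists and is \emph{finite}), and (b) $\sem{\synEx}$ is locally bounded. Note that (b) already subsumes finiteness, since every singleton $\{\st\}$ is compact; this is precisely why it pays to carry local boundedness through the induction rather than mere finiteness. For the base case $\synEx = \synTerm \in \syntacticTerms$: a short sub-induction on $\synTerm$ shows that $\sem{\synTerm}$ takes values in $\nnReals$ (rational constants and variable values are non-negative, and $+$, $\cdot$, $\synMonus$ all preserve non-negativity), and that $\sem{\synTerm}$ is continuous on $\nnReals^{\lvarsSubset}$ (as $+$, $\cdot$, and $\synMonus$ are continuous); hence by the extreme value theorem $\sem{\synTerm}$ attains a finite maximum on every compact set, i.e.\ it is locally bounded.

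\textbf{The arithmetic/guard constructors.} For $\iv{\synGuard}\cdot\synEx$, $\ratConst\cdot\synEx$, and $\synEx_1 + \synEx_2$ the claim follows directly from the induction hypothesis via the pointwise facts $0 \le \sem{\iv{\synGuard}\cdot\synEx}(\st) \le \sem{\synEx}(\st)$, $\sem{\ratConst\cdot\synEx} = \ratConst\cdot\sem{\synEx}$ with $\ratConst \ge 0$, and $0 \le \sem{\synEx_1+\synEx_2}(\st) = \sem{\synEx_1}(\st) + \sem{\synEx_2}(\st)$, together with closure of $\nonNegReals$ under these operations and the elementary estimates $\sup_{\st\in\compactSet}|f+g| \le \sup_{\st\in\compactSet}|f| + \sup_{\st\in\compactSet}|g|$ and $\sup_{\st\in\compactSet}|\ratConst f| = \ratConst\,\sup_{\st\in\compactSet}|f|$ for every compact $\compactSet$.

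\textbf{The supremum/infimum constructor --- the crux.} For $\synEx = \synSupBd{\lvar}{\ivalL}{\ivalR}\,\synEx'$ I would apply the induction hypothesis to $\synEx'$ over the set $\lvarsSubset \cup \{\lvar\} \supseteq \free{\synEx'}$, obtaining that $\sem{\synEx'}\colon \nnReals^{\lvarsSubset \cup \{\lvar\}} \to \nonNegReals$ is well-defined and locally bounded. Fix $\st \in \nnReals^{\lvarsSubset}$. Since $\ivalL, \ivalR \in \nnRats$ with $\ivalL \le \ivalR$, the interval $\clIvalGen$ is a compact subset of $\nnReals$, and the map $\lam{\xi}{\pStUpdate{\st}{\lvar}{\xi}}$ is continuous, so its image $\compactSet_\st = \{\pStUpdate{\st}{\lvar}{\xi} \mid \xi \in \clIvalGen\}$ is compact; local boundedness of $\sem{\synEx'}$ then gives $\sup_{\xi \in \clIvalGen}\sem{\synEx'}(\pStUpdate{\st}{\lvar}{\xi}) = \sup_{\sigma \in \compactSet_\st}\sem{\synEx'}(\sigma) < \infty$, and as a supremum of non-negative reals this value lies in $\nonNegReals$, so $\sem{\synEx}(\st)$ is well-defined. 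For local boundedness of $\sem{\synEx}$ itself, fix a compact $\compactSet \subseteq \nnReals^{\lvarsSubset}$; the set $\compactSet' = \{\pStUpdate{\st}{\lvar}{\xi} \mid \st \in \compactSet,\ \xi \in \clIvalGen\}$ is the continuous image of the compact set $\compactSet \times \clIvalGen$, hence compact, so $B = \sup_{\sigma \in \compactSet'}\sem{\synEx'}(\sigma) < \infty$ by the induction hypothesis, and since $\pStUpdate{\st}{\lvar}{\xi} \in \compactSet'$ for all $\st \in \compactSet$ and $\xi \in \clIvalGen$ we conclude $\sup_{\st \in \compactSet}\sem{\synEx}(\st) \le B < \infty$. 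The case $\synEx = \synInfBd{\lvar}{\ivalL}{\ivalR}\,\synEx'$ is completely analogous, and finiteness is even more immediate there, since $0 \le \inf_{\xi \in \clIvalGen}\sem{\synEx'}(\pStUpdate{\st}{\lvar}{\xi}) \le \sem{\synEx'}(\pStUpdate{\st}{\lvar}{\ivalL}) < \infty$. This exhausts all constructors. I expect the only genuine obstacle to be the finiteness of the suprema in the $\supQuantifierSymbol$-case: this is exactly where the syntactic restriction to \emph{compact} intervals $\clIvalGen$ is essential, and also where it pays off to have strengthened the induction hypothesis from ``finite-valued'' to ``locally bounded''.
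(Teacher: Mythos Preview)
Your proof is correct and follows essentially the same route as the paper's: structural induction on $\synEx$, with the key step for the $\supQuantifierSymbol$-case being that the set $\compactSet' = \{\pStUpdate{\st}{\lvar}{\xi} \mid \st \in \compactSet,\ \xi \in \clIvalGen\}$ is compact, so the inductive local-boundedness hypothesis on $\synEx'$ bounds the supremum uniformly over $\compactSet$. Your treatment is in fact slightly more explicit than the paper's in two respects---you separately argue well-definedness (finiteness of each individual supremum) before local boundedness, and you justify compactness of $\compactSet'$ via continuous images rather than just calling it a ``cross product''---but the substance is the same.
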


\begin{example}
    \label{ex:syntax}
    $\synEx = \synSupBd{\lvar}{0}{5} \iv{\lvard = \lvar^2} \cdot \lvar$ denotes the function $\sem{\synEx} = \lam{\st}{\iv{0 \leq \st(\lvard) \leq 25} \cdot \sqrt{\st(\lvard)}}$.
\end{example}

\subsection{Properties of Syntactic Expressions}
\label{sec:syntax:properties}

The set $\firstOrderFormulas$ of first-order formulae\footnote{In this paper, \emph{first-order formula} always refers to first-order logic over the signature of polynomial arithmetic with comparison; we interpret such formulae over the fixed structure $\reals$.} is defined according to the following grammar:
\begin{align*}
    \foForm
    \quad\grammarSymb\quad
    \synGuard
    \mmid
    \foExists \lvar \colon \foForm
    \mmid 
    \foForall \lvar \colon \foForm
    \mmid
    \neg \foForm
    \mmid
    \foForm \land \foForm
    \mmid
    \foForm \lor \foForm
    \mmid
    \foForm \limplies \foForm
\end{align*}
where $\synGuard$ is a Boolean combination of polynomial (in)equalities (similar to the guards defined in \Cref{sec:syntax:defs} but with standard minus instead of \enquote{monus}), and $\lvar \in \lvars$.
As usual, $\free{\foForm}$ is the set of free variables in $\foForm$, and we write $\foForm(\lvar_1,\ldots,\lvar_n)$ to indicate that $\foForm$ contains at most the (pairwise distinct) free variables  $\lvar_1,\ldots,\lvar_n \in \lvars$.
An FO formula $\foForm$ is called \emph{quantifier-free} if it does not contain any $\foExists$ and $\foForall$ quantifiers.
Moreover, $\foForm$ is said to be in \emph{prenex normal form} (PNF) if $\foForm = \foQuantifierSymbol_1\lvar_1 \colon \ldots \foQuantifierSymbol_n\lvar_n \colon \foForm'$ where $n \geq 0$, $\foQuantifierSymbol_1, \ldots, \foQuantifierSymbol_n \in \{\foExists, \foForall\}$, and $\foForm'$ is quantifier-free.
Similarly, $\foForm$ is in \emph{existential (universal) prenex form} if for all $1 \leq i \leq n$ we have $\foQuantifierSymbol_i = \foExists$ ($\foQuantifierSymbol_i = \foForall$, respectively).

The semantics of a formula $\foForm \in \foForms$ is standard and can be viewed as a function $\sem{\foForm} \colon \reals^\lvarsSubset \to \bools$ for all $\lvarsSubset \subseteq \lvars$ with $\free{\foForm} \subseteq \lvarsSubset$.
For $\st \in \reals^{\lvarsSubset}$ we write $\st \models \foForm$ to indicate that $\st$ is a model of $\foForm$, i.e., the \emph{sentence} (formula without free variables) obtained by substituting every free occurrence of $\lvar \in \free{\foForm}$ in $\foForm$ by $\st(\lvar)$ evaluates to $\boolConstTrue$ in $\reals$.
A formula $\foForm$ is called  \emph{satisfiable} if it has a model, \emph{unsatisfiable} if it does not have a model, and \emph{valid} if $\neg\foForm$ is unsatisfiable.

The decision problem of checking whether a given \emph{quantifier-free} $\foForm \in \foForms$ is satisfiable is called $\QFNRA$\footnote{Quantifier-free non-linear real arithmetic, a term coined by the SMT-community~\cite{BarFT-SMTLIB}.}.
It is known that $\QFNRA \in \PSPACE$.
Checking the satisfiability of general $\foForms$ formulae is decidable as well, but has higher complexity.

\begin{restatable}[{$\synExps$ to $\foForms$}]{lemma}{expToFo}
    \label{thm:expToFo}
    For every $\synEx(\lvar_1,\ldots,\lvar_n) \in \synExps$ there exists an FO formula $\foForm_\synEx(\lvar_1,\ldots,\lvar_n,\lvarb)$ encoding $\sem{\synEx}$ in the sense that for all $\st \in \nnReals^\free{\synEx}$ and $\resVal \in \nnReals$, we have that $\st,\resVal \models \foForm_\synEx$ iff $\sem{\synEx}(\st) = \resVal$.
    For quantifier-free $\synEx$, we can construct $\foForm_\synEx$ in existential prenex form in linear time.
\end{restatable}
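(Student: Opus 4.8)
The plan is to prove the lemma by structural induction on $\synEx \in \synExps$, constructing in each case a formula $\foForm_\synEx(\lvar_1,\dots,\lvar_n,\lvarb) \in \foForms$ that \emph{defines the graph} of $\sem{\synEx}$, i.e.\ $\st,\resVal \models \foForm_\synEx$ iff $\sem{\synEx}(\st) = \resVal$. As a preliminary step I would eliminate the \emph{monus}. For a syntactic term $\synTerm$ I introduce one fresh variable $z_u$ per node $u$ of its syntax tree and form the conjunction $\Gamma_\synTerm$ of the obvious local constraints: $z_u = \ratConst$ for a constant leaf, $z_u = \lvar$ for a variable leaf, $z_u = z_v + z_w$ resp.\ $z_u = z_v \cdot z_w$ for $+$- resp.\ $\cdot$-nodes, and the case split $(z_v \geq z_w \land z_u = z_v - z_w) \lor (z_v < z_w \land z_u = 0)$ for a $\synMonus$-node. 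Since $\Gamma_\synTerm$ has, for every valuation of $\free{\synTerm}$, a \emph{unique} solution for the $z_u$, with value $\sem{\synTerm}$ at the root, the formula $\exists \bar z\colon \Gamma_\synTerm \land \lvarb = z_{\mathrm{root}}$ defines $\sem{\synTerm}$, is in existential prenex form, and has size linear in $|\synTerm|$. This settles the base case $\synEx = \synTerm$.

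The same device applied to a syntactic guard $\synGuard$ yields an equivalent $\hat\synGuard = \exists \bar z\colon \Gamma \land G'$, where $\Gamma$ collects the local constraints for all term nodes occurring in $\synGuard$ and $G'$ replaces each atom $\synTerm_1 < \synTerm_2$ by $z_{\mathrm{root}(\synTerm_1)} < z_{\mathrm{root}(\synTerm_2)}$. Because $\Gamma$ still has a unique solution for every valuation of $\free{\synGuard}$, functionality gives $\neg\hat\synGuard \equiv \exists\bar z\colon \Gamma \land \neg G'$, so that \emph{both} $\hat\synGuard$ and $\neg\hat\synGuard$ are available in existential prenex form; this is exactly why introducing fresh variables governed by a functional constraint set is preferable to naively inlining the $\max$. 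With this in hand the induction is routine for the remaining ``algebraic'' productions, renaming bound variables apart throughout: for $\ratConst \cdot \synEx$ take $\exists \lvarb'\colon \foForm_\synEx(\dots,\lvarb') \land \lvarb = \ratConst\cdot\lvarb'$; for $\synEx_1 + \synEx_2$ take $\exists\lvarb_1,\lvarb_2\colon \foForm_{\synEx_1}(\dots,\lvarb_1) \land \foForm_{\synEx_2}(\dots,\lvarb_2) \land \lvarb = \lvarb_1+\lvarb_2$; and for $\iv{\synGuard}\cdot\synEx$ take $(\hat\synGuard \land \foForm_\synEx(\dots,\lvarb)) \lor (\neg\hat\synGuard \land \lvarb = 0)$. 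A disjunction of existential prenex formulas is again existential prenex after pulling the (disjoint, renamed) quantifier blocks to the front, and each production contributes only a bounded-size gadget plus boundedly many fresh variables, so for \emph{quantifier-free} $\synEx$ (no $\supQuantifierSymbol$/$\infQuantifierSymbol$) this construction produces $\foForm_\synEx$ in existential prenex form and of linear size, which is the second part of the claim.

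The genuinely interesting production is $\synEx = \synSupBd{\lvar}{\ivalL}{\ivalR}\synEx'$ (and, dually, $\synInfBd{\lvar}{\ivalL}{\ivalR}\synEx'$), where quantifier alternation is unavoidable. By \Cref{thm:expWellLocallyBounded} the function $\lam{\xi}{\sem{\synEx'}(\pStUpdate{\st}{\lvar}{\xi})}$ is locally bounded, hence bounded on the compact interval $[\ivalL,\ivalR]$, so $\sem{\synEx}(\st)\in\nnReals$ is well-defined and equals the least upper bound of $S_\st = \{\,v \mid \exists \xi\in[\ivalL,\ivalR]\colon \sem{\synEx'}(\pStUpdate{\st}{\lvar}{\xi}) = v\,\}$. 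I therefore take $\foForm_\synEx(\dots,\lvarb)$ to be the conjunction of ``$\lvarb$ is an upper bound of $S_\st$'', namely $\forall\xi\colon (\ivalL\leq\xi \land \xi\leq\ivalR) \limplies \exists v\colon \foForm_{\synEx'}(\dots,\xi,\dots,v) \land v\leq\lvarb$, and ``no $w<\lvarb$ is an upper bound'', namely $\forall w\colon w<\lvarb \limplies \exists\xi\colon \ivalL\leq\xi \land \xi\leq\ivalR \land \exists v\colon \foForm_{\synEx'}(\dots,\xi,\dots,v)\land w<v$, where $\xi$ is a fresh renaming of the bound variable and $\foForm_{\synEx'}$ is the formula supplied by the induction hypothesis for $\synEx'$. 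The infimum case is symmetric. Correctness is precisely the definition of least upper bound, and invoking $\foForm_{\synEx'}$ is sound because $\xi$ ranges only over $[\ivalL,\ivalR]\subseteq\nnReals$ while the remaining arguments come from $\st\in\nnReals^{\free{\synEx}}$, so $\foForm_{\synEx'}$ is only queried on non-negative inputs.

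The step I expect to be the main obstacle is making the monus-elimination interact cleanly with the rest of the construction --- in particular guaranteeing that a guard and its negation are \emph{simultaneously} expressible in existential prenex form of linear size, which hinges on the functionality of the constraint set $\Gamma$ --- together with verifying that the $\sup$/$\inf$ gadget really captures the least upper bound and is well-formed, which is exactly where \Cref{thm:expWellLocallyBounded} is needed to exclude a ``value'' of $+\infty$. The rest reduces to careful variable renaming and a linear-size bookkeeping argument; note that, as expected, the $\foForm_\synEx$ produced for expressions containing $\supQuantifierSymbol$/$\infQuantifierSymbol$ genuinely uses quantifier alternation and is not claimed to be existential prenex.
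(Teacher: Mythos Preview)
Your proposal is correct and follows essentially the same approach as the paper: structural induction on $\synEx$, handling monus by an existential case split, encoding guards so that both a guard and its negation are available in existential form, and capturing $\supQuantifierSymbol$/$\infQuantifierSymbol$ via the first-order definition of a least upper bound (resp.\ greatest lower bound). The only cosmetic differences are that the paper organizes the term/guard translation recursively (introducing existentials at each node) rather than via a single constraint set $\Gamma$, encodes the guard's truth value as a $\{0,1\}$-variable rather than invoking functionality of $\Gamma$ to dualize, and phrases ``least upper bound'' as ``$\leq$ every upper bound'' rather than your equivalent ``no $w<\lvarb$ is an upper bound''; none of these affects the argument.
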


FO-expressible functions in the sense of \Cref{thm:expToFo} have been extensively studied~\cite{coste2000introduction}.
Let $\foForms_\reals$ be defined like $\foForms$ with the only difference that \emph{arbitrary real numbers} are allowed as coefficients in the polynomials.
A set $\saSet \subseteq \reals^n$, $n \geq 1$, is called \emph{semi-algebraic} if there exists an $\foForms_\reals$ formula $\foForm(\lvar_1,\ldots,\lvar_n)$ such that $\saSet = \{ \st \in \reals^n \mid \st \models \foForm\}$.
A function $\fun \colon \reals^n \to \reals$ is called semi-algebraic if its graph $\{ ( \st, \fun(\st) ) \mid \st \in \reals^n \}$ is a semi-algebraic set.

\begin{lemma}[\textnormal{e.g.,~\cite[Exercise 2.22]{coste2000introduction}}]
    \label{thm:semiAlgebraicFunctionAlmostContinuous}
    Every semi-algebraic function $\fun \colon \reals \to \reals$ has at most finitely many discontinuities.
    In particular, $\fun$ is almost everywhere continuous and hence Riemann-integrable on every interval $\clIvalGen \subset \reals$ where $\fun$ is bounded.
\end{lemma}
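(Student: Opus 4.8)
The plan is to combine a classical cell-decomposition (or rather a one-dimensional specialisation of it) for semi-algebraic subsets of $\reals$ with the Riemann--Lebesgue characterisation of Riemann-integrability already recalled as \Cref{thm:riemannIntegrableIffAECont}. First I would invoke the structure theory of semi-algebraic sets in one variable: every semi-algebraic subset of $\reals$ is a finite union of points and open intervals (this is essentially quantifier elimination over $\reals$ specialised to one free variable, the content of~\cite[Exercise 2.22]{coste2000introduction}). Given a semi-algebraic $\fun\colon\reals\to\reals$, its graph is a semi-algebraic subset of $\reals^2$, and by the (one-step) cell-decomposition theorem there is a finite partition of $\reals$ into points $t_1 < \ldots < t_k$ and the open intervals they delimit such that, on each open interval of the partition, $\fun$ restricted there is continuous (indeed even given by a continuous semi-algebraic function — one may additionally use that a semi-algebraic function continuous on an interval is analytic on a cofinite subset, but continuity is all we need here).

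The key steps, in order, are: (1) pass from $\fun$ to its graph and observe it is semi-algebraic; (2) apply cell decomposition of $\reals$ adapted to the finitely many semi-algebraic functions describing the graph, obtaining finitely many breakpoints $t_1,\ldots,t_k$; (3) conclude that $\fun$ is continuous on each of the finitely many open intervals $(-\infty,t_1),(t_1,t_2),\ldots,(t_k,\infty)$, so the set of discontinuities of $\fun$ is contained in the finite set $\{t_1,\ldots,t_k\}$; (4) since a finite set has Lebesgue measure zero, $\lebmes(\discontinuities) = 0$ where $\discontinuities$ is the discontinuity set of $\fun$; (5) therefore, for any interval $\clIvalGen \subset \reals$ on which $\fun$ is bounded, the restriction $\fun\colon\clIvalGen\to\reals$ is bounded with an at-most-finite (hence null) discontinuity set, so \Cref{thm:riemannIntegrableIffAECont} yields that $\fun$ is Riemann-integrable on $\clIvalGen$.

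The main obstacle is step~(2)--(3): one has to be slightly careful about what "at most finitely many discontinuities" buys us, because the cell-decomposition statement one typically cites partitions $\reals$ so that $\fun$ is continuous \emph{on each open cell}, which a priori still allows $\fun$ to be discontinuous at the finitely many boundary points between cells — but this is exactly fine, since finitely many exceptional points form a null set. I would present this as the crux and spell out that the discontinuity set is a \emph{subset} of the finite breakpoint set, not necessarily equal to it (e.g.\ $\fun$ might happen to be continuous across a breakpoint), which is all \Cref{thm:riemannIntegrableIffAECont} requires. Everything else — the reduction to the graph being semi-algebraic, the fact that finite sets are Lebesgue-null, and the final appeal to the Riemann--Lebesgue theorem — is routine and can be stated in a sentence or two. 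If one wants to avoid even invoking cell decomposition, an alternative is: a semi-algebraic function $\fun$ satisfies a polynomial identity $P(x,\fun(x)) = 0$ for some nonzero polynomial $P \in \reals[x,y]$, and outside the finitely many $x$ where the discriminant of $P(x,\cdot)$ vanishes the implicit function theorem gives local continuity (indeed smoothness) of the relevant branch; this is a more hands-on route to the same finite exceptional set.
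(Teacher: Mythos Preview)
Your proposal is correct and follows the standard route to this result. Note, however, that the paper does not actually give its own proof of this lemma: it is stated with a citation to \cite[Exercise 2.22]{coste2000introduction} and used as a black box (it feeds into the proof of \Cref{thm:expBoundedAndRiemannIntegrable}). So there is no paper-side proof to compare against; your sketch via one-dimensional cell decomposition followed by the Riemann--Lebesgue criterion (\Cref{thm:riemannIntegrableIffAECont}) is exactly how one would flesh out the cited exercise.

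One small caveat on your alternative route: the claim that a semi-algebraic $\fun$ satisfies a single global polynomial identity $P(x,\fun(x))=0$ is true but not entirely immediate from the definition, since the graph is a priori only a finite Boolean combination of polynomial constraints. One needs the observation that the graph, being one-dimensional and semi-algebraic, has one-dimensional Zariski closure, hence lies on an algebraic curve. If you present that route, add a sentence justifying this step; otherwise the cell-decomposition argument you lead with is cleaner and self-contained.
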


By combining local boundedness from \Cref{thm:expWellLocallyBounded}, the translation to FO from \Cref{thm:expToFo}, and the fact that semi-algebraic functions are Riemann-integrable (\Cref{thm:semiAlgebraicFunctionAlmostContinuous}), we obtain:

\begin{restatable}{theorem}{expBoundedAndRiemannIntegrable}
    \label{thm:expBoundedAndRiemannIntegrable}
    For all $\synEx \in \synExps$, intervals $\clIvalGen \subset \nnReals$, finite $\lvarsSubset \subseteq \lvars$ with $\free{\synEx} \subseteq \lvarsSubset $, and variables $\lvar \in \lvarsSubset$, the function $\sem{\synEx} \colon \nnReals^\lvarsSubset \to \nonNegReals$ is Riemann-integrable on $\clIvalGen$ w.r.t.\ $\lvar$.
\end{restatable}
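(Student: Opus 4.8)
The plan is to reduce the statement to the single‑variable case and then invoke the three preceding lemmas. By the definition of Riemann‑integrability of a function w.r.t.\ a variable (see \Cref{sec:approxwp}), it suffices to fix an arbitrary $\pSt \in \nnReals^\lvarsSubset$ and show that the single‑variable function $g_\pSt \colon \clIvalGen \to \reals$ given by $g_\pSt(\xi) = \sem{\synEx}(\pStUpdate{\pSt}{\lvar}{\xi})$ is Riemann‑integrable. If $\lvar \notin \free{\synEx}$, then $g_\pSt$ is constant and there is nothing to prove, so we may assume $\lvar \in \free{\synEx}$. I would then establish two facts: (i) $g_\pSt$ is bounded on $\clIvalGen$, and (ii) $g_\pSt$ is the restriction to $\clIvalGen$ of a semi‑algebraic function $\tilde g \colon \reals \to \reals$. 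Granting (i) and (ii), \Cref{thm:semiAlgebraicFunctionAlmostContinuous} — which already subsumes the Riemann--Lebesgue \Cref{thm:riemannIntegrableIffAECont} — immediately yields that $\tilde g$, and hence $g_\pSt$, is Riemann‑integrable on $\clIvalGen \subset \nnReals$.

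For (i), I would note that $\compactSet \coloneqq \{\pStUpdate{\pSt}{\lvar}{\xi} \mid \xi \in \clIvalGen\}$ is a compact subset of $\nnReals^\lvarsSubset$, being the continuous image of the compact interval $\clIvalGen$; hence $\sup_{\xi \in \clIvalGen}|g_\pSt(\xi)| = \sup_{\sigma \in \compactSet}|\sem{\synEx}(\sigma)| < \infty$ by local boundedness of $\sem{\synEx}$ (\Cref{thm:expWellLocallyBounded}).

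For (ii), I would invoke \Cref{thm:expToFo} to obtain an FO formula $\foForm_\synEx(\lvar_1,\ldots,\lvar_n,\lvarb)$ encoding $\sem{\synEx}$, where $\lvar_1,\ldots,\lvar_n$ enumerate $\free{\synEx}$ with $\lvar = \lvar_1$. Substituting the real constants $\pSt(\lvar_2),\ldots,\pSt(\lvar_n)$ produces an $\foForms_\reals$ formula $\foForm'(\lvar_1,\lvarb)$, so the set $\semiAlgebraicSet \coloneqq \{(\xi,\resVal) \in \reals^2 \mid (\xi \geq 0 \land \resVal \geq 0 \land \foForm'(\xi,\resVal)) \lor (\xi < 0 \land \resVal = 0)\}$ is semi‑algebraic. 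By the correctness part of \Cref{thm:expToFo} and the fact that $g_\pSt$ takes values in $\nnReals$, for each $\xi \geq 0$ the $\resVal$‑section of $\semiAlgebraicSet$ is exactly the singleton $\{g_\pSt(\xi)\}$; thus $\semiAlgebraicSet$ is the graph of the total function $\tilde g \colon \reals \to \reals$ that coincides with $g_\pSt$ on $[0,\infty)$ and is $0$ on $(-\infty,0)$, so $\tilde g$ is semi‑algebraic. Since $\clIvalGen \subseteq \nnReals$ and $\tilde g$ is bounded there by (i), the claim follows, and since $\pSt$ was arbitrary we are done.

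The substantive content is already delivered by \Cref{thm:expWellLocallyBounded,thm:expToFo,thm:semiAlgebraicFunctionAlmostContinuous}, so I expect no deep obstacle; the only genuine care points are the passage from the \emph{partial} function $g_\pSt$ on $\clIvalGen$ to a \emph{total} semi‑algebraic function on $\reals$ (so that \Cref{thm:semiAlgebraicFunctionAlmostContinuous}, stated for functions $\reals \to \reals$, applies verbatim) together with the bookkeeping that substituting concrete reals for the parameter variables preserves semi‑algebraicity and that the $\nnReals$‑ versus $\reals$‑typing of inputs and outputs in \Cref{thm:expToFo} causes no mismatch. An alternative that avoids the total‑extension step would be to reprove a variant of \Cref{thm:semiAlgebraicFunctionAlmostContinuous} for semi‑algebraic functions defined on semi‑algebraic subsets of $\reals$ via cell decomposition, but reusing the stated lemma is cleaner.
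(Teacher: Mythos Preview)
Your proposal is correct and follows essentially the same route as the paper: fix an arbitrary state, extend the resulting one-variable section to a total function $\reals \to \reals$ by setting it to $0$ on the negatives (the paper writes this as $\lam{\xi}{\iv{\xi \geq 0}\cdot\sem{\synEx}(\pStUpdate{\pSt}{\lvar}{\xi})}$), then invoke \Cref{thm:expWellLocallyBounded} for boundedness, \Cref{thm:expToFo} for semi-algebraicity, and \Cref{thm:semiAlgebraicFunctionAlmostContinuous} for Riemann-integrability. Your write-up is more explicit about the bookkeeping (parameter substitution, the $\nnReals$-vs-$\reals$ typing, totality of the extension), but the argument is the same.
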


In analogy to standard FO, we say that an expression $\synEx \in \synExps$ is in \emph{prenex normal form} (PNF) if
\begin{align*}
    \synEx \eeq \synVarQuantBd{\lvar_1}{\ivalL_1}{\ivalR_1}{\quantitativeQuantifierSymbol^{\!1}} \ldots \synVarQuantBd{\lvar_n}{\ivalL_n}{\ivalR_n}{\quantitativeQuantifierSymbol^{\!n}} \synEx'
\end{align*}
where $n \geq 0$, $\quantitativeQuantifierSymbol^{\!1}, \ldots,\quantitativeQuantifierSymbol^{\!n} \in \{\infQuantifierSymbol, \supQuantifierSymbol\}$, and $\synEx'$ is quantifier-free.

\begin{lemma}[Prenex Normal Form\textnormal{~\cite[Lemma 8.1]{DBLP:journals/pacmpl/BatzKKM21}}]
    \label{thm:pnf}
    Every $\synEx \in \synExps$ can be transformed into an equivalent $\synEx' \in \synExps$ in PNF with the same free variables.
    Moreover, if $\synEx$ is $\infQuantifierSymbol$-free ($\supQuantifierSymbol$-free), then $\synEx'$ is $\infQuantifierSymbol$-free ($\supQuantifierSymbol$-free, respectively) as well.
    This transformation can be done in linear time.
\end{lemma}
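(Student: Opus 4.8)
The plan is to prove the lemma by structural induction on $\synEx \in \synExps$. The induction hypothesis states that every strict subexpression admits an equivalent expression in PNF with the same free variables which, moreover, is $\infQuantifierSymbol$-free (resp.\ $\supQuantifierSymbol$-free) whenever the original subexpression is. The base case is immediate: a term $\synTerm \in \syntacticTerms$ is already quantifier-free, hence trivially in PNF with empty prefix, and contains no quantifier symbol at all. Throughout, "equivalent" is meant semantically, i.e.\ $\sem{\synEx} = \sem{\synEx'}$ as functions on any finite $\lvarsSubset \supseteq \free{\synEx}$, and one should keep in mind that deleting the inner $\supQuantifierSymbol$/$\infQuantifierSymbol$ "quantifiers" from an $\synExps$-expression and re-parenthesising again yields a syntactically valid (now quantifier-free) $\synExps$-expression, since the quantifier-free fragment is closed under $\iv{\synGuard}\cdot(\cdot)$, $\realConst\cdot(\cdot)$ and $(\cdot)+(\cdot)$.

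For the inductive step I would first bring the immediate subexpression(s) into PNF by the hypothesis, then $\alpha$-rename all bound variables so that the variables bound in the prefix(es) are pairwise distinct and disjoint from $\free{\synEx}$, and finally migrate each quantifier to the front using three elementary identities, valid over $\ennReals$ with the conventions of \Cref{sec:prelims} (notably $0 \cdot \infty = 0$, and the fact that every interval $\clIvalGen$ with $\ivalL \le \ivalR$ is nonempty): (a) for any $\realConst \ge 0$ in $\ennReals$ --- in particular for $\realConst \in \nnRats$ and for $\iv{\synGuard} \in \{0,1\}$, regarded as a constant w.r.t.\ a bound variable --- we have $\realConst \cdot \sup_\xi h = \sup_\xi(\realConst \cdot h)$ and $\realConst \cdot \inf_\xi h = \inf_\xi(\realConst \cdot h)$ over nonempty index sets; (b) for a fixed $\realConst \in \ennReals$ with $\realConst \ge 0$, $\realConst + \sup_\xi h = \sup_\xi(\realConst + h)$ and $\realConst + \inf_\xi h = \inf_\xi(\realConst + h)$; (c) two quantifier prefixes ranging over disjoint sets of variables may be concatenated in either order. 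Concretely: for $\realConst \cdot \synEx_1$ and $\iv{\synGuard} \cdot \synEx_1$ use (a) to slide the scalar past the whole prefix of the PNF of $\synEx_1$ (for $\iv{\synGuard}$ the renaming guarantees no bound variable of that prefix lies in $\varsIn{\synGuard}$, so $\iv{\synGuard}$ acts pointwise as a $\{0,1\}$-valued constant); for $\synEx_1 + \synEx_2$ use (b) to pull out the prefix of $\synEx_1$ first --- legitimate since, after renaming, $\synEx_2$ shares no bound variable with it, so $\sem{\synEx_2}$ is a fixed element of $\ennReals$ relative to those variables --- and then the prefix of $\synEx_2$, concatenating the two via (c); for $\synSupBd{\lvar}{\ivalL}{\ivalR}\synEx_1$ and $\synInfBd{\lvar}{\ivalL}{\ivalR}\synEx_1$ simply prepend the new quantifier to the prefix of the PNF of $\synEx_1$. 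In every case the free variables are unchanged, and the multiset of quantifier symbols occurring in the output equals that of the input: no rule ever creates a $\supQuantifierSymbol$ or $\infQuantifierSymbol$ from scratch or swaps one for the other, which immediately yields the "moreover" clause.

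For the linear-time claim I would observe that the quantifier-free body $\synEx'$ of the final expression is obtained from $\synEx$ by erasing its at most $\lvert\synEx\rvert$ quantifier tokens and re-parenthesising, while the prefix is exactly those tokens listed in the order of a bottom-up traversal; a single pass with one fresh-variable counter (indexing bound variables by their final position in the prefix) performs the renaming globally, at constant cost per quantifier. The main --- indeed essentially the only --- point requiring care is this bookkeeping: one must perform the $\alpha$-renaming once and globally rather than by repeatedly copying subexpressions, which would blow up to quadratic size, and one must re-verify identities (a) and (b) in the extended reals, where the nonemptiness of $\clIvalGen$ is precisely what makes $0 \cdot \sup S = \sup(0 \cdot S) = 0$ go through, and the case $\realConst = \infty$ in (b) is harmless since then both sides equal $\infty$. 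The remaining work is a routine case distinction over the grammar of $\synExps$.
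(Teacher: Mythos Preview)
Your proposal is correct and follows the standard prenexing argument; the paper itself does not supply a proof but cites the result from prior work, so there is nothing to compare against beyond noting that your structural induction is exactly the expected route. One small remark: your identity (c) as stated---that two quantifier prefixes over disjoint variable sets may be concatenated \emph{in either order}---is false in general ($\sup_x\inf_y$ and $\inf_y\sup_x$ need not coincide even over disjoint variables), but you never actually rely on commutation: for $\synEx_1 + \synEx_2$ you pull the two prefixes out in one fixed order using only (b), exploiting that after renaming the body $\synEx_1' + \synEx_2'$ is additively separable in the two groups of bound variables, and that already suffices.
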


Next, we formalize one of our key insights:
In order to check a \enquote{\emph{quantitative entailment}} $\sem{\synEx} \eleq \sem{\synExb}$ such that suprema occur only in $\synEx$ and infima only in $\synExb$, it is \emph{not} actually necessary to evaluate them exactly (which amounts to solving optimization problems).
Instead, one can, loosely speaking, replace $\sup$ and $\inf$ by $\foForall$-quantifiers.
This idea is based on the following elementary observation:
For arbitrary sets $A, B \subseteq \reals$, we have $\sup A \leq \inf B$ iff $\forall a \in A , b \in B \colon~  a \leq b$.

\begin{restatable}[Checking Quantitative Entailments\textnormal{, cf.~\cite[Section 5.2]{DBLP:journals/pacmpl/SchroerBKKM23}}]{lemma}{checkEntailment}
    \label{thm:checkEntailment}
    Let
    \begin{align*}
        \synEx \eeq \synSupBd{\lvar_1}{\ivalL_1}{\ivalR_1} \ldots \synSupBd{\lvar_n}{\ivalL_n}{\ivalR_n} \synEx'
        \qqand
        \synExb \eeq \synInfBd{\lvarc_1}{\ivalLb_1}{\ivalRb_1} \ldots \synInfBd{\lvarc_m}{\ivalLb_m}{\ivalRb_m} \synExb'
    \end{align*}
    be $\infQuantifierSymbol$-free ($\supQuantifierSymbol$-free, respectively) expressions in PNF.
    Further, let $\foForm_{\synEx'}(\free{\synEx'},\lvarb_\synEx)$ and $\foForm_{\synExb'}(\free{\synExb'},\lvarb_\synExb)$ be the FO formulae in existential prenex form encoding $\synEx'$ and $\synExb'$ from \Cref{thm:expToFo}.
    Then:
    \begin{align}
        & \sem{\synEx} \eeleq \sem{\synExb} \notag \\
        \text{iff}\quad & 
        \Bigg(
            \bigwedge_{\substack{\lvar \,\in\, \free{\synEx} \\ \cup \, \free{\synExb}}} \lvar {\geq} 0
            ~\land~
            \bigwedge_{i=1}^n \ivalL_i {\leq} \lvar_i {\leq} \ivalR_i
            ~\land~
            \bigwedge_{i=1}^m \ivalLb_i {\leq} \lvarc_i {\leq} \ivalRb_i
            ~\land~
            \foForm_{\synEx'} \land \foForm_{\synExb'}
        \Bigg)
        ~\limplies~
        \lvarb_\synEx \leq \lvarb_\synExb
        \quad\text{is valid}
        \label{eq:smt-query}
    \end{align}
    As a consequence, the decision problem \enquote{ $\sem{\synEx} \eleq \sem{\synExb}$?} for $\synEx$ and $\synExb$ is linear-time reducible to $\QFNRA$.
\end{restatable}

\begin{example}
    Reconsider the expression $\synEx = \synSupBd{\lvar}{0}{5} \iv{\lvard = \lvar^2} \cdot \lvar$ from \Cref{ex:syntax}.
    Suppose we wish to check whether $\sem{\synEx}$ is upper-bounded by the constant function $\sem{\synExb}$, $\synExb = 4$.
    Since $\synEx$ and $\synExb$ have the form required by \Cref{thm:checkEntailment} we can achieve this by considering the FO formula
    \begin{align*}
        \left(
            \lvard \geq 0 \land
            0 \leq \lvar \leq 5 \land (\lvard = \lvar^2 \limplies \lvarb = x) \land (\lvard \neq \lvar^2 \limplies \lvarb = 0) \land \lvarb' = 4
        \right)
        ~\limplies~
        \lvarb \leq \lvarb' 
    \end{align*}
    which is \emph{not} valid as witnessed by the assignment $\{\lvar \mapsto 5, \lvard \mapsto 25, \lvarb \mapsto 5, \lvarb' \mapsto 4\}$.
    Hence $\sem{\synEx} \not\eleq 4$.
\end{example}

\subsection{Decidability and Complexity of $\pWhile$ Verification}

We now assume (w.l.o.g.) that the set of logical variables $\lvars$ used in syntactic expressions contains our fixed set $\pVars$ of program variables.

\begin{definition}[Representable and Syntactic Expectations]
    We define the following terminology:
    \begin{itemize}
        \item An expectation $\ex \in \exps$ is called \emph{representable} if $f = \sem{\synEx}$ for some $\synEx \in \synExps$ with $\free{\synEx} \subseteq \pVars$.
        The set of all representable (1-bounded) expectations is denoted $\repExps$ ($\brepExps$, respectively).
        \item A \emph{syntactic expectation} is an expression $\synEx \in \synExps$ whose free variables are program variables, i.e., $\free{\synEx} \subseteq \pVars$.
        A \emph{1-bounded} syntactic expectation  $\bsyntacticExpectations$ is $\synEx \in \synExps$ such that $\sem{\synEx} \in \boundedExpectations$.
        The set of all 1-bounded syntactic expectations is denoted $\bsyntacticExpectations$.
        \qedDef
    \end{itemize}
\end{definition}
Membership in $\bsyntacticExpectations$ is decidable by encoding $\synEx(\lvar_1,\ldots,\lvar_n)$ as the FO formula $\foForm_\synEx(\lvar_1,\ldots,\lvar_n,\lvarb)$ from \Cref{thm:expToFo} and checking validity of $(\lvar_1 \geq 0 \land \ldots \land \lvar_n \geq 0 \land \foForm_\synEx(\lvar_1,\ldots,\lvar_n, \lvarb)) \limplies \lvarb \leq 1$.

Our next theorem implies that if we restrict our framework to representable expectations and only allow the $\synTerms$ and $\synGuards$ defined in \Cref{sec:syntax:defs} as the arithmetic and Boolean, respectively, expressions in our programs, then we obtain the convergence guarantees from \Cref{sec:convergence}.

\begin{restatable}[Riemann-suitability of $\synExps$]{theorem}{exprRiemannSuitable}
    \label{thm:exprRiemannSuitable}
    The combination $(\expsClassExp, \synTerms, \synGuards)$ with $\synTerms$ and $\synGuards$ as defined in \Cref{sec:syntax:defs} is Riemann-suitable (see~\Cref{def:riemannSuitable}).
\end{restatable}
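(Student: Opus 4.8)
The plan is to verify the requirements of \Cref{def:riemannSuitable} one clause at a time, identifying the syntactic classes $\synTerms$ and $\synGuards$ with the corresponding classes of semantic functions whose free variables lie among the program variables $\pVars$, and taking the expectation class to be the representable expectations $\expsClassExp$. The three standing requirements come essentially for free from earlier results. Terms denote continuous (piecewise-polynomial) functions, and guards denote functions whose truth sets are semialgebraic, hence Borel; so both classes are measurable, and they are locally bounded --- as is $\sem{\synEx}$ for \emph{every} $\synEx \in \synExps$ --- by \Cref{thm:expWellLocallyBounded}. Riemann-integrability of every representable expectation on $\uIval$ w.r.t.\ any $\pVar \in \pVars$ is precisely \Cref{thm:expBoundedAndRiemannIntegrable}. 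And $0, 1 \in \nnRats \subseteq \synTerms \subseteq \synExps$ have no free variables, so $0, 1 \in \expsClassExp$.

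For the closure conditions (i), (iii), and (iv), the grammar of $\synExps$ was essentially designed to mirror precisely these operations, so in each case I would just exhibit the witnessing expression, observe that its free variables stay within $\pVars$ (recall $\pVars \subseteq \lvars$), and read off its semantics from \Cref{def:exprSem}. For (i), writing $\ex = \sem{\synEx} \in \expsClassExp$ and taking a subinterval $[\ivalL,\ivalR] \subseteq \uIval$ with rational endpoints --- which is the only form of (i) actually needed, since $\lwpSymb{\N}$ and $\uwpSymb{\N}$ discretize $\uIval$ into rational subintervals --- the expression $\synSupBd{\pVar}{\ivalL}{\ivalR}\synEx$ witnesses $\sup_{\xi \in [\ivalL,\ivalR]}\exSubs{\ex}{\pVar}{\xi} \in \expsClassExp$ (it binds $\pVar$, so its free variables are $\free{\synEx}\setminus\{\pVar\} \subseteq \pVars$), and $\synInfBd{\pVar}{\ivalL}{\ivalR}\synEx$ does the same for the infimum. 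For (iii), given $\ex = \sem{\synEx}, \exb = \sem{\synExb} \in \expsClassExp$ and $\prob \in \uIval \cap \rats$, both $\prob$ and $1-\prob$ lie in $\nnRats$, so $\prob \cdot \synEx + (1-\prob)\cdot\synExb$ is an $\synExps$ expression witnessing $\prob \cdot \ex + (1-\prob)\cdot\exb \in \expsClassExp$. For (iv), given additionally a guard $\guard = \sem{\synGuard}$ with $\synGuard \in \synGuards$, the expression $\iv{\synGuard}\cdot\synEx + \iv{\neg\synGuard}\cdot\synExb$ (using closure of $\synGuards$ under $\neg$) witnesses $\iv{\guard}\cdot\ex + \iv{\neg\guard}\cdot\exb \in \expsClassExp$. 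None of these steps needs more than unfolding \Cref{def:exprSem}.

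The one clause requiring genuine work is (ii), closure of $\expsClassExp$ under substitution $\exSubs{\ex}{\pVar}{\aExp}$ of a term $\aExp = \sem{\synTerm}$ (with $\free{\synTerm} \subseteq \pVars$) for a program variable $\pVar$. My plan is to first define the capture-avoiding syntactic substitution $\exSubs{\synEx}{\pVar}{\synTerm}$: on terms and guards it is ordinary replacement of $\pVar$ by $\synTerm$ --- those grammars contain no binders and are plainly closed under it --- while on expressions it descends through the $\iv{\cdot}$-, scalar-multiplication-, and $+$-cases, leaves the \emph{constant} interval bounds of the $\supQuantifierSymbol$- and $\infQuantifierSymbol$-quantifiers untouched, stops as soon as the bound variable equals $\pVar$, and $\alpha$-renames the bound variable to a fresh one whenever it occurs free in $\synTerm$ before recursing. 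I would then prove the accompanying substitution lemma $\sem{\exSubs{\synEx}{\pVar}{\synTerm}} = \exSubs{\sem{\synEx}}{\pVar}{\sem{\synTerm}}$, together with $\free{\exSubs{\synEx}{\pVar}{\synTerm}} \subseteq (\free{\synEx}\setminus\{\pVar\}) \cup \free{\synTerm} \subseteq \pVars$, by structural induction on $\synEx$ (with auxiliary inductions for terms and guards). This argument is routine, but it is the one place where one must be careful about variable capture under the quantifiers; an analogous substitution lemma is available from~\cite{DBLP:journals/pacmpl/BatzKKM21}, which I would invoke rather than reprove. Once (ii) is settled, all clauses of \Cref{def:riemannSuitable} hold and the theorem follows.
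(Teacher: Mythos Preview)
Your proposal is correct and follows essentially the same approach as the paper: invoke \Cref{thm:expWellLocallyBounded} and \Cref{thm:expBoundedAndRiemannIntegrable} for the standing requirements, read the closure properties (i), (iii), (iv) directly off the grammar of $\synExps$, and handle (ii) via the substitution lemma (\Cref{thm:syntacticSemanticSubs}). Your treatment is in fact slightly more careful than the paper's own proof: you explicitly flag that the grammar only admits \emph{rational} interval endpoints, so that (i) as literally stated in \Cref{def:riemannSuitable} is only obtained for $\ivalL,\ivalR \in \nnRats$, and you correctly observe that this restricted form suffices for the Riemann transformers.
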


\subsubsection{Verifying Loop-free Programs}
\label{sec:verify_loop_free}

Since $\synExps$ is effectively closed under weakest (liberal) Riemann pre-expectation (formalized in\iftoggle{arxiv}{ \Cref{proof:computeSyntacticExpression}}{~\cite{arxiv}}), it follows that we can effectively prove upper and lower bounds on weakest (liberal) pre-expectations of loop-free programs:

\begin{restatable}[Bounds on Loop-free $\wpwlpSymb$]{theorem}{boundsOnLoopFreeWpWlp}
    \label{thm:boundsOnLoopFreeWpWlp}
    For all loop-free $\prog \in \pWhileWith{\synTerms}{\synGuards}$ and integers $\N \geq 1$ the following decision problems can be effectively translated%
    \footnote{In polynomial time if we assume $\prog$ to be of constant size and $\N$ given in unary. In general, an expression encoding $\lwp{\N}{\prog}{\sem{\synEx}}$ for a given $\synEx$ can be exponential in the size of $\prog$ --- consider $n$ sequential $\KWIF$-$\KWELSE$ constructs. However, our reduction is of practical interest as $\QFNRA$ is supported by many SMT solvers.}
    to $\QFNRA$:
    Given ...
    \begin{enumerate}
        \setlength\itemsep{0.2em}
        \item ... an $\supQuantifierSymbol$-free $\synEx \in \syntacticExpectations$ and an $\infQuantifierSymbol$-free $\synExb \in \syntacticExpectations$, is \qquad $\sem{\synExb} \eleq \lwp{\N}{\prog}{\sem{\synEx}}$ ~? \\
        \gray{If yes, then $\sem{\synExb} \eleq \wp{\prog}{\sem{\synEx}}$, provided $\sem{\synEx} \in \expsmeas$.}
        \item ... an $\infQuantifierSymbol$-free $\synEx \in \syntacticExpectations$ and an $\supQuantifierSymbol$-free $\synExb \in \syntacticExpectations$, is \qquad $\uwp{\N}{\prog}{\sem{\synEx}} \eleq \sem{\synExb}$ ~? \\
        \gray{If yes, then $\wp{\prog}{\sem{\synEx}} \eleq \sem{\synExb}$, provided $\sem{\synEx} \in \expsmeas$.}
        \item ... an $\supQuantifierSymbol$-free $\synEx \in \bsyntacticExpectations$ and an $\infQuantifierSymbol$-free $\synExb \in \bsyntacticExpectations$, is \quad $\sem{\synExb} \eleq \lwlp{\N}{\prog}{\sem{\synEx}}$ ~? \\
        \gray{If yes, then $\sem{\synExb} \eleq \wlp{\prog}{\sem{\synEx}}$), provided $\sem{\synEx} \in \bexpsmeas$.}
        \item ... an $\infQuantifierSymbol$-free $\synEx \in \bsyntacticExpectations$ and an $\supQuantifierSymbol$-free $\synExb \in \bsyntacticExpectations$, is \quad $\uwlp{\N}{\prog}{\sem{\synEx}} \eleq \sem{\synExb}$ ~? \\
        \gray{If yes, then $\wlp{\prog}{\sem{\synEx}} \eleq \sem{\synExb}$, provided $\sem{\synEx} \in \bexpsmeas$.}
    \end{enumerate}
\end{restatable}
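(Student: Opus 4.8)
The plan is to reduce each of the four decision problems to $\QFNRA$ in two phases. First, I would show that for loop-free $\prog \in \pWhileWith{\synTerms}{\synGuards}$ the set $\synExps$ is \emph{effectively closed} under the Riemann transformers, so that a syntactic expression denoting, say, $\lwp{\N}{\prog}{\sem{\synEx}}$ can actually be built. Second, I would turn the resulting quantitative entailment into a single $\QFNRA$ query via the prenex-normal-form transformation of \Cref{thm:pnf} and the entailment-checking \Cref{thm:checkEntailment}. The ``If yes, then \ldots'' addenda are then immediate from soundness (\Cref{thm:soundness}): for item~1, $\sem{\synExb} \eleq \lwp{\N}{\prog}{\sem{\synEx}} \eleq \wp{\prog}{\sem{\synEx}}$ as soon as $\sem{\synEx} \in \expsmeas$ makes the right-hand side defined, and the remaining three items are the analogous one-sided bounds; the measurability provisos are exactly what is needed to invoke \Cref{thm:soundness}.

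\textbf{Phase 1: effective closure.}
By induction on loop-free $\prog$ I would construct, from a syntactic (post-)expectation $\synEx$, syntactic expressions denoting $\lwp{\N}{\prog}{\sem{\synEx}}$ and $\uwp{\N}{\prog}{\sem{\synEx}}$ (and, on $1$-bounded expectations, $\lwlp{\N}{\prog}{\sem{\synEx}}$ and $\uwlp{\N}{\prog}{\sem{\synEx}}$), following the clauses of \Cref{tab:original}/\Cref{def:riemannwpwlp}. The cases $\SKIP$, $\DIVERGE$ (constant $0$, resp.\ $1$), $\OBSERVE{\synGuard}$ (i.e.\ $\iv{\synGuard}\cdot\synEx$), $\ITE{\synGuard}{\prog_1}{\prog_2}$, $\PCHOICE{\prog_1}{\prob}{\prog_2}$ (using $\prob,1-\prob\in\nnRats$), and $\SEQ{\prog_1}{\prog_2}$ (first apply the construction for $\prog_2$, then feed the result into the one for $\prog_1$) all stay inside $\synExps$ by the grammar of \Cref{sec:syntax:defs}. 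For $\ASSIGN{\pVar}{\synTerm}$ I would invoke a routine substitution lemma, $\sem{\synSubs{\synEx}{\pVar}{\synTerm}}=\exSubs{\sem{\synEx}}{\pVar}{\sem{\synTerm}}$, observing that capture-avoiding substitution of a term for a variable keeps us in $\synExps$. The crucial case is $\UNIFASSIGN{\pVar}$: I would take $\tfrac{1}{\N}\sum_{i=0}^{\N-1}\synInfBd{\pVar}{i/\N}{(i+1)/\N}\synEx$ for the lower transformer and the same with $\supQuantifierSymbol$ in place of $\infQuantifierSymbol$ for the upper one ($\alpha$-renaming $\pVar$ to a fresh variable first if needed); the interval bounds are rationals and $\tfrac{1}{\N}$ is a legal coefficient, so this lies in $\synExps$, and its semantics matches the definition by \Cref{def:exprSem}. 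Along the induction I would also maintain a \emph{polarity invariant}: the only quantifiers ever introduced are $\infQuantifierSymbol$'s for $\lwpSymb{\N}$ and $\lwlpSymb{\N}$, and $\supQuantifierSymbol$'s for $\uwpSymb{\N}$ and $\uwlpSymb{\N}$. Consequently, if $\synEx$ is $\supQuantifierSymbol$-free then the expression for $\lwp{\N}{\prog}{\sem{\synEx}}$ (resp.\ $\lwlp{\N}{\prog}{\sem{\synEx}}$) is again $\supQuantifierSymbol$-free, and dually, if $\synEx$ is $\infQuantifierSymbol$-free then the expression for $\uwp{\N}{\prog}{\sem{\synEx}}$ (resp.\ $\uwlp{\N}{\prog}{\sem{\synEx}}$) is $\infQuantifierSymbol$-free. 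Since terms/guards in $\prog$ use only program variables and the newly bound variables are either $\pVar$ itself or fresh, free variables stay inside $\pVars$, so these really are syntactic expectations.

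\textbf{Phase 2: the $\QFNRA$ query.}
Take item~1: $\synExb$ is $\infQuantifierSymbol$-free by hypothesis, and Phase~1 gives an $\supQuantifierSymbol$-free $\synExc$ with $\sem{\synExc}=\lwp{\N}{\prog}{\sem{\synEx}}$; \Cref{thm:pnf} brings $\synExb$ and $\synExc$ into PNF without disturbing their polarities, and \Cref{thm:checkEntailment} --- whose hypotheses demand exactly an $\infQuantifierSymbol$-free left side and an $\supQuantifierSymbol$-free right side --- produces a first-order formula that is valid iff $\sem{\synExb}\eleq\sem{\synExc}$, reducing in linear time to $\QFNRA$. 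Item~2 is symmetric: $\uwp{\N}{\prog}{\sem{\synEx}}$ is $\infQuantifierSymbol$-free (since $\synEx$ is) and $\synExb$ is $\supQuantifierSymbol$-free, again matching \Cref{thm:checkEntailment}. Items~3 and~4 are the literal $\wlpSymb$-analogues of items~1 and~2 --- Phase~1 already covers $\lwlpSymb{\N}$ and $\uwlpSymb{\N}$, and $1$-boundedness is preserved throughout --- so the same two-phase reduction applies verbatim.

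\textbf{Main obstacle.}
No step is deep; the point that needs genuine care is the polarity invariant in Phase~1 together with the syntactic $\UNIFASSIGN{}$ clause, since if any inductive case silently introduced a $\supQuantifierSymbol$ into a lower transformer (or an $\infQuantifierSymbol$ into an upper one), the hypotheses of \Cref{thm:checkEntailment} would fail and the whole reduction would collapse. A secondary, purely expository caveat is that the constructed expression can be exponentially larger than $\prog$ --- a chain of $\ITE$'s duplicates the post-expectation --- so the reduction is polynomial-time only for programs of bounded size with $\N$ given in unary, which does not affect effectiveness.
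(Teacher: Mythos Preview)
Your proposal is correct and follows essentially the same route as the paper: first establish effective closure of $\synExps$ under the Riemann transformers for loop-free programs (the paper packages this as a separate lemma, \Cref{thm:computeSyntacticExpression}, with exactly your polarity invariant), then chain \Cref{thm:pnf} and \Cref{thm:checkEntailment} to obtain the $\QFNRA$ reduction, with the gray consequences via \Cref{thm:soundness}. Your Phase~1 is in fact more detailed than the paper's own proof of that lemma, which just notes that most cases are immediate by design of $\synExps$ and singles out assignment (via the substitution lemma) as the only case requiring a named result.
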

\noindent
Moreover, for fixed $\prog$ and $\synEx$, the size of the syntactic expectation representing $\somewp{\prog}{\sem{\synEx}}$ for $\transSymb \in \{\lwpSymb{N},\uwpSymb{N},\lwlpSymb{N},\uwlpSymb{N}\}$ is in $\mathcal{O}\left( N^k\right)$, where $k$ is the number of $\UNIF$-statements \mbox{in $\prog$.}

Using \Cref{thm:boundsOnLoopFreeWpWlp} and \Cref{thm:cwpSound}, it is straightforward to derive bounds on $\cwpSymb$ as well.

\subsubsection{Verifying Loops}

We now state how to apply the proof rules from \Cref{sec:invariants_and_unrolling} effectively:

\begin{restatable}[Verification of Invariants]{theorem}{verificationInvariants}
    \label{thm:verificationInvariants}
    Let $\prog = \WHILE{\guard}{\progBody} \in \pWhileWith{\synTerms}{\synGuards}$ such that $\progBody$ is loop-free.
    Further, let $\synExI, \synEx \in \synExps$ and $\synExJ,\synExb \in \bsyntacticExpectations$ all be quantifier-free, and let $\N \geq 1$ be an integer.
    Then we can ...
    \begin{enumerate}
        \item ... decide if $\sem{\synExI}$ is a $\uwpSymb{\N}$-superinvariant of $\prog$ w.r.t.\ $\sem{\synEx}$ by a reduction to $\QFNRA$.\\
        \gray{If the superinvariant property holds, then $\wp{\prog}{\sem\synEx} \eleq \sem{I}$, provided $\sem{\synEx} \in \expsmeas$}
        \item ... decide if $\sem{\synExJ}$ is a $\lwlpSymb{\N}$-subinvariant of $\prog$ w.r.t.\ $\sem{\synExb}$ by a reduction to $\QFNRA$.\\
        \gray{If the subinvariant property holds, then $\sem{J} \eleq \wlp{\prog}{\sem\synExb}$, provided $\sem{\synExb} \in \bexpsmeas$.}
    \end{enumerate}
\end{restatable}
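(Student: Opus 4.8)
The plan is to reduce each of the two invariant tests to a single quantitative entailment between syntactic expressions of the one-sided shape handled by \Cref{thm:checkEntailment}, and then invoke that lemma together with the effective closure of $\synExps$ under the loop-free Riemann transformers. I spell out part~(1); part~(2) is dual. First I would unfold the definitions: by the $\uwpSymb{\N}$-instance of \Cref{thm:invariant_approx}, $\sem{\synExI}$ is a $\uwpSymb{\N}$-superinvariant of $\prog = \WHILE{\guard}{\progBody}$ w.r.t.\ $\sem{\synEx}$ precisely when
\[
    \iv{\guard} \cdot \uwp{\N}{\progBody}{\sem{\synExI}} \,+\, \iv{\neg\guard} \cdot \sem{\synEx}
    \quad\eleq\quad
    \sem{\synExI}~,
\]
and, dually, $\sem{\synExJ}$ is a $\lwlpSymb{\N}$-subinvariant w.r.t.\ $\sem{\synExb}$ precisely when
\[
    \sem{\synExJ}
    \quad\eleq\quad
    \iv{\guard} \cdot \lwlp{\N}{\progBody}{\sem{\synExJ}} \,+\, \iv{\neg\guard} \cdot \sem{\synExb}~.
\]

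Next I would make the left-hand characteristic expression syntactic. Since $\progBody$ is loop-free, the effective closure of $\synExps$ under the loop-free Riemann $\wpwlpSymb$-transformers (the lemma underlying \Cref{thm:boundsOnLoopFreeWpWlp}) yields, computably from $\prog$, $\synExI$ and $\N$, an expression $\synEx_{\progBody} \in \synExps$ with $\sem{\synEx_{\progBody}} = \uwp{\N}{\progBody}{\sem{\synExI}}$; since $\synExps$ is closed under guarding, addition and the occurring constants, $\synEx_{\mathrm{char}} \coloneqq \iv{\guard} \cdot \synEx_{\progBody} + \iv{\neg\guard} \cdot \synEx$ again lies in $\synExps$ and denotes $\charfunuwp{\N}{\prog}{\sem\synEx}(\sem{\synExI})$. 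Thus the superinvariant test is exactly the quantitative entailment $\sem{\synEx_{\mathrm{char}}} \eleq \sem{\synExI}$, and symmetrically the subinvariant test of part~(2) becomes $\sem{\synExJ} \eleq \sem{\iv{\guard}\cdot\synExb_{\progBody} + \iv{\neg\guard}\cdot\synExb}$ with $\sem{\synExb_{\progBody}} = \lwlp{\N}{\progBody}{\sem{\synExJ}}$.

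The crux — and the step I expect to need the most care — is checking the quantifier polarity required by \Cref{thm:checkEntailment}: the smaller side must be $\infQuantifierSymbol$-free and the larger side $\supQuantifierSymbol$-free (after PNF transformation via \Cref{thm:pnf}). I would establish this by a structural induction on the loop-free $\progBody$: among the rules of \Cref{def:riemannwpwlp}, only the $\UNIFASSIGN{\pVar}$ rule introduces a quantifier at all, and for $\uwpSymb{\N}$ it introduces solely $\supQuantifierSymbol$ binders (suprema), never $\infQuantifierSymbol$ ones; every other loop-free construct introduces no new quantifier and preserves $\infQuantifierSymbol$-freeness — in particular $\ASSIGN{\pVar}{\aExp}$ acts by capture-avoiding substitution of the quantifier-free term $\aExp$, which may $\alpha$-rename an already-present $\supQuantifierSymbol$-binder but keeps it a $\supQuantifierSymbol$. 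Since $\synExI$ and $\synEx$ are quantifier-free by hypothesis, $\synEx_{\progBody}$ and hence $\synEx_{\mathrm{char}}$ are $\infQuantifierSymbol$-free, whereas $\synExI$ is trivially $\supQuantifierSymbol$-free; putting $\synEx_{\mathrm{char}}$ into PNF keeps it $\infQuantifierSymbol$-free, and $\synExI$ is already in PNF. Hence \Cref{thm:checkEntailment} applies verbatim and reduces $\sem{\synEx_{\mathrm{char}}} \eleq \sem{\synExI}$ to $\QFNRA$. For part~(2) the same induction shows that the lower Riemann rule introduces only $\infQuantifierSymbol$ binders, so the right-hand side is $\supQuantifierSymbol$-free while $\synExJ$ is $\infQuantifierSymbol$-free, and \Cref{thm:pnf} together with \Cref{thm:checkEntailment} again reduces the test to $\QFNRA$.

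The highlighted consequences are then immediate from \Cref{thm:invariant_cont}: a verified $\uwpSymb{\N}$-superinvariant $\sem{\synExI}$ w.r.t.\ a measurable $\sem{\synEx} \in \expsmeas$ gives $\wp{\prog}{\sem\synEx} \eleq \sem{\synExI}$ by part~(1) of that theorem, and a verified $\lwlpSymb{\N}$-subinvariant $\sem{\synExJ}$ w.r.t.\ a measurable $1$-bounded $\sem{\synExb} \in \bexpsmeas$ gives $\sem{\synExJ} \eleq \wlp{\prog}{\sem\synExb}$ by part~(2). The only genuine obstacle is the polarity bookkeeping through the interleaving of $\UNIFASSIGN{\pVar}$ with assignments: one must verify that the $\alpha$-renaming forced by capture-avoiding substitution never flips a $\supQuantifierSymbol$ into an $\infQuantifierSymbol$ (or vice versa), so that the generated entailment truly has the one-sided form \Cref{thm:checkEntailment} expects; once this invariant is pinned down, the rest is merely assembling previously established closure and reduction results.
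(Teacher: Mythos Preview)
Your proposal is correct and follows essentially the same route as the paper: compute a syntactic representative of $\uwp{\N}{\progBody}{\sem{\synExI}}$ (resp.\ $\lwlp{\N}{\progBody}{\sem{\synExJ}}$) via the loop-free closure lemma, form the characteristic expression, and discharge the resulting one-sided entailment through \Cref{thm:pnf} and \Cref{thm:checkEntailment}. The polarity bookkeeping you spell out by structural induction is exactly what the paper packages into the final clause of \Cref{thm:computeSyntacticExpression} (``if $\synEx$ is $\infQuantifierSymbol$-free, then so is $\synExb$'' for $\uwpSymb{\N}$, and dually), so your worry about $\alpha$-renaming is already absorbed there and need not be re-argued.
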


\begin{restatable}[Complexity of Verification Problems]{theorem}{complexity}
    \label{thm:complexity}
    The following decision problems are $\coRE$-complete\footnote{I.e., $\Pi_1^0$-complete in the arithmetical hierarchy.}:
    Given an arbitrary (not necessarily loop-free) $\prog \in \pWhileWith{\synTerms}{\synGuards}$ and ...
    \begin{enumerate}
        \item\label{it:problemUpperBoundOnWp} ... $\synEx, \synExb \in \syntacticExpectations$ such that $\sem{\synEx} \in \expsmeas$, does it hold that $\wp{\prog}{\sem{\synEx}} \eleq \sem{\synExb}$\,?
        \item\label{it:problemLowerBoundOnWlp} ... $\synEx, \synExb \in \bsyntacticExpectations$ such that $\sem{\synEx} \in \bexpsmeas$, does it hold that $\sem{\synExb} \eleq  \wlp{\prog}{\sem{\synEx}}$\,?
    \end{enumerate}
\end{restatable}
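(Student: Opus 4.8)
The claim has two halves for each of the two decision problems: membership in $\coRE$ (equivalently, the \emph{complement} is semi-decidable) and $\coRE$-hardness.

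\textbf{Membership.} The plan is to cash in the convergence result \Cref{thm:pointwiseConv}. Fix $\prog$ and syntactic expectations $\synEx,\synExb$ with $\sem{\synEx}\in\expsmeas$. Since $\sem{\synEx}$ is representable it lies in $\expsClassExp$, and $(\expsClassExp,\synTerms,\synGuards)$ is Riemann-suitable by \Cref{thm:exprRiemannSuitable}; hence \Cref{thm:pointwiseConv} gives $\wp{\prog}{\sem{\synEx}} = \sup_{n\geq 1}\lwp{n}{\unfold{\prog}{n}}{\sem{\synEx}}$. Therefore
\[
  \wp{\prog}{\sem{\synEx}}\,\not\eleq\,\sem{\synExb}
  \quad\text{iff}\quad
  \exists\,n\geq 1\ \exists\,\st\in\pStates\colon\ \lwp{n}{\unfold{\prog}{n}}{\sem{\synEx}}(\st)\,>\,\sem{\synExb}(\st)~,
\]
using only that a supremum exceeds a value iff some member does. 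Now $\unfold{\prog}{n}$ is loop-free for every $n$ (it replaces each loop by a finite nesting of $\KWIF$--$\KWELSE$ and $\DIVERGE$), so by the effective closure of $\synExps$ under Riemann pre-expectations for loop-free programs (\Cref{thm:boundsOnLoopFreeWpWlp} and the accompanying construction) we can compute, from $\prog$, $n$ and $\synEx$, a syntactic expectation $\synEx_n\in\synExps$ with $\sem{\synEx_n}=\lwp{n}{\unfold{\prog}{n}}{\sem{\synEx}}$. Encoding $\sem{\synEx_n}$ and $\sem{\synExb}$ as FO formulae over $\reals$ via \Cref{thm:expToFo}, the question ``$\exists\st\colon\sem{\synEx_n}(\st)>\sem{\synExb}(\st)$?'' is, for each fixed $n$, the satisfiability of a (quantified) formula in the first-order theory of the reals, hence \emph{decidable} by Tarski's theorem. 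Thus the following procedure semi-decides the complement: for $n=1,2,\dots$ compute $\synEx_n$ and decide whether $\sem{\synEx_n}$ exceeds $\sem{\synExb}$ somewhere; halt and output ``refuted'' as soon as this succeeds. This establishes problem~\eqref{it:problemUpperBoundOnWp}$\in\coRE$. Problem~\eqref{it:problemLowerBoundOnWlp} is symmetric: \Cref{thm:pointwiseConv} gives $\wlp{\prog}{\sem{\synEx}}=\inf_{n\geq 1}\uwlp{n}{\unfold{\prog}{n}}{\sem{\synEx}}$, so $\sem{\synExb}\not\eleq\wlp{\prog}{\sem{\synEx}}$ iff $\exists n\,\exists\st\colon\sem{\synExb}(\st)>\uwlp{n}{\unfold{\prog}{n}}{\sem{\synEx}}(\st)$, and the same enumeration applies.

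\textbf{Hardness.} I would reduce from the non-halting problem $\{M \mid M \text{ does not halt on empty input}\}$, which is $\Pi_1^0$-complete. Given a Turing machine $M$, construct a deterministic, $\KWOBSERVE$-free, $\UNIF$-free program $\prog_M\in\pWhileWith{\synTerms}{\synGuards}$ that simulates $M$ on empty input while ignoring its own initial state. Such $\prog_M$ exists: a standard Minsky two-counter encoding of $M$ can be implemented using only $\KWWHILE$-loops, the operations $+$, $\synMonus$, and multiplication by constants, and polynomial (in)equality guards --- all available in $\synTerms,\synGuards$ --- maintaining the invariant that the counter variables stay integer-valued (the only mildly technical point is coding division/remainder by fixed constants via inner loops). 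Then $\prog_M$ terminates from some (equivalently, every) initial state iff $M$ halts, so $\wp{\prog_M}{1}=\iv{M\text{ halts}}$ as a constant function; hence $M$ does \emph{not} halt iff $\wp{\prog_M}{1}\eleq 0$. Taking $\synEx=1$ (with $\sem{1}\in\expsmeas$) and $\synExb=0$ yields a computable many-one reduction, proving $\coRE$-hardness of~\eqref{it:problemUpperBoundOnWp}. Dually, $\wlp{\prog_M}{0}=\iv{M\text{ does not halt}}$, so $M$ does not halt iff $1\eleq\wlp{\prog_M}{0}$; taking $\synExb=1$ and $\synEx=0$ (both in $\bsyntacticExpectations$, with $\sem{0}\in\bexpsmeas$) proves $\coRE$-hardness of~\eqref{it:problemLowerBoundOnWlp}.

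\textbf{Main obstacle.} Membership is essentially a corollary of the already-established convergence theorem; the only care needed is the quantifier bookkeeping and the appeal to decidability of real-closed-field arithmetic at each fixed unrolling depth. The genuine work is in the hardness part: verifying that a full Turing-machine (Minsky-machine) simulation can be carried out within the restricted arithmetic of $\synTerms$ and $\synGuards$ over non-negative reals kept integral. This is routine but somewhat tedious; alternatively one may shortcut it by invoking existing $\Pi_1^0$-hardness results for $\wpSymb$-reasoning about (discrete) probabilistic imperative programs~\cite{DBLP:conf/lics/KaminskiK17}, since our language subsumes them.
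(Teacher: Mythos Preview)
Your proposal is correct and follows essentially the same approach as the paper. The membership argument is identical: use \Cref{thm:pointwiseConv} to reduce refutation to the existence of some $n$ for which the (syntactically computable) Riemann pre-expectation of the $n$-th unrolling exceeds $\sem{\synExb}$ somewhere, then decide each instance via the FO theory of the reals.

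The hardness arguments differ only cosmetically. The paper reduces the halting problem to the \emph{complement} of problem~\eqref{it:problemUpperBoundOnWp} using a $\tfrac{1}{2}$-biased coin (so that $\tfrac{1}{2} < \wp{\prog}{1}$ iff the deterministic program halts), whereas you reduce non-halting directly to problem~\eqref{it:problemUpperBoundOnWp} with $\synExb = 0$, avoiding the probabilistic choice entirely. Your version is arguably cleaner; the paper's construction is inherited from~\cite{DBLP:conf/mfcs/KaminskiK15}. Your remark about division/remainder is unnecessary for a Minsky-machine encoding (only increment, decrement, and zero-test are needed), but this does not affect correctness.
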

\noindent The proof of membership in $\coRE$ heavily relies on the convergence results established in \Cref{thm:pointwiseConv}.
The $\coRE$-hardness proofs are standard~\cite{DBLP:conf/mfcs/KaminskiK15} and follow from $\coRE$-hardness of the non-halting problem.
See\iftoggle{arxiv}{ \Cref{proof:complexity}}{~\cite{arxiv}} for details.

\section{Implementation and Case Studies}
\label{sec:case_studies}
We have automated our techniques using a modern deductive verifier for probabilistic programs. We first describe how to integrate our techniques within that verifier in \Cref{sec:case_studies:implementation}. We then describe our case studies, i.e., programs and specifications we have verified, in \Cref{sec:case_studies:case_studies}. Finally, we evaluate our approach empirically on these case studies in \Cref{sec:empirical_eval}.
\subsection{Implementation}
\label{sec:case_studies:implementation}
\toolcaesar\footnote{\url{https://www.caesarverifier.org/}} \cite{DBLP:journals/pacmpl/SchroerBKKM23} is a recent expectation-based automated deductive verifier targeting \emph{discrete} and possibly \emph{nondeterministic\footnote{Here we refer to pure nondeterminism which is to be resolved angelically or demonically.}} probabilistic programs. As input, \toolcaesar takes programs written in an intermediate verification language called $\heyVL$ --- a quantitative analogue of \toolboogie \cite{leinoThisBoogie2008}. $\heyVL$ provides a programmatic means to describe verification conditions such as the validity of quantitative loop invariants. These verification conditions are offloaded to an SMT solver, which enables the semi-automated verification of said discrete probabilistic programs. 
\toolcaesar does \emph{not} support continuous sampling instructions. We will, however, now demonstrate that with our approach based on Riemann sums, \toolcaesar \emph{can} readily be used to verify bounds on expected outcomes of continuous probabilistic programs in a semi-automated fashion.

Our key insight is that our Riemann expectation transformers can be expressed as expectation transformers denoted by \emph{discrete} probabilistic programs featuring angelic (resp.\ demonic) \emph{nondeterministic choices} for $\nonNegReals$-valued intervals. The latter features \emph{are} supported by \toolcaesar: $\heyVL$ supports the discrete fragment of $\pWhile$ and, amongst others, the two statements

\begin{enumerate}
	\item  $\NONDETASSIGN{\pVar}{\aExp_1}{\aExp_2}$, which, on program state $\pState$, assigns a nondeterministically chosen value from the $\nonNegReals$-valued interval $[\aExp_1(\pState), \aExp_2(\pState)]$ to the variable $\pVar$, and
	\item $\DISCRETEUNIFASSIGN{\pVar}{\N}$, which, given a (constant) natural number $\N\geq 1$, samples a value from the  set $\{0,\ldots,\N-1\}$ uniformly at random and assigns the result to the variable $\pVar$.
\end{enumerate}
Notice that the latter statement is syntactic sugar for $\pWhile$ as it can be simulated by binary probabilistic choices. The statement $\NONDETASSIGN{\pVar}{\aExp_1}{\aExp_2}$, on the other hand, is not syntactic sugar as nondeterminism is not supported in $\pWhile$. 

Now, given a (discrete but possibly nondeterministic) program $\prog$, \toolcaesar employs a transformer $\vcTrans{\prog} \colon \exps \to \exps$, which is defined\footnote{We do not need to require $\ex$ to be measurable since $\prog$ is does not contain continuous sampling.
    Moreover, as is standard for deductive verifiers, loops need to be annotated with quantitative loop invariants; see \Cref{ex:caesar_loop} on page \pageref{ex:caesar_loop}.} as in \Cref{tab:original} and where for the additional statements, we have
\begin{align*}
	\vc{\NONDETASSIGN{\pVar}{\aExp_1}{\aExp_2}}{\ex}
	\eeq& \lam{\pState}{\sup_{\xi \in [\aExp_1(\pState), \aExp_2(\pState)]} \ex(\pStUpdate{\st}{\lvar}{\xi})} \\
	\vc{\DISCRETEUNIFASSIGN{\pVar}{\N}}{\ex} \eeq & \frac{1}{\N} \cdot \sum\limits_{i=0}^{\N-1} \exSubs{\ex}{\pVar}{i}~.
\end{align*}
Notice that $\vc{\NONDETASSIGN{\pVar}{\aExp_1}{\aExp_2}}{\ex}$ resolves the nondeterministic choice of $\pVar$ \emph{angelically} by returning the \emph{supremum} of all values obtained from evaluating $\ex$ at $\pState$ where $\pVar$ is set to some value from $[\aExp_1(\pState), \aExp_2(\pState)]$. The demonic counterpart is obtained from replacing $\sup$ by $\inf$.
Now let $\pVarj$ be a fresh program variable and observe that for all $\N \geq 1$ and all $\ex \in \exps$, we have
\[
	\underbrace{\uwp{\N}{\UNIFASSIGN{\pVar}}{\ex}}_{\text{defined in this paper}}
	\eeq
	\underbrace{\vc{\SEQ{\DISCRETEUNIFASSIGN{\pVarj}{\N}}{\NONDETASSIGN{\pVar}{\tfrac{\pVarj}{\N}}{\tfrac{\pVarj + 1}{\N}}}}{\ex}}_{\text{expressible in $\heyVL$ and thus supported by \toolcaesar}}~.
\]
Hence, the upper Riemann pre-expectation of $\UNIFASSIGN{\pVar}$ w.r.t.\ $\ex$ corresponds to the maximal --- under all possible resolutions of the nondeterminism --- expected final value of $\ex$ obtained from (i) sampling one of the $\N$ intervals occurring in the Riemann sums uniformly at random and (ii) assigning to $\pVar$ a nondeterministically chosen value from that sampled interval. \emph{Lower} Riemann pre-expectations can be expressed analogously by resolving the nondeterminism \emph{demonically}.
\begin{example}
	\label{ex:caesar_loop}
	Reconsider the Monte Carlo $\pi$-approximator $\prog$ and the superinvariant $\exI$ of $\prog$ w.r.t.\ $\pVarcount$ from \Cref{ex:invariant_monte_carlo}. We can provide \toolcaesar with the following annotated loop:
		\begin{align*}
	&\INVARIANTANNOTATE{\pVarcount + \iv{\pVari \leq \pVarm} \cdot \big(0.85\cdot ((\pVarm \monus \pVari) + 1)\big)} \\
	&\WHILENOBODY{\pVari \leq \pVarm} \\
	&\qquad \SEQ{\DISCRETEUNIFASSIGN{\pVarj_1}{16}}{\NONDETASSIGN{\pVar}{\tfrac{\pVarj_1}{16}}{\tfrac{\pVarj_1 + 1}{16}}}\,; \\
	&\qquad \SEQ{\DISCRETEUNIFASSIGN{\pVarj_2}{16}}{\NONDETASSIGN{\pVarb}{\tfrac{\pVarj_2}{16}}{\tfrac{\pVarj_2 + 1}{16}}}\,; \\
	&\qquad \ITE{\pVar^2 + \pVarb^2 \leq 1}{\ASSIGN{\pVarcount}{\pVarcount + 1}}{\SKIP}\,; \,
	%
	%&\qquad 
	\ASSIGN{\pVari}{\pVari + 1} \quad \}
\end{align*}
	We can then instruct \toolcaesar to check whether $\exI$ is a superinvariant of this loop w.r.t.\ $\pVarcount$. Since the loop body encodes appropriate upper Riemann pre-expectations, \toolcaesar offloads the quantitative entailments corresponding to \Cref{thm:verificationInvariants} to an SMT solver to check them automatically.
	For lower Riemann pre-expectations and subinvariants (for $\wlpSymb$), \toolcaesar can be used analogously.
\end{example}

\subsection{Case Studies}
\label{sec:case_studies:case_studies}

In what follows, we describe the programs and specifications we have verified using \toolcaesar with the SMT solver \toolzt \cite{DBLP:conf/tacas/MouraB08} as back-end. All case studies (including \Cref{ex:caesar_loop}) have been verified within $12$ seconds on an Apple M2. The precise inputs to \toolcaesar are provided in\iftoggle{arxiv}{ \Cref{app:caesar_inputs}}{~\cite{arxiv}}. Both the verified bounds and the required parition sizes $\N$ where determined manually by guessing the respective weakest pre-expectations and increasing $\N$ until \toolcaesar reports success.

\subsubsection{The Irwin-Hall Distribution}
\begin{figure}[t]
    \small
	\begin{subfigure}[b]{0.4\textwidth}
		\input{case_studies_programs/irwin_hall_standard}
	\end{subfigure}
	\begin{subfigure}[b]{0.58\textwidth}
		\input{case_studies_programs/irwin_hall_conditioning}
	\end{subfigure}
	\caption{Generating the standard Irwin-Hall distribution (left) and a variant with conditioning (right).}
	\label{fig:irwin_hall}
\end{figure}
The Irwin-Hall distribution (parameterized in $\pVarm$) \cite{johnson1995continuous} is the sum of $\pVarm$ independent and identically distributed random variables, each of which is distributed uniformly over $[0,1]$. The loop $\prog$ depicted in \Cref{fig:irwin_hall} (left) models this family of distributions: On each iteration, $\prog$ samples a value for $\pVarb$ uniformly at random and adds the result to the variable $\pVar$. Hence, if initially $\pVari= 1$, $\pVar= 0$, and $\pVarm \in \nats$, then the final distribution of $\pVar$ indeed corresponds to the Irwin-Hall distribution with parameter $\pVarm$.

We now aim to upper-bound the expected final value of $\pVar$, i.e., the expectation of the Irwin-Hall distribution, for \emph{arbitrary} values of $\pVarm$. Towards this end, we employ our encoding from \Cref{sec:case_studies:implementation} and use \toolcaesar to automatically verify that the expectation 
$\exI =  \pVar + \iv{\pVari \leq \pVarm}\cdot \big(1.1 \cdot \tfrac{(\pVarm \monus \pVari) + 1}{2} \big) $
is a $\uwpSymb{10}$-superinvariant of $\prog$ w.r.t.\ $\pVar$. Hence, by \Cref{thm:invariant_cont}, we get $\wp{\prog}{\pVar} \eleq \exI$ and thus
\[
	\text{for initial $\pState$ with $\pState(\pVari) = 1$, $\pState(\pVar) = 0$, $\pState(\pVarm)\in \nats$ with $\pState(\pVarm) \geq 1$:}~
	\wp{\prog}{\pVar}(\pState) \leq  1.1\cdot \frac{\pState(\pVarm)}{2}~.
\]
Now consider a variant of the Irwin-Hall distribution where we condition each of the $\pVarm$ random variables to take a value in $[0,\nicefrac{1}{2}]$.
This situation is modeled by the loop $\prog$ depicted in \Cref{fig:irwin_hall} (right), where we employ conditioning inside of the loop. We aim to upper-bound $\cwp{\prog}{\pVar}$, i.e., the \emph{conditional expected final value of $\pVar$}.
For that, we use \toolcaesar to verify that the following expectation is a $\uwpSymb{19}$-superinvariant of $\prog$ w.r.t.\ $\pVarcount$ (resp.\ $\lwlpSymb{2}$-subinvariant of $\prog$ w.r.t.\ $1$):
\begin{align*}
	\exI \eeq  \pVar + \iv{\pVari \leq \pVarm}\cdot \big(1.5 \cdot \frac{(\pVarm \monus \pVari) + 1}{8} \big)  
	 \qquad \text{and}\qquad 
	 \exJ \eeq  \iv{\pVari \leq \pVarm} \cdot 0.5^{ (\pVarm \monus \pVari) +1 } + \iv{\pVari > \pVarm}\cdot 1~.
\end{align*}
Hence, we get by \Cref{thm:invariant_cont} that $\cwp{\prog}{\pVar}(\pState)$ is well-defined for all $\pState \in \pStates$ and
\[
    \text{for initial $\pState$ with $\pState(\pVari) = 1$, $\pState(\pVar) = 0$, $\pState(\pVarm)\in \nats$ with $\pState(\pVarm) \geq 1$:}~
    \cwp{\prog}{\pVar}(\pState) \leq  \frac{1.5\cdot\pState(\pVarm)}{8 \cdot 0.5^{\pState(\pVarm)}}~.
\]
Notice that, even though exponential functions like $0.5^\pVarm$ are not supported by our decidable language of syntactic expectations from \Cref{sec:syntax}, we can use \toolcaesar to reason about it by means of standard user-defined domain declaration for exponentials (see \cite[Section 5.1]{DBLP:journals/pacmpl/SchroerBKKM23} for details). Decidability of the corresponding entailments involving such domains is, however, not guaranteed.

\subsubsection{Reasoning about Diverging Programs}
\begin{figure}[t]
    \small
    \begin{subfigure}[t]{0.3\textwidth}
        \input{case_studies_programs/diverging}
    \end{subfigure}
    \begin{subfigure}[t]{0.53\textwidth}
        \input{case_studies_programs/tortoise_hare}
    \end{subfigure}
    \caption{A probably diverging loop (left) and a race between tortoise and hare (right) \cite{DBLP:conf/sas/ChakarovS14}.}
    \label{fig:div_and_race}
\end{figure}
Weakest \emph{liberal} pre-expectations enable us to lower-bound divergence probabilities of programs involving continuous sampling. 
Consider the loop $\prog$ depicted in \Cref{fig:div_and_race} (left).
In each iteration, we sample uniformly from the interval $[\pVaraa, \pVarbb]$ (notice that $\pVaraa$, $\pVarbb$ are uninitialized variables and hence correspond to parameters) and assign the result to $\pVarb$.
Whenever we sample a value in $[0, \tfrac{\pVaraa + \pVarbb}{2}]$, the program diverges.
Using \toolcaesar, we verify that 
\[
    \exI \eeq \iv{\pVaraa \leq \pVarbb}\cdot (1 \monus 0.5^{\pVar})
\]
is a $\lwlpSymb{2}$-subinvariant of $\prog$ w.r.t.\ $0$. Hence, we have $\exI \eeleq \wlp{\prog}{0}$ by \Cref{thm:invariant_cont}, i.e., whenever $\pVaraa \leq \pVarbb$, then $\prog$ diverges on initial state $\pState$ with probability at least $1 - 0.5^{\pState(\pVar)}$.

\subsubsection{Race between Tortoise and Hare}
The loop $\prog$ depicted in \Cref{fig:div_and_race} (right) is adapted from \cite{DBLP:conf/sas/ChakarovS14} and models a race between a tortoise ($\pVart$) and a hare ($\pVarh$).
As long as the hare did not overtake the tortoise, the hare flips a fair coin to decide whether to move or not. If the hare decides to move, it samples a distance uniformly at random from $[0,10]$.
The tortoise always moves exactly one step.

We now upper-bound the expected final value of variable $\pVarcount$. Towards this end, we use \toolcaesar to verify that the expectation
$\exI = \pVarcount + \iv{\pVarh \leq \pVart} \cdot 3.012\cdot\big((\pVart \monus \pVarh) +2\big)$
is a $\uwpSymb{16}$-superinvariant of $\prog$ w.r.t.\ $\pVarcount$. Hence, we get $\wp{\prog}{\pVarcount} \eleq \exI$ by \Cref{thm:invariant_cont}, i.e., if $\prog$ is executed on an initial state $\pState$ with $\pState(\pVarcount) = 0$ and $\pState(\pVarh) \leq \pState(\pVart)$, then the expected final value of $\pVarcount$ is at most $3.012 \cdot (\pState(\pVart) - \pState(\pVarh) + 2)$.

\subsection{Experimental Evaluation}
\label{sec:empirical_eval}
\begin{table}
	\caption{Experimental results. Time in seconds. TO = \SI{180}{s}, MO = \SI{8}{GB}.}
	\label{tab:experiments}
    \renewcommand{\arraystretch}{1.0}
	\begin{tabular}{l | c r@{\quad}@{\quad} r@{\quad}@{\quad} c r}%
		\toprule
		\colprog & \coln & \colt & \colformsize & \colpost & \colbound \\
		\midrule
		\input{oopsla_revision_experiments/timings.rows}\\
        \bottomrule
	\end{tabular}
\end{table}
In this section, we empirically evaluate the scalability of our approach based on the programs from the previous sections, i.e., our case studies and the Monte-Carlo approximator from \Cref{fig:intro}.
For each of these programs and post-expectations, we verify the tightest upper bound (up to a precision of $3$ decimal places) on the respective weakest pre-expectation for increasing values of $N$ (except for \progdiverging~where $N=2$ already yields the precise weakest pre-expectation).

Our empirical results are depicted in \Cref{tab:experiments}.
Column \colprog~depicts the respective program, \coln~denotes the partition size for the Riemann pre-expectation, \colt~denotes the verification time in seconds (i.e., verification condition generation and time for SMT solving) required by \toolcaesar, \colformsize~denotes the size of the formula offloaded to the SMT solver \toolzt (i.e., the number of nodes in the formula's abstract syntax tree), \colpost~denotes the post-expectation, and \colbound~depicts the upper (lower) bound on the (liberal) weakest pre-expectation. For the \progdiverging~benchmark, we lower-bound a liberal weakest pre-expectation.
All other benchmarks upper-bound a non-liberal weakest pre-expectation.
The bounds were inferred manually in a binary search-like fashion.

With our encoding of Riemann pre-expectations described in \Cref{sec:case_studies:implementation}, \toolcaesar allows verifying increasingly sharper bounds when increasing the partition size $N$
The size of the generated formula for the verification conditions mostly increases moderately when increasing $N$.
This is to be expected since the formula size grows polynomially in $N$ for a fixed program and a fixed post-expectation (see \Cref{sec:verify_loop_free}).
The time required for verification, however, can increase drastically when doubling the value of $N$ (see entries of \Cref{tab:experiments} where $N=32$).

\section{Related Work}
\label{sec:relwork}

\emph{Integral Approximation.}
Our approach relies on under- and over-approximation of Lebesgue integrals via Riemann sums, achieved by discretizing the domain of the continuous uniform distributions in the program.
Recently, \cite{DBLP:journals/pacmpl/GargHBM24} proposed an efficient \emph{bit blasting} discretization method for continuous mixed-gamma distributions (which subsume the uniform distributions) for programs with bounded ($\mathtt{for}$) loops.
In \cite{DBLP:conf/pldi/BeutnerOZ22}, integral approximation is used to derive guaranteed bounds on a program's posterior distribution.
Following up on this, \cite{wang2024static} extended the computation of such bounds to a class of programs with soft conditioning (\emph{score-at-end} programs).
Previously, \cite{DBLP:conf/pldi/SankaranarayananCG13} and \cite{DBLP:journals/pacmpl/AlbarghouthiDDN17} used integral approximations to bound the satisfaction probabilities of properties like safety and fairness.
However, these approaches are not able to deal with unbounded loops in full generality.
AQUA \cite{DBLP:journals/isse/HuangDM22} uses a different discretization method, where the posterior of a given program is approximated by quantizing the state space.
However, this approximation does not necessarily provide an upper or lower bound of the true distribution, and unbounded loops are not supported.
Finally, \cite{DBLP:conf/pldi/Wang0R21} propose a method to derive upper and lower bounds on central moments of cost accumulators.

\emph{Loop Invariant Analysis.} 
The main purpose of the current work is to enable automatic verification of inductive invariants for continuous probabilistic programs. Reasoning about probabilistic loops via quantitative invariants was first introduced by McIver and Morgan \cite{DBLP:series/mcs/McIverM05} for probabilistic programs with discrete probabilistic choice and extended by various authors to different settings \cite{DBLP:conf/setss/SzymczakK19,DBLP:journals/pe/GretzKM14,DBLP:conf/lics/KaminskiK17}.
A generalization of this framework to programs featuring continuous distribution was proposed in \cite{DBLP:conf/cav/ChakarovS13} and uses super-martingale ranking functions as the probabilistic analogue of classic ranking functions.
The super-martingale approach has been specialized to qualitative and quantitative termination problems for various classes of programs such as polynomial programs \cite{DBLP:conf/cav/ChatterjeeFG16}, affine programs admitting a linear ranking super-martingale \cite{DBLP:conf/popl/ChatterjeeFNH16}, and non-deterministic probabilistic programs \cite{DBLP:conf/popl/ChatterjeeNZ17,DBLP:journals/pacmpl/AgrawalC018}.
Besides termination, this approach has been successfully applied also to cost analysis \cite{DBLP:conf/pldi/Wang0GCQS19} and equivalence refutation \cite{DBLP:journals/pacmpl/ChatterjeeGNZ24}.
While building on the martingale approach, \cite{DBLP:conf/cav/ChatterjeeGMZ22} overcomes the necessity to compute ranking martingales, by introducing the notion of stochastic invariants indicators, that can be used to prove termination in a sound and relatively complete way.
A different generalization was proposed in \cite{DBLP:conf/sas/ChakarovS14}, where expectation invariants are proposed to introduce a generalized fixed-point framework for programs with continuous distributions and unbounded loops.
However, this analysis rules out invariants expressed as Iverson brackets and is restricted to piecewise linear assertion guards and updates. Similarly, \cite{DBLP:conf/pldi/WangS0CG21}, restricts to affine programs and invariants to find exponential bounds on assertion violation probabilities.

\emph{Exact Inference.}
Exact inference on probabilistic programs involving continuous distributions is especially hard because of the challenge of computing densities via integrals.
Exact symbolic inference engines like \cite{DBLP:conf/cav/GehrMV16, DBLP:conf/pldi/GehrSV20, narayanan2016probabilistic, DBLP:conf/pldi/SaadRM21} only support deterministically bounded loops.
Even within this constrained context, the occurrence of non-simplified integrals in the output can significantly impede the potential to perform appropriate analysis or further computations.
A different, yet still exact approach is pursued in \cite{DBLP:journals/pacmpl/MoosbruggerSBK22}, where the focus is on the computation of moments as functions of the loop iteration.
In order to compute the moments exactly, the syntax of the programs is restricted quite severely, though parametric models are supported. Under the same syntactic restrictions, it has been shown in \cite{DBLP:journals/fmsd/MoosbruggerBKK22} that termination probabilities can be computed automatically.  

\section{Conclusion}
\label{sec:conclusion}

We tackled the semi-automatic verification of user-provided quantitative invariants for probabilistic programs involving continuous distributions, conditioning, and potentially unbounded loops.
Our approach is to approximate the Lebesgue integrals appearing in a program's weakest pre-expectation semantics by lower and upper Riemann sums.
This simple idea allows us to prove sound bounds on expected outcomes by verifying invariants automatically with SMT-solvers.
We demonstrated that this method can be readily integrated in the \toolcaesar verification infrastructure~\cite{DBLP:journals/pacmpl/SchroerBKKM23}.

There are many opportunities for future work.
A natural direction is to integrate our approach to invariant verification with techniques for \emph{fully automatic invariant generation} such as~\cite{DBLP:conf/tacas/BatzCJKKM23}.
While we can already verify invariants for non-trivial benchmarks, there is some room for improvement on the engineering side.
This includes more sophisticated, multi-dimensional, and adaptive partitions of the integration domain, and a combination with direct integration methods, where applicable.
We also plan to extend the front end of \toolcaesar with dedicated support for continuous distributions.

Due to its simplicity, we are confident that our approach is amenable to generalizations such as non-determinism and strategy synthesis as in~\cite{DBLP:journals/pacmpl/BatzBKW24}, and programs with pointers and data structures.

%%
%% The acknowledgments section is defined using the "acks" environment
%% (and NOT an unnumbered section). This ensures the proper
%% identification of the section in the article metadata, and the
%% consistent spelling of the heading.
%[TW] acmguide on the "\thanks" command:
%"This command is obsolete and should not be used in most cases.
%Do not list your acknowledgments or grant sponsors here.
%Put this information in the acks environment (see Section 2.13)."
\begin{acks}
    The authors thank Philipp Schroer for his support with \toolcaesar, Mirco Tribastone for his help with the initial conceptualization of the paper, and the anonymous referees for their detailed comments.
    This work was partially funded by the \grantsponsor{DFG}{DFG}{https://www.dfg.de} \grantnum{DFG}{GRK 2236 UnRAVeL}, the EU's Horizon 2020 programme under the Marie Skłodowska-Curie grant No.\ 101008233 (MISSION), the ERC AdG No.\ 787914 (FRAPPANT), and the PNRR project iNEST (Interconnected Nord-Est Innovation Ecosystem) funded by the European Union Next-GenerationEU (Piano Nazionale di Ripresa e Resilienza (PNRR) – Missione 4 Componente 2, Investimento 1.5 – D.D. 1058 23/06/2022, ECS\_00000043).
\end{acks}

%% [TW] data availability statement and refereces don't count towards OOPSLA page limit (25 pages)
\iftoggle{arxiv}{}{%
\section*{Data Availability Statement}
A software artifact comprising the \toolcaesar verifier, the encodings of the case studies from \Cref{sec:case_studies}, and instructions for running the experiments is available~\cite{zenodo}.
}
%%
%% The next two lines define the bibliography style to be used, and
%% the bibliography file.
\bibliographystyle{ACM-Reference-Format}
\bibliography{references}

\iftoggle{arxiv}{
%%
%% If your work has an appendix, this is the place to put it.
\newpage
\appendix
\section*{\centering Appendix}
\bigskip
%% [TW] aligned equations are allow to break over pages in appendix, otherwise it becomes messy
\allowdisplaybreaks
\section{Additional Preliminary Material}
\label{app:background}
\subsection{Fixed Point Theory}

The following observation is key in several proofs in this paper:
\begin{restatable}[\textnormal{cf.~\cite[Proposition 2.1.12]{abramsky1994domain}}]{lemma}{oneVsTwoSups}
    \label{thm:oneVsTwoSups}
    Let $\poGen$ be an $\omega$-cpo and let $\poElem \colon \nats \times \nats \to \poDom$ be monotonic (with regards to the usual order on $\nats$ lifted pointwise to $\nats \times \nats$).
    Then
    \[
        \sup_{i \in \nats} \sup_{j \in \nats} \poElem(i,j)
        \eeq
        \sup_{i \in \nats} \poElem(i,i)
        \eeq
        \sup_{j \in \nats} \sup_{i \in \nats} \poElem(i,j)
        ~.
    \]
\end{restatable}
\begin{proof}
    \Cref{proof:oneVsTwoSups}.
\end{proof}

\subsection{Measure Theory and Lebesgue Integrals}
\label{sec:prelims:measure}

\subsubsection{Basic Notions of Measure Theory}
\label{sec:prelims:measure:basics}

Most of the following can be found in~\cite{ash2000probability}.
Let $\mUniv$ be a set.
A \proseSigmaAlgebra on $\mUniv$ is a set $\sAlg \subseteq 2^\mUniv$ of subsets of $\mUniv$ such that $\mUniv \in \sAlg$ and $\sAlg$ is closed under taking countable unions and complements relative to $\mUniv$.
If $\sAlg$ is a \proseSigmaAlgebra on $\mUniv$ then $(\mUniv, \sAlg)$ is called a \emph{measurable space} and the sets in $\sAlg$ are called \emph{measurable sets} (this is not to be confused with a \emph{measure space} which is introduced further below).
For an arbitrary set $\sigmaAlgebraGenerators \subseteq 2^\mUniv$ of subsets of $\mUniv$ we let $\sigmaAlgebraGeneratedBy{\sigmaAlgebraGenerators} = \bigcap \{\mathcal F \mid \sigmaAlgebraGenerators \subseteq \sAlg, \sAlg\text{ a \proseSigmaAlgebra on } \mUniv \}$ be the smallest \proseSigmaAlgebra (on $\mUniv$) containing $\sigmaAlgebraGenerators$.

The \emph{Borel \proseSigmaAlgebra} $\borelSets{\exReals}$ on the extended real numbers $\exReals$ is the smallest \proseSigmaAlgebra containing all intervals of the form $(\ivalL, \ivalR]$, $\ivalL, \ivalR \in \exReals$, i.e., $\borelSets{\exReals} = \sigmaAlgebraGeneratedBy{\{(\ivalL, \ivalR] \mid \ivalL, \ivalR \in \exReals\}}$.
The sets in $\borelSets{\exReals}$ are called \emph{Borel sets}.
Note that many common sets including various types of intervals (bounded, unbounded, open, closed, half-open, and so on), all countable subsets of $\exReals$, and many others are Borel sets.
However, $\borelSets{\exReals} \neq 2^\exReals$, i.e., there exist non-Borel sets.
For $\measurableSet \in \borelSets{\exReals}$ we define $\borelSets{\measurableSet} = \borelSets{\exReals} \cap \measurableSet$, where the intersection is taken element-wise.
It can be shown that $\borelSets{\measurableSet}$ is a \proseSigmaAlgebra on $\measurableSet$ and $\borelSets{\measurableSet} \subseteq \borelSets{\exReals}$, see, e.g.,~\cite[page 5]{ash2000probability}.
In this way we may obtain a \proseSigmaAlgebra on, say, $\uIval$.

Given a measurable space $(\mUniv, \sAlg)$, a \emph{measure} on $\sAlg$ is a function $\measure \colon \sAlg \to \exNonNegReals$ such that for all countable collections $\countableCollectionOfMeasurableSets \subseteq \sAlg$ of pairwise disjoint subsets of $\sAlg$ (i.e., $\forall \measurableSet,  \measurableSetb \in \countableCollectionOfMeasurableSets \colon \measurableSet = \measurableSetb \lor \measurableSet \cap \measurableSetb = \emptyset$) it holds that
$\measure\left( \bigcup_{\measurableSet \in \countableCollectionOfMeasurableSets} \measurableSet \right) = \sum_{\measurableSet \in \countableCollectionOfMeasurableSets} \measure(\measurableSet)$, where the infinite sum is allowed to take value $\infty$.
If $\measure(\mUniv) = 1$ then $\measure$ is a \emph{probability measure}.
We denote by $\lebmes$ the \emph{Lebesgue measure}%
\footnote{We remark that it is also common to define $\lebmes$ on $\lebesgueSets{\reals}$, the \emph{completion} of $\borelSets{\reals}$ w.r.t.\ $\lebmes$ (the sets in $\lebesgueSets{\reals}$ are often called \emph{Lebesgue measurable}), but this construction is not necessary for our purposes.}
on $\borelSets{\reals}$
is the unique measure $\lebmes$ satisfying $\lebmes((\ivalL, \ivalR]) = \ivalR - \ivalL$ for all $\ivalL \leq \ivalR \in \reals$.
Note that we define this only on the reals as we do not need it for the extended reals.
In fact, the only measure we consider (explicitly) in this paper is the Lebesgue measure on $\borelSets{\uIval}$.

\subsubsection{Lebesgue Integrals}
\label{sec:prelims:measure:integrals}

Let $(\mUniv_i, \sAlg_i)$, $i \in \{1,2\}$, be measurable spaces.
A function $\fun \colon \mUniv_1 \to \mUniv_2$ is called \emph{measurable} w.r.t.\ $\sAlg_1$ and $\sAlg_2$ if for all $\measurableSet \in \sAlg_2$ it holds that $\fun^{-1}(\measurableSet) \in \sAlg_1$.

For our purpose we only need to define integrals of non-negative functions.
Let $(\mUniv, \sAlg, \measure)$ be a measure space and $\fun \colon \mUniv \to \exNonNegReals$ be an arbitrary function.
A measurable function $\simpleFun \colon \mUniv \to \nonNegReals$ w.r.t.\ $\sAlg$ and $\borelSets{\nonNegReals}$ is called \emph{simple} if its image $\simpleFun(\mUniv)$ is a finite set $\{a_1,\ldots,a_n\}$; such an $\simpleFun$ can be written as $\simpleFun(x) = \sum_{i=1}^n a_i \cdot \iv{x \in \simpleFun^{-1}(a_i)}$ for all $x \in \mUniv$.
The Lebesgue integral of the simple function $\simpleFun$ on $\measurableSet \in \sAlg$ is defined as 
$
\int_{\measurableSet} \simpleFun \,d\measure
=
\sum_{i=1}^n a_i \cdot \measure(\simpleFun^{-1}(a_i) \cap \measurableSet)
$
.
The Lebesgue integral of $\fun$ on $\measurableSet$ is then defined as
\[
\int_{\measurableSet} \fun \,d\measure
\eeq
\sup \left\{\int_{A} \simpleFun \,d\measure \mid \simpleFun \text{ is simple, } \simpleFun \leq \fun \text{ (pointwise)} \right\}
\]
which can be any number in $\nonNegReals$ or $\infty$.
Note that $\fun$ itself does not have to be measurable; however, many fundamental properties of Lebesgue integrals  break for non-measurable $\fun$, hence it is customary to consider Lebesgue integrals only for measurable $\fun$.

\subsubsection{Multi- vs.\ Single-dimensional Integrals}
\label{sec:prelims:measure:multi}

Given two measurable spaces $(\mUniv_i, \sAlg_i)$, $i \in \{1,2\}$, define $\sAlg_1 \otimes \sAlg_2 = \sigmaAlgebraGeneratedBy{\{\measurableSet_1 \times \measurableSet_2 \mid \measurableSet_i \in \sAlg_i\}}$.
The measurable space $(\mUniv_1 \times \mUniv_2, \sAlg_1 \otimes \sAlg_2)$ is called the \emph{product} of $(\mUniv_1, \sAlg_1)$ and $(\mUniv_2, \sAlg_2)$.
For $n > 1$ and $\measurableSet \in \borelSets{\exReals}$, we obtain the Borel \proseSigmaAlgebra $\borelSets{\measurableSet^n}$ on $\measurableSet^n$ as the $n$-fold product of $(\measurableSet, \borelSets{\measurableSet})$ with itself.
The sets in $\borelSets{\measurableSet^n}$ are called Borel sets as well.

We call a function $\fun \colon \measurableSet \to \measurableSetb$ \emph{Borel measurable} if both $A$ and $B$ are Borel sets (in any dimension) and $\fun$ is measurable w.r.t.\ the respective Borel \proseSigmaAlgebras on $A$ and $B$. 
In this paper, we often consider Borel measurable functions of type $\fun \colon \reals^n \to \exNonNegReals$ and we would like to do something like \enquote{take a Lebesgue integral in one variable}.
To justify that this makes sense (and preserves measurability) we rely on the following:

\begin{theorem}[Part of Fubini's Theorem\textnormal{~\cite[Theorems 8.5 and 8.8]{10.5555/26851}}]
    \label{thm:fubini}
    Let $(\mUniv_i, \sAlg_i)$, $i \in \{1,2\}$, be measurable spaces and $\fun \colon \mUniv_1 \times \mUniv_2 \to \exNonNegReals$ be measurable w.r.t.\ the product space.
    Then:
    \begin{enumerate}
        \item For every $x \in \mUniv_1$, the function $\fun_x \colon \mUniv_2 \to \exNonNegReals, y \mapsto \fun(x,y)$ is measurable w.r.t.\ $\sAlg_2$.
        \item For every $\measurableSet \in \sAlg_2$ and measure $\measure$ on $\sAlg_2$, the function $F \colon \mUniv_1 \to \exNonNegReals, x \mapsto \int_{\measurableSet} \fun_x \,d\measure$ is measurable w.r.t.\ $\mathcal F _1$.
    \end{enumerate}
\end{theorem}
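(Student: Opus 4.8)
The statement splits into a purely set-theoretic fact --- \emph{sections of measurable sets are again measurable} --- on top of which both claims follow by the ``good sets principle'' together with the standard approximation of non-negative measurable functions by simple functions. Concretely: for part~(1) I would exhibit $\fun_x$ as the composition of $\fun$ with a measurable ``section map''; for part~(2) I would establish the claim first for indicator functions, extend it to simple functions by linearity, and then to arbitrary non-negative $\fun$ by monotone convergence, invoking part~(1) throughout so that all sections that occur are legitimately measurable.

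For part~(1), fix $x \in \mUniv_1$ and consider the map $\iota_x \colon \mUniv_2 \to \mUniv_1 \times \mUniv_2$, $y \mapsto (x,y)$. Recall (see \Cref{sec:prelims:measure:multi}) that $\sAlg_1 \otimes \sAlg_2 = \sigmaAlgebraGeneratedBy{\{\measurableSet_1 \times \measurableSet_2 \mid \measurableSet_1 \in \sAlg_1,\ \measurableSet_2 \in \sAlg_2\}}$. To prove $\iota_x$ measurable it suffices, by the good sets principle, to check $\iota_x^{-1}(\measurableSet_1 \times \measurableSet_2) \in \sAlg_2$ for every generating rectangle; but this preimage equals $\measurableSet_2$ if $x \in \measurableSet_1$ and $\emptyset$ otherwise, both in $\sAlg_2$. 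Hence $\iota_x$ is measurable, and $\fun_x = \fun \circ \iota_x$ is measurable as a composition of measurable maps.

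For part~(2), write $E_x = \iota_x^{-1}(E) = \{y \mid (x,y) \in E\}$ for $E \in \sAlg_1 \otimes \sAlg_2$, and note that $E \mapsto E_x$ commutes with complements, finite intersections, and countable unions. \emph{(a) Indicators:} if $\fun$ is the indicator function of a set $E \in \sAlg_1 \otimes \sAlg_2$, then $F(x) = \measure(E_x \cap \measurableSet)$; one shows via a Dynkin-system argument that the family of sets $E$ for which $x \mapsto \measure(E_x \cap \measurableSet)$ is $\sAlg_1$-measurable contains the $\pi$-system of measurable rectangles and is closed under proper differences and increasing countable unions, hence coincides with $\sAlg_1 \otimes \sAlg_2$. \emph{(b) Simple functions:} if $\simpleFun$ is a non-negative measurable simple function taking value $a_i$ on $E_i \in \sAlg_1 \otimes \sAlg_2$, then $x \mapsto \int_{\measurableSet} \simpleFun_x \,d\measure = \sum_{i} a_i\, \measure((E_i)_x \cap \measurableSet)$ is a non-negative finite combination of $\sAlg_1$-measurable functions, hence measurable. \emph{(c) General $\fun$:} choose measurable simple functions $\simpleFun_n \uparrow \fun$; then $(\simpleFun_n)_x \uparrow \fun_x$ pointwise, so by the Monotone Convergence Theorem $F(x) = \int_\measurableSet \fun_x \,d\measure = \sup_n \int_\measurableSet (\simpleFun_n)_x \,d\measure$, a pointwise supremum of a countable family of measurable functions by~(b), and therefore $\sAlg_1$-measurable. (Here one also uses that the ``supremum over dominated simple functions'' definition of the Lebesgue integral from \Cref{sec:prelims:measure:integrals} agrees with this monotone limit, which is exactly the content of the MCT.)

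The only genuinely delicate point is step~(a): turning the good-sets family into a bona fide Dynkin system requires the auxiliary hypothesis that $\measure$ restricted to $\measurableSet$ is $\sigma$-finite --- otherwise closure under differences may fail through an $\infty - \infty$ --- so this is the precise form in which the result is needed. This is harmless for us: in every application $\measure = \lebmes$ and $\measurableSet \subseteq \uIval$, so $\measure(\measurableSet) \leq 1 < \infty$. Everything else is routine measure theory; being a classical result, we simply cite~\cite{10.5555/26851}, the sketch above serving only to make the hypotheses we actually invoke explicit.
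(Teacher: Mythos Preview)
The paper does not prove this theorem at all: it is stated with a citation to a standard textbook and used as a black box (chiefly in the well-definedness proof of $\wpSymb$/$\wlpSymb$ for $\UNIFASSIGN{\pVar}$). Your sketch is the standard textbook argument and is correct; your observation that step~(a) tacitly needs $\sigma$-finiteness of $\measure$ on $\measurableSet$ is apt---the paper's statement omits this hypothesis, but as you note it is automatically satisfied in every use ($\measure = \lebmes$, $\measurableSet \subseteq \uIval$), so no harm is done.
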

For a given $x \in \mUniv_1$, we also write $\int_{\measurableSet}\fun(x,y) \,d\measure(y)$ instead of $\int_{\measurableSet}\fun_{x} \,d\measure$.

\subsection{Riemann Integrals}

A useful aspect of (lower and upper) Riemann integrals is that one can choose partitions that induce sequences of upper (resp.\ lower) sums that converge \emph{monotonically}.
For partitions $\partition = (x_0,\ldots,x_{\partitionSize})$ and $\partitionb = (y_0,\ldots,y_{\partitionSizeb})$,
 both of the same interval, we say that $\partition$ \emph{refines} $\partitionb$, in symbols $\partition \partitionRefines \partitionb$, if for all indices $0 \leq j \leq \partitionSizeb$ there exists $0 \leq i \leq \partitionSize$ such that $x_i = y_j$.
In other words, the partition $\partition$ can be obtained from $\partitionb$ by subdividing some of the intervals in $\partitionb$ further.
\begin{lemma}[\textnormal{see,~e.g.,~\cite[Lemma 6.2]{fitzpatrick2009advanced}}]
    \label{thm:partitionRefine}
    Let $\fun \colon \clIvalGen \to \reals$ be bounded.
    Suppose that $\partition \partitionRefines \partitionb$.
    Then $\lowerSum{\fun}{\partitionb} \leq \lowerSum{\fun}{\partition}$
    and
    $\upperSum{\fun}{\partition} \leq \upperSum{\fun}{\partitionb}$.
\end{lemma}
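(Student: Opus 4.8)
\textbf{Proof plan for \Cref{thm:partitionRefine}.}

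The plan is to reduce the statement to the single-subdivision case and then iterate. First I would observe that, by a straightforward induction on the number of points in $\partition$ that are not in $\partitionb$, it suffices to prove the claim when $\partition$ arises from $\partitionb$ by adding exactly one new point. So assume $\partitionb = (y_0, \ldots, y_{\partitionSizeb})$ and $\partition$ inserts a single point $z$ with $y_{k-1} < z < y_k$ for some $1 \le k \le \partitionSizeb$, leaving all other subintervals unchanged. All terms of $\lowerSum{\fun}{\partitionb}$ and $\lowerSum{\fun}{\partition}$ corresponding to subintervals other than $[y_{k-1}, y_k]$ coincide, so the difference $\lowerSum{\fun}{\partition} - \lowerSum{\fun}{\partitionb}$ equals
\[
    (z - y_{k-1}) \inf_{\xi \in [y_{k-1}, z]} \fun(\xi)
    + (y_k - z) \inf_{\xi \in [z, y_k]} \fun(\xi)
    - (y_k - y_{k-1}) \inf_{\xi \in [y_{k-1}, y_k]} \fun(\xi)~.
\]

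The key step is the elementary monotonicity of infima under subsets: since $[y_{k-1}, z] \subseteq [y_{k-1}, y_k]$ we have $\inf_{[y_{k-1}, z]} \fun \ge \inf_{[y_{k-1}, y_k]} \fun$, and likewise $\inf_{[z, y_k]} \fun \ge \inf_{[y_{k-1}, y_k]} \fun$; boundedness of $\fun$ ensures all these infima are finite real numbers. Using $y_k - y_{k-1} = (z - y_{k-1}) + (y_k - z)$ and these two inequalities (both weights being non-negative), the displayed difference is $\ge 0$, which gives $\lowerSum{\fun}{\partitionb} \le \lowerSum{\fun}{\partition}$ for one insertion. The argument for upper sums is entirely dual: replace every $\inf$ by $\sup$, note $\sup_{[y_{k-1}, z]} \fun \le \sup_{[y_{k-1}, y_k]} \fun$ and $\sup_{[z, y_k]} \fun \le \sup_{[y_{k-1}, y_k]} \fun$, and conclude $\upperSum{\fun}{\partition} \le \upperSum{\fun}{\partitionb}$.

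Finally I would close the induction: if $\partition$ refines $\partitionb$ and contains $m$ points not in $\partitionb$, insert them one at a time, obtaining a chain $\partitionb = \partition_0 \partitionRefines \partition_1 \partitionRefines \cdots \partitionRefines \partition_m = \partition$ where each $\partition_{j+1}$ has exactly one more point than $\partition_j$. Applying the one-insertion bounds along the chain and using transitivity of $\le$ on the real numbers yields $\lowerSum{\fun}{\partitionb} \le \lowerSum{\fun}{\partition}$ and $\upperSum{\fun}{\partition} \le \upperSum{\fun}{\partitionb}$. There is no real obstacle here; the only thing to be careful about is bookkeeping in the induction (ensuring each intermediate tuple is still a valid partition, i.e.\ strictly increasing, which holds because we only ever insert points strictly between existing consecutive ones) and the sign/weight accounting in the one-insertion computation.
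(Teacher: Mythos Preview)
Your proposal is correct and follows the standard textbook argument; the paper does not give its own proof of this lemma but simply cites \cite[Lemma~6.2]{fitzpatrick2009advanced}, whose proof proceeds exactly as you outline (reduce to inserting a single point, use monotonicity of $\inf$/$\sup$ under set inclusion, then induct on the number of inserted points).
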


To obtain the Riemann-Darboux integral, it is not actually necessary to consider \emph{all} partitions.
Indeed, it suffices to consider a set of partitions whose \emph{norm} (\enquote{fineness}) becomes arbitrarily small.
We now formalize this.
For a partition $\partition = (x_0,\ldots,x_{\partitionSize}) \in \partitions{\clIvalGen}$, let its norm be defined as
\[
\partitionNorm{\partition}
\eeq
\max_{0 \leq i < \partitionSize} x_{i+1} - x_i
~.
\]

\begin{restatable}[\textnormal{see,~e.g.,~\cite[Theorem 7.12]{fitzpatrick2009advanced}}]{theorem}{smallNormSuffices}
    \label{thm:smallNormSuffices}
    Let $\fun \colon \clIvalGen \to \reals$ be bounded and let $(\partition_n)_{n\in\nats}$ be a sequence of partitions of $\clIvalGen$ satisfying $\lim_{n \to \infty} \partitionNorm{\partition_n} = 0$.
    Then \[
    \lowerIntGen \fun(x) \,dx
    \eeq
    \lim_{n \to \infty} \lowerSum{\fun}{\partition_n}
    \eeq
    \sup_{n \in \nats} \lowerSum{\fun}{\partition_n}
    \qqand
    \upperIntGen \fun(x) \,dx
    \eeq
    \lim_{n \to \infty} \upperSum{\fun}{\partition_n}
    \eeq
    \inf_{n \in \nats} \upperSum{\fun}{\partition_n}
    ~.
    \]
\end{restatable}
\begin{proof}
    \Cref{proof:smallNormSuffices}
\end{proof}

\section{Omitted Remarks and Explanations}
\label{app:misc}
\subsection{On Redundancies in the Program Syntax}
\label{app:redudanciesSyntax}

Some of the programming constructs from \Cref{def:pwhile} could be regarded syntactic sugar.
However, this would have some unsatisfactory effects.
For example, $\PCHOICE{\prog_1}{\prob}{\prog_2}$ is equivalent to $\SEQ{\UNIFASSIGN{\pVarTmp}}{\ITE{\pVarTmp < \prob}{\prog_1}{\prog_2}}$, but this has the disadvantages that (i) it requires an additional program variable, (ii) it assumes that $(\pVarTmp < \prob) \in \guards$, and (iii) it replaces a simple convex combination by a way more difficult integration operation, which is a major drawback in practice.
Further, one could define $\DIVERGE$ as syntactic sugar for $\WHILE{true}{\SKIP}$.
However, in the context of this paper, it is more convenient to distinguish between loopy programs and loop-free programs possibly containing $\DIVERGE$.

\subsection{Formal Definition of Loop Unrolling}
\label{app:unrolling}

The following definition appears in \Cref{sec:approxwp:convloops}.
Let $\depth \in \nats$.
We define the \emph{$\depth$-fold unrolling} $\unfold{\prog}{\depth}$ of program $\prog$ inductively as follows:
For all atomic programs $\prog$, we let $\unfold{\prog}{\depth} = \prog$.
Further, we set (using some parentheses -- which are of course not part of the program syntax -- for clarity)
\begin{itemize}
    \item $\unfold{(\SEQ{\prog_1}{\prog_2})}{\depth} = \SEQ{\unfold{\prog_1}{\depth}}{\unfold{\prog_2}{\depth}}$,
    \item $\unfold{(\ITE{\guard}{\prog_1}{\prog_2})}{\depth} = \ITE{\guard}{\unfold{\prog_1}{\depth}}{\unfold{\prog_2}{\depth}}$, and
    \item $\unfold{(\PCHOICE{\prog_1}{\prob}{\prog_2})}{\depth} = \PCHOICE{\unfold{\prog_1}{\depth}}{\prob}{\unfold{\prog_2}{\depth}}$.
\end{itemize}
For loops, we let
\begin{itemize}
    \item $\unfold{(\WHILE{\guard}{\progBody})}{0} = \DIVERGE$, and for $\depth \geq 0$,
    \item $\unfold{(\WHILE{\guard}{\progBody})}{\depth+1} = \ITE{\guard}{\SEQ{\unfold{\progBody}{\depth}}{\unfold{(\WHILE{\guard}{\progBody})}}{\depth}}{\SKIP}$.
\end{itemize}

\section{Proofs}
\label{app:proofs}
\subsection{Proof of \Cref{thm:oneVsTwoSups}}
\label{proof:oneVsTwoSups}
\oneVsTwoSups*
\begin{proof}
    The suprema and infima exist because $\poGen$ is an $\omega$-cpo.
    The fact that the order of the two suprema can be exchanged is a standard property of suprema, so we only show the leftmost equality:
    For all $i \in \nats$, $\sup_{j} \poElem(i,j) \geq \poElem(i,i)$, so $\sup_{i} \sup_{j} \poElem(i,j) \geq \sup_{i} \poElem(i,i)$.
    For the other inequality we argue as follows:
    Let $i \in \nats$ be arbitrary.
    \begin{align*}
        & \sup_j \poElem(i,j) \\
        \eeq & \sup_{i \leq j} \poElem(i,j) \tag{because $\poElem$ is non-decreasing in the 2nd argument} \\
        \lleq & \sup_{i \leq j} \poElem(j,j) \tag{because $i \leq j$ and $\poElem$ is non-decreasing in the 1st argument} \\
        \eeq & \sup_{j} \poElem(j,j)
    \end{align*}
    Since $i$ was arbitrary it follows that $\sup_i \sup_j \poElem(i,j) \leq \sup_{j} \poElem(j,j)$.
\end{proof}

\subsection{Proof of \Cref{thm:smallNormSuffices}}
\label{proof:smallNormSuffices}
\smallNormSuffices*
\begin{proof}
    We only show the equalities for the lower integral.
    The statement for the upper integral follows with \Cref{thm:upperLowerDuality}.
    
    We first show that $\lowerIntGen \fun(x) \,dx = \lim_{n \to \infty} \lowerSum{\fun}{\partition_n}$.
    Let us fix $\eps > 0$.
    We want to prove that there exists $n_0 \in \nats$ such that $\forall n \geq n_0$ it holds that $ \lowerIntGen \fun - \lowerSum{\fun}{\partition_n} < \eps$.
    
    By definition of $\lowerIntGen \fun$ as the supremum over all partitions there exists a partition $\partition_\eps$ such that 
    $\lowerIntGen \fun -\lowerSum{\fun}{\partition_\eps} < \frac{\eps}{2}.$ 
    Let $\partitionSize$ be the number of points in $\partition_\eps$ and suppose that $\fun$ is bounded by $\boundedFunBound$, i.e., $|\fun| \leq \boundedFunBound$.
    Since $\lim_{n \to \infty} \partitionNorm{\partition_n} = 0$ there exists $n_0$ such that $\forall n \geq n_0$ we have $\partitionNorm{\partition_n} < \frac{\eps}{4 \partitionSize \boundedFunBound}$.
    Let $n \geq n_0$ be fixed for the rest of the proof.
    
    We set $\partition_n^* = \partition_n \cup \partition_\eps$, i.e., the partition obtained considering as extrema of the sub-intervals all the points in $\partition_n$ and $\partition_\eps$.
    Then $\partition_n^* \partitionRefines \partition_n$ and $\partition_n^* \partitionRefines \partition_\eps$,
    and thus $\lowerSum{\fun}{\partition_n^*} \geq \lowerSum{\fun}{\partition_n}$ and $\lowerSum{\fun}{\partition_n^*} \geq \lowerSum{\fun}{\partition_\eps}$.
    Let us consider the difference $\lowerSum{\fun}{\partition_n^*} - \lowerSum{\fun}{\partition_n}$.
    Let $I$ be the set of indexes $i$ such that $x_i \in \partition_n$ and $[x_i, x_{i+1}]$ contains at least one point $x_i^{\eps}$ of $\partition_\eps$.
    Then $|I| \leq \partitionSize$ and 
    \begin{align*}
        & \lowerSum{\fun}{\partition_n^*} - \lowerSum{\fun}{\partition_n} \\
        \eeq & \sum_{i \in I} \left( (x_{i+1} - x_i^{\eps}) \inf_{[x_i^{\eps},x_{i+1}]} \fun(x) + (x_i^{\eps} - x_i) \inf_{[x_i, x_i^{\eps}]} \fun(x) \mminus (x_{i+1} - x_i) \inf_{[x_i, x_{i+1}]} \fun(x) \right) \\
        \begin{split}
            \eeq & \sum_{i \in I} \left( (x_{i+1} - x_i^{\eps}) \inf_{[x_i^{\eps}, x_{i+1}]} \fun(x) + (x_i^{\eps} - x_i) \inf_{[x_i, x_i^{\eps}]} \fun(x) \right. \\
            & \qquad\qquad\qquad\qquad \left. \mminus  (x_{i+1} - x_i^{\eps}) \inf_{[x_i, x_{i+1}]} \fun(x) - (x_i^{\eps} - x_i) \inf_{[x_i, x_{i+1}]} \fun(x) \right)
        \end{split} \\
        \lleq & \sum_{i \in I} (x_{i+1} - x_i^{\eps}) \underbrace{\left(\inf_{[x_i^{\eps}, x_{i+1}]} \fun(x) - \inf_{[x_i, x_{i+1}]} \fun(x)\right)}_{\leq 2\boundedFunBound} \pplus \sum_{i \in I}  (x_i^{\eps} - x_i) \underbrace{\left( \inf_{[x_i, x_i^{\eps}]} \fun(x) - \inf_{[x_i, x_{i+1}]} \fun(x) \right)}_{\leq 2\boundedFunBound} \\
        \lleq & \partitionSize \frac{\eps}{4 \partitionSize \boundedFunBound} 2\boundedFunBound \eeq \frac{\eps}{2} ~.
    \end{align*}
    It follows that
    \[
    \lowerIntGen \fun - \lowerSum{\fun}{\partition_n}
    \eeq
    \lowerIntGen \fun - \lowerSum{\fun}{\partition_n^*} + \lowerSum{\fun}{\partition_n^*} - \lowerSum{\fun}{\partition_n}
    \lleq
    \lowerIntGen \fun - \lowerSum{\fun}{\partition_\eps} + L_{f,\partition_n^*} - \lowerSum{\fun}{\partition_n} < \frac{\eps}{2} + \frac{\eps}{2} = \eps
    ~.
    \]
    It remains to show that $\lim_{n \to \infty} \lowerSum{\fun}{\partition_n} = \sup_{n \geq 0} \lowerSum{\fun}{\partition_n}$.
    By definition of the lower integral as the supremum over the lower sums w.r.t.\ \emph{every possible} partition it follows that $\sup_{n \in \nats} \lowerSum{\fun}{\partition_n} \leq \lowerIntGen \fun$.
    On the other hand, the limit of a convergent sequence is always bounded by its supremum, i.e., $\lim_{n \to \infty} \lowerSum{\fun}{\partition_n} \leq \sup_{n \in \nats} \lowerSum{\fun}{\partition_n}$
\end{proof}

\begin{lemma}
    \label{thm:upperLowerDuality}
    For bounded $\fun \colon \clIvalGen \to \reals$ and $\partition \in \partitions{\clIvalGen}$ it holds that
    \begin{enumerate}
        \item $\lowerSum{-\fun}{\partition} = -\upperSum{\fun}{\partition}$, and
        \item $\lowerIntGen -\fun(x) \,dx = -\upperIntGen \fun(x) \,dx$.
    \end{enumerate}
\end{lemma}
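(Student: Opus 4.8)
The plan is to reduce both claims to the elementary identity $\inf(-A) = -\sup A$, valid for every nonempty bounded set $A \subseteq \reals$. Since $\fun$ is bounded, all infima and suprema over subintervals of $\clIvalGen$ are finite real numbers (as already noted right after the definition of the Riemann sums), so no $\pm\infty$ edge cases arise and sign-flipping commutes with $\inf$ and $\sup$ throughout.

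First I would establish part (1). Fix a partition $\partition = (x_0,\ldots,x_{\partitionSize}) \in \partitions{\clIvalGen}$. For each index $1 \leq i \leq \partitionSize$, applying the above identity to $A = \{\fun(\xi) \mid \xi \in \clIval{x_{i-1}}{x_i}\}$ yields $\inf_{\xi \in \clIval{x_{i-1}}{x_i}}\bigl(-\fun(\xi)\bigr) = -\sup_{\xi \in \clIval{x_{i-1}}{x_i}}\fun(\xi)$. Multiplying by the nonnegative weight $x_i - x_{i-1}$ and summing over $i$ then turns the defining sum for $\lowerSum{-\fun}{\partition}$ into $-\upperSum{\fun}{\partition}$, which is exactly claim (1).

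For part (2), I would combine claim (1) with the same identity, now applied at the level of partitions. Using \Cref{def:riemannIntegral} together with part (1), $\lowerIntGen -\fun(x)\,dx = \sup\{\lowerSum{-\fun}{\partition} \mid \partition \in \partitions{\clIvalGen}\} = \sup\{-\upperSum{\fun}{\partition} \mid \partition \in \partitions{\clIvalGen}\} = -\inf\{\upperSum{\fun}{\partition} \mid \partition \in \partitions{\clIvalGen}\} = -\upperIntGen \fun(x)\,dx$, where the penultimate step is $\sup(-S) = -\inf S$ applied to $S = \{\upperSum{\fun}{\partition} \mid \partition \in \partitions{\clIvalGen}\}$.

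There is essentially no obstacle to overcome here: the only point worth flagging is the appeal to boundedness of $\fun$, which is precisely what makes all the Riemann sums and the upper/lower integrals genuine real numbers and hence licenses pulling the minus sign through $\inf$ and $\sup$ without any case distinction.
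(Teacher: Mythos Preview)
Your proof is correct and follows essentially the same approach as the paper: both arguments reduce part (1) to the identity $\inf(-A) = -\sup A$ applied pointwise to each subinterval, and part (2) to the same identity applied to the set of upper sums together with part (1).
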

\begin{proof}
    Let $\partition = (x_0,\ldots,x_{\partitionSize})$.
    For (1) note that
    \begin{align*}
        \lowerSum{-\fun}{\partition} \eeq & \sum_{i=0}^{\partitionSize-1} (x_{i+1} - x_i) \inf_{x \in \clIval{x_i}{x_{i+1}}} -\fun(x) \\
        \eeq & \sum_{i=0}^{\partitionSize-1} (x_{i+1} - x_i) \cdot \left(-\sup_{x \in \clIval{x_i}{x_{i+1}}} \fun(x) \right) \\
        \eeq & -\sum_{i=0}^{\partitionSize-1} (x_{i+1} - x_i) \sup_{x \in \clIval{x_i}{x_{i+1}}} \fun(x) \\
        \eeq & -\upperSum{\fun}{\partition} ~.
    \end{align*}
    For (2) we argue similarly:
    \begin{align*}
        \lowerIntGen-\fun(x) \,dx \eeq & \sup \left\{ \lowerSum{-\fun}{\partition} \mid \partition \in \partitions{\clIvalGen} \right\} \\
        \eeq & \sup \left\{ -\upperSum{\fun}{\partition} \mid \partition \in \partitions{\clIvalGen} \right\} \tag{by (1)} \\
        \eeq & -\inf \left\{ \upperSum{\fun}{\partition} \mid \partition \in \partitions{\clIvalGen} \right\} \\
        \eeq & - \upperIntGen \fun(x) \,dx ~.
    \end{align*}
\end{proof}
\subsection{Proof of \Cref{thm:measexpsbicpo}}
\label{proof:measexpsbicpo}
\measexpsbicpo*
\begin{proof}
    
    To see that $\po{\exps}{\eleq}$ and $\po{\bexps}{\eleq}$ are complete lattices for any $\poSubset \subseteq \exps$ (resp. $\poSubset \subseteq \bexps$) set
    $$ \sup \poSubset = \lambda \st . \sup_{\ex \in \poSubset} \ex(\st), \hspace{1cm} \inf \poSubset = \lambda \st . \inf_{\ex \in \poSubset} \ex(\st).$$
    Then, $\sup \poSubset \in \exps$ and $\inf \poSubset \in \exps$ (resp. $\sup \poSubset \in \bexps$ and $\inf \poSubset \in \bexps$).
    
    Theorem 11.17 from \cite{rudin1953principles} ensures that for any sequence of measurable functions the supremum and infimum of the sequence is measurable. Specializing the theorem in the case of Borel measurability ensures that $\po{\expsmeas}{\eleq}$ and $\po{\bexpsmeas}{\eleq}$ are $\omega$-bicpos. 
\end{proof}

\subsection{Proof of \Cref{thm:wpWlpWellDefinedAndContinuous}}
\label{proof:wpWlpWellDefinedAndContinuous}
\wpWlpWellDefinedAndContinuous*
\begin{proof}
    Parts of this statement have been shown in in \cite[p.\ 88 (Proof of Lemma 3)]{DBLP:conf/setss/SzymczakK19} and \cite[Lemma 34]{DBLP:journals/pe/GretzKM14}. %\tw{the wlp stuff was not shown }
    Here we provide a mostly self-contained proof.
    Note that for well-definedness, one must show various points:
    preservation of measurable expectations, existence of the Lebesgue integrals, and existence of the least and greatest fixed points -- for the latter, $\omega$-(co)continuity is needed, which is why we show it simultaneously.
    
    In the following, we let $\ex \in \expsmeas$ be an arbitrary measurable expectation and $\ex_0 \eleq \ex_1 \eleq \ldots$ a (non-decreasing) $\omega$-chain in $\expsmeas$.
    Similarly, $\exb \in \bexpsmeas$ is a 1-bounded measurable expectation and $\exb_0 \egeq \exb_1 \egeq \ldots$ a (non-increasing) $\omega$-cochain in $\bexpsmeas$.
    We will show by structural induction that for all $\prog \in \pWhile$ it holds that $\wp{\prog}{\ex}$ and $\wlp{\prog}{\exb}$ are measurable and that $\sup_{i \in \nats}\wp{\prog}{\ex_i} = \wp{\prog}{\sup_{i \in \nats} \ex_i}$ as well as $\inf_{i \in \nats}\wlp{\prog}{\exb_i} = \wlp{\prog}{\inf_{i \in \nats} \exb_i}$.
    Throughout this proof we write $\sup_i$ as a shorthand for $\sup_{i\in \nats}$.

    \inductionCase{$\prog = \SKIP$}
    $\wpTrans{\SKIP} = \wlpTrans{\SKIP}$ is the identify function, hence it trivially preserves measurable expectations and is $\omega$-bicontinuous.

    \inductionCase{$\prog = \DIVERGE$}
    $\wpTrans{\DIVERGE}$ are $\wlpTrans{\DIVERGE}$ are constant expectations, hence they are both measurable and $\omega$-bicontinuous.

    \inductionCase{$\prog = \ASSIGN{\pVar}{\aExp}$}
    \newcommand{\auxiliaryStateTransformer}{U}
    \newcommand{\auxStTrans}{\auxiliaryStateTransformer}
    By definition, $\wp{\ASSIGN{\pVar}{\aExp}}{\ex} = \exSubsGen$ which can also be written as $\ex \circ \auxStTrans$ where $\auxStTrans \colon \states \to \states, \st \mapsto \pStUpdate{\st}{\pVar}{\aExp(\st)}$ is a function updating the program state according to $\ASSIGN{\pVar}{\aExp}$.
    Since $\aExp$ is measurable, $\auxStTrans$ is measurable as well, and thus $\exSubsGen$ is measurable as a composition of measurable functions. The argument for measurability of $\wlp{\ASSIGN{\pVar}{\aExp}}{\exb}$ is exactly the same. 
    
    For $\omega$-continuity we have:
    \begin{align*}
        & \wp{\ASSIGN{\pVar}{\aExp}}{\sup_i \ex_i}  \\
        \eeq & \exSubs{\left( \sup_i \ex_i \right)}{\pVar}{\aExp} \tag{definition of $\wpSymb$} \\
        \eeq & \sup_i \exSubs{\ex_i}{\pVar}{\aExp} \tag{$\sup$ defined pointwise}\\
        \eeq & \sup_i \wp{\ASSIGN{\pVar}{\aExp}}{\ex_i} \tag{definition of $\wpSymb$}
        ~.
    \end{align*} 
    An analogous argument holds for $\omega$-cocontinuity of $\wlpTrans{\ASSIGN{\pVar}{\aExp}}$.

    \inductionCase{$\prog = \UNIFASSIGN{\pVar}$}
    Measurability of $\wp{\UNIFASSIGN{\pVar}}{\ex} = \lam{\pSt}{\int_\uIval \ex (\pStUpdate{\st}{\pVar}{\xi}) \, d\lebmes(\xi)}$ and its liberal counterpart is a direct application of \Cref{thm:fubini}
    
    For $\omega$-continuity of $\wpTrans{\UNIFASSIGN{\pVar}}$ we argue as follows:
    \begin{align*}
        & \wp{\UNIFASSIGN{\pVar}}{\sup_i \ex_i} \\
        \eeq & \lam{\pSt}{\int_\uIval \left(\sup_i \ex_i\right)(\pStUpdate{\st}{\pVar}{\xi}) \, d\lebmes(\xi) } \tag{definition of $\wpSymb$} \\
        \eeq & \lam{\pSt}{\int_\uIval \sup_i \ex_i (\pStUpdate{\st}{\pVar}{\xi})  \, d\lebmes(\xi) } \tag{$\sup$ defined pointwise} \\
        \eeq & \lam{\pSt}{\sup_i \int_\uIval \ex_i (\pStUpdate{\st}{\pVar}{\xi}) \, d\lebmes(\xi)} \tag{Monotone Convergence Theorem, e.g.,~\cite[Thm.\ 12.1]{schilling2017measures}} \\
        \eeq & \sup_i \lam{\pSt}{\int_\uIval \ex_i (\pStUpdate{\st}{\pVar}{\xi}) \, d\lebmes(\xi)}  \tag{$\sup$ defined pointwise} \\
        \eeq & \sup_i \wp{\UNIFASSIGN{\pVar}}{\ex_i} \tag{definition of $\wpSymb$}
    \end{align*}
    Note the application of the the Monotone Convergence Theorem, which ensures that for a non-decreasing sequence of measurable $\exNonNegReals$-valued functions it is possible to switch the supremum over the sequence and the Lebesgue integral. Co-continuity for $\wlpSymb$ follows by expressing $\inf_i \exb_i = 1 - \sup_i (1-\exb_i)$ and observing that the functions $(1-\exb_i) \in \bexps$ are a non-decreasing sequence, so Monotone Converge Theorem can be applied to them.

    \inductionCase{$\prog = \OBSERVE{\guard}$}
    Since we require the functions in $\guards$ to be Borel measurable, $\iv{\guard}$ is Borel measurable for any $\guard \in \guards$. Measurability of $\wpSymb$ and $\wlpSymb$ then follows from the fact that pointwise product of measurable functions is measurable. 
    For $\omega$-continuity we have:
    \begin{align*}
        & \wp{\OBSERVE{\guard}}{\sup_i \ex_i} \\ 
        & \eeq \iv{\guard} \cdot \sup_i \ex_i \tag{definition of $\wpSymb$} \\
        & \eeq \sup_i \left( \iv{\guard} \cdot \ex_i \right) \tag{$\sup$ defined pointwise} \\
        & \eeq \sup_i \wp{\OBSERVE{\guard}}{\ex_i} \tag{Definition of $\wpSymb$}
    \end{align*}
    where, again, we are using the properties of the supremum taken with respect to pointwise ordering as in \cite{DBLP:conf/setss/SzymczakK19}.  The same holds for $\omega$-cocontinuity of $\wlpSymb$ using the properties of the infimum taken with respect to pointwise ordering.

    \inductionCase{$\prog = \ITE{\guard}{\prog_1}{\prog_2}$}
    By I.H.\ we know that $\wp{\prog_1}{\ex}$ and $\wp{\prog_2}{\ex}$ are measurable and $\omega$-continuous. Then, measurability of $\wp{\ITE{\guard}{\prog_1}{\prog_2}}{\ex} = \iv{\guard} \cdot \wp{\prog_1}{\ex} + \iv{\neg \guard} \cdot \wp{\prog_2}{\ex}$ follows from the fact that pointwise sum and product of measurable functions is measurable.        
    For $\omega$-continuity we have:
    \begin{align*}
        & \wp{\ITE{\guard}{\prog_1}{\prog_2}}{\sup_i \ex_i} \\
        & \eeq \iv{\guard} \cdot \wp{\prog_1}{\sup_i \ex_i} + \iv{\neg \guard} \cdot \wp{\prog_2}{\sup_i \ex_i}  \tag{definition of $\wpSymb$} \\
        & \eeq \iv{\guard}  \cdot \sup_i \wp{\prog_1}{\ex_i} + \iv{\neg \guard} \cdot \sup_i \wp{\prog_2}{\ex_i} \tag{I.H.}  \\
        & \eeq \sup_i \iv{\guard}  \cdot \wp{\prog_1}{\ex_i} + \sup_i \iv{\neg \guard} \cdot  \wp{\prog_2}{\ex_i} \tag{pointwise ordering}  \\
        & \ggeq \sup_i \iv{\guard}  \cdot \wp{\prog_1}{\ex_i} + \iv{\neg \guard} \cdot  \wp{\prog_2}{\ex_i} \tag{properties of $\sup$}
    \end{align*}
    To prove that inverse inequality holds, we proceed by contradiction, supposing $>$ holds. Then, for some $i_1, i_2 \in \nats$ it holds:
    $$  \iv{\guard}  \cdot \wp{\prog_1}{\ex_{i_1}} + \iv{\neg \guard} \cdot  \wp{\prog_2}{\ex_{i_2}}  > \sup_i \iv{\guard} \cdot \wp{\prog_1}{\ex_i} + \iv{\neg \guard} \cdot \wp{\prog_2}{\ex_i}.$$
    However, since $\ex_i$ is a $\omega$-chain we can take $k \in \nats$ such that $k > i_i$ and $k > i_2$ for which we have
    \begin{align*}
        & \iv{\guard}  \cdot \wp{\prog_1}{\ex_k} + \iv{\neg \guard} \cdot  \wp{\prog_2}{\ex_k} \\
        & \ge \iv{\guard}  \cdot \wp{\prog_1}{\ex_{i_1}} + \iv{\neg \guard} \cdot  \wp{\prog_2}{\ex_{i_2}} \tag{monotonicity of $\wpTrans{\prog_1}, \wpTrans{\prog_2}$} \\
        & > \sup_i \iv{\guard}  \cdot \wp{\prog_1}{\ex_i} + \iv{\neg \guard} \cdot  \wp{\prog_2}{\ex_i} \tag{hypothesis} 
    \end{align*}
    which is a contradiction since
    $$\iv{\guard}  \cdot \wp{\prog_1}{\ex_k} + \iv{\neg \guard} \cdot  \wp{\prog_2}{\ex_k} \le \sup_i \iv{\guard}  \cdot \wp{\prog_1}{\ex_i} + \iv{\neg \guard} \cdot  \wp{\prog_2}{\ex_i}.$$
    For $\wlpSymb$ both measurability and $\omega$-cocontinuity can be proved using the same arguments.

    \inductionCase{$\prog = \PCHOICE{\prog_1}{\prob}{\prog_2}$}
    As in the previous case, replacing $\iv{\guard}$ with $\prob$ and $\iv{\neg \guard}$ with $1 - \prob$.

    \inductionCase{$\prog = \SEQ{\prog_1}{\prog_2}$}
    Measurability of $\wp{\SEQ{\prog_1}{\prog_2}}{\ex}$ and $\wlp{\SEQ{\prog_1}{\prog_2}}{\exb}$ follows immediately from the I.H.. 
    For $\omega$-continuity we have:
    \begin{align*}
        & \wp{\SEQ{\prog_1}{\prog_2}}{\sup_i \ex_i} \\
        & \eeq \wp{\prog_1}{\wp{\prog_2}{\sup_i \ex_i}} \tag{definition of $\wpSymb$} \\
        & \eeq \wp{\prog_1}{\sup_i \wp{\prog_2}{\ex_i}} \tag{I.H.\ on $\wpTrans{\prog_2}$} \\
        & \eeq \sup_i \wp{\prog_1}{\wp{\prog_2}{\ex_i}} \tag{I.H.\ on $\wpTrans{\prog_1}$} \\
        & \eeq \sup_i \wp{\SEQ{\prog_1}{\prog_2}}{\ex_i} \tag{definition of $\wpSymb$}
    \end{align*}
    where $\wp{\prog_1}{\sup_i \wp{\prog_2}{\ex_i}}$ is well-defined since by I.H. $\wp{\prog_2}{\ex_i}$ is measurable for every $i$ and therefore $\sup_i \wp{\prog_2}{\ex_i}$ is measurable as well.

    \inductionCase{$\prog = \WHILE{\guard}{\progBody}$}
    \begin{align*}
        & \wp{\prog}{\sup_i \ex_i} \\
        & = \slfp{\lam{g}{\exlfp{\progBody}{\sup_i \ex_i}{\guard}(g)}} = \tag{definition of $\wpSymb$}\\
        & = \sup_n \left( \exlfp{\progBody}{\sup_i \ex_i}{\guard} \right)^n(\zerofun) = \tag{I.H. continuity of $\wpTrans{\progBody}$}  \\
        & = \sup_n \left( \sup_i \exlfp{\progBody}{\ex_i}{\guard} \right)^n(\zerofun) = \tag{continuity of $\lambda \ex . \iv{\neg \guard}\ex + \iv{\guard}\wp{\progBody}{\exb}$} \\
        & = \sup_n \sup_i \left(\exlfp{\progBody}{\ex_i}{\guard} \right)^n(\zerofun) = \tag{$*$} \\
        & = \sup_i \sup_n (\exlfp{\progBody}{\ex_i}{\guard})^n(\zerofun) = \tag{$**$} \\
        & = \sup_i \slfp{\lam{g}{\exlfp{\progBody}{\ex_i}{\guard}(g)}} = \tag{I.H. continuity of $\wpTrans{\progBody}$} \\
        &= \sup_i \wp{\prog}{\ex_i}. \tag{definition of $\wpSymb$}
    \end{align*}
    where $(*)$ is justified by proving with an inner induction on $n$ that:
    $$ \left(\sup_i \exlfp{\prog}{\ex_i}{\guard} \right)^n(\zerofun) = \sup_i \left( \exlfp{\prog}{\ex_i}{\guard}\right)^n(\zerofun).$$
    In fact, for $n=1$ the equality holds trivially, and assuming it holds for $n-1$ we get:
    \begin{align*}
        \text{L.H.S.} & = \left( \sup_i \exlfp{\prog}{\ex_i}{\guard} \right)\left(\sup_j \exlfp{\prog}{\ex_j}{\guard} \right)^{n-1}(\zerofun) =  \\
        & = \left( \sup_i \exlfp{\prog}{\ex_i}{\guard} \right)\left(\sup_j \left(\exlfp{\prog}{\ex_j}{\guard} \right)^{n-1}(\zerofun)\right) = \tag{case $n-1$} \\
        & = \sup_j \left( \sup_i \exlfp{\prog}{\ex_i}{\guard} \right)\left(\left(\exlfp{\prog}{\ex_j}{\guard} \right)^{n-1}(\zerofun)\right) = \tag{outer I.H.} \\
        & = \sup_j \sup_i \left( \exlfp{\prog}{\ex_i}{\guard} \right)\left(\left(\exlfp{\prog}{\ex_j}{\guard} \right)^{n-1}(\zerofun)\right) = \tag{case $n=1$}\\
        & = \sup_i \left( \exlfp{\prog}{\ex_i}{\guard} \right)\left(\left(\exlfp{\prog}{\ex_i}{\guard} \right)^{n-1}(\zerofun)\right) = \tag{Lemma \ref{thm:oneVsTwoSups}}  \\
        & = \sup_i \left(\left(\exlfp{\prog}{\ex_i}{\guard} \right)^{n}(\zerofun)\right) = \text{R.H.S.}.
    \end{align*}
    For $(**)$ switching the order of suprema is justified by the fact that both terms exists.
    
    To prove co-continuity of $\wlpSymb$ we can use the same arguments: $(*)$ and $(**)$ still hold if $\sup$ is replaced by $\inf$, since Lemma \ref{thm:oneVsTwoSups} still holds and the order of infima can be switched if both terms of the equality exist.
\end{proof}
\subsection{Proof of \Cref{thm:monriemann}}
\label{proof:monriemann}
\monriemann*
\begin{proof}
    We will show that the statement holds for $\lwpTrans{\N}{\prog}$ as the proof in the remaining cases is analogous. We proceed by structural induction on $\prog$ . For $\prog \in \{ \SKIP, \DIVERGE, \ASSIGN{\pVar}{\aExp}, \OBSERVE{\guard} \}$ $\lwpTrans{\N}{\prog}$ is equal to $\wpTrans{\prog}$ so the conclusion holds by monotonicity of the latter \cite{DBLP:series/mcs/McIverM05}. 
    
    For $\prog = \UNIFASSIGN{\pVar}$ suppose $\ex \eleq \exb$, then:
    \begin{align*}
        & \lwp{\N}{\prog}{\ex} \\
        \eeq & \frac{1}{\N} \sum_{i=0}^{\N-1} \inf_{\xi \in [\frac{i}{\N}, \frac{i+1}{\N}]} \exSubs{\ex}{\pVar}{\xi} \tag{definition of $\lwpSymb{\N}$}  \\
        \eleq & \frac{1}{\N} \sum_{i=0}^{\N-1} \inf_{\xi \in [\frac{i}{\N}, \frac{i+1}{\N}]} \exSubs{\exb}{\pVar}{\xi} \tag{$\ex \eleq \exb$} \\
        \eeq & \lwp{\N}{\prog}{\ex} \tag{definition of $\lwpSymb{\N}$}
    \end{align*}
    
    For $\prog \in \{ \ITE{\guard}{\prog_1}{\prog_2}, \PCHOICE{\prob}{\prog_1}{\prog_2}, \SEQ{\prog_1}{\prog_2} \}$ follows by I.H. on $\prog_1$ and $\prog_2$.
    
    For $\prog = \WHILE{\guard}{\progBody}$ consider the characteristic functions $\charfunuwp{\N}{\prog}{\ex}$ and $\charfunuwp{\N}{\prog}{\exb}$.
    Let us denote by $Y_\ex$ and $Y_\exb$ the least fixed points of $\charfunuwp{\N}{\prog}{\ex}$ and $\charfunuwp{\N}{\prog}{\exb}$ respectively. From I.H. it follows that $\charfunuwp{\N}{\prog}{\ex}(Y) \eleq \charfunuwp{\N}{\prog}{\exb}(Y)$ for all $Y$. Then, $ \charfunuwp{\N}{\prog}{\ex}(Y_\exb) \eleq \charfunuwp{\N}{\prog}{\exb}(Y_\exb) = Y_\exb.$ By Theorem \ref{thm:knasterTarski} the previous implies $Y_\ex \eleq Y_\exb$.
\end{proof}

\subsection{Counter-example to $\omega$-continuity of $\lwpSymb{\N}$}
\label{proof:lwpNotOmegaCont}

Consider $\prog = \UNIFASSIGN{\pVar}$ and the sequence of functions $\ex_i = \onefun - \sum_{k=i}^\infty \iv{\pVar = \frac{1}{k}}$. Then $\sup_i \ex_i = \onefun$ and therefore $\lwpTrans{\N}{\prog}\left( \sup_i \ex_i \right) = \onefun. $
On the other hand,
\begin{align*}
    & \sup_i \lwpTrans{\N}{\prog}(\ex_i) \\
    \eeq & \sup_i \frac{1}{\N} \sum_{j=0}^{\N -1} \inf_{\xi \in \left[ \frac{j}{\N}, \frac{j+1}{\N} \right]} \exSubs{\ex_i}{\pVar}{\xi} \tag{definition of $\lwpSymb{\N}$} \\
    \eeq & \sup_i \left( \frac{1}{\N} \inf_{\xi \in \left[ \frac{0}{\N}, \frac{1}{\N} \right]} \exSubs{\ex_i}{\pVar}{\xi} + \frac{1}{\N} \sum_{j=0}^{\N -1} \inf_{\xi \in \left[ \frac{j}{\N}, \frac{j+1}{\N} \right]} \exSubs{\ex_i}{\pVar}{\xi} \right) \tag{splitting the summation} \\
    \le & \sup_i \left( \frac{\N - 1}{\N} \right) \tag{$\inf_{\xi \in \left[ \frac{0}{\N}, \frac{1}{\N} \right]} \exSubs{\ex_i}{\pVar}{\xi} = 0 \, \forall i$} \\
    \eeq & \frac{\N-1}{\N} \tag{sup of a constant sequence}
\end{align*}
%\end{proof}

\subsection{Proof of \Cref{thm:soundness}}
\label{proof:soundness}
\soundness*
\begin{proof}
    We prove the statement for $\lwpTrans{\N}{\prog}$ since the remaining cases are analogous.
    
    We proceed by structural induction on $\prog$.
    
    For $\prog \in \{ \SKIP, \DIVERGE, \ASSIGN{\pVar}{\aExp}, \OBSERVE{\guard} \}$ the definitions of $\lwpTrans{\N}{\prog}$ and $\wpTrans{\prog}$ coincide so pointwise inequality holds trivially.
    
    For $\prog = \UNIFASSIGN{\pVar}$ we have
    \begin{align*}
        & \lwp{\N}{\prog}{\ex} \\
        \eeq & \frac{1}{\N} \sum_{i=0}^{\N-1} \inf_{\xi \in \left[ \frac{i}{\N}, \frac{i+1}{\N} \right]} \exSubs{\ex}{\pVar}{\xi} \tag{definition of $\lwpSymb{\N}$} \\
        \eeq & \sum_{i=0}^{\N -1} \int_{\frac{i}{\N}}^{\frac{i+1}{\N}}  \inf_{\xi \in \left[ \frac{i}{\N}, \frac{i+1}{\N} \right]} \exSubs{\ex}{\pVar}{\xi} \, d\lebmes(\xi) \tag{integral of constant function} \\
        \eleq & \sum_{i=0}^{\N - 1} \int_{\frac{i}{\N}}^{\frac{i+1}{\N}} \exSubs{\ex}{\pVar}{\xi} \, d\lebmes{\xi} \tag{monotonicity of integral} \\
        \eeq & \int_0^1 \exSubs{\ex}{\pVar}{\xi} \, d\lebmes(\xi) \tag{elementary property of integral} \\
        \eeq & \wp{\prog}{\ex}. \tag{definition of $\wpSymb$}
    \end{align*}

    For $\prog \in \{ \ITE{\guard}{\prog_1}{\prog_2}, \PCHOICE{\prob}{\prog_1}{\prog_2} \}$ conclusion follows applying the I.H. on $\prog_1$ and $\prog_2$.
    
    For $\prog = \SEQ{\prog_1}{\prog_2}$ we have
    \begin{align*}
        & \lwp{\N}{\prog}{\ex} = \\
        \eeq & \lwp{\N}{\prog_1}{\lwp{\N}{\prog_2}{\ex}} \tag{definition of $\lwpSymb{\N}$} \\
        \eleq \, & \lwp{\N}{\prog_1}{\wp{\prog_2}{\ex}} \tag{I.H. on $\prog_2$ + Lemma \ref{thm:monriemann}} \\
        \eleq \, & \wp{\prog_1}{\wp{\prog_2}{\ex}} \tag{I.H. on $\prog_1$} \\
        \eeq & \wp{\prog}{\ex}. \tag{definition of $\wpSymb$}
    \end{align*}
    
    Finally, for $\prog = \WHILE{\guard}{\progBody}$ consider the characteristic functions $\charfunwp{\prog}{\ex}$ and $\charfunlwp{\N}{\prog}{\ex}$.
    Let $Y^*$ and $Y^*_\N$ be the least fixed points of $\charfunwp{\prog}{\ex}$ and $\charfunlwp{\N}{\prog}{\ex}$ respectively.
    From I.H. on $\progBody$ it follows $\charfunlwp{\N}{\prog}{\ex}(Y) \eleq \charfunwp{\prog}{\ex}(Y)$ for all $Y$.
    Then, $\charfunlwp{\N}{\prog}{\ex}(Y^*) \eleq \charfunwp{\prog}{\ex}(Y^*) = Y^*$ and from Theorem \ref{thm:knasterTarski} it follows $Y^*_\N \eleq Y^*$.
\end{proof}

\subsection{Proof of \Cref{thm:powerTwoMonotonic}}
\label{proof:powerTwoMonotonic}

\begin{restatable}{lemma}{powerTwoMonotonic}
    \label{thm:powerTwoMonotonic}
    For all programs $\prog \in \pWhile$ and post-expectations $\ex \in \exps$, $\lwp{2^\N}{\prog}{\ex}$ and $\uwp{2^\N}{\prog}{\ex}$ are non-increasing (non-decreasing, respectively) sequences in $\N \geq 1$.
\end{restatable}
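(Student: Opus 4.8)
The plan is to establish both claims simultaneously by structural induction on $\prog$, reducing every composite construct to the behaviour of the Riemann transformers on its immediate subprograms. The single analytic input is the observation that the dyadic partitions underlying $\lwpSymb{\N}$ and $\uwpSymb{\N}$ at successive powers of two form a refinement chain: the partition of $\uIval$ into $2^{\N+1}$ equal cells refines the one into $2^\N$ equal cells, since every breakpoint $i/2^\N$ coincides with the breakpoint $2i/2^{\N+1}$. This is exactly the feature that fails for generic consecutive $\N$, and it is why the monotone behaviour can only be expected along the $2^\N$ sub-sequence.

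For the base case $\prog = \UNIFASSIGN{\pVar}$, the values $\lwp{2^\N}{\prog}{\ex}$ and $\uwp{2^\N}{\prog}{\ex}$ are precisely the lower and upper Riemann sums of $\lam{\xi}{\exSubs{\ex}{\pVar}{\xi}}$ with respect to this dyadic partition. Applying the refinement monotonicity of Riemann sums (\Cref{thm:partitionRefine}) along the chain then yields, for every fixed $\ex \in \exps$, that $\lwp{2^\N}{\prog}{\ex}$ is non-increasing and $\uwp{2^\N}{\prog}{\ex}$ is non-decreasing in $\N$. The remaining atomic constructs $\SKIP$, $\DIVERGE$, $\ASSIGN{\pVar}{\aExp}$, and $\OBSERVE{\guard}$ do not depend on $\N$, so their sequences are constant and monotonicity is immediate.

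For the composite constructs I would combine the induction hypothesis, which provides $\N$-monotonicity of the subprogram sequences uniformly in the post-expectation, with the argument-monotonicity of the transformers (\Cref{thm:monriemann}). For $\SEQ{\prog_1}{\prog_2}$, for instance, I would compare $\somewp{\prog_1}{\somewp{\prog_2}{\ex}}$ across consecutive indices by first invoking the induction hypothesis for $\prog_1$ at a fixed argument and then the $\eleq$-monotonicity of $\somewp{\prog_1}{\cdot}$ applied to the induction hypothesis for $\prog_2$, chaining the two inequalities. The cases $\ITE{\guard}{\prog_1}{\prog_2}$ and $\PCHOICE{\prog_1}{\prob}{\prog_2}$ reduce directly, since they combine $\N$-monotone pieces via $\N$-independent, $\eleq$-monotone operations.

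The main obstacle is the loop $\prog = \WHILE{\guard}{\progBody}$, where $\somewp{\prog}{\ex}$ is the least fixed point of the characteristic function $\charfunlwp{2^\N}{\prog}{\ex}$ (and analogously $\charfunuwp{2^\N}{\prog}{\ex}$ for the upper transformer) on the complete lattice $\po{\exps}{\eleq}$. I would compare these least fixed points across consecutive indices by Park induction via Knaster--Tarski (\Cref{thm:knasterTarski}): evaluating the induction hypothesis for $\progBody$ at the fixed point belonging to one index reduces the task to a one-step comparison between $\charfunlwp{2^\N}{\prog}{\ex}$ and $\charfunlwp{2^{\N+1}}{\prog}{\ex}$, exhibiting the fixed point at one index as a pre- or post-fixed point of the characteristic function at the neighbouring index, whence Park induction transports the inequality to the least fixed points. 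The delicate point is to verify that the monotonicity direction delivered by the induction hypothesis lines up with the pre-/post-fixed-point inequality demanded by \Cref{thm:knasterTarski}, so that the comparison propagates through the least fixed point in the intended direction rather than against it.
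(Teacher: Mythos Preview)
Your overall strategy is exactly the paper's: structural induction on $\prog$, with the dyadic refinement argument via \Cref{thm:partitionRefine} for $\UNIFASSIGN{\pVar}$, and the composite cases handled by combining the induction hypothesis with argument-monotonicity (\Cref{thm:monriemann}). The paper only spells out sequential composition and leaves the loop implicit; your explicit Park-induction treatment of $\WHILE{\guard}{\progBody}$ is correct and is the natural way to discharge that case.

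There is, however, a direction error you have inherited from a typo in the lemma statement. \Cref{thm:partitionRefine} says that refining a partition \emph{increases} lower sums and \emph{decreases} upper sums: if $\partition \partitionRefines \partitionb$ then $\lowerSum{\fun}{\partitionb} \leq \lowerSum{\fun}{\partition}$ and $\upperSum{\fun}{\partition} \leq \upperSum{\fun}{\partitionb}$. Since the $2^{\N+1}$-partition refines the $2^\N$-partition, this yields $\lwp{2^\N}{\UNIFASSIGN{\pVar}}{\ex} \eleq \lwp{2^{\N+1}}{\UNIFASSIGN{\pVar}}{\ex}$, i.e.\ $\lwp{2^\N}{\prog}{\ex}$ is non-\emph{decreasing} and $\uwp{2^\N}{\prog}{\ex}$ is non-\emph{increasing} in $\N$, the opposite of what the statement (and your proposal) says. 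The paper's own proof actually establishes $\lwp{2^\N}{\prog}{\ex} \eleq \lwp{2^{\N+1}}{\prog}{\ex}$, and the lemma is later invoked with these corrected directions. You should have caught this when applying \Cref{thm:partitionRefine}; with the directions fixed, your argument goes through unchanged, and in particular the Park-induction step for loops aligns correctly (the lfp at index $\N{+}1$ becomes a prefixed point of the characteristic function at index $\N$).
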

\begin{proof}
    Let $\N \geq 1$ be fixed.
    We show by induction on the structure of $\prog$ that for all $\ex \in \exps$ it holds that $\lwp{2^\N}{\prog}{\ex} \leq \lwp{2^{\N+1}}{\prog}{\ex}$.
    The base case $\prog = \UNIFASSIGN{\pVar}$ follows with \Cref{thm:partitionRefine} because the partition $0 < \frac{1}{2^\N} < \frac{2}{2^\N} \ldots  < 1$ is refined by $0 < \frac{1}{2^{\N+1}} < \frac{2}{2^{\N+1}} = \frac{1}{2^\N} < \ldots < 1$.
    For the other base cases there is nothing to show.
    The inductive cases are straightforward using monotonicity of $\lwpTrans{2^\N}{\prog}$.
    We exemplify this for sequential composition:
    \begin{align*}
        & \lwp{2^\N}{\SEQ{\prog_1}{\prog_2}}{\ex} \\
        \eeq & \lwp{2^\N}{\prog_1}{\lwp{2^\N}{\prog_2}{\ex}} \tag{definition of $\lwpSymb{2^\N}$} \\
        \eeleq & \lwp{2^\N}{\prog_1}{\lwp{2^{\N+1}}{\prog_2}{\ex}} \tag{I.H. on $\prog_2$ and monotonicity of $\lwpTrans{2^\N}{\prog_1}$ (\Cref{thm:monriemann})} \\
        \eeleq & \lwp{2^{\N+1}}{\prog_1}{\lwp{2^{\N+1}}{\prog_2}{\ex}} \tag{I.H. on $\prog_1$} \\
        \eeq & \lwp{2^{\N+1}}{\SEQ{\prog_1}{\prog_2}}{\ex} \tag{definition of $\lwpSymb{2^{\N+1}}$} ~.
    \end{align*}
    The argument for $\uwpSymb{}$ is similar.
\end{proof}

\subsection{Proof of \Cref{thm:wpLoopFreeCocontLocBounded}}
\label{proof:wpLoopFreeCocontLocBounded}

Let $\expsmeaslb$ be the set of measurable locally bounded expectations.

\begin{restatable}{lemma}{wpLoopFreeCocontLocBounded}
    \label{thm:wpLoopFreeCocontLocBounded}
    Let $\aExps$ be a set of locally bounded measurable arithmetic expressions (\enquote{right hand sides}) and $\guards$ be an arbitrary set of guards.
    Let $\prog \in \pWhileWith{\aExps}{\guards}$ be loop-free.
    Then $\wpTrans{\prog}$ preserves $\expsmeaslb$.
    Moreover, $\wpTrans{\prog}$ preserves limits of non-increasing sequences from $\expsmeaslb$.
\end{restatable}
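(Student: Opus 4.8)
The plan is to prove both assertions simultaneously by structural induction on the loop-free program $\prog$. Throughout, fix a non-increasing sequence $\ex_0 \egeq \ex_1 \egeq \cdots$ in $\expsmeaslb$; note that $\inf_{i \in \nats} \ex_i$ is again in $\expsmeaslb$, since it is a countable infimum of measurable functions and is pointwise dominated by $\ex_0$ (and every $\ex \in \expsmeaslb$ is finite-valued, applying local boundedness to the singleton $\{\st\}$). Measurability of $\wp{\prog}{\ex}$ for measurable $\ex$ is already guaranteed by \Cref{thm:wpWlpWellDefinedAndContinuous}, so the only genuinely new content in the first claim is \emph{local boundedness} of $\wp{\prog}{\ex}$, while the second claim amounts to the commutation $\wp{\prog}{\inf_i \ex_i} = \inf_i \wp{\prog}{\ex_i}$; here I will freely use that $(\wp{\prog}{\ex_i})_{i}$ is itself non-increasing, which follows from monotonicity of $\wpSymb$ (a consequence of $\omega$-continuity, \Cref{thm:wpWlpWellDefinedAndContinuous}).

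The base cases are easy. $\SKIP$ is the identity and $\DIVERGE$ is the constant $0$, so both claims are immediate. For $\ASSIGN{\pVar}{\aExp}$ we have $\wp{\ASSIGN{\pVar}{\aExp}}{\ex} = \ex \circ U$ with $U(\st) = \pStUpdate{\st}{\pVar}{\aExp(\st)}$; given compact $\compactSet \subseteq \pStates$, the image $U(\compactSet)$ is bounded (the coordinates other than $\pVar$ remain inside the bounded set $\compactSet$, and the $\pVar$-coordinate is bounded because $\aExp$ is locally bounded on $\compactSet$), so its closure is a compact subset of $\pStates = \nnReals^\pVars$ on which $\ex$ is bounded; commutation with $\inf$ is trivial since composition with $U$ acts pointwise. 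The cases $\OBSERVE{\guard}$, $\ITE{\guard}{\prog_1}{\prog_2}$, and $\PCHOICE{\prog_1}{\prob}{\prog_2}$ reduce, via the inductive hypothesis, to the fact that the combinations $\iv{\guard}\cdot(\cdot) + \iv{\neg\guard}\cdot(\cdot)$ and $\prob\cdot(\cdot) + (1-\prob)\cdot(\cdot)$ have non-negative weights summing to $1$: they preserve pointwise upper bounds (hence local boundedness), and they commute with pointwise infima of non-increasing sequences of finite-valued functions --- for $\ITE$ because at each state only one branch contributes, for $\PCHOICE$ because $\inf$ of a non-increasing real sequence is its limit and limits are linear in non-negative coefficients. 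Finally, $\SEQ{\prog_1}{\prog_2}$ follows by composing the inductive hypotheses: $\wp{\prog_2}{\ex} \in \expsmeaslb$, and the sequence $(\wp{\prog_2}{\ex_i})_i$ is non-increasing and lies in $\expsmeaslb$, so applying the I.H. for $\prog_1$ yields both claims.

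The one case requiring real work is $\UNIFASSIGN{\pVar}$, where $\wp{\UNIFASSIGN{\pVar}}{\ex}(\st) = \int_\uIval \ex(\pStUpdate{\st}{\pVar}{\xi})\,d\lebmes(\xi)$. For local boundedness, fix compact $\compactSet$: the set $\{\pStUpdate{\st}{\pVar}{\xi} : \st \in \compactSet,\ \xi \in \uIval\}$ is bounded, so $\ex$ is bounded by some $M < \infty$ on its closure, whence the integral is bounded by $M$ on $\compactSet$. For the commutation, fix a state $\st$, set $\compactSet' = \overline{\{\pStUpdate{\st}{\pVar}{\xi} : \xi \in \uIval\}}$ and $M = \sup_{\st' \in \compactSet'} \ex_0(\st') < \infty$; then $\xi \mapsto M - \ex_i(\pStUpdate{\st}{\pVar}{\xi})$ is a non-decreasing sequence of non-negative measurable functions on $\uIval$, so the Monotone Convergence Theorem gives
\[
  M - \int_\uIval \inf_i \ex_i(\pStUpdate{\st}{\pVar}{\xi})\,d\lebmes(\xi)
  \eeq
  M - \inf_i \int_\uIval \ex_i(\pStUpdate{\st}{\pVar}{\xi})\,d\lebmes(\xi),
\]
i.e. $\wp{\UNIFASSIGN{\pVar}}{\inf_i \ex_i}(\st) = \inf_i \wp{\UNIFASSIGN{\pVar}}{\ex_i}(\st)$. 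I expect this step --- extracting from local boundedness an $\lebmes$-integrable (here, constant) dominating function on the relevant compact ``slice'', so that monotone/dominated convergence applies even though the $\ex_i$ need not be $1$-bounded --- to be the crux; the remaining bookkeeping mirrors the corresponding cases in the proof of \Cref{thm:wpWlpWellDefinedAndContinuous}.
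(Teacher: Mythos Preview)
Your proposal is correct and follows essentially the same structural-induction approach as the paper, with the same crux: in the $\UNIFASSIGN{\pVar}$ case, local boundedness furnishes a finite bound on the relevant ``slice'' so that a monotone/dominated convergence argument applies to the decreasing sequence. Your explicit $M - \ex_i$ reformulation to invoke the standard MCT is exactly the content of the decreasing-sequence variant the paper cites, and your pointwise treatment of $\ITE$ (``only one branch contributes at each state'') is if anything cleaner than the paper's reference back to the contradiction argument in \Cref{thm:wpWlpWellDefinedAndContinuous}.
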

\begin{proof}
    By induction on the structure of $\prog$.
    Let $\ex \in \expsmeaslb$ be locally bounded and let $\ex_0 \egeq \ex_1 \egeq \ldots$ be a non-increasing sequence of expectations from $\expsmeaslb$.
    
    \inductionCase{$\prog = \SKIP$}
    Trivial.
    
    \inductionCase{$\prog = \DIVERGE$}
    Trivial.
    
    \inductionCase{$\prog = \OBSERVE{\guard}$}
    Straightforward.
    
    \inductionCase{$\prog = \ASSIGN{\pVar}{\aExp}$}
    We have to show that $\wp{\ASSIGN{\pVar}{\aExp}}{\ex} = \exSubsGen = \lam{\pSt}{\ex(\pStUpdate{\pSt}{\pVar}{\aExp(\pSt)})}$ is locally bounded, i.e.,
    that
    $\sup_{\pSt \in \compactSet} \ex(\pStUpdate{\pSt}{\pVar}{\aExp(\pSt)}) < \infty$
    for all compact $\compactSet \subseteq \pStates$.
    
    Let us fix a compact $\compactSet \subseteq \pStates$.
    Since $\aExp$ is locally bounded by assumption, we have $\aExp(\compactSet) \subseteq [0,b]$ for some $b \in \nonNegReals$.
    Now let $\compactSet' \supseteq \compactSet$ be a compact set such that for all $\pSt \in \compactSet$ and $\xi \in [0,b]$ we have $\pStUpdate{\pSt}{\pVar}{\xi} \in \compactSet'$. To see that $\compactSet'$ exists consider the projection of $\compactSet$ on $\mathbb{R}^{\pVars \setminus \{ \pVar \}}$, denoted by $\pi(\compactSet)$. By Theorem 39 in \cite{pugh2002real} $\pi(\compactSet)$ is compact. Then set $\compactSet' = \pi(\compactSet) \times [0,b]$. By Theorem 31 in \cite{pugh2002real} $\compactSet'$ is compact and $\pStUpdate{\pSt}{\pVar}{\xi} \in \compactSet'$ for all $\pSt \in \compactSet$ and $\xi \in [0,b]$.
    
    Then we have
    \[
    \{\ex(\pStUpdate{\pSt}{\pVar}{\aExp(\pSt)}) \mid \pSt \in \compactSet\}
    ~\subseteq~
    \{\ex(\pStUpdate{\pSt}{\pVar}{\xi}) \mid \pSt \in \compactSet, \xi \in [0,b]\}
    ~\subseteq~
    \{\ex(\pSt) \mid \pSt \in \compactSet'\}
    \]
    and thus $\sup_{\pSt \in \compactSet} \ex(\pStUpdate{\pSt}{\pVar}{\aExp(\pSt)}) \leq \sup_{\pSt \in \compactSet'} \ex(\pSt) < \infty$ because $\ex$ is locally bounded.
    
    \inductionCase{$\prog = \UNIFASSIGN{\pVar}$}
    We first show that
    \[
    \wp{\UNIFASSIGN{\pVar}}{\ex} = \lam{\st}{\int_\uIval \ex(\pStUpdate{\st}{\pVar}{\xi}) \,d\lebmes(\xi)}
    \]
    remains locally bounded.
    To this end let $\compactSet \subseteq \pStates$ be compact.
    Recall that we have to show that $\sup \wp{\UNIFASSIGN{\pVar}}{\ex}(\compactSet) < \infty$.
    Let $\compactSet' \supseteq \compactSet$ be a compact set such that for all $\pSt \in \compactSet$ and $\xi \in \uIval$ we have $\pStUpdate{\pSt}{\pVar}{\xi} \in \compactSet'$, where $\compactSet'$ can be constructed as above.
    Since $\ex$ is locally bounded we have $\sup \ex(\compactSet') = b \in \nonNegReals$.
    Hence for all $\pSt \in \compactSet'$,
    \begin{align*}
        & \int_\uIval \ex(\pStUpdate{\st}{\pVar}{\xi}) \,d\lebmes(\xi) \\
        \lleq & \int_\uIval \sup \ex(\compactSet') \,d\lebmes(\xi) \\
        \eeq & \int_\uIval b \,d\lebmes(\xi) \\
        \eeq & b 
    \end{align*}
    It follows that $\sup \wp{\UNIFASSIGN{\pVar}}{\ex}(\compactSet) \leq \sup \wp{\UNIFASSIGN{\pVar}}{\ex}(\compactSet') \leq b < \infty$ and hence the claim.
    
    The fact that $\wpSymb$ preserve infima of non-increasing sequences follows from the properties of pointwise $\inf$.
    
    The fact that $\wpTrans{\UNIFASSIGN{\pVar}}$ preserves infima of non-increasing sequences $( \ex_i )$ follows from an extended version of the Monotone Convergence \cite[Theorem 12.1 (ii)]{schilling2017measures} adapted to non-increasing sequences of functions. Notably, the corollary requires that the integral of the first function in the sequence is smaller than $\infty$, but this is guaranteed by the restriction to locally bounded expectations, since for every $\pSt \in \pStates$ we can consider the compact set $\compactSet_\pSt = \{\pStUpdate{\pSt}{\pVar}{\xi} \mid \xi \in \uIval\}$, and since $\ex_0$ is locally bounded, we find that $\int_\uIval \ex_0(\pStUpdate{\st}{\pVar}{\xi}) \,d\lebmes(\xi) \leq \sup \compactSet_\pSt < \infty$.
    
    \inductionCase{$\prog = \ITE{\guard}{\prog_1}{\prog_2}$}
    Local boundedness follows by applying the I.H.\ to $\prog_1$ and $\prog_2$, while preservation of the limit can be proved using the same argument as in the proof \Cref{thm:wpWlpWellDefinedAndContinuous}.
    
    \inductionCase{$\prog = \PCHOICE{\prog_1}{\prob}{\prog_2}$}
    Local boundedness follows by applying the I.H.\ to $\prog_1$ and $\prog_2$, while preservation of the limit can be proved using the same argument as in the proof \Cref{thm:wpWlpWellDefinedAndContinuous}.
    
    \inductionCase{$\prog = \SEQ{\prog_1}{\prog_2}$}
    Since $\wp{\SEQ{\prog_1}{\prog_2}}{\ex} = \wp{\prog_1}{\wp{\prog_2}{\ex}}$ it follows immediately from the I.H.\ that $\wp{\prog_1}{\wp{\prog_2}{\ex}} \in \expsmeaslb$.
    
    To conclude the proof observe that
    \begin{align*}
        &\inf_{i \in \nats} \wp{\SEQ{\prog_1}{\prog_2}}{\ex_i} \\
        \eeq &\inf_{i \in \nats} \wp{\prog_1}{\wp{\prog_2}{\ex_i}} \\
        \eeq & \wp{\prog_1}{\inf_{i \in \nats} \wp{\prog_2}{\ex_i}} \tag{I.H.\ on $\prog_1$, using that $\wp{\prog_2}{\ex_i}$ is a non-decreasing sequence in $\expsmeaslb$ by I.H.\ on $\prog_2$} \\
        \eeq & \wp{\prog_1}{\wp{\prog_2}{\inf_{i \in \nats}  \ex_i}} \tag{I.H.\ on $\prog_2$} \\
        \eeq & \wp{\SEQ{\prog_1}{\prog_2}}{\inf_{i \in \nats}  \ex_i}
    \end{align*}
\end{proof}

\subsection{Proof of \Cref{thm:convLoopFree}}
\label{proof:convLoopFree}
\convLoopFree*
\begin{proof}
    In the rest of the proof we write  $\sup_\N$ and $\inf_\N$ instead of $\sup_{\N \geq 1}$ and $\inf_{\N \geq 1}$, respectively.
    We will show the following \emph{alternative} statement:
    For all loop-free $\prog \in \pWhileWith{\aExps}{\guards}$ and post-expectations $\ex \in \expsClass$:
    \eqref{eq:approxInClass} from above holds, and
    \begin{align}
        \wp{\prog}{\ex}
        \eeq
        \sup_\N \lwp{2^\N}{\prog}{\ex}
        \eeq
        \inf_\N \uwp{2^\N}{\prog}{\ex}
        ~.
        \tag{\ref{eq:convLoopFree}'}
        \label{eq:convLoopFreeAlt}
    \end{align}
    Note that the difference between \eqref{eq:convLoopFreeAlt} and $\eqref{eq:convLoopFree}$
    is that we consider the supremum over powers of two.
    This simplifies the situation as $\lwp{2^\N}{\prog}{\ex}$ and $\uwp{2^\N}{\prog}{\ex}$ are non-decreasing (non-increasing, respectively) sequences in $\N$ (see \Cref{thm:powerTwoMonotonic}), which allows us to apply \Cref{thm:oneVsTwoSups} at a crucial point in the proof.
    
    To show that the alternative statement implies the claim of \Cref{thm:convLoopFree} we argue as follows:
    By properties of suprema, $\sup_\N \lwp{2^\N}{\prog}{\ex} \leq \sup_\N \lwp{\N}{\prog}{\ex}$.
    We also have
    \begin{align*}
        & \sup_\N \lwp{\N}{\prog}{\ex} \\
        \lleq & \wp{\prog}{\ex} \tag{by \Cref{thm:soundness}} \\
        \eeq & \sup_\N \lwp{2^\N}{\prog}{\ex} \tag{by the alternative statement \eqref{eq:convLoopFreeAlt}} ~,
    \end{align*}
    and thus $\sup_\N \lwp{2^\N}{\prog}{\ex} = \sup_\N \lwp{\N}{\prog}{\ex}$.
    The argument for $\inf_\N \uwp{2^\N}{\prog}{\ex} = \inf_\N \uwp{\N}{\prog}{\ex}$ is analogous.
    
    The proof of the alternative statement \eqref{eq:convLoopFreeAlt} is by induction over the structure of $\prog$.
    In the following, we let $\ex \in \expsClass$ and $\N \in \nats$ be arbitrary.
    
    We consider the base cases first.
    For the base cases $\prog \in \{\SKIP, \DIVERGE, \ASSIGN{\pVar}{\aExp}, \OBSERVE{\guard} \}$ the definitions of $\lwpSymb \N$, $\uwpSymb \N$, and $\wpSymb$ coincide, so we only have to show that $\wp{\prog}{\ex} \in \expsClass$.

    \inductionCase{$\prog = \SKIP$}
    By definition, $\wp \SKIP \ex = \ex \in \expsClass$.
    
    \inductionCase{$\prog = \DIVERGE$}
    By definition, $\wp \DIVERGE \ex = 0$.
    By assumption, $0 \in \expsClass$.
    
    \inductionCase{$\prog = \ASSIGN{\pVar}{\aExp}$}
    By definition, $\wp{\ASSIGN{\pVar}{\aExp}}{f} = \exSubs{\ex}{\pVar}{\aExp}$.
    By assumption, $\exSubs{\ex}{\pVar}{\aExp} \in \expsClass$.
    
    \inductionCase{$\prog = \OBSERVE{\guard}$}
    By definition, $\wp{\OBSERVE{\guard}}{f} = \iv{\guard} \cdot \ex$.
    By assumption, $\iv{\guard} \cdot \ex + \iv{\neg\guard} \cdot 0 \in \expsClass$.
    
    \inductionCase{$\prog = \UNIFASSIGN{\pVar}$}
    This base case is more interesting and crucially relies on Riemann-integrability.
    We first show that $\lwp{\N}{\UNIFASSIGN{\pVar}}{\ex} \in \expsClass$.
    By definition of $\lwpSymb{\N}$,
    \begin{align*}
        \lwp{\N}{\UNIFASSIGN{\pVar}}{\ex}
        \eeq
        \frac{1}{\N} \sum_{i=0}^{\N-1} \inf_{\xi \in [\frac{i}{\N}, \frac{i+1}{\N}]} \exSubs{\ex}{\pVar}{\xi}
        ~.
    \end{align*}
    The right-hand side is a finite convex combination of suprema over subintervals of $\uIval$, hence it is in $\expsClass$ by assumption.
    Next we show
    $\sup_\N \lwp{2^\N}{\UNIFASSIGN{\pVar}}{\ex} = \wp{\UNIFASSIGN{\pVar}}{\ex}$:
    \begin{align*}
        & \sup_\N \lwp{2^\N}{\UNIFASSIGN{\pVar}}{\ex} \\
        \eeq & \sup_N \frac{1}{2^\N} \sum_{i=0}^{2^\N-1} \inf_{\xi \in [\frac{i}{2^\N}, \frac{i+1}{2^\N}]}
        \exSubs{\ex}{\pVar}{\xi} \tag{definition of $\lwpSymb{2^\N}$} \\
        \eeq & \lam{\st}{} \lowerInt{0}{1} \ex(\pStUpdate{\st}{\pVar}{\xi}) \,d\xi \tag{by \Cref{thm:smallNormSuffices}} \\
        \eeq & \lam{\st}{} \int_{0}^{1} \ex(\pStUpdate{\st}{\pVar}{\xi}) \,d\xi \tag{by assumption $\ex \in \expsClass$ is bounded and Riemann-integrable w.r.t.\ $\pVar$} \\
        \eeq & \lam{\st}{} \int_\uIval  \ex(\pStUpdate{\st}{\pVar}{\xi}) \,d\lebmes(\xi) \tag{Riemann integral = Lebesgue integral, see \Cref{thm:riemannEqualsLebesgue}} \\
        \eeq & \wp{\UNIFASSIGN{\pVar}}{\ex} \tag{definition of $\wpSymb$}
    \end{align*}
    The proof for $\uwpSymb{}$ is exactly analogous (replace all infima by suprema and vice versa).
    
    \inductionCase{$\prog = \ITE{\guard}{\prog_1}{\prog_2}$}
    $\lwp{\N}{\ITE{\guard}{\prog_1}{\prog_2}}{\ex}
    = [\guard] \cdot \lwp{\N}{\prog_1}{\ex} + [\neg\guard] \cdot \lwp{\N}{\prog_2}{\ex} \in \expsClass$ by I.H. on $\prog_1$ and $\prog_2$ and assumption.
    Furthermore,
    \begin{align*}
        & \sup_\N \lwp{2^\N}{\ITE{\guard}{\prog_1}{\prog_2}}{\ex} \\
        \eeq & \sup_\N \left(\iv{\guard} \cdot \lwp{2^\N}{\prog_1}{\ex} + \iv{\neg\guard} \cdot \lwp{2^\N}{\prog_2}{\ex}\right) \tag{definition of $\lwpSymb{\N}$} \\
        \eeq & \iv{\guard} \cdot \sup_\N \lwp{2^\N}{\prog_1}{\ex} + \iv{\neg\guard} \cdot \sup_N\lwp{2^\N}{\prog_2}{\ex} \tag{addition and multiplication by Iverson brackets is $\omega$-continuous} \\
        \eeq & \iv{\guard} \cdot \wp{\prog_1}{\ex} + \iv{\neg\guard} \cdot \wp{\prog_2}{\ex} \tag{by I.H. on $\prog_1$ and $\prog_2$} \\
        \eeq & \wp{\ITE{\guard}{\prog_1}{\prog_2}}{\ex} \tag{definition of $\wpSymb$}
        ~.
    \end{align*}
    Again, the proof for $\uwpSymb{}$ is analogous (consider infima instead of suprema, and use that addition and multiplication by Iverson brackets is $\omega$-\underline{co}continuous).
    
    \inductionCase{$\prog = \PCHOICE{\prog_1}{\prob}{\prog_2}$}
    $\lwp{\N}{\PCHOICE{\prog_1}{\prob}{\prog_2}}{\ex}
    = \prob \cdot \lwp{\N}{\prog_1}{\ex} + (1-\prob) \cdot \lwp{\N}{\prog_2}{\ex} \in \expsClass$ by I.H. on $\prog_1$ and $\prog_2$ and assumption.
    Furthermore,
    \begin{align*}
        & \sup_\N \lwp{2^\N}{\PCHOICE{\prog_1}{\prob}{\prog_2}}{\ex} \\
        \eeq & \sup_\N \left(\prob \cdot \lwp{2^\N}{\prog_1}{\ex} + (1-\prob) \cdot \lwp{2^\N}{\prog_2}{\ex}\right) \tag{definition of $\lwpSymb{\N}$} \\
        \eeq & \prob \cdot \sup_\N \lwp{2^\N}{\prog_1}{\ex} + (1-\prob) \cdot \sup_\N\lwp{2^\N}{\prog_2}{\ex} \tag{addition and multiplication by constants is $\omega$-continuous} \\
        \eeq & \prob \cdot \wp{\prog_1}{\ex} + (1-\prob) \cdot \wp{\prog_2}{\ex} \tag{by I.H. on $\prog_1$ and $\prog_2$} \\
        \eeq & \wp{\PCHOICE{\prog_1}{\prob}{\prog_2}}{\ex} \tag{definition of $\wpSymb$}
        ~.
    \end{align*}
    As before, the proof for $\uwpSymb{}$ is analogous (consider infima instead of suprema, and use that addition and multiplication by constants is $\omega$-\underline{co}continuous).
    
    \inductionCase{$\prog = \SEQ{\prog_1}{\prog_2}$}
    First, note that $\lwp{\N}{\SEQ{\prog_1}{\prog_2}}{\ex} = \lwp{\N}{\prog_1}{\lwp{\N}{\prog_2}{\ex}} \in \expsClass$ by I.H. on $\prog_2$ and $\prog_1$.
    For convergence we argue as follows:
    \begin{align*}
        & \sup_\N \lwp{2^\N}{\SEQ{\prog_1}{\prog_2}}{\ex} \\
        \eeq & \sup_\N \lwp{2^\N}{\prog_1}{\lwp{2^\N}{\prog_2}{\ex}}  \tag{definition of $\lwpSymb{\N}$} \\
        \eeq & \sup_{\Nb} \sup_\N \lwp{2^\N}{\prog_1}{\lwp{2^\Nb}{\prog_2}{\ex}} \tag{\Cref{thm:oneVsTwoSups} and \Cref{thm:powerTwoMonotonic}} \\
        \eeq & \sup_{\Nb} \wp{\prog_1}{\lwp{2^\Nb}{\prog_2}{\ex}} \tag{I.H.\ on $\prog_1$, using that $\lwp{2^\Nb}{\prog_2}{\ex} \in \expsClass$ (also by I.H.)} \\
        \eeq & \wp{\prog_1}{\sup_{\Nb} \lwp{2^\Nb}{\prog_2}{\ex}} \tag{$\omega$-continuity of $\wpTrans{\prog_1}$, see \Cref{thm:wpWlpWellDefinedAndContinuous}} \\
        \eeq & \wp{\prog_1}{\wp{\prog_2}{\ex}}{\ex} \tag{I.H.\ on $\prog_2$} \\
        \eeq & \wp{\SEQ{\prog_1}{\prog_2}}{\ex} \tag{definition of $\wpSymb$}
    \end{align*}
    The convergence proof of $\uwpSymb{}$ is not fully analogous as $\wpSymb$ is not $\omega$-\underline{co}continuous in general.
    Instead, we invoke \Cref{thm:wpLoopFreeCocontLocBounded}:
    \begin{align*}
        & \inf_\N \uwp{2^\N}{\SEQ{\prog_1}{\prog_2}}{\ex} \\
        \eeq & \inf_\N \uwp{2^\N}{\prog_1}{\uwp{2^\N}{\prog_2}{\ex}}  \tag{definition of $\uwpSymb{\N}$} \\
        \eeq & \inf_{\Nb} \inf_\N \uwp{2^\N}{\prog_1}{\uwp{2^\Nb}{\prog_2}{\ex}} \tag{\enquote{$\inf$-version} of \Cref{thm:oneVsTwoSups} and \Cref{thm:powerTwoMonotonic}} \\
        \eeq & \inf_{\Nb} \wp{\prog_1}{\uwp{2^\Nb}{\prog_2}{\ex}} \tag{I.H.\ on $\prog_1$, using that $\uwp{2^\Nb}{\prog_2}{\ex} \in \expsClass$ (also by I.H.)} \\
        \eeq & \wp{\prog_1}{\inf_{\Nb} \uwp{2^\Nb}{\prog_2}{\ex}} \tag{\Cref{thm:wpLoopFreeCocontLocBounded}} \\
        \eeq & \wp{\prog_1}{\wp{\prog_2}{\ex}}{\ex} \tag{I.H.\ on $\prog_2$} \\
        \eeq & \wp{\SEQ{\prog_1}{\prog_2}}{\ex} \tag{definition of $\wpSymb$}
    \end{align*}
\end{proof}

\subsection{{$\wlpSymb$-Version of \Cref{thm:convLoopFree}}}
\label{app:wlpVersion}
\begin{lemma}
    \label{thm:convLoopFreeWlp}
    Let $(\aExps, \guards, \expsClass)$ be Riemann-suitable.
    Then for all loop-free $\prog \in \pWhileWith{\aExps}{\guards}$ and post-expectations $\exb \in \expsClass  \cap \bexpsmeas$ the following holds:
    \begin{align*}
        \forall \N \geq 1 \colon\quad \lwlp{\N}{\prog}{\exb} \in \expsClass \qand \uwlp{\N}{\prog}{\exb} \in \expsClass
    \end{align*}
    and
    \begin{align*}
        \wlp{\prog}{\exb}
        \eeq
        \sup_{\N \geq 1} \lwlp{\N}{\prog}{\exb}
        \eeq
        \inf_{\N \geq 1} \uwlp{\N}{\prog}{\exb}
        ~.
    \end{align*}
\end{lemma}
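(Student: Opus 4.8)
The plan is to follow the proof of \Cref{thm:convLoopFree} almost verbatim, exploiting that on loop-free programs the clauses defining $\wlpSymb$, $\lwlpSymb{\N}$, and $\uwlpSymb{\N}$ agree with those defining $\wpSymb$, $\lwpSymb{\N}$, and $\uwpSymb{\N}$ except at $\DIVERGE$, where the liberal versions return the constant $\onefun$ instead of $\zerofun$. All transformers now act on $\bexps$ rather than $\exps$, but this is harmless: substitution by an arithmetic expression from $\aExps$, finite convex combinations with rational weights, Iverson-weighted combinations, and infima/suprema over closed subintervals of $\uIval$ all preserve $1$-boundedness, as do the finite convex combinations of subinterval infima/suprema that constitute the $\UNIF$-clauses of $\lwlpSymb{\N}$ and $\uwlpSymb{\N}$. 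Hence the closure properties of \Cref{def:riemannSuitable} restrict to $\expsClass \cap \bexps$, yielding $\lwlp{\N}{\prog}{\exb}, \uwlp{\N}{\prog}{\exb} \in \expsClass \cap \bexps$ for every loop-free $\prog \in \pWhileWith{\aExps}{\guards}$ and $\exb \in \expsClass \cap \bexps$; since $\onefun \in \expsClass$ by Riemann-suitability, the new $\DIVERGE$ base case causes no trouble.

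For the convergence equations I would redo the structural induction of \Cref{thm:convLoopFree} --- including its reduction to suprema and infima over partition sizes of the form $2^\N$ --- with the post-expectation ranging over $\expsClass \cap \bexpsmeas$. This reduction needs the obvious $\wlpSymb$-analogue of \Cref{thm:powerTwoMonotonic} (monotonicity of $\lwlp{2^\N}{\prog}{\exb}$ and $\uwlp{2^\N}{\prog}{\exb}$ in $\N$), proved exactly as its $\wpSymb$-counterpart with the $\DIVERGE$ case trivial; monotonicity in the post-expectation is already supplied by \Cref{thm:monriemann}. The atomic cases $\SKIP$, $\ASSIGN{\pVar}{\aExp}$, $\OBSERVE{\guard}$ are copied unchanged; the case $\DIVERGE$ is immediate, as $\wlp{\DIVERGE}{\exb} = \lwlp{\N}{\DIVERGE}{\exb} = \uwlp{\N}{\DIVERGE}{\exb} = \onefun$; and the case $\UNIFASSIGN{\pVar}$ is handled exactly as before via \Cref{thm:smallNormSuffices} (the partitions $0 < 2^{-\N} < \dots < 1$ have norm $2^{-\N} \to 0$), \Cref{thm:riemannEqualsLebesgue}, and the fact that a $1$-bounded $\exb$ is bounded and, by Riemann-suitability, Riemann-integrable on $\uIval$ w.r.t.\ $\pVar$. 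The cases $\ITE{\guard}{\prog_1}{\prog_2}$ and $\PCHOICE{\prog_1}{\prob}{\prog_2}$ follow from the induction hypotheses together with $\omega$-continuity ($\omega$-cocontinuity) of multiplication by Iverson brackets and constants, just as for $\wpSymb$.

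The only case calling for a genuinely new ingredient is sequential composition $\prog = \SEQ{\prog_1}{\prog_2}$. In the $\inf_\N \uwlp{2^\N}{\cdot}{\exb}$ direction the argument of \Cref{thm:convLoopFree} goes through using $\omega$-cocontinuity of $\wlpTrans{\prog_1}$ from \Cref{thm:wpWlpWellDefinedAndContinuous}; notably we do not need \Cref{thm:wpLoopFreeCocontLocBounded} here, since all expectations involved are $1$-bounded and hence locally bounded. In the $\sup_\N \lwlp{2^\N}{\cdot}{\exb}$ direction we additionally need $\wlpTrans{\prog_1}$ to be $\omega$-continuous, which the earlier results do not assert: \Cref{thm:wpWlpWellDefinedAndContinuous} only gives $\omega$-cocontinuity of $\wlpSymb$ in general, because of the greatest fixed points in its loop clause. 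I expect the main --- though small --- obstacle to be an auxiliary lemma stating that $\wlpTrans{\prog}$ is $\omega$-bicontinuous for every loop-free $\prog$. This follows by a routine structural induction paralleling the $\omega$-continuity part of the proof of \Cref{thm:wpWlpWellDefinedAndContinuous}: for loop-free programs $\wlpTrans{\prog}$ is built from $\omega$-bicontinuous pieces --- the identity, constant maps, substitution, multiplication by Iverson brackets and constants, finite sums, function composition, and the integration operator (which is $\omega$-bicontinuous on $\bexpsmeas$ by the Monotone Convergence Theorem applied to non-decreasing sequences and, via $\exb \mapsto \onefun - \exb$, to non-increasing ones) --- with the $\DIVERGE$ case being the constant-$\onefun$ transformer, trivially $\omega$-bicontinuous. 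Given this lemma, the $\SEQ$ case is a mechanical transcription of the corresponding step in \Cref{thm:convLoopFree}, completing the proof. One cannot shortcut the argument through the identity $\wlp{\prog}{\exb} = \wp{\prog}{\exb} + \wlp{\prog}{\zerofun}$ of \eqref{eq:wpwlp}, because the Riemann transformers $\lwlpSymb{\N}$ and $\uwlpSymb{\N}$ fail to be additive, so no exact Riemann analogue of that identity holds.
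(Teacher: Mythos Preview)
Your proposal is correct and follows the paper's approach, which simply states the proof is ``very similar to the proof of \Cref{thm:convLoopFree}'' with only the $\DIVERGE$ base case differing. You are in fact more careful than the paper: you correctly identify that the sequential-composition case in the $\sup_\N \lwlp{2^\N}{\cdot}{\exb}$ direction requires $\omega$-continuity of $\wlpTrans{\prog_1}$, which \Cref{thm:wpWlpWellDefinedAndContinuous} does not provide (it only gives $\omega$-cocontinuity for $\wlpSymb$), and you supply the right fix via an auxiliary lemma that $\wlpSymb$ is $\omega$-bicontinuous on loop-free programs---a point the paper's terse proof glosses over.
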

\begin{proof}
    Very similar to the proof of \Cref{thm:convLoopFree}.
    A minor difference is the straightforward base case $\prog = \DIVERGE$.
    Also recall that $\wlpSymb$ only applies to 1-bounded expectations only.
\end{proof}

\subsection{Proof of \Cref{thm:convUnfolding}}
\label{proof:convUnfolding}
\convUnfolding*
\begin{proof}
    The proof is by induction on the structure of $\prog$.
    We focus on $\wpSymb$, the proof for $\wlpSymb$ is analogous.
    For the bases cases $\prog = \SKIP$, $\prog = \DIVERGE$, $\prog = \ASSIGN{\pVar}{\aExp}$, $\prog = \UNIFASSIGN{\pVar}$ and $\prog = \OBSERVE{\guard}$ there is nothing to show as these atomic programs are unaffected by unfolding.
    We now consider the composite cases.
    Let $\ex \in \expsmeas$ be arbitrary but fixed throughout the following.
    
    \inductionCase{$\prog = \ITE{\guard}{\prog_1}{\prog_2}$ and $\prog = \PCHOICE{\prog_1}{\prob}{\prog_2}$}
    These cases follow straightforwardly by induction and basic properties of $\wpSymb$.
    
    \inductionCase{$\prog = \SEQ{\prog_1}{\prog_2}$}
    To show that for all $\depth \in \nats$ it holds that $\wp{\unfold{\prog}{\depth}}{\ex} \eleq \wp{\unfold{\prog}{\depth+1}}{\ex}$ we argue as follows:
    \begin{align*}
        & \wp{(\SEQ{\prog_1}{\prog_2})^\depth}{\ex} \\
        \eeq & \wp{\prog_1^\depth}{\wp{\prog_2^\depth}{\ex}} \tag{definition of $\wpSymb$ and of $(\SEQ{\prog_1}{\prog_2})^\depth$} \\
        \eeleq & \wp{\prog_1^\depth}{\wp{\prog_2^{\depth + 1}}{\ex}} \tag{I.H. on $\prog_2$, monotonicity of $\wpSymb$}\\
        \eeleq & \wp{\prog_1^{\depth + 1}}{\wp{\prog_2^{\depth + 1}}{\ex}} \tag{I.H. on $\prog_1$} \\
        \eeq & \wp{(\SEQ{\prog_1}{\prog_2})^{\depth + 1}}{\ex} \tag{definition of $\wpSymb$ and of $(\SEQ{\prog_1}{\prog_2})^{\depth+1}$}
    \end{align*}
    To see that the supremum over all unrollings is equal to the exact $\wpSymb$ consider the following:
    \begin{align*}
        & \sup_{\depth \in \nats} \wp{(\SEQ{\prog_1}{\prog_2})^\depth}{\ex} \\
        \eeq & \sup_{\depth \in \nats} \wp{\prog_1^\depth}{\wp{\prog_2^\depth}{\ex}} \tag{definition of $\wpSymb$ and of $(\SEQ{\prog_1}{\prog_2})^\depth$} \\
        \eeq & \sup_{\depth_1 \in \nats} \sup_{\depth_2 \in \nats} \wp{\prog_1^{\depth_1}}{\wp{\prog_2^{\depth_2}}{\ex}} \tag{\Cref{thm:oneVsTwoSups}, monotonicity of $\wpTrans{\unfold{\prog_1}{\depth_1}}$, I.H.\ on $\prog_1$ and $\prog_2$} \\
        \eeq & \sup_{\depth_1 \in \nats} \wp{\prog_1^{\depth_1}}{\sup_{\depth_2 \in \nats} \wp{\prog_2^{\depth_2}}{\ex}} \tag{$\omega$-continuity of $\wpTrans{\unfold{\prog_1}{\depth_1}}$, see \Cref{thm:wpWlpWellDefinedAndContinuous}} \\
        \eeq & \sup_{\depth_1 \in \nats} \wp{\prog_1^{\depth_1}}{\wp{\prog_2}{\ex}} \tag{I.H.\ on $\prog_2$} \\
        \eeq & \wp{\prog_1}{\wp{\prog_2}{\ex}} \tag{I.H.\ on $\prog_1$} \\
        \eeq & \wp{\SEQ{\prog_1}{\prog_2}}{\ex} \tag{definition of $\wpSymb$} 
    \end{align*}
    
    \inductionCase{$\prog = \WHILE{\guard}{\progBody}$}
    We first show by (an inner) induction on $\depth \in \nats$ that
    \[
    \wp{\unfold{(\WHILE{\guard}{\progBody})}{\depth}}{\ex}
    \eeleq
    \wp{\unfold{(\WHILE{\guard}{\progBody})}{\depth + 1}}{\ex}
    ~.
    \]
    For $\depth = 0$ this follows as $\wp{\unfold{(\WHILE{\guard}{\progBody})}{0}}{\ex} = \wp{\DIVERGE}{\ex} = 0$ by definition.
    For $\depth > 0$:
    \begin{align*}
        & \wp{\unfold{(\WHILE{\guard}{\progBody})}{\depth}}{\ex} \\
        \eeq & \wp{\ITE{\guard}{\SEQ{\unfold{\progBody}{\depth-1}}{\unfold{(\WHILE{\guard}{\progBody})}}{\depth-1}}{\SKIP}}{\ex} \tag{definition of unfolding} \\
        \eeq & \iv{\guard} \cdot \wp{\unfold{\progBody}{\depth-1}}{\wp{\unfold{(\WHILE{\guard}{\progBody})}{\depth-1}}{\ex}} + \iv{\neg\guard} \cdot \wp{\SKIP}{\ex} \tag{definition of $\wpSymb$} \\
        \eeleq & \iv{\guard} \cdot \wp{\unfold{\progBody}{\depth-1}}{\wp{\unfold{(\WHILE{\guard}{\progBody})}{\depth}}{\ex}} + \iv{\neg\guard} \cdot \wp{\SKIP}{\ex} \tag{inner I.H., monotonicity of $\wpSymb$} \\
        \eeleq & \iv{\guard} \cdot \wp{\unfold{\progBody}{\depth}}{\wp{\unfold{(\WHILE{\guard}{\progBody})}{\depth}}{\ex}} + \iv{\neg\guard} \cdot \wp{\SKIP}{\ex} \tag{outer I.H. on $\progBody$} \\
        \eeq & \wp{\ITE{\guard}{\SEQ{\unfold{\progBody}{\depth}}{\unfold{(\WHILE{\guard}{\progBody})}}{\depth}}{\SKIP}}{\ex} \tag{definition of $\wpSymb$} \\
        \eeq & \wp{\unfold{(\WHILE{\guard}{\progBody})}{\depth+1}}{\ex} \tag{definition of unfolding}
    \end{align*}
    To see that the supremum over all unrollings is equal to the exact $\wpSymb$ consider the following:
    \begin{align*}
        & \sup_{\depth \in \nats} \wp{\unfold{(\WHILE{\guard}{\progBody})}{\depth}}{\ex} \\
        \eeq & \sup_{\depth \in \nats} \wp{\unfold{(\WHILE{\guard}{\progBody})}{\depth+1}}{\ex} \tag{non-decreasing, see above} \\
        \eeq & \sup_{\depth \in \nats} \wp{\ITE{\guard}{\SEQ{\unfold{\progBody}{\depth}}{\unfold{(\WHILE{\guard}{\progBody})}}{\depth}}{\SKIP}}{\ex} \tag{definition of unfolding} \\
        \eeq & \sup_{\depth \in \nats} \left( \iv{\guard} \cdot \wp{\unfold{\progBody}{\depth}}{\wp{\unfold{(\WHILE{\guard}{\progBody})}{\depth}}{\ex}} + \iv{\neg\guard} \cdot \ex \right) \tag{definition of $\wpSymb$} \\
        \eeq & \iv{\guard} \cdot \sup_{\depth \in \nats}  \wp{\unfold{\progBody}{\depth}}{\wp{\unfold{(\WHILE{\guard}{\progBody})}{\depth}}{\ex}} + \iv{\neg\guard} \cdot \ex \tag{$\omega$-continuity of $\cdot$ and $+$} \\
        \eeq & \iv{\guard} \cdot \sup_{\depth_2 \in \nats} \sup_{\depth_1 \in \nats} \wp{\unfold{\progBody}{\depth_1}}{\wp{\unfold{(\WHILE{\guard}{\progBody})}{\depth_2}}{\ex}} + \iv{\neg\guard} \cdot \ex \tag{\Cref{thm:oneVsTwoSups}} \\
        \eeq & \iv{\guard} \cdot \sup_{\depth_2 \in \nats}  \wp{\progBody}{\wp{\unfold{(\WHILE{\guard}{\progBody})}{\depth_2}}{\ex}} + \iv{\neg\guard} \cdot \ex \tag{I.H. on $\progBody$} \\
        \eeq & \iv{\guard} \cdot  \wp{\progBody}{\sup_{\depth_2 \in \nats} \wp{\unfold{(\WHILE{\guard}{\progBody})}{\depth_2}}{\ex}} + \iv{\neg\guard} \cdot \ex \tag{$\omega$-continuity of $\wpSymb$}
    \end{align*}
    By the definition of $\wpTrans{\WHILE{\guard}{\progBody}}$ in terms of a least fixed point, the above equation implies
    \[
    \wp{\WHILE{\guard}{\progBody}}{\ex}
    \eeleq
    \sup_{\depth \in \nats} \wp{\unfold{(\WHILE{\guard}{\progBody})}{\depth}}{\ex}
    ~. 
    \]
    
    To complete the proof we show by (an inner) induction on $\depth \in \nats$ that
    \[
    \wp{\unfold{(\WHILE{\guard}{\progBody})}{\depth}}{\ex}
    \eeleq
    \wp{\WHILE{\guard}{\progBody}}{\ex}
    ~.
    \]
    This implies that $\sup_{\depth \in \nats} \wp{\unfold{(\WHILE{\guard}{\progBody})}{\depth}}{\ex} \eleq \wp{\WHILE{\guard}{\progBody}}{\ex}$.
    For $\depth = 0$ this holds as $\wp{\unfold{(\WHILE{\guard}{\progBody})}{0}}{\ex} = \wp{\DIVERGE}{\ex} = 0$ by definition.
    For $\depth \geq 0$ the argument is as follows:
    \begin{align*}
        & \wp{\unfold{(\WHILE{\guard}{\progBody})}{\depth+1}}{\ex}  \\
        \eeq &  \wp{\ITE{\guard}{\SEQ{\unfold{\progBody}{\depth}}{\unfold{(\WHILE{\guard}{\progBody})}}{\depth}}{\SKIP}}{\ex} \tag{definition of unfolding} \\
        \eeq & \iv{\guard} \cdot  \wp{\unfold{\progBody}{\depth}}{\wp{\unfold{(\WHILE{\guard}{\progBody})}{\depth}}{\ex}} + \iv{\neg\guard} \cdot \ex  \tag{definition of $\wpSymb$} \\
        \eeleq & \iv{\guard} \cdot \wp{\unfold{\progBody}{\depth}}{\wp{\WHILE{\guard}{\progBody}}{\ex}} + \iv{\neg\guard} \cdot \ex  \tag{inner I.H., monotonicity of $\wpTrans{\unfold{\progBody}{\depth}}$} \\
        \eeleq & \iv{\guard} \cdot \wp{\progBody}{\wp{\WHILE{\guard}{\progBody}}{\ex}} + \iv{\neg\guard} \cdot \ex  \tag{outer I.H. on $\progBody$} \\
        \eeq & \wp{\WHILE{\guard}{\progBody}}{\ex} \tag{definition of $\wpSymb$}
    \end{align*}
\end{proof}

\subsection{Proof of \Cref{thm:pointwiseConv}}
\label{proof:pointwiseConv}
\pointwiseConv*
\begin{proof}
    We only show the equality involving $\wpSymb$ (the one for $\wlpSymb$ is analogous).    
    For all $n \in \nats$ we have
    \begin{align*}
        & \lwp{n}{\unfold{\prog}{n}}{\ex} \\
        \lleq & \wp{\unfold{\prog}{n}}{\ex} \tag{by \Cref{thm:soundness}} \\
        \lleq & \wp{\prog}{\ex} \tag{by \Cref{thm:convUnfolding}}
    \end{align*}
    and thus $\sup_{n \in \nats} \lwp{n}{\unfold{\prog}{n}}{\ex} \leq \wp{\prog}{\ex}$.
    But we also have
    \begin{align*}
        & \sup_{n \in \nats} \lwp{n}{\unfold{\prog}{n}}{\ex} \\
        \ggeq & \sup_{n \in \nats} \lwp{2^n}{\unfold{\prog}{2^n}}{\ex} \tag{elementary property of supremum} \\
        \eeq & \sup_{\depth \in \nats} \sup_{\N \in \nats} \lwp{2^\N}{\unfold{\prog}{2^\depth}}{\ex} \tag{by \Cref{thm:oneVsTwoSups} and \Cref{thm:powerTwoMonotonic}} \\
        \eeq & \sup_{\depth \in \nats} \wp{\unfold{\prog}{2^\depth}}{\ex} \tag{by \Cref{thm:convLoopFree}, using that $\unfold{\prog}{2^\depth}$ is loop-free and that $f \in \expsClass$} \\
        \eeq & \wp{\prog}{\ex} \tag{by \Cref{thm:convUnfolding}} ~.
    \end{align*}
\end{proof}

\subsection{Proof of \Cref{thm:expWellLocallyBounded}}
\label{proof:expWellLocallyBounded}
\expWellLocallyBounded*
\begin{proof}
    We show, by induction on the structure of $\synEx$, that $\sem{\synEx}$ is locally bounded when viewed as a function of type $\sem{\synEx} \colon \reals^{\lvarsSubset} \to \reals$.
    
    \inductionCase{$\synEx = \synTerm$}
    $\sem{\synTerm}$ is a piece-wise defined polynomial in finitely many variables, hence $\sem{\synTerm}$ is locally bounded.
    
    \inductionCase{$\synEx = \iv{\synGuard}\cdot \synEx$}
    It is $\sem{\iv{\synGuard}\cdot \synEx} \leq \sem{\synEx}$ and $\sem{\synEx}$ is locally bounded by the I.H.
    Hence $\sem{\iv{\synGuard}\cdot \synEx}$ is locally bounded, too.
    
    \inductionCase{$\synEx = \ratConst \cdot \synExb$}
    By the I.H., $\synExb$ is locally bounded, i.e., for every $\compactSet \subseteq \reals^\lvarsSubset = \reals^\lvarsSubset$ we have $\sup_{\st \in \compactSet} |\sem{\synExb}| < \infty$.
    But then we also have $\sup_{\st \in \compactSet} | \sem{\ratConst \cdot \synExb}| = \sup_{\st \in \compactSet} |\ratConst \cdot \sem{\synExb}|  = \sup_{\st \in \compactSet} |\ratConst|  \cdot | \sem{\synExb}| =  \sup_{\st \in \compactSet} \ratConst \cdot  | \sem{\synExb}| = \ratConst \cdot   \sup_{\st \in \compactSet} | \sem{\synExb}| < \infty$.
    
    \inductionCase{$\synEx = \synExb + \synExc$}
    Similar to the previous case, using I.H.\ for both $\synExb$ and $\synExc$.
    
    \inductionCase{$\synEx = \synSupBd{\lvar}{\ivalL}{\ivalR} \synExb$}
    Let $\compactSet \subseteq \reals^\lvarsSubset$ be compact.
    Let $\compactSet'$ be the (compact) cross product of $\compactSet$ and $\clIvalGen$, i.e., $\compactSet' = \{\pStUpdate{\st}{\lvar}{\xi} \in \reals^{\lvarsSubset \cup \{\lvar\}} \mid \st \in \compactSet, \xi \in \clIvalGen\}$.
    We have to show that $\sup_{\st \in \compactSet} |\sem{f}(\st)| < \infty$.
    We argue as follows:
    \begin{align*}
        & \sup_{\st \in \compactSet} \left|\sem{f}(\st) \right| \\ 
        \eeq & \sup_{\st \in \compactSet} \left| \sem{\synSupBd{\lvar}{\ivalL}{\ivalR} \synExb}(\st)  \right| \\
        \eeq & \sup_{\st \in \compactSet} \left| \sup_{\xi \in \clIvalGen}\sem{\synExb}(\pStUpdate{\st}{\lvar}{\xi})  \right| \tag{definition of $\sem{\synSupBd{\lvar}{\ivalL}{\ivalR} \synExb}$} \\
        \lleq & \sup_{\st \in \compactSet}  \sup_{\xi \in \clIvalGen} \left|\sem{\synExb}(\pStUpdate{\st}{\lvar}{\xi})  \right| \tag{absolute value of $\sup$ $\leq$ $\sup$ of absolute value} \\
        \lleq & \sup_{\st \in \compactSet'}  \left|\sem{\synExb}(\st)  \right| \tag{because $\compactSet' = \{\pStUpdate{\st}{\lvar}{\xi} \mid \st \in \compactSet, \xi \in \clIvalGen \}$} \\
        \llt & \infty \tag{$\sem{\synExb}$ locally bounded by I.H.} \\
    \end{align*}
    
    \inductionCase{$\synEx = \synInfBd{\lvar}{\ivalL}{\ivalR} \synExb$}
    This case is analogous to the $\supQuantifierSymbol$-case noticing that, for every set $A \subseteq \reals$, we have $|\inf A| = | - \sup - A| = |\sup - A| \leq \sup |-A| = \sup |A|$, where $- A = \{-a \mid a \in A\}$.
    
\end{proof}

\subsection{Proof of \Cref{thm:expToFo}}
\label{proof:expToFo}
\expToFo*
\begin{proof}
    By induction on the structure of $\synEx$.
    The crucial observation is that infima and suprema can be expressed in FO.
    In all of the following we assume that $\lvarb \notin \free{\synEx} \subseteq \{\lvar_1,\ldots,\lvar_n\}$. 

    \inductionCase{$\synEx(\lvar_1,\ldots,\lvar_n) = \synTerm(\lvar_1,\ldots,\lvar_n)$}
    We first define $\foForm_{\synTerm}(\lvar_1,\ldots,\lvar_n,\lvarb)$ by induction on the structure of the syntactic term $\synTerm$.
    
    \begin{itemize}
        \item $\synTerm = \ratConst$.
        We define
        \[
            \foForm_{\synTerm}(\lvarb)
            \quad\text{as}\quad
            \ratConst = \lvarb
        ~.
        \]
        \item $\synTerm = \lvar$.
        We define
        \[
            \foForm_{\synTerm}(\lvar,\lvarb)
            \quad\text{as}\quad
            \lvar = \lvarb
            ~.
        \]
        \item $\synTerm(\lvar_1,\ldots,\lvar_n) = \synTermb(\lvar_1,\ldots,\lvar_n) + \synTermc(\lvar_1,\ldots,\lvar_n)$.
        We define
        \[
            \foForm_\synTerm(\lvar_1,\ldots,\lvar_n,\lvarb)
            \quad\text{as}\quad
            \foExists \lvarb_1 \colon \foExists \lvarb_2 \colon \lvarb = \lvarb_1 + \lvarb_2 \land \foForm_\synTermb(\lvar_1,\ldots,\lvar_n,\lvarb_1) \land \foForm_\synTermc(\lvar_1,\ldots,\lvar_n,\lvarb_2)
            ~.
        \]
        \item $\synTerm(\lvar_1,\ldots,\lvar_n) = \synTermb(\lvar_1,\ldots,\lvar_n) \cdot \synTermc(\lvar_1,\ldots,\lvar_n)$. Similar to the previous case.
        \item $\synTerm(\lvar_1,\ldots,\lvar_n) = \synTermb(\lvar_1,\ldots,\lvar_n) \synMonus \synTermc(\lvar_1,\ldots,\lvar_n)$.
        This case is the most interesting one due to monus.
        We define $\foForm_\synTerm(\lvar_1,\ldots,\lvar_n,\lvarb)$ as 
        \begin{align*}
            \foExists \lvarb_1 \colon \foExists \lvarb_2 \colon &\foForm_\synTermb(\lvar_1,\ldots,\lvar_n,\lvarb_1) \land \foForm_\synTermc(\lvar_1,\ldots,\lvar_n,\lvarb_2) \\
            &\land
            \left( \lvarb_1 \geq \lvarb_2 \limplies \lvarb = \lvarb_1 - \lvarb_2 \right )
            \land
            \left( \lvarb_1 < \lvarb_2 \limplies \lvarb = 0 \right )
            ~.
        \end{align*}
    \end{itemize}
    Now that we have defined $\foForm_{\synTerm}$, we define
    \[
        \foForm_{\synEx}(\lvar_1,\ldots,\lvar_n,\lvarb)
        \quad\text{as}\quad
        \synTerm(\lvar_1,\ldots,\lvar_n) = \lvarb
        ~.
    \]
    
    \inductionCase{$\synEx(\lvar_1,\ldots,\lvar_n) = \iv{\synGuard(\lvar_1,\ldots,\lvar_n)}\cdot \synExb(\lvar_1,\ldots,\lvar_n)$}
    \newcommand{\lvarbool}{b}
    
    We first define an FO formula $\foForm_{\synGuard}(\lvar_1,\ldots,\lvar_n,\lvarbool)$ by induction on the structure of the syntactic guard $\synGuard$.
    The idea is to encode the (Boolean) result that $\synGuard$ evaluates to as a $\{0,1\}$-valued FO variable $\lvarbool$.
    
    \begin{itemize}
        \item $\synGuard(\lvar_1,\ldots,\lvar_n) = \synTerm(\lvar_1,\ldots,\lvar_n) < \synTermb(\lvar_1,\ldots,\lvar_n)$.
        We consider $\foForm_\synTerm(\lvar_1,\ldots,\lvar_n,\lvarb)$ and $\foForm_\synTermb(\lvar_1,\ldots,\lvar_n,\lvarb)$ as defined above and define $\foForm_{\synGuard}(\lvar_1,\ldots,\lvar_n,\lvarbool)$ as
        \begin{align*}
            \foExists \lvarb_1 \colon \foExists \lvarb_2 \colon &\foForm_\synTermb(\lvar_1,\ldots,\lvar_n,\lvarb_1) \land \foForm_\synTermc(\lvar_1,\ldots,\lvar_n,\lvarb_2) \\
            &\land
            \left( \lvarb_1 < \lvarb_2 \limplies \lvarbool = 1 \right )
            \land
            \left( \lvarb_1 \geq \lvarb_2 \limplies \lvarbool = 0 \right )
            ~.
        \end{align*}
        \item $\synGuard(\lvar_1,\ldots,\lvar_n) = \synGuard_1(\lvar_1,\ldots,\lvar_n) \land \synGuard_2(\lvar_1,\ldots,\lvar_n)$.
        We consider $\foForm_{\synGuard_1}(\lvar_1,\ldots,\lvar_n,\lvarbool)$ and $\foForm_{\synGuard_2}(\lvar_1,\ldots,\lvar_n,\lvarbool)$ and define $\foForm_{\synGuard}(\lvar_1,\ldots,\lvar_n,\lvarbool)$ as
        \begin{align*}
            \foExists \lvarbool_1 \colon \foExists \lvarbool_2 \colon &\foForm_{\synGuard_1}(\lvar_1,\ldots,\lvar_n,\lvarbool_1) \land \foForm_{\synGuard_2}(\lvar_1,\ldots,\lvar_n,\lvarbool_2) \\
            &\land
            \left( \lvarbool_1 + \lvarbool_1 = 2 \limplies \lvarbool = 1 \right )
            \land
            \left( \lvarbool_1 + \lvarbool_1 \neq 2 \limplies \lvarbool = 0 \right )
            ~.
        \end{align*}
        \item $\synGuard(\lvar_1,\ldots,\lvar_n) = \neg \synGuard'(\lvar_1,\ldots,\lvar_n)$.
        We consider $\foForm_{\synGuard'}(\lvar_1,\ldots,\lvar_n,\lvarbool)$ and define $\foForm_{\synGuard}(\lvar_1,\ldots,\lvar_n,\lvarbool)$ as
        \begin{align*}
            \foExists \lvarbool'\colon \foForm_{\synGuard'}(\lvar_1,\ldots,\lvar_n,\lvarbool')  
            \land
            \lvarbool = 1 - \lvarbool'
            ~.
        \end{align*}
    \end{itemize}
    
    Now that we have defined $\foForm_{\synGuard}$, we define
    \[
        \foForm_\synEx(\lvar_1,\ldots,\lvar_n,\lvarb)
        \quad\text{as}\quad
        \foExists \lvarbool \colon \foForm_\synGuard(\lvar_1,\ldots,\lvar_n,\lvarbool) \land (\lvarbool = 1 \limplies \foForm_\synEx(\lvar_1,\ldots,\lvar_n,\lvarb) )
        \land (\lvarbool = 0 \limplies \lvarb = 0 )
    \]
    
    \inductionCase{$\synEx(\lvar_1,\ldots,\lvar_n) = \ratConst \cdot \synExb(\lvar_1,\ldots,\lvar_n)$}
    We define
    \[
    \foForm_\synEx(\lvar_1,\ldots,\lvar_n,\lvarb)
    \quad\text{as}\quad
    \foExists \lvarb' \colon \lvarb = \ratConst \cdot \lvarb' \land \foForm_\synExb(\lvar_1,\ldots,\lvar_n,\lvarb')
    ~.
    \]
    
    \inductionCase{$\synEx(\lvar_1,\ldots,\lvar_n) = \synExb(\lvar_1,\ldots,\lvar_n) + \synExc(\lvar_1,\ldots,\lvar_n)$}
    We define
    \[
    \foForm_\synEx(\lvar_1,\ldots,\lvar_n,\lvarb)
    \quad\text{as}\quad
    \foExists \lvarb_1 \colon \foExists \lvarb_2 \colon \lvarb = \lvarb_1 + \lvarb_2 \land \foForm_\synExb(\lvar_1,\ldots,\lvar_n,\lvarb_1) \land \foForm_\synExc(\lvar_1,\ldots,\lvar_n,\lvarb_2)
    ~.
    \]
    
    \inductionCase{$\synEx = \synSupBd{\lvar}{\ivalL}{\ivalR} \synExb$}
    To give an intuition about how we handle this case, we recall the standard definition of the supremum:
    Let $A \subseteq \reals$ and $y \in \reals$.
    Then
    \begin{align*}
        & y = \sup A \\
        ~{}\iff{}~ & \text{for all } a\in A \colon a \leq y \tag{$y$ is an upper bound}\\
        & \text{and for all } y' \in \reals \colon( \text{for all } a \in A \colon a \leq y') \implies y \leq y' \tag{$y$ is not greater than any upper bound}
        ~.
    \end{align*}
    We now construct FO formulae corresponding to the above two conjuncts.
    Consider the formula $\foForm_\synExb(\lvar,\lvar_1,\ldots,\lvar_n,\lvarb)$ obtained by applying the I.H.\ to $\synExb(\lvar,\lvar_1,\ldots,\lvar_n)$.
    We define
    \[
    \foForm_{\synExb}^{ub}(\lvar_1,\ldots,\lvar_n,u)
    \quad\text{as}\quad
    \foForall \ivalL \leq \xi \leq \ivalR \colon \foForall \lvarb \colon \foForm_\synExb(\xi,\lvar_1,\ldots,\lvar_n,\lvarb) 	\limplies \lvarb \leq u
    ~.
    \]
    We have for all $\st \in \reals^{\free{\synExb}}$ and $\resVal \in \reals$ that $\st ,\resVal \models \foForm_{\synExb}^{ub}$ iff for all $\xi \in \clIvalGen \colon \sem{\synExb}(\pStUpdate{\st}{\lvar}{\xi}) \leq \resVal$.
    Now we define
    \[
    \foForm_\synEx(\lvar_1,\ldots,\lvar_n,\lvarb)
    \quad\text{as}\quad
    \foForm_{\synExb}^{ub}(\lvar_1,\ldots,\lvar_n,\lvarb) \land \foForall \lvarb ' \colon \foForm_{\synExb}^{ub}(\lvar_1,\ldots,\lvar_n,\lvarb') \limplies \lvarb \leq \lvarb'
    \]
    Then $\foForm_\synEx $ has the desired property.
    
    \inductionCase{$\synEx = \synInfBd{\lvar}{\ivalL}{\ivalR} \synExb$}
    Analogous to the previous case.
    \medskip
    
    Finally, observe that for \emph{quantifier-free} $\synEx$ (i.e., without $\supQuantifierSymbol$ and $\infQuantifierSymbol$), the above inductive construction of $\foForm_\synEx$ can be done in linear time in the size of $\synEx$.
    The so-obtained $\foForm_\synEx$ is not yet necessarily in existential prenex form, but it can be translated to the latter (again in linear time) by succesively applying the following standard prenexing rules which are true for all FO formulae $\foForm$ and $\foFormb$:
    \begin{align*}
        \foForm \limplies (\foExists \lvar \colon \foFormb)
        &\text{ is equivalent to }
        \foExists \lvar \colon (\foForm \limplies \foFormb)
        \\
        \foForm \land (\foExists \lvar \colon \foFormb)
        &\text{ is equivalent to }
        \foExists \lvar \colon (\foForm \land \foFormb)
    \end{align*}
    This completes the proof.
\end{proof}

\subsection{Proof of \Cref{thm:expBoundedAndRiemannIntegrable}}
\label{proof:expBoundedAndRiemannIntegrable}
\expBoundedAndRiemannIntegrable*
\begin{proof}
    Let $\pSt \in \nnReals^\lvarsSubset$ be arbitrary and
    consider the function $\ex_\pSt = \lam{\xi}{\iv{\xi \geq 0} \cdot \sem{\synEx}(\pStUpdate{\st}{\lvar}{\xi})}$ of type $\reals \to \reals$.
    By \Cref{thm:expWellLocallyBounded}, $\ex_\pSt$ is locally bounded.
    By \Cref{thm:expToFo}, $\ex_\pSt$ is semi-algebraic.
    It thus follows with \Cref{thm:semiAlgebraicFunctionAlmostContinuous} that $\ex_\pSt$ is Riemann-integrable on $\clIvalGen$.
    Hence, since $\st$ was arbitrary, we may conclude that $\sem{\synEx}$ is Riemann-integrable on $\clIvalGen$ w.r.t.\ $\lvar$.
\end{proof}

\subsection{Proof of \Cref{thm:checkEntailment}}
\label{proof:checkEntailment}
\checkEntailment*
\begin{proof}
    The fact that $\sem{\synEx} \eleq \sem{\synExb}$ iff the FO formula in \eqref{eq:smt-query} is valid follows directly from \Cref{thm:expToFo}.
    However, note that the formula in \eqref{eq:smt-query} is not necessarily quantifier-free as $\foForm_{\synEx'}$ and $\foForm_{\synExb'}$ may contain an $\foExists$-prefix (see \Cref{thm:expToFo}).
    By applying the standard prenexing rules
    \begin{align*}
        (\foExists \lvar \colon \foForm) \land \foFormb
        &\quad\text{is equivalent to}\quad
        (\foExists \lvar \colon \foForm \land \foFormb)
        \\
        \text{and}\qquad (\foExists \lvar \colon \foForm) \limplies \foFormb
        &\quad\text{is equivalent to}\quad
        \foForall \lvar \colon (\foForm \limplies \foFormb)
    \end{align*}
    which hold for all FO formulae $\foForm$ and $\foFormb$ with $\lvar \notin \free{\foFormb}$, we can transform the formula in \eqref{eq:smt-query} to universal prenex form.
    Checking validity of the latter is equivalent to checking validity of the \emph{quantifier-free} formula obtained by dropping the $\foForall$-prefix.
\end{proof}

\subsection{Substitution Lemma}

\begin{lemma}[{Substitution Lemma~\cite{DBLP:journals/pacmpl/BatzKKM21,DBLP:journals/pacmpl/SchroerBKKM23}}]
    \label{thm:syntacticSemanticSubs}
    Let $\synEx \in \synExps$ such that $\lvar \in \free{\synEx}$ and let $\synTerm \in \synTerms$ be a syntactic term.
    Consider the expression $\synSubs{\synEx}{\lvar}{\synTerm} \in \synExps$ obtained by (syntactically) substituting every free occurrence of $\lvar$ in $\synEx$ by $\synTerm$ in a capture-avoiding manner (i.e., by possibly renaming variables appropriately).
    Then
    \[
    \sem{\synSubs{\synEx}{\lvar}{\synTerm}}
    \eeq
    \exSubs{\sem{\synEx}}{\lvar}{\sem{\synTerm}}
    \gray{
        \eeq
        \lam{\pSt}{\sem{\synEx}(\pStUpdate{\pSt}{\lvar}{\sem{\synTerm}(\pSt)})}
    }
    ~.
    \]
\end{lemma}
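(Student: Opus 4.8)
The plan is to prove the statement by structural induction on the expression $\synEx \in \synExps$, after first establishing the analogous substitution lemmas for syntactic terms and guards. For terms, the claim $\sem{\synSubs{\synTermb}{\lvar}{\synTerm}} = \exSubs{\sem{\synTermb}}{\lvar}{\sem{\synTerm}}$ is itself a routine induction on the grammar of $\synTerms$: the variable case $\synTermb = \lvar$ is the crux (both sides evaluate to $\sem{\synTerm}(\pSt)$ at every state $\pSt$), the case $\synTermb = \lvarb \neq \lvar$ is immediate, and the operations $+$, $\cdot$, $\synMonus$ commute with the (compositional, pointwise) term semantics from \Cref{sec:syntax:defs}. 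The guard lemma — that $\pSt \models \synSubs{\synGuard}{\lvar}{\synTerm}$ iff $\pStUpdate{\pSt}{\lvar}{\sem{\synTerm}(\pSt)} \models \synGuard$ — then follows by induction on $\synGuards$, using the term lemma for atomic comparisons $\synTermb_1 < \synTermb_2$ and propagating through $\neg$ and $\land$.

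Then I would carry out the main induction on $\synEx$. The cases $\synEx = \synTermb$ (a term), $\synEx = \ratConst \cdot \synEx'$, and $\synEx = \synEx_1 + \synEx_2$ reduce directly to the term lemma, resp.\ the induction hypothesis, together with the pointwise clauses of \Cref{def:exprSem}. For $\synEx = \iv{\synGuard} \cdot \synEx'$ I would case-split on whether $\pSt \models \synSubs{\synGuard}{\lvar}{\synTerm}$, invoke the guard lemma to rewrite this as $\pStUpdate{\pSt}{\lvar}{\sem{\synTerm}(\pSt)} \models \synGuard$, and close using the induction hypothesis on $\synEx'$ in the "true" branch (both sides are $0$ in the "false" branch). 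The interesting cases are the binders $\synEx = \synSupBd{\lvarb}{\ivalL}{\ivalR}\,\synEx'$ and, dually, $\synEx = \synInfBd{\lvarb}{\ivalL}{\ivalR}\,\synEx'$. Here capture-avoidance does the work: since $\lvar \in \free{\synEx}$, the bound variable satisfies $\lvarb \neq \lvar$, and capture-avoiding substitution guarantees $\lvarb \notin \free{\synTerm}$, so $\sem{\synTerm}(\pStUpdate{\pSt}{\lvarb}{\xi}) = \sem{\synTerm}(\pSt)$ for all $\xi \in \clIvalGen$. Consequently the substitution commutes with the update $\pStUpdate{\cdot}{\lvarb}{\xi}$, the induction hypothesis on $\synEx'$ applies uniformly in $\xi$, and pushing $\sup_{\xi \in \clIvalGen}$ (resp.\ $\inf_{\xi \in \clIvalGen}$) through both sides of the resulting equation yields the claim; the required suprema and infima exist in $\nnReals$ by \Cref{thm:expWellLocallyBounded}.

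The main obstacle I anticipate is not any single computation but the bookkeeping around $\alpha$-renaming in the binder cases: one must make precise that "capture-avoiding substitution" may rename a bound variable $\lvarb \mapsto \lvarb'$ to a fresh one, and that such a renaming leaves $\sem{\synEx}$ unchanged. I would isolate this as a small auxiliary \emph{renaming invariance} lemma (provable by the same structural induction, or folded into the term and guard lemmas) so that, with it in hand, each binder case collapses to a one-line computation combining renaming invariance, the freshness fact $\lvarb \notin \free{\synTerm}$, and the induction hypothesis. Everything else is a mechanical propagation of the pointwise semantic clauses.
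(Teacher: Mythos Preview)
The paper does not actually give its own proof of this lemma: it is stated with citations to~\cite{DBLP:journals/pacmpl/BatzKKM21,DBLP:journals/pacmpl/SchroerBKKM23} and used as a black box in the proofs of \Cref{thm:exprRiemannSuitable} and \Cref{thm:computeSyntacticExpression}. Your proposed argument---auxiliary substitution lemmas for $\synTerms$ and $\synGuards$, followed by structural induction on $\synExps$ with the capture-avoidance and $\alpha$-invariance bookkeeping concentrated in the binder cases---is the standard and correct route, and is essentially how such lemmas are established in the cited works.
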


\subsection{Proof of \Cref{thm:exprRiemannSuitable}}
\label{proof:exprRiemannSuitable}
\exprRiemannSuitable*
\begin{proof}
    Recall \Cref{def:riemannSuitable}.
    By \Cref{thm:expBoundedAndRiemannIntegrable}, all $\sem{\synEx}$ are bounded and Riemann integrable on $\uIval$ w.r.t.\ every $\pVar \in \pVars$.
    The arithmetic expressions representable by $\synTerms$, i.e., the functions $\sem{\synTerm}$ for $\synTerm \in \synTerms$, are locally bounded since they are piecewise defined polynomials.
    The closure properties from \Cref{def:riemannSuitable} all follow immediately from the definition of $\synExps$, except the substitution $\exSubs{\sem{\synEx}}{\pVar}{\aExp}$.
    Here, the crucial observation is that \emph{semantic substitution corresponds to syntactic substitution}, see \Cref{thm:syntacticSemanticSubs}.
\end{proof}

\subsection{Computing Riemann $\wpwlpSymb$ Syntactically}
\label{proof:computeSyntacticExpression}
\begin{restatable}[Computing Riemann $\wpwlpSymb$ Syntactically]{lemma}{computeSyntacticExpression}
	\label{thm:computeSyntacticExpression}
	For all $\prog \in \pWhileWith{\synTerms}{\synGuards}$, $\prog$ loop-free, and integers $\N \geq 1$ we can effectively do the following:
	For every given ...
	\begin{enumerate}
		\setlength\itemsep{0.2em}
		\item ... $\synEx \in \syntacticExpectations$, compute $\synExb \in \syntacticExpectations$ such that \qquad \,\,\,\, $\lwp{\N}{\prog}{\sem{\synEx}} = \sem{\synExb}$ ~.
		\item ... $\synEx \in \syntacticExpectations$, compute $\synExb \in \syntacticExpectations$ such that \qquad \,\,\,\, $\uwp{\N}{\prog}{\sem{\synEx}} = \sem{\synExb}$ ~.
		\item ... $\synEx \in \bsyntacticExpectations$, compute $\synExb \in \bsyntacticExpectations$ such that \quad $\lwlp{\N}{\prog}{\sem{\synEx}} = \sem{\synExb}$ ~.
		\item ...  $\synEx \in \bsyntacticExpectations$, compute $\synExb \in \bsyntacticExpectations$ such that \quad $\uwlp{\N}{\prog}{\sem{\synEx}} = \sem{\synExb}$ ~.
	\end{enumerate}
	In (1) and (3), if $\synEx$ is $\supQuantifierSymbol$-free, then so is $\synExb$.
	In (2) and (4), if $\synEx$ is $\infQuantifierSymbol$-free, then so is $\synExb$.
\end{restatable}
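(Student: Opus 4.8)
The plan is to prove this by structural induction on the loop-free program $\prog$, constructing, for each transformer $\transSymb\in\{\lwpSymb{\N},\uwpSymb{\N},\lwlpSymb{\N},\uwlpSymb{\N}\}$ and each given syntactic (1-bounded) expectation $\synEx$, a syntactic expression $\synExb\in\synExps$ with $\sem{\synExb}\eeq\somewp{\prog}{\sem{\synEx}}$. I would strengthen the induction hypothesis to record three bookkeeping facts simultaneously: (a) $\free{\synExb}\subseteq\pVars$, so $\synExb$ is a genuine syntactic expectation (and, in the liberal cases, $\sem{\synExb}\in\bexps$, i.e.\ $\synExb\in\bsyntacticExpectations$); (b) the quantifier discipline --- the $\lwpSymb{\N}$ and $\lwlpSymb{\N}$ constructions introduce only $\infQuantifierSymbol$-quantifiers, the $\uwpSymb{\N}$ and $\uwlpSymb{\N}$ ones only $\supQuantifierSymbol$-quantifiers --- from which the claimed preservation of $\supQuantifierSymbol$-freeness (resp.\ $\infQuantifierSymbol$-freeness) is immediate; and (c) the whole construction is a finite, algorithmic syntactic manipulation, hence effective. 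It suffices to spell out $\lwpSymb{\N}$; the other three transformers are entirely symmetric. Note that by \Cref{def:riemannwpwlp} the Riemann transformers act on \emph{all} of $\exps$ (resp.\ $\bexps$), so no measurability side conditions intervene.

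For the atomic programs I would follow \Cref{tab:original} and \Cref{def:riemannwpwlp} literally: $\SKIP\mapsto\synEx$; $\DIVERGE\mapsto 0$ (and the constant $1$ for the liberal transformers); $\OBSERVE{\synGuard}\mapsto\iv{\synGuard}\cdot\synEx$; and for $\ASSIGN{\pVar}{\synTerm}$ --- where, by the hypothesis $\prog\in\pWhileWith{\synTerms}{\synGuards}$, the right-hand side is an honest syntactic term $\synTerm\in\synTerms$ --- the capture-avoiding syntactic substitution $\synSubs{\synEx}{\pVar}{\synTerm}$, whose correctness is precisely the Substitution Lemma (\Cref{thm:syntacticSemanticSubs}); none of these introduces a quantifier. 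The only atomic case with content is $\UNIFASSIGN{\pVar}$. By \Cref{def:riemannwpwlp} I must represent the function $\lam{\st}{\tfrac1\N\sum_{i=0}^{\N-1}\inf_{\xi\in[\frac{i}{\N},\frac{i+1}{\N}]}\sem{\synEx}(\pStUpdate{\st}{\pVar}{\xi})}$; since $\tfrac1\N,\tfrac{i}{\N},\tfrac{i+1}{\N}\in\nnRats$ and $\tfrac{i}{\N}\le\tfrac{i+1}{\N}$, and since by \Cref{def:exprSem} the built-in $\infQuantifierSymbol$-quantifier of $\synExps$ has exactly the semantics $\sem{\synInfBd{\pVar}{a}{b}\synEx}\eeq\lam{\st}{\inf_{\xi\in\clIval{a}{b}}\sem{\synEx}(\pStUpdate{\st}{\pVar}{\xi})}$, the syntactic expression
\[
\synExb_\N \ \coloneqq\ \sum_{i=0}^{\N-1}\ \tfrac{1}{\N}\cdot\bigl(\synInfBd{\pVar}{\frac{i}{\N}}{\frac{i+1}{\N}}\synEx\bigr) \ \in\ \synExps
\]
(the $\sum$ abbreviating an $\N$-fold iteration of the binary $+$ of $\synExps$) satisfies $\sem{\synExb_\N}\eeq\lwp{\N}{\UNIFASSIGN{\pVar}}{\sem{\synEx}}$; it manifestly lies in $\synExps$, introduces only $\infQuantifierSymbol$-quantifiers, and has free variables $\free{\synEx}\setminus\{\pVar\}\subseteq\pVars$, while the dual $\supQuantifierSymbol$-sum handles $\uwpSymb{\N}$.

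For the composite loop-free constructs the argument is purely compositional, each mirrored by a production of the $\synExps$ grammar: given by induction expressions $\synExb_1,\synExb_2$ for $\lwp{\N}{\prog_1}{\sem{\synEx}}$ and $\lwp{\N}{\prog_2}{\sem{\synEx}}$, return $\iv{\synGuard}\cdot\synExb_1+\iv{\neg\synGuard}\cdot\synExb_2$ for $\ITE{\synGuard}{\prog_1}{\prog_2}$, and $\prob\cdot\synExb_1+(1-\prob)\cdot\synExb_2$ for $\PCHOICE{\prog_1}{\prob}{\prog_2}$ (using $\prob,1-\prob\in\uIval\cap\rats\subseteq\nnRats$); for $\SEQ{\prog_1}{\prog_2}$, first compute (induction on $\prog_2$) an expression $\synExc$ for $\lwp{\N}{\prog_2}{\sem{\synEx}}$, then feed $\synExc$ into the induction for $\prog_1$ to obtain one for $\lwp{\N}{\prog_1}{\sem{\synExc}}\eeq\lwp{\N}{\SEQ{\prog_1}{\prog_2}}{\sem{\synEx}}$, the $\SEQ$-rule being function composition, unchanged from \Cref{tab:original}. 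In each composite case the free-variable bound (a) and the quantifier discipline (b) are inherited from the subexpressions, and for the liberal transformers $1$-boundedness is preserved along the same induction: constants, convex combinations, guarded sums (exactly one Iverson bracket fires per state), multiplication by an Iverson bracket, substitution, and averaged infima/suprema of a $1$-bounded expression are all again $1$-bounded. A routine count of the recursion moreover shows $\synExb$ grows by a factor $\mathcal{O}(\N)$ with each $\UNIF$-statement, yielding the quoted size estimate.

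I do not expect a deep obstacle: the content of the lemma is essentially that the grammar of $\synExps$ was designed to mirror exactly the operations appearing in \Cref{def:riemannwpwlp}. The parts needing the most care are (i) the $\UNIFASSIGN{\pVar}$ step, where one must check that every rational endpoint and coefficient $\tfrac{i}{\N},\tfrac{i+1}{\N},\tfrac1\N$ is a legal ingredient of an $\synExps$-quantifier and that the $\inf$/$\sup$-substitution identity of \Cref{def:exprSem} is invoked correctly, and (ii) the capture-avoidance in the assignment substitution interacting with bound $\infQuantifierSymbol$/$\supQuantifierSymbol$-variables, which is precisely what \Cref{thm:syntacticSemanticSubs} is stated to handle; beyond that it is a matter of threading the strengthened invariant (a)--(c) through all the cases rather than proving it once.
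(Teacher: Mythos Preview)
Your proposal is correct and follows essentially the same approach as the paper: structural induction on loop-free $\prog$, with the assignment case handled via the Substitution Lemma (\Cref{thm:syntacticSemanticSubs}) and the remaining cases discharged by the observation that $\synExps$ was designed so that its closure properties mirror the defining clauses of the Riemann transformers. The paper's own proof is a two-sentence sketch to exactly this effect; your write-up simply spells out the cases and the bookkeeping (free variables, quantifier discipline, $1$-boundedness) in more detail.
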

\begin{proof}
    By induction on the structure of $\prog$.
    Most cases are immediate as we have essentially designed $\synExps$ such that $\repExps$ ($\boundedRepresentableExpectations$) is closed under $\lwpTrans{\N}{\prog}$ and $\uwpTrans{\N}{\prog}$ ($\lwlpTrans{\N}{\prog}$ and $\uwlpTrans{\N}{\prog}$, resp.) for loop-free $\prog$.
    For the assignment $\ASSIGN{\pVar}{\aExp}$ we rely on syntactic substitution (\Cref{thm:syntacticSemanticSubs}).
\end{proof}

\subsection{Proof of \Cref{thm:boundsOnLoopFreeWpWlp}}
\label{proof:boundsOnLoopFreeWpWlp}
\boundsOnLoopFreeWpWlp*
\begin{proof}
    By computing a syntactic representation of the corresponding Riemann $\wpwlpSymb$ as in \Cref{thm:computeSyntacticExpression}, translating to PNF via \Cref{thm:pnf}, and applying the quantitative entailment check from \Cref{thm:checkEntailment}.
    The \gray{gray} implications follow from soundness of the Riemann $\wpwlpSymb$ (\Cref{thm:soundness}).
\end{proof}

\subsection{Proof of \Cref{thm:verificationInvariants}}
\label{proof:verificationInvariants}
\verificationInvariants*
\begin{proof}
    Since $\progBody$ is loop-free we can apply \Cref{thm:computeSyntacticExpression} to compute $i \in \synExps$ (resp. $j \in \bsyntacticExpectations$) such that $\sem{i} = \uwp{\N}{\progBody}{\sem{I}}$ (resp. $\sem{j} = \lwlp{\N}{\progBody}{\sem{J}}$) with $i$ is $\infQuantifierSymbol$-free (resp $\supQuantifierSymbol$-free). Then, applying \Cref{thm:invariant_approx}, we need to check that 
    $ [\guard] \cdot \sem{\synEx} + [\neg \guard] \cdot \sem{i} \leq \sem{I}$ (resp. $\sem{J} \leq [\guard] \cdot \sem{\synEx} + [\neg \guard] \cdot \sem{j} $). By \Cref{thm:boundsOnLoopFreeWpWlp} this inequality can be reduced to $\QFNRA$.
\end{proof}

\subsection{Proof of \Cref{thm:complexity}}
\label{proof:complexity}
\complexity*
\begin{proof}
    We show membership in $\coRE$ first (which is the more interesting direction).
    Consider the complement of the first problem:
    \begin{align}
        \text{``Does there exist } \pSt \in \pStates \text{ s.t.\ } \sem{\synExb}(\pSt) < \wp{\prog}{\sem{\synEx}}(\pSt) \text{ ?''}
        \tag{complement of problem \eqref{it:problemUpperBoundOnWp}}
    \end{align}
    If the answer to the above question is \emph{yes}, then, since 
    $\sup_{n \geq 1}\lwp{n}{\unfold{\prog}{n}}{\sem{\synEx}} = \wp{\prog}{\sem{\synEx}}$
    by \Cref{thm:pointwiseConv}, there must exist $\pSt \in \pStates$ and $n \geq 1$ such that $\sem{\synExb}(\pSt) < \lwp{n}{\unfold{\prog}{n}}{\sem{\synEx}}(\pSt)$.
    Hence our semi-decision procedure will check the latter inequality for increasing values of $n$.
    To do so for a fixed $n$, the procedure constructs $\synExc$ such that $\sem{\synExc} = \lwp{n}{\unfold{\prog}{n}}{\sem{\synEx}}$ as in \Cref{thm:computeSyntacticExpression} and then invokes \Cref{thm:expToFo} to obtain FO formulae $\foForm_\synExb(\free{\synExb}, \lvarb_\synExb)$ and $\foForm_\synExc(\free{\synExc}, \lvarb_\synExc)$ encoding $\synExb$ and $\synExc$.
    It then only remains to check if the FO formula $(\foForm_\synExb \land \foForm_\synExc) \limplies \lvarb_\synExb < \lvarb_\synExc$ is satisfiable, which is decidable.
    
    To see that the complement of problem \eqref{it:problemUpperBoundOnWp} is $\RE$-hard we reduce the halting problem to it.
    We proceed as in \cite{DBLP:conf/mfcs/KaminskiK15}.
    The standard halting problem is equivalent to checking if $\wp{\prog_{std}}{1}(\pSt) = 1$ where $\prog_{std} \in \pWhileWith{\synTerms}{\synGuards}$ is a given standard program, i.e., it does not contain any probabilistic constructs, and $\pSt \in \rats^\pVars$ is a given initial state.
    Now consider \[
    \prog
    \eeq
    \PCHOICE{\SEQ{\ASSIGN{\pVar_1 }{\pSt(\pVar_1)} \,\SEMICOLON\, \ldots \,\SEMICOLON\,\ASSIGN{\pVar_n }{\pSt(\pVar_n)} }{\prog_{std}}}{\tfrac 1 2}{\SKIP} 
    \]
    where $\{\pVar_1,\ldots,\pVar_n\} = \pVars$.
    Then $\tfrac 1 2 < \wp{\prog}{1}$ iff $\wp{\prog_{std}}{1}(\pSt) > 0$ iff $\prog_{std}$ terminates on input $\pSt$.
    
    The argument for $\coRE$ membership of problem \eqref{it:problemLowerBoundOnWlp} is completely analogous.
    For $\coRE$-hardness we argue dually (we define $\prog$ as above):
    $\wlp{\prog}{0} < \tfrac 1 2$ iff $\wlp{\prog_{std}}{0}(\pSt) < \tfrac 1 2$ iff $\prog_{std}$ terminates on input $\pSt$.
    The latter argument works because for standard programs we have $\wlp{\prog_{std}}{0}(\pSt) = 0$ iff $\prog_{std}$ terminates on $\pSt$, and $\wlp{\prog_{std}}{0}(\pSt) = 1$ iff $\prog_{std}$ does not terminate on $\pSt$.
\end{proof}
\subsection{Inputs to \toolcaesar}
\label{app:caesar_inputs}

\paragraph{The Monte Carlo $\pi$-approximator}
\phantom{a}

\begin{lstlisting}[language=C++, basicstyle=\ttfamily\small]
	coproc monte_carlo_pi(M : UReal) 
			-> (x : UReal, y : UReal, count : UReal)
	pre 0.85*(M)     
	post count
	{
		var i : UReal = 1
		count = 0
		@invariant(count + ite(0 <= i && i<=M, 0.85*((M-i) + 1), 0))
		while i <= M {
			
			var N : UInt = 16; 
			var j : UInt = unif(0, 15); //discrete_uniform(16)
			
			// --- Nondeterministic assignment x := [...]
			cohavoc x; 
			coassume ?!(j / N <= x && x <= (j+1) / N)
			// ---
			
			
			j  = unif(0, 15); //discrete_uniform(16)
			
			// --- Nondeterministic assignment y := [...]
			cohavoc y; 
			coassume ?!(j / N <= y && y <= (j+1) / N)
			// ---
			
			if x*x + y*y <= 1 {
				count = count +1
			}else{}
			
			i = i+1
			
		}
		
	}
\end{lstlisting}

\paragraph{Irwin-Hall without conditioning}
\phantom{a}
\begin{lstlisting}[language=C++, basicstyle=\ttfamily\small]
	coproc irwin_hall(M : UReal) -> (x : UReal)
	pre 1.1*((M)/2)
	post x
	{
		x = 0
		var i : UReal = 1
		@invariant(ite(i <= M, (x + 1.1*((M-i) + 1)/2), x))
		while i <= M {
			
			var inc : UReal; var N : UInt = 10; 
			var j : UInt = unif(0, 9); //discrete_uniform(10)
			
			// --- Nondeterministic assignment inc := [...]
			cohavoc inc; 
			coassume ?!(j / N <= inc && inc <= (j+1) / N)
			// ---
			
			x = x + inc
			i = i + 1
		}
	}
\end{lstlisting}

\paragraph{Irwin-Hall with conditioning}
\phantom{a}

\begin{lstlisting}[language=C++, basicstyle=\ttfamily\small]
	domain Exponentials {
		func exp(exponent: UReal): EUReal
		
		axiom exp_base exp(0) == 1
		axiom exp_step forall exponent: UReal. 
			exp(exponent + 1) == 0.5 * exp(exponent)
		axiom exp_antitone forall exp1: UReal. forall exp2: UReal. 
			(exp1 <= exp2) ==> (exp(exp2) <= exp(exp1))
	}
	
	
	proc irwin_hall_conditioning_wlp(M : UReal) -> (x : UReal)
	pre exp(M)
	post 1
	{
		x = 0
		var i : UReal = 1
		@invariant(ite(i <= M, exp((M-i) + 1), 1))
		while i <= M {
			var inc : UReal; var N : UInt = 2; 
			var j : UInt = unif(0, 1); //discrete_uniform(2)
			
			// --- Nondeterministic assignment inc := [...] 
			havoc inc; 
			assume ?(j / N <= inc && inc <= (j+1) / N) 
			// ---
			
			assert ?(inc <= 1/2) // observe
			x = x + inc
			i = i + 1
		}
	}
	

	coproc irwin_hall_conditioning_wp(M : UReal) -> (x : UReal)
	pre 1.5*((M)/8)
	post x
	{
		x = 0
		var i : UReal = 1
		@invariant(ite(i <= M, (x + 1.5*((M-i) + 1)/8), x))
		while i <= M {
			var inc : UReal; var N : UInt = 20; 
			var j : UInt = unif(0, 19); //discrete_uniform(20)
			
			// --- Nondeterministic assignment inc := [...] 
			cohavoc inc; 
			coassume ?!(j / N <= inc && inc <= (j+1) / N)
			// ---
			
			assert ?(inc <= 1/2) // observe
			x = x + inc
			i = i + 1
		}
	}
\end{lstlisting}

\paragraph{Probably Diverging Loop}
\phantom{a}

\begin{lstlisting}[language=C++, basicstyle=\ttfamily\small]
	domain Exponentials {
		func exp(exponent: UReal): EUReal
		
		axiom exp_base exp(0) == 1
		axiom exp_step forall exponent: UReal. 
			exp(exponent + 1) == 0.5 * exp(exponent)
		axiom exp_antitone forall exp1: UReal. forall exp2: UReal. 
			(exp1 <= exp2) ==> (exp(exp2) <= exp(exp1))
	}
	
	
	proc diverging(x_init : UReal, a : UReal, b : UReal) -> ()
	pre [a<=b]*(1-exp(x_init))
	post 0
	{
		var x : UReal = x_init 
		
		@invariant([a<=b]*(1-exp(x)))
		while(x >= 0){
			var y : UReal
			var N : UInt = 2; 
			var i : UInt = unif(0, 1); //discrete_uniform(2)
			
			// --- Nondeterministic assignment y := [...] 
			havoc y; 
			assume ?(i / N <= y && y <= (i+1) / N)
			// ---
			
			y = (b-a)*y + a        
			
			if (y <= (a+b)/2) {
				// --- diverge
				assert 1
				assume ?(false)
				// ---
			}else {}
			
			x = x - 1
			
		}
		
	}
\end{lstlisting}

\paragraph{Race between Tortoise and Hare}
\phantom{a}

\begin{lstlisting}[language=C++, basicstyle=\ttfamily\small]
	coproc tortoise_hare(h_init : UReal, t_init : UReal) 
			-> (count : UReal)
	pre 1.5*((t_init - h_init) + 2)*2 
	post count
	{
		var h : UReal = h_init
		var t : UReal = t_init
		
		count = 0
		
		@invariant(ite(h<=t, count + 1.5*((t-h) + 2)*2, count))
		while h <= t {
			var choice : Bool = flip(0.5)
			if choice {
				
				var inc : UReal; 
				var N : UInt = 25; 
				var j : UInt = unif(0, 24); //discrete_uniform(25)
				
				// --- Nondeterministic assignment inv := [...] 
				cohavoc inc; 
				coassume ?!(j / N <= inc && inc <= (j+1) / N)
				// ---
				
				inc = (10-0)*inc + 0 
				h = h + inc
			} else {}
			
			t = t+1
			count = count + 1
		}
	}
\end{lstlisting}

}{}

\end{document}